\documentclass[11pt]{article}

\usepackage{amsfonts}
\usepackage{amssymb}
\usepackage{amstext}
\usepackage{amsmath}
\usepackage{enumerate}
\usepackage{algorithm}
\usepackage{algorithmic}
\usepackage[normalem]{ulem}
\usepackage{cancel}

\usepackage{multirow, makecell}
\usepackage{tablefootnote}

\usepackage{amsthm}

\usepackage{thmtools}
\usepackage{thm-restate}

\usepackage[pagebackref,colorlinks=true,pdfpagemode=UseNone,citecolor=OliveGreen,linkcolor=BrickRed,urlcolor=BrickRed,pdfstartview=FitH]{hyperref}
\usepackage{cleveref}

\usepackage[usenames,dvipsnames,svgnames,table]{xcolor}

\usepackage{graphicx}
\usepackage{subcaption}

\usepackage{framed}
\usepackage{enumitem}

\makeatletter
 \setlength{\textwidth}{6.5in}
 \setlength{\oddsidemargin}{0in}
 \setlength{\evensidemargin}{0in}
 \setlength{\topmargin}{0.25in}
 \setlength{\textheight}{8.25in}
 \setlength{\headheight}{0pt}
 \setlength{\headsep}{0pt}
 \setlength{\marginparwidth}{59pt}

 \setlength{\parindent}{0pt}
 \setlength{\parskip}{5pt plus 1pt}
 \setlength{\abovedisplayskip}{8pt plus 3pt minus 6pt}

 \makeatother

\newtheorem{theorem}{Theorem}[section]

\newtheorem{lemma}[theorem]{Lemma}

\newtheorem{corollary}[theorem]{Corollary}

\newtheorem{proposition}[theorem]{Proposition}
\newtheorem{claim}[theorem]{Claim}
\newtheorem*{claim*}{Claim}

\newtheorem*{problem*}{Problem}

\newtheorem{remark}[theorem]{Remark}
\newtheorem*{remark*}{Remark}

\newtheorem{observation}[theorem]{Observation}
\newtheorem*{observation*}{Observation}
\newtheorem{example}[theorem]{Example}

% Symbols, Sets, and Values
% ---------------------------------------------

\numberwithin{equation}{section}
\numberwithin{table}{section}

\renewcommand{\tilde}{\widetilde}

% math notation
\newcommand{\R}{\ensuremath{\mathbb R}}
\newcommand{\mZ}{\ensuremath{\mathbb Z}}

\newcommand{\E}[1]{{\mathbb{E}}\left[#1\right]}

% anupam's abbreviations
% \newcommand{\ve}{\epsilon}

\newcommand{\junk}[1]{}

\newcommand{\norm}[1]{\left\lVert#1\right\rVert}

\newenvironment{proofof}[1]{{\medbreak\noindent \em Proof of #1.  }}{\hfill\qed\medbreak}

% \newenvironment{example}{{\bf Example:  }}{\hfill\rule{2mm}{2mm}}

% scalar
\def\eps{{\varepsilon}}

\DeclareMathAlphabet{\mathsfit}{T1}{\sfdefault}{\mddefault}{\sldefault}
\SetMathAlphabet{\mathsfit}{bold}{T1}{\sfdefault}{\bfdefault}{\sldefault}

% vectors

\def\vb{{\mathsfit b}}
\def\vc{{\mathsfit c}}
\def\ve{{\mathsfit e}}
\def\vu{{\mathsfit u}}
\def\vv{{\mathsfit v}}
\def\vw{{\mathsfit w}}
\def\vx{{\mathsfit x}}
\def\vy{{\mathsfit y}}
\def\vz{{\mathsfit z}}

% matrices
\def\mA{{\mathsfit A}}
\def\mB{{\mathsfit B}}
\def\mC{{\mathsfit C}}
\def\mE{{\mathsfit E}}
\def\mF{{\mathsfit F}}
\def\mI{{\mathsfit I}}

\def\mL{{\mathsfit L}}
\def\mM{{\mathsfit M}}
\def\mP{{\mathsfit P}}
\def\mU{{\mathsfit U}}
\def\mV{{\mathsfit V}}
\def\mX{{\mathsfit X}}
\def\mY{{\mathsfit Y}}
\def\mZ{{\mathsfit Z}}

 % vertex expansion
 % edge expansion
 % sparsity
	% phi sweep
 % volume
%\def\supp{\operatorname{supp}} % support
 % sign
 % span
\def\diag{\operatorname{diag}} % diagonal matrix
%\def\argmin{\operatorname{argmin}} % argument of minimum
 % number of non-zeros
%\def\rank{\operatorname{rank}} % number of non-zeros
 % the problem of small set expansion
 % polynomial of logarithms
 % optimum
 % Rayleigh quotient
 % energy
 % expectation
 % probability
\def\Reff{\text{Reff}}

\def\tr{\operatorname{tr}}

\global\long\def\E{\mathbb{E}}
% \global\long\def\mP{\mathbb{P}}
\global\long\def\R{\mathbb{R}}

 % balanced parentheses
\newcommand{\inner}[2]{\langle #1, #2 \rangle} % inner product
 % projection to the boundary

\DeclareMathOperator{\argmax}{argmax}
\DeclareMathOperator{\argmin}{argmin}

\renewcommand{\preceq}{\preccurlyeq}
\renewcommand{\succeq}{\succcurlyeq}

%%%%%%%%%%%%%%%%%%%%%%%%%%%%%%%%%%%%%%%%%%%%%%%%%%%%%%%%%%%%%%%%%%%%%%%%%%%
% Document begins here %%%%%%%%%%%%%%%%%%%%%%%%%%%%%%%%%%%%%%%%%%%%%%%%%%%%
%%%%%%%%%%%%%%%%%%%%%%%%%%%%%%%%%%%%%%%%%%%%%%%%%%%%%%%%%%%%%%%%%%%%%%%%%%%

\title{A Local Search Framework for Experimental Design}

\author{Lap Chi Lau\footnote{School of Computer Science, University of Waterloo. Supported by NSERC Discovery Grant 2950-120715. Email: \href{mailto:lapchi@uwaterloo.ca}{lapchi@uwaterloo.ca}},~~~~~
Hong Zhou\footnote{School of Computer Science, University of Waterloo. Supported by NSERC Discovery Grant 2950-120715. Email: \href{mailto:h76zhou@uwaterloo.ca}{h76zhou@uwaterloo.ca}}}

\date{}

\begin{document}

\begin{titlepage}
\def\thepage{}
\thispagestyle{empty}

\maketitle

\begin{abstract}
We present a local search framework to design and analyze both combinatorial algorithms and rounding algorithms for experimental design problems.
This framework provides a unifying approach to match and improve all known results in D/A/E-design and to obtain new results in previously unknown settings.  
\begin{itemize}
\item
For combinatorial algorithms, we provide a new analysis of the classical Fedorov's exchange method.
We prove that this simple local search algorithm works well as long as there exists an almost optimal solution with good condition number.
Moreover, we design a new combinatorial local search algorithm for E-design using the regret minimization framework.
\item
For rounding algorithms, we provide a unified randomized exchange algorithm to match and improve previous results for D/A/E-design. 
Furthermore, the algorithm works in the more general setting to approximately satisfy multiple knapsack constraints, which can be used for weighted experimental design and for incorporating fairness constraints into experimental design.
\end{itemize}
\end{abstract}

\end{titlepage}

\thispagestyle{empty}

%\tableofcontents

\section{Introduction} \label{s:intro}

In experimental design problems,
we are given vectors $\vv_1,\ldots,\vv_n \in \R^d$ and a parameter $b \geq d$,
and the goal is to choose a (multi-)subset $S$ of $b$ vectors so that $\sum_{i \in S} \vv_i \vv_i^\top$ optimizes some objective function.
The most popular and well-studied objective functions are:
\vspace{-2mm}
\begin{itemize}
\setlength\itemsep{-1pt}
\item D-design: Maximizing $\left(\det\left( \sum_{i \in S} \vv_i \vv_i^\top \right)\right)^{\frac{1}{d}}$.
\item A-design: Minimizing $\tr\left( \left(\sum_{i \in S} \vv_i \vv_i^\top \right)^{-1} \right)$.
\item E-design: Maximizing $\lambda_{\min}\left(\sum_{i \in S} \vv_i \vv_i^\top\right)$.
\end{itemize}
\vspace{-2mm}
Two settings are studied in the literature.
One is the ``with repetition'' setting where each vector is allowed to be chosen multiple times,
and the other is the ``without repetition'' setting where each vector is allowed to be chosen at most once.
There is a simple reduction from the with repetition setting to the without repetition setting. 
All the results in this paper apply in the more general without repetition setting.

These problems of choosing a representative subset of vectors have a wide range of applications.
\vspace{-2mm}
\begin{itemize}
\setlength\itemsep{-1pt}
\item
Experimental design is a classical topic in statistics with extensive literature~\cite{Fed72,AD92,Puk06,GJ11},
where the goal is to choose $b$ (noisy) linear measurements from $\vv_1,\ldots,\vv_n \in \R^d$ so as to maximize the statistical efficiency of estimating an unknown vector in $\R^d$.
\item
In machine learning, they are used in active learning~\cite{Ang88}, feature selection~\cite{BM13}, and data summarization~\cite{Mir17,CKS+18}. 
\item
In numerical linear algebra, they are used in column subset selection~\cite{AB13}, sparse least square regression~\cite{BDM11}, and matrix approximation~\cite{DRVW06,DV06}.
\item 
In signal processing, they are used in sensor placement problems~\cite{JB09}, and optimal subsampling in graph signal processing~\cite{CR18,CSMK15,CSMK16}. 
\item
In network design, the problem of choosing a subgraph with at most $b$ edges to minimize the total effective resistance~\cite{GBS08,LZ20} is an A-design problem,
and the problem of choosing a subgraph with at most $b$ edges to maximize the algebraic connectivity~\cite{GB06,KMS+10,LZ20} is an E-design problem.
\end{itemize}
\vspace{-2mm}
We refer the interested reader to~\cite{WYS17,SX18,MSTX19,NST19,AZLSW20} for more discussions of these applications and further references on related work.

\subsection{Our Results} \label{ss:results}

We present both combinatorial algorithms and rounding algorithms for experimental design problems.
A main contribution in this paper is to show that these two types of algorithms can be analyzed using the same local search framework.
Using this framework, we match and improve all known results
and also obtain some new results.

\subsubsection{Combinatorial Algorithms} \label{ss:combin}

The Fedorov's exchange method~\cite{Fed72} starts with an arbitrary initial set $S_0$ of $b$ vectors, and in each step $t \geq 1$ it aims to exchange one of the vectors, $S_{t} \gets S_{t-1} - \vv_i + \vv_j$ where $\vv_i \in S_{t-1}$ and $\vv_j \notin S_{t-1}$, to improve the objective value, and stops if such an improving exchange is not possible.
The simplicity of this algorithm and its good empirical performance~\cite{CN80,MN94,NM92} make the method widely used~\cite{ADT07}.
The approximation guarantee of this method is only analyzed rigorously in a recent work~\cite{MSTX19},
and we extend their analysis in multiple directions.

For D-design, it was proved in~\cite{MSTX19} that Fedorov's exchange method gives a polynomial time approximation algorithm for all inputs in the with repetition setting, and we extend their result to the without repetition setting.

\begin{restatable}{theorem}{CombD}\label{t:D-combin}
% \begin{theorem} 
The Fedorov's exchange method is a polynomial time $\frac{b-d-1}{b}$-approximation algorithm for D-design in the without repetition setting.
In particular, this is a $(1-\eps)$-approximation algorithm whenever $b \geq d + 1+ \frac{d}{\eps}$ for any $\eps > 0$.
\end{restatable}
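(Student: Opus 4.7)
The plan is to analyze the local optimum $S$ at which Fedorov's exchange method terminates, and compare $\det(X_S)$ to $\det(X_{S^*})$ for the global optimum $S^*$. Write $X_S := \sum_{i \in S} \vv_i \vv_i^\top$, and for a vector $\vv_l$ let $\tau_l := \vv_l^\top X_S^{-1} \vv_l$ be its leverage score with respect to $X_S$, and for a pair let $c_{ij} := \vv_i^\top X_S^{-1} \vv_j$. The starting point is the swap formula, obtained by applying the matrix determinant lemma twice (once for the removal of $\vv_i \vv_i^\top$, once for the addition of $\vv_j \vv_j^\top$): for any $i \in S$ and $j \notin S$,
\[
\frac{\det(X_{S-i+j})}{\det(X_S)} \;=\; (1 - \tau_i)(1 + \tau_j) + c_{ij}^2.
\]
Local optimality of $S$ means this ratio is at most $1$ for every such pair.

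The next step extracts a uniform leverage score bound for non-members of $S$. Summing the rearranged inequality $\tau_j(1 - \tau_i) + c_{ij}^2 \leq \tau_i$ over $i \in S$, and using the identities $\sum_{i \in S}(1 - \tau_i) = b - d$ and $\sum_{i \in S} c_{ij}^2 = \tau_j$ (the latter coming from $\sum_{i \in S} X_S^{-1/2} \vv_i \vv_i^\top X_S^{-1/2} = I_d$), I obtain
\[
\tau_j \;\leq\; \frac{d}{b - d + 1} \qquad \text{for every } j \notin S.
\]

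The main step converts this into the claimed approximation ratio. By the arithmetic--geometric mean inequality on the eigenvalues of $X_S^{-1/2} X_{S^*} X_S^{-1/2}$,
\[
\left(\frac{\det(X_{S^*})}{\det(X_S)}\right)^{\!1/d} \;\leq\; \frac{1}{d}\tr(X_S^{-1} X_{S^*}) \;=\; \frac{1}{d}\sum_{j \in S^*} \tau_j,
\]
so it suffices to show $\sum_{j \in S^*} \tau_j \leq bd/(b - d - 1)$. This is the hard part: naively splitting $\sum_{j \in S^* \cap S} \tau_j \leq d$ and $\sum_{j \in S^* \setminus S} \tau_j \leq b \cdot d/(b-d+1)$ only gives the weaker bound $d(2b-d+1)/(b-d+1)$, which yields a much worse approximation for large $b$. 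To sharpen it I plan to retain the $c_{ij}^2$ term of the local optimality inequality rather than drop it, summing the inequality $\tau_j(1-\tau_i) + c_{ij}^2 \leq \tau_i$ against a suitable matching (or a uniform weighting) between $R := S \setminus S^*$ and $T := S^* \setminus S$; this introduces $\sum_{i \in R} \tau_i$ on the right-hand side, which combines with $\sum_{j \in S^* \cap S} \tau_j = d - \sum_{i \in R} \tau_i$ to produce the stronger sum bound and thus the claimed $(b-d-1)/b$-approximation.

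Finally, a bit-complexity argument --- each improving swap increases $\det(X_S)$ by a non-negligible multiplicative factor, and $\det(X_S)$ is a rational number whose numerator and denominator have polynomial bit-length --- shows the algorithm terminates in polynomial time; to obtain a genuine polynomial-time guarantee one stops as soon as no swap yields a multiplicative improvement exceeding, say, $1 + 1/\poly(n,d)$, and verifies that the analysis above is robust to this approximate local optimum.
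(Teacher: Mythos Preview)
Your approach is viable and genuinely different from the paper's, but the key step is left as a plan rather than carried out, and one of the two methods you offer does not work on its own.

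Concretely: summing the local-optimality inequality over a bijection $\pi:T\to R$ gives $\sum_{j\in T}\tau_j(1-\tau_{\pi(j)})\le\sum_{i\in R}\tau_i=:a$, whence $\sum_{j\in T}\tau_j\le a+\sum_{j\in T}\tau_j\tau_{\pi(j)}\le a\bigl(1+\tfrac{d}{b-d+1}\bigr)$, the last step using your leverage bound $\tau_j\le d/(b-d+1)$. Combined with $\sum_{j\in S^*\cap S}\tau_j=d-a$ and $a\le d$, this yields $\sum_{j\in S^*}\tau_j\le d(b+1)/(b-d+1)$, hence ratio $(b-d+1)/(b+1)$, which is actually slightly \emph{stronger} than the stated $(b-d-1)/b$. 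However, your ``uniform weighting'' alternative over $R\times T$ gives only $\sum_{j\in T}\tau_j\le ra/(r-a)$ with $r=|T|$, which blows up when $r$ is close to $d$ and does \emph{not} imply the target bound; so that route cannot replace the matching plus leverage-bound combination. You should carry out the matching computation explicitly and drop the uniform-weighting claim. You must also verify robustness to the paper's actual stopping rule (swaps may improve by up to $1+d/(4b^3)$): each inequality becomes $\tau_j(1-\tau_i)+c_{ij}^2\le\tau_i+d/(4b^3)$, which propagates as an $O(d/b^2)$ additive error in the final sum and is harmless.

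The paper takes a different route. Rather than comparing to the integral optimum $S^*$, it compares the current $\mZ_t$ to a \emph{scaled-down fractional} optimum $\vy=(q/b)\vx$ with $q=b-d-\tfrac12$, and uses the distributions $\Pr[i_t=i]\propto 1-\vx(i)$, $\Pr[j_t=j]\propto\vx(j)$ to exhibit a swap with multiplicative improvement at least $1+d/(4b^3)$ whenever $\det(\mZ_t)^{1/d}\le\tfrac{b-d-1}{b}\det(\mX)^{1/d}$. This directly matches the algorithm's stopping rule and, as a bonus, proves the bound against the \emph{fractional} optimum (an integrality-gap statement). Your argument is more elementary and avoids the convex relaxation, but only compares to the integral optimum.
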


For A-design, it was shown in~\cite{MSTX19} that there are arbitrarily bad local optimal solutions for the Fedorov's exchange method.
Interestingly, we prove that Fedorov's exchange method works well as long as there exists an almost optimal solution with good condition number.
This provides a new insight about when the local search method works well,
and this condition may hold in practical instances.
As a corollary, this also extends the analysis of Fedorov's exchange method in~\cite{MSTX19} when all the vectors are short to the without repetition setting (see Section~\ref{ss:comb-A}).

\begin{restatable}{theorem}{CombA} \label{t:A-combin}
Let $\mX := \sum_{i=1}^n \vx(i) \cdot \vv_i \vv_i^\top$ with $\sum_{i=1}^n \vx(i) = b$ and $x_i \in [0,1]$ for $1 \leq i \leq n$ be a fractional solution to A-design. 
For any $\eps \in (0,1)$, the Fedorov's exchange method returns an integral solution $\mZ = \sum_{i=1}^n \vz(i) \cdot \vv_i \vv_i^\top$ with $\sum_{i=1}^n \vz(i) \leq b$ and $\vz(i) \in \{0,1\}$ for $1 \leq i \leq n$ such that 
\[
\tr\left(\mZ^{-1}\right) \leq (1+\eps) \cdot \tr(\mX^{-1})
\quad \text{whenever} \quad
b \geq \Omega\Big( \frac{ d + \sqrt{\tr(\mX) \tr\left(\mX^{-1}\right)} }{\eps} \Big).
\]
In particular, let $\kappa = \frac{\lambda_{\max}(\mX^*)}{\lambda_{\min}(\mX^*)}$ be the condition number of an optimal solution $\mX^*$, then the Fedorov's exchange method gives a $(1+\eps)$-approximation algorithm for A-design whenever $b \geq \Omega\Big(\frac{\left(1+\sqrt{\kappa}\right) \cdot d}{\eps}\Big)$, and the time complexity is polynomial in $n, d, \frac{1}{\eps}, \kappa$.
\end{restatable}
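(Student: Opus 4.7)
The plan is to consider an arbitrary local optimum $\mZ = \sum_{i \in S} \vv_i \vv_i^\top$ of Fedorov's exchange method and show, under the stated budget, that $\tr(\mZ^{-1}) \le (1+\eps) \tr(\mX^{-1})$. I would argue by contrapositive: assume $\tr(\mZ^{-1}) > (1+\eps)\tr(\mX^{-1})$ and exhibit a profitable swap, using the fractional solution $\mX$ as a ``swap witness''. Two applications of the Sherman--Morrison formula let me write the effect of the swap $S \gets S - i + j$ on the objective in closed form; writing $\mA_i := \mZ - \vv_i \vv_i^\top$, the change in $\tr(\mZ^{-1})$ equals
\[
\frac{\vv_i^\top \mZ^{-2} \vv_i}{1 - \vv_i^\top \mZ^{-1} \vv_i} \;-\; \frac{\vv_j^\top \mA_i^{-2} \vv_j}{1 + \vv_j^\top \mA_i^{-1} \vv_j},
\]
and local optimality asserts that this is nonnegative for every $i \in S$ and $j \notin S$. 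A mild preprocessing step on the starting set, combined with the budget hypothesis, should ensure that all $\vv_i^\top \mZ^{-1} \vv_i$ stay uniformly away from $1$, so the denominators are well controlled.

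Next I would take a convex combination of these per-pair inequalities weighted by $\vx(j)$, then sum over $i \in S$. The key identities $\sum_j \vx(j)\, \vv_j \vv_j^\top = \mX$ and $\sum_{i \in S} \vv_i \vv_i^\top = \mZ$ collapse the result into trace quantities. After replacing $\mA_i^{-1}$ and $\mA_i^{-2}$ by $\mZ^{-1}$ and $\mZ^{-2}$ up to additive Sherman--Morrison error (absorbed into $O(1/b)$ terms), I expect two master inequalities of the form
\[
b \cdot \tr(\mZ^{-2} \mX) \;\lesssim\; \sum_{i \in S}\frac{\vv_i^\top \mZ^{-2} \vv_i}{1 - \vv_i^\top \mZ^{-1} \vv_i} \cdot \Big(1 + O\big(\tfrac{\tr(\mX^{-1})}{b}\big)\Big) \;\lesssim\; \tr(\mZ^{-1}),
\]
where the upper bound on the right uses $\sum_{i \in S} \vv_i^\top \mZ^{-2} \vv_i = \tr(\mZ^{-1})$.

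The decisive step is a quantitative Cauchy--Schwarz / Kantorovich-type inequality that converts the gap assumption $\tr(\mZ^{-1}) > (1+\eps) \tr(\mX^{-1})$ into a lower bound on $\tr(\mZ^{-2} \mX)$. The natural route is to apply Cauchy--Schwarz in the form
\[
\tr(\mX^{-1})^2 \;=\; \tr(\mX^{-1/2} \mZ^{-1/2} \cdot \mZ^{1/2} \mX^{-1/2})^2 \;\le\; \tr(\mX^{-1} \mZ^{-1}) \cdot \tr(\mZ \mX^{-1}),
\]
together with an analogous bound relating $\tr(\mZ^{-2}\mX)$ to $\tr(\mZ^{-1})$ and $\tr(\mX)\,\tr(\mX^{-1})$. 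Chaining these with the previous master inequalities and simplifying, the violation of local optimality should force exactly $b \ge \Omega\bigl((d + \sqrt{\tr(\mX) \tr(\mX^{-1})})/\eps\bigr)$, proving the first statement. The condition-number corollary is immediate: setting $\mX = \mX^*$, one has $\tr(\mX^*) \tr((\mX^*)^{-1}) \le d^2 \kappa$, so $\sqrt{\tr(\mX) \tr(\mX^{-1})} \le d\sqrt{\kappa}$, which gives the claimed $b \ge \Omega((1+\sqrt{\kappa})\, d/\eps)$ bound, with polynomial running time controlled by a standard argument bounding the number of improving swaps in terms of $n,d,1/\eps,\kappa$.

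The main obstacle I anticipate is threefold. First, keeping the Sherman--Morrison denominators $1 - \vv_i^\top \mZ^{-1} \vv_i$ bounded away from zero without losing too much; this is where the recent analysis of Madan--Singh--Tantipongpipat--Xie runs into trouble in general, and a preprocessing modification of the standard Fedorov algorithm may be needed. Second, cleanly replacing the $\mA_i$-quantities by $\mZ$-quantities so that the per-$i$ errors sum to $O(1)$ rather than scaling with $b$. Third and most importantly, the Kantorovich-style lower bound on $\tr(\mZ^{-2} \mX)$ is precisely what brings in the factor $\sqrt{\tr(\mX)\tr(\mX^{-1})}$, and obtaining the right constants here is what distinguishes this analysis from a naive one that would only depend on $d$ and $b$.
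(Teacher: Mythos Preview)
Your high-level contrapositive strategy and the Cauchy--Schwarz step that produces the $\sqrt{\tr(\mX)\tr(\mX^{-1})}$ factor are both correct and match the paper; in particular, your inequality $\tr(\mX^{-1})^{2}\le \tr(\mX^{-1}\mZ^{-1})\cdot\tr(\mZ\mX^{-1})$ is essentially the paper's Lemma~\ref{l:trace}. However, your execution of the swap analysis diverges from the paper in a way that creates the very obstacles you flag as steps~1 and~2, and the paper's route bypasses both cleanly.

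The key technical device you are missing is a \emph{decoupled} rank-two update bound (the paper's Lemma~\ref{l:trace-rank2}, built on Claim~2.10 of \cite{AZLSW20}): if $2\vv_i^\top\mZ^{-1}\vv_i<1$ then
\[
\tr\big((\mZ-\vv_i\vv_i^\top+\vv_j\vv_j^\top)^{-1}\big)-\tr(\mZ^{-1})
\;\le\;
\frac{\vv_i^\top\mZ^{-2}\vv_i}{1-2\vv_i^\top\mZ^{-1}\vv_i}
\;-\;
\frac{\vv_j^\top\mZ^{-2}\vv_j}{1+2\vv_j^\top\mZ^{-1}\vv_j}.
\]
Both terms are expressed entirely through $\mZ$, with no $\mA_i=\mZ-\vv_i\vv_i^\top$ appearing. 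This eliminates your obstacle~2 (replacing $\mA_i^{-1},\mA_i^{-2}$ by $\mZ^{-1},\mZ^{-2}$, where the Sherman--Morrison error is \emph{not} $O(1/b)$ in general and blows up exactly when $\vv_i^\top\mZ^{-1}\vv_i$ is near $1$). It also decouples the choice of $i$ from that of $j$, so one can average the loss term over $i\in S$ and the gain term over $j\notin S$ independently, with weights proportional to $(1-\vx(i))(1-2\vv_i^\top\mZ^{-1}\vv_i)$ and $\vx(j)(1+2\vv_j^\top\mZ^{-1}\vv_j)$ respectively.

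Your obstacle~1 is likewise an artifact: no preprocessing is needed to keep \emph{all} denominators bounded. The paper simply restricts the averaging for the loss term to $S'=\{i\in S:\,2\vv_i^\top\mZ^{-1}\vv_i<1\}$. The crucial observation is that the normalizing sum satisfies $\sum_{i\in S'}(1-\vx(i))(1-2\vv_i^\top\mZ^{-1}\vv_i)\ge \sum_{i\in S}(1-\vx(i))(1-2\vv_i^\top\mZ^{-1}\vv_i)\ge b-\vx(S)-2d$, since the omitted terms are nonpositive; so restricting to $S'$ only \emph{helps} the average. The positivity of $b-\vx(S)-2d$ is guaranteed not by preprocessing but by comparing against a \emph{scaled-down} fractional solution $\vy=\tfrac{q}{b}\vx$ with $q=b-2d-2(1+\eps)\sqrt{\tr(\mX)\tr(\mX^{-1})}$, which you do not mention; this is the second idea that makes the argument go through and is what fixes the constants so that the budget assumption is exactly $b\ge\Omega\big((d+\sqrt{\tr(\mX)\tr(\mX^{-1})})/\eps\big)$.
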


For E-design, there are no known combinatorial local search algorithms,
and there are examples showing that Fedorov's exchange method does not work even if there exists a well-conditioned optimal solution (see Section~\ref{ss:examples}).
Using the regret minimization framework in~\cite{AZLO15,AZLSW20}, however, we prove that a modified local search algorithm using a ``smoothed'' objective function for E-design works as long as there exists an almost optimal solution with good condition number.

\begin{restatable}{theorem}{CombE} \label{t:E-combin}
Let $\mX := \sum_{i=1}^n \vx(i) \cdot \vv_i \vv_i^\top$ with $\sum_{i=1}^n \vx(i) = b$ and $\vx(i) \in [0,1]$ for $1 \leq i \leq n$ be a fractional solution to E-design. 
For any $\eps \in (0,1)$, there is a combinatorial local search algorithm which returns an integral solution $\mZ = \sum_{i=1}^n \vz(i) \cdot \vv_i \vv_i^\top$ with $\sum_{i=1}^n \vz(i) \leq b$ and $\vz(i) \in \{0,1\}$ for $1 \leq i \leq n$ such that 
\[
\lambda_{\min}(\mZ) \geq (1-\eps) \cdot \lambda_{\min} \left( \mX \right)
\quad \text{whenever} \quad
b \geq \Omega\bigg( \frac{d}{\eps^2} \sqrt{\frac{\lambda_{\rm avg}(\mX)}{\lambda_{\min}(\mX)}} \bigg),
\]
where $\lambda_{\text{avg}}(\mX) = \frac{\tr(\mX)}{d}$ is the average eigenvalue of $\mX$.

In particular, let $\kappa = \frac{\lambda_{\max}(\mX^*)}{\lambda_{\min}(\mX^*)}$ be the condition number of an optimal solution $\mX^*$, then the combinatorial local search method gives a polynomial time $(1-\eps)$-approximation algorithm for E-design whenever $b \geq \Omega\left(\frac{d\sqrt{\kappa}}{\eps^2}\right)$,
and the time complexity is polynomial in $n, d, \frac{1}{\eps}, \kappa$.
\end{restatable}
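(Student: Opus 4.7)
The plan is to replace the non-smooth objective $\lambda_{\min}(\mZ)$ by a smoothed surrogate drawn from the matrix regret-minimization framework of \cite{AZLO15,AZLSW20}. A convenient choice is the lower-barrier potential
\[
\Phi_\ell(\mZ) \;:=\; \tr\big((\mZ - \ell\,\mI)^{-1}\big),
\]
which is well-defined whenever $\ell < \lambda_{\min}(\mZ)$ and which satisfies the useful inequality $\lambda_{\min}(\mZ) \geq \ell + 1/\Phi_\ell(\mZ)$. The algorithm maintains a pair $(S,\ell)$ with $|S|=b$, $\ell < \lambda_{\min}(\mZ_S)$, and $\Phi_\ell(\mZ_S)$ bounded by a target $\Phi^{\text{tgt}}$. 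In each iteration it either performs an exchange $S \gets S - i + j$ with $i\in S$, $j\notin S$ that keeps $\Phi_\ell(\mZ_S) \leq \Phi^{\text{tgt}}$, or, when such an exchange does not strain the potential, raises $\ell$ by a small additive amount $\delta$. The algorithm terminates once $\ell$ reaches $(1-\eps)\lambda_{\min}(\mX)$; the invariant on $\Phi_\ell$ then yields $\lambda_{\min}(\mZ) \geq \ell + 1/\Phi^{\text{tgt}} \geq (1-\eps)\lambda_{\min}(\mX)$ for an appropriately chosen $\Phi^{\text{tgt}}$.

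The technical core is to show that whenever $\ell \leq (1-\eps)\lambda_{\min}(\mX)$ and the potential invariant holds, a $\Phi$-preserving exchange exists that makes room for raising $\ell$ by $\delta$. Applying the Sherman-Morrison formula to the rank-one removal and the rank-one addition decomposes
\[
\Phi_\ell(\mZ - \vv_i\vv_i^\top + \vv_j\vv_j^\top) - \Phi_\ell(\mZ)
\]
into first-order terms linear in the rank-one perturbations and second-order "regret" corrections of the form $(\vv^\top (\mZ - \ell\mI)^{-2}\vv)/(1 \pm \vv^\top(\mZ - \ell\mI)^{-1}\vv)$. Following the regret-minimization template, existence of a good exchange is established by using $\mX$ as a randomized witness: draw $j$ with probability $\vx(j)/b$ (treating $j\in S$ as a no-op) and $i\in S$ with a weight proportional to $\vv_i^\top(\mZ - \ell\mI)^{-2}\vv_i$. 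Since $\sum_j \vx(j)\vv_j\vv_j^\top = \mX$, the expected first-order change is proportional to $\tfrac{1}{b}\tr\big((\mX - \mZ)(\mZ - \ell\mI)^{-2}\big)$, which is strictly positive because $\mX \succ \ell\,\mI$ while $\mZ$ has eigenvalues close to $\ell$ along the directions that matter for $\Phi_\ell$.

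The second-order terms are bounded in expectation by $\tfrac{1}{b}\tr(\mX(\mZ - \ell\mI)^{-2})$ modulated by a Mahalanobis-length denominator that can be kept $O(1)$ by a standard contraction of $\mX$ and by choosing $\delta$ small enough. Balancing the per-step gain of order $\delta \cdot \tr((\mZ-\ell\mI)^{-2})$ against the per-step second-order loss, which scales like $\tfrac{1}{b}\tr(\mX)/\lambda_{\min}(\mZ-\ell\mI)^3$, gives exactly the stated condition $b \geq \Omega\!\big(\tfrac{d}{\eps^2}\sqrt{\lambda_{\rm avg}(\mX)/\lambda_{\min}(\mX)}\big)$ after optimizing $\Phi^{\text{tgt}}$ and $\delta$; the total number of exchanges is then $O(b/\delta \cdot \lambda_{\min}(\mX))$, polynomial in the claimed parameters. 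The condition-number corollary is obtained by taking $\mX = \mX^\star$ and using $\lambda_{\rm avg}(\mX^\star) \leq \kappa\,\lambda_{\min}(\mX^\star)$.

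The main obstacle I expect is controlling the second-order regret terms near local optimality: $\mZ - \ell\,\mI$ may become ill-conditioned along specific directions, so a worst-case swap could blow up the $(\mZ - \ell\mI)^{-2}$ factor, and one must show that the $\mX$-weighted \emph{expected} contribution is dominated by the first-order gain. This is precisely where the matrix regret-minimization machinery is indispensable and is what forces the $\sqrt{\lambda_{\rm avg}/\lambda_{\min}}$ factor. A secondary point is the without-repetition constraint, which is handled by restricting the witness distribution to $j\notin S$ (with the $j\in S$ mass treated as a no-op) and absorbing the resulting constant loss into the $\Omega(\cdot)$ threshold on $b$.
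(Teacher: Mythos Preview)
Your high-level plan—replace $\lambda_{\min}$ by a smoothed barrier from the regret/BSS framework and certify an improving exchange via an $\mX$-based randomized witness—is the right idea, and the paper proceeds in the same spirit. But the concrete mechanics you sketch diverge from what actually works, and in one place they are internally inconsistent.

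\textbf{A concrete gap in the witness distributions.} You propose drawing $j$ with probability $\vx(j)/b$ and $i\in S$ with weight proportional to $\vv_i^\top(\mZ-\ell\mI)^{-2}\vv_i$, and then claim the expected first-order change equals $\tfrac{1}{b}\tr\big((\mX-\mZ)(\mZ-\ell\mI)^{-2}\big)$. That formula arises only if $i$ is drawn \emph{uniformly} from $S$, not with the weight you wrote; under your stated weight the expected removal term is $\sum_{i\in S}(\vv_i^\top(\mZ-\ell\mI)^{-2}\vv_i)^2/\sum_{i\in S}\vv_i^\top(\mZ-\ell\mI)^{-2}\vv_i$, which is a different quantity. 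More importantly, neither choice gives the cancellation that is actually needed. The paper's removal weight is $(1-\vx(i))\big(1-2\alpha\langle \vv_i\vv_i^\top,\mA_t^{1/2}\rangle\big)$ and the addition weight is $\vx(j)\big(1+2\alpha\langle \vv_j\vv_j^\top,\mA_t^{1/2}\rangle\big)$. The $(1-\vx(i))$ and $\vx(j)$ factors are essential: they produce a common term $\langle \mX_S,\mA\rangle$ in both the gain and the loss numerators that cancels, leaving a clean comparison between $\lambda_{\min}(\mX)$ and (roughly) $\lambda_{\min}(\mZ)$. They also automatically enforce the without-repetition constraint, since $i$ with $\vx(i)=1$ and $j$ with $\vx(j)=0$ are never sampled. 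Your distributions do not produce this cancellation, and treating $j\in S$ as a no-op does not recover it.

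\textbf{The ``strictly positive first-order change'' step is where the real work hides.} Even with the correct distributions, showing that gain exceeds loss is not immediate: one must control the loss numerator $\langle\mZ,\mA\rangle$, and $\mZ$ may have eigenvalues far above $\lambda_{\min}(\mX)$. The paper's fix is the identity $\langle\mZ,\mA\rangle \le \tfrac{\sqrt d}{\alpha}+\lambda_{\min}(\mZ)$ (Lemma~\ref{l:cospectral}, using that $\mA$ and $\mZ$ share an eigenbasis), together with the assumption $\lambda_{\min}(\mZ)\le(1-2\eps)\lambda_{\min}(\mX)$ and the scaling trick of comparing against $\vy=\tfrac{q}{b}\vx$ with $q=b-O\big(\tfrac{d}{\eps}\sqrt{\lambda_{\rm avg}/\lambda_{\min}}\big)$. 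You gesture at a ``contraction of $\mX$'' but do not say what replaces Lemma~\ref{l:cospectral} in the moving-$\ell$ picture; without it, the loss term is not controlled and the claimed balance does not close.

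\textbf{How the paper actually proceeds.} Rather than advancing a barrier parameter $\ell$ by $\delta$ per step, the paper fixes $\alpha=\sqrt d/(\eps\lambda^\star)$, defines the action matrix $\mA_t=(\alpha\mZ_t-l_t\mI)^{-2}$ normalized to trace one, and uses the regret bound (Corollary~\ref{cor:lambda-min-rank-two}) to get $\lambda_{\min}(\mZ_{\tau+1})\ge\sum_{t\le\tau}\Phi(\mA_t,i_t,j_t)-2\sqrt d/\alpha$. The per-step progress lemma (Proposition~\ref{prop:progress-E}) then shows $\Phi(\mA_t,i_t,j_t)\ge \tfrac{\eps}{b}\lambda_{\min}(\mX)$ whenever $\lambda_{\min}(\mZ_t)\le(1-2\eps)\lambda_{\min}(\mX)$, and summing over $\tau=b/\eps$ iterations finishes. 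This avoids having to maintain a monotone potential invariant under exchanges, which is awkward in your framework since both removal and advancement of $\ell$ push the barrier up.
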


A combinatorial ``capping'' procedure was used in~\cite{MSTX19} to reduce the A-design problem to the case when every vector is ``short'', for which Fedorov's exchange method works.
This capping procedure, however, crucially leveraged that a vector can be chosen multiple times.
We do not have a preprocessing procedure to reduce A-design and E-design in the without repetition setting to the case when Theorem~\ref{t:A-combin} and Theorem~\ref{t:E-combin} apply.
We leave it as an open problem to design a fully combinatorial algorithm for A-design and E-design in the general case.

\subsubsection{Rounding Algorithms for Convex Programming Relaxations} \label{ss:rounding}

There are natural convex programming relaxations for the D/A/E-design problems.
The best known rounding algorithms for these three problems are all quite different, i.e.~approximate positively correlated distributions for D-design~\cite{SX18}, proportional volume sampling for A-design~\cite{NST19}, and regret minimization for E-design~\cite{AZLSW20,LZ20}.
Although the one-sided spectral rounding result in~\cite{AZLSW20,LZ20} (see Section~\ref{ss:iter}) provides a general solution for a large class of experimental design problems including D/A/E-design, this only works under the stronger assumption that $b \geq \Omega\left( \frac{d}{\eps^2} \right)$ and it was unclear how to unify the best known algorithmic results.

Surprisingly, we prove that the iterative randomized rounding algorithm for E-design in~\cite{LZ20} can be modified just slightly to match and improve the previous results for D/A-design, as well as to extend them to handle multiple knapsack constraints.
To this end, we bypass the one-sided spectral rounding problem.
Instead, we perform a refined analysis for the iterative randomized rounding algorithm, in which the minimum eigenvalue of the current solution plays an unexpectedly crucial role for D/A-design as well.
This provides a unified rounding algorithm to achieve the optimal results for the natural convex programming relaxations for these experimental design problems.

In D/A/E-design with knapsack constraints, we are given vectors $\vv_1, \ldots, \vv_n \in \R^d$, knapsack constraints $\vc_1, \ldots, \vc_m \in \R^n_+$ and budgets $b_1, \ldots, b_m \geq 0$, and the goal is to find a solution $\vz \in \{0,1\}^n$ with $\inner{\vc_i}{\vz} \leq b_i$ for $1 \leq i \leq m$ to optimize the objective value.
Consider the following natural convex programming relaxations for D/A-design.
\begin{equation} \label{eq:convex}
    \begin{aligned}
        & \underset{\vx \in \R^n}{\rm maximize} & & \log \det\bigg(\sum_{i=1}^n \vx(i) \cdot \vv_i \vv_i^\top\bigg) {\rm~~~~~or~~~~~} \underset{\vx \in \R^n}{\rm minimize}~~ \tr\bigg( \bigg( \sum_{i=1}^n \vx(i) \cdot \vv_i \vv_i^\top \bigg)^{-1} \bigg) \\
        & \text{\rm subject to} & & \inner{\vc_j}{\vx} \leq b_j, \quad ~~~\text{for } 1 \leq j \leq m, \\
        & & & 0 \leq \vx(i) \leq 1, \quad ~\text{ for } 1 \leq i \leq n.
    \end{aligned} %\tag{WCP-A} 
\end{equation}

\begin{restatable}{theorem}{DARounding} \label{t:DA-rounding}

Let $\vx \in [0,1]^n$ be an optimal fractional solution to D/A-design with knapsack constraints. For any $\eps \leq \frac{1}{200}$, if each knapsack constraint budget satisfies 
$b_j \geq \frac{2d \norm{\vc_j}_{\infty}}{\eps}$,
then there is a randomized exchange algorithm which returns in polynomial time an integral solution $\mZ = \sum_{i=1}^n \vz(i) \cdot \vv_i \vv_i^\top$ with 
$\vz(i) \in \{0,1\}$ for $1 \leq i \leq n$ such that
\[
\det\bigg(\sum_{i=1}^n \vz(i) \cdot \vv_i \vv_i^\top\bigg)^{\frac{1}{d}} \geq \big(1-O(\eps)\big) \cdot \det\bigg( \sum_{i=1}^n \vx(i) \cdot \vv_i \vv_i^\top\bigg)^{\frac{1}{d}}
\text{\rm~for~D-design,}
\]
\[
\tr\bigg(\bigg( \sum_{i=1}^n \vz(i) \cdot \vv_i \vv_i^\top\bigg)^{-1}\bigg) \leq (1+\eps) \cdot \tr\bigg(\bigg( \sum_{i=1}^n \vx(i) \cdot \vv_i \vv_i^\top\bigg)^{-1}\bigg)
\text{\rm~for~A-design}
\]
with probability at least $1 - O\left(\frac{k^2}{\eps^2} \cdot e^{-\Omega(\sqrt{d})}\right)$ where $k = O(d^2 + m)$.
Furthermore, each knapsack constraint $\inner{\vc_j}{\vz} \leq b_j$ is satisfied with probability at least $1 - e^{-\Omega(\eps d)}$.
\end{restatable}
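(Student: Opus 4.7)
The plan is to adapt the iterative randomized exchange algorithm designed for E-design in~\cite{LZ20}, modifying only the pair-selection rule and producing a refined analysis that relies on the minimum eigenvalue of the running covariance. Starting from $\vz \leftarrow \vx$, at each iteration the algorithm picks a fractional coordinate $i$ with $\vz(i)\in(0,1)$, pairs it with a carefully chosen index $j$, and performs a randomized two-coordinate update so that (a) at least one coordinate becomes integral, (b) the linear functionals $\inner{\vc_\ell}{\vz}$ are preserved in expectation, and (c) a first-order measure of the spectral objective is also preserved. A preliminary step replaces $\vx$ by an extreme point of a suitable PSD-constrained polytope with $k = O(d^2+m)$ fractional coordinates, bounding the rounding process to at most $k$ iterations.

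The one-step analysis rests on rank-one update identities. For D-design, the matrix determinant lemma gives
\[
\log\det(\mX \pm \delta\,\vv\vv^\top) - \log\det(\mX) = \pm\delta\,\vv^\top\mX^{-1}\vv + O\bigl((\delta\,\vv^\top\mX^{-1}\vv)^2\bigr).
\]
For A-design, Sherman--Morrison gives the exact identity
\[
\tr\bigl((\mX+\delta\,\vv\vv^\top)^{-1}\bigr) - \tr(\mX^{-1}) = -\frac{\delta\,\vv^\top\mX^{-2}\vv}{1+\delta\,\vv^\top\mX^{-1}\vv}.
\]
I would choose the pair $(i,j)$ and the step sizes so that the linear parts of these expansions cancel in expectation, leaving only quadratic residuals. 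The key observation---this is where the paper's insight enters---is that these residuals scale as $\lambda_{\min}(\mX^{(t)})^{-1}$ times the linear terms, so the analysis demands a uniform lower bound on $\lambda_{\min}(\mX^{(t)})$ throughout the process.

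I would establish such a lower bound by embedding the regret-minimization ingredient from~\cite{AZLSW20,LZ20} into the pair-selection rule, producing a one-sided spectral inequality $\mX^{(t)} \succeq (1 - O(\eps)) \sum_i \vx(i)\,\vv_i\vv_i^\top$ with high probability; the hypothesis $b_j \geq 2d\norm{\vc_j}_\infty / \eps$ is exactly what allows this certificate to survive the $k$ rounding steps. Summing the quadratic residuals and invoking Freedman's martingale inequality then yields the $(1-O(\eps))$ guarantee for D-design and $(1+\eps)$ guarantee for A-design. For each knapsack constraint, $\inner{\vc_j}{\vz^{(t)}}$ is a bounded-difference martingale with increments of order $\norm{\vc_j}_\infty$, so Azuma--Hoeffding combined with $b_j \geq 2d\norm{\vc_j}_\infty/\eps$ gives per-constraint violation probability $e^{-\Omega(\eps d)}$; a union bound over a net for the PSD cone (to certify $\lambda_{\min}$) and over the $m$ knapsacks produces the $O((k/\eps)^2\, e^{-\Omega(\sqrt{d})})$ failure bound.

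The main obstacle is A-design, because the Sherman--Morrison second-order term scales as $\lambda_{\min}(\mX)^{-1}$ times the linear term, so one must simultaneously (i) prevent $\lambda_{\min}(\mX^{(t)})$ from collapsing and (ii) keep the accumulated quadratic residuals below $\eps\cdot\tr(\mX^{-1})$ over all $k$ iterations. Coupling these two inequalities is the delicate step; I anticipate the need for a joint potential such as $\tr(\mX^{-1}) + \alpha\log(1/\lambda_{\min}(\mX))$, or the smoothed spectral potential from~\cite{AZLSW20}, together with an inductive argument that maintains both bounds throughout the randomized process.
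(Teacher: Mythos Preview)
Your proposal has a genuine gap at its central step, and the algorithmic framework differs substantially from the paper's.

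\textbf{The spectral certificate you aim for is too strong.} You write that you will ``produce a one-sided spectral inequality $\mX^{(t)} \succeq (1 - O(\eps)) \sum_i \vx(i)\,\vv_i\vv_i^\top$ with high probability; the hypothesis $b_j \geq 2d\norm{\vc_j}_\infty / \eps$ is exactly what allows this certificate to survive.'' This is false: achieving a $(1-\eps)$ one-sided spectral lower bound is known to require $b \geq \Omega(d/\eps^2)$, not $\Omega(d/\eps)$ (see the integrality gap in~\cite{NST19} and the discussion in~\cite{LZ20}). The entire point of the paper is to \emph{bypass} one-sided spectral rounding. Its key insight is that a $(1-\eps)$ spectral bound is unnecessary for D/A-design: a \emph{constant} lower bound $\lambda_{\min}(\mZ_t) \geq \tfrac14$ suffices to control the second-order residuals and the martingale variance, and this constant bound \emph{is} attainable when $b \geq \Omega(d/\eps)$. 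The analysis is then organized in two phases: first drive $\lambda_{\min}$ up to $\tfrac34$ in $O(k)$ exchanges (via the regret machinery of~\cite{LZ20} with learning rate $\alpha = \Theta(\sqrt{d})$ rather than $\sqrt{d}/\eps$), then show $\lambda_{\min}$ stays above $\tfrac14$ for the next $O(k/\eps)$ exchanges while the D/A objective drifts to $(1\pm\eps)$-optimal. Your single-phase plan with a $(1-O(\eps))$ spectral invariant cannot work under the stated budget.

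\textbf{The algorithm and iteration count are different.} You describe a pipage-style process: start at the fractional $\vx$, do two-coordinate updates that integralize one coordinate per step, terminate in $k$ steps. The paper's algorithm is an \emph{exchange} process: start at an \emph{integral} set $S_0$ obtained by independent rounding of $\vx$, then for $O(k/\eps)$ iterations sample $i_t \in S_{t-1}$ and $j_t \notin S_{t-1}$ from the E-design distributions (\ref{e:E-distribution}) and swap. The objective is not a martingale; it has positive expected drift whenever the current value is suboptimal (Lemmas~\ref{l:D-exp-progress} and~\ref{l:A-exp-progress}), and Freedman's inequality controls the deviation. The $(k/\eps)^2$ in the failure probability arises from a union bound over pairs of time indices $(t_0,t_1)$ in the proof that $\lambda_{\min}$ does not dip below $\tfrac14$ (Proposition~\ref{p:bounded_min}), not from a net over the PSD cone.

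\textbf{A further missing ingredient.} The variance bounds in Freedman's inequality require that the exchanged vectors be short. The paper obtains this from the KKT conditions of the convex relaxation: for any $i$ with $0 < \vx(i) < 1$, one has $\norm{\vv_i}_2^2 \leq \eps$ for D-design and $\langle \mX^{-1}, \vv_i\vv_i^\top\rangle \leq \tfrac{\eps}{d}\tr(\mX^{-1})$ for A-design (Lemmas~\ref{l:optimality-D} and~\ref{l:optimality-A}), using precisely the budget assumption $b_j \geq d\norm{\vc_j}_\infty/\eps$. Your proposal does not invoke optimality of $\vx$ anywhere, and without it the second-order residuals are not controllable.
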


Note that D/A-design with a cardinality constraint is the special case when there is only one cost constraint ($m=1$) and $\vc = \vec{1}$. % and $b = k$.
In this special case, Theorem~\ref{t:DA-rounding} improves the previous results in~\cite{SX18,NST19} by removing the term $O\left( \frac{1}{\eps^2} \log\left(\frac{1}{\eps}\right) \right)$ from their assumption $b \geq \Omega\left( \frac{d}{\eps} + \frac{1}{\eps^2} \log\left(\frac{1}{\eps}\right) \right)$, and this achieves the optimal integrality gap result for D-design~\cite{SX18} and A-design~\cite{NST19}.
In the general case with knapsack constraints, Theorem~\ref{t:DA-rounding} improves the previous result in~\cite{LZ20}, which requires a stronger assumption that $b_j \geq \Omega\big(\frac{d\norm{\vc_j}_{\infty}}{\eps^2}\big)$ to obtain the same approximation guarantee.
The knapsack constraints can be used for weighted experimental design and for incorporating fairness constraints in experimental design,
which we will discuss in the next subsection.

\subsubsection{Some Applications}

We discuss some applications of our results in specific instances of experimental design problems.

\vspace{2mm}

{\bf Fair and Diverse Data Summarization:}
In the data summarization problem, we are given $n$ data points $\vv_1, \ldots, \vv_n \in \R^d$, and the objective is to choose a subset of $b$ data points that provides a ``fair'' and ``diverse'' summary of the data.
For diversity, the D-design objective of maximizing determinant is a popular measure used in previous work~\cite{Mir17,CKS+18}.
For fairness, the partition constraints~\cite{NS16,CKS+18} for D-design are used to partition the set $X$ of data points into $p$ disjoint groups $X_1 \cup \cdots \cup X_p$ and to ensure that $b_i$ data points are chosen in $X_i$ where $\sum_{i=1}^p b_i = b$.

We believe that Theorem~\ref{t:DA-rounding} for D-design with knapsack constraints provides an alternative solution for this problem.
The main advantage is that the knapsack constraints are more flexible in that they do not require the groups to be disjoint.
For instance, we can have knapsack constraints on arbitrary subsets $X_1, \ldots, X_p \subseteq X$ of the form $\sum_{j \in X_i} \vx(j) \leq b_i$ to ensure that at most $b_i$ data points are chosen in group $X_i$, so that we can handle constraints of overlapping groups such as race, age, gender (e.g.~at most 50\% of the chosen vectors correspond to men/women), etc.
Also, the approximation guarantee in Theorem~\ref{t:DA-rounding} is stronger than the constant factor approximation for D-design with partition constraint~\cite{NS16}, and the convex programming relaxation used in Theorem~\ref{t:DA-rounding} is simpler and easier to be solved than the more sophisticated one used in~\cite{NS16}.

\vspace{2mm}

{\bf Minimizing Total Effective Resistance:}
Ghosh, Boyd and Saberi~\cite{GBS08} studied the problem of choosing a subgraph with at most $b$ edges to minimize the total effective resistance,
and showed that this is a special case of A-design.
The proportional volume sampling algorithm by Nikolov, Singh and Tantipongpipat~\cite{NST19} achieves a $(1+\eps)$-approximation for this problem when $b \geq \Omega(\frac{n}{\eps} + \frac{1}{\eps^2} \log \frac{1}{\eps})$ where $n$ is the number of vertices in the graph.
Lau and Zhou~\cite{LZ20} considered the weighted problem of choosing a subgraph with total edge cost at most $b$ to minimize the total effective resistance,
and gave a $(1+\eps)$-approximation algorithm when $b \geq \Omega\left(\frac{n\norm{\vc}_{\infty}}{\eps^2}\right)$ where $\vc$ is the cost vector of the edges. 
Theorem~\ref{t:DA-rounding} improves these two results.

\begin{restatable}{corollary}{RoundReff} \label{c:Reff-rounding}
For any $0 < \eps < 1$, there is a polynomial time randomized $(1+\eps)$-approximation algorithm for minimizing total effective resistance in an edge weighted graph whenever $b \geq \Omega\left(\frac{n\norm{\vc}_{\infty}}{\eps}\right)$. 
\end{restatable}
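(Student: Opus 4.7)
The plan is to recast minimizing total effective resistance as an instance of A-design with a single knapsack constraint, and then invoke Theorem~\ref{t:DA-rounding}. For an edge-weighted graph on $n$ vertices, assign to each edge $e = (u,v)$ of weight $w_e$ the vector $\vv_e = \sqrt{w_e}(\chi_u - \chi_v)$, viewed inside the $(n-1)$-dimensional subspace $\vec{1}^{\perp}$. For any edge subset $S$, the weighted Laplacian restricted to this subspace is $L_S = \sum_{e \in S} \vv_e \vv_e^{\top}$, and the standard identity $R_{\text{tot}}(S) = n \cdot \tr(L_S^{-1})$ shows that minimizing total effective resistance is exactly the A-design objective in dimension $d = n-1$. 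The total edge-cost budget $\sum_e \vc(e)\,\vz(e) \leq b$ gives a single knapsack constraint ($m = 1$), and~\eqref{eq:convex} specializes to the standard convex relaxation for minimum effective resistance.

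With $d = n-1$, the A-design hypothesis of Theorem~\ref{t:DA-rounding}, namely $b \geq 2d\,\norm{\vc}_{\infty}/\eps$, becomes exactly $b = \Omega(n\,\norm{\vc}_{\infty}/\eps)$, matching the corollary's assumption. The theorem then returns an integral $\vz \in \{0,1\}^n$ satisfying $\tr(L_{\vz}^{-1}) \leq (1+\eps)\,\tr(\mX^{-1})$, where $\mX = \sum_e \vx(e)\,\vv_e\vv_e^{\top}$ is the optimal fractional solution. Since the convex relaxation lower-bounds the integral optimum, multiplying through by $n$ yields the desired $(1+\eps)$-approximation for total effective resistance. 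The cost constraint is violated only with probability $e^{-\Omega(\eps n)}$, and the overall failure probability is $O(n^4 \eps^{-2} e^{-\Omega(\sqrt{n})})$; both are negligible for $n$ large, and for small $n$ the assumption $b = \Omega(n\,\norm{\vc}_{\infty}/\eps)$ makes the problem tractable by other direct means.

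The main subtlety I expect to handle is ensuring $L_{\vz}$ is invertible on $\vec{1}^{\perp}$, i.e., that the chosen edges form a spanning connected subgraph. This should hold automatically because the exchange algorithm underlying Theorem~\ref{t:DA-rounding} keeps $\tr(\cdot^{-1})$ finite at every step, which forces the running matrix to have full rank on $\vec{1}^{\perp}$; initializing from a fractional solution supported on a spanning structure secures this invariant at the start. A minor operational point is that the low failure probability for the knapsack constraint can be amplified away by a small number of independent repetitions, returning the best feasible sample; this preserves polynomial running time while making the overall algorithm succeed with high probability.
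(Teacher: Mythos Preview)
The proposal is correct and takes essentially the same approach as the paper: reduce to A-design in dimension $d=n-1$ via the projected incidence vectors and invoke Theorem~\ref{t:DA-rounding} with a single knapsack constraint. The paper's treatment of the failure probability is slightly cleaner---it observes that one may assume $\eps \geq \Omega(1/n)$ without loss of generality (since otherwise $b$ exceeds the total edge cost and all edges can be taken), which immediately makes both failure terms constant---whereas you rely on amplification and a hand-wave for small $n$; but the core argument is identical.
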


{\bf Maximizing Algebraic Connectivity:}
Ghosh and Boyd~\cite{GB06} studied the problem of choosing a subgraph with total cost at most $b$ that maximizes the algebraic connectivity, {\em i.e.}~the second smallest eigenvalue of its Laplacian matrix.
Kolla, Makarychev, Saberi and Teng~\cite{KMS+10} provided the first algorithm with non-trivial approximation guarantee in the zero-one cost setting.
Lau and Zhou~\cite{LZ20} observed that this is a special case of E-design
and gave a $(1-\eps)$-approximation algorithm when $b \geq \Omega\left(\frac{n\norm{\vc}_{\infty}}{\eps^2}\right)$ where $\vc$ is the cost vector of the edges.

All previous results are based on convex programming. 
Theorem~\ref{t:E-combin} provides a combinatorial algorithm for the unweighted problem, where the goal is to choose $b$ edges to maximize the algebraic connectivity, and shows that it has a good performance as long as the optimal value is large.

\begin{restatable}{corollary}{CombLambda} \label{c:lambda2-combin}
For any $0 < \eps < 1$, there is a polynomial time combinatorial $(1-\eps)$-approximation algorithm for maximizing algebraic connectivity in an unweighted graph whenever $b \geq \Omega\left(\frac{n}{\eps^4 \lambda_2^*} \right)$, where $\lambda_2^*$ is the optimal value for the problem.
\end{restatable}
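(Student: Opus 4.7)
The plan is to view the problem as an instance of E-design and invoke Theorem~\ref{t:E-combin}. For each edge $e = \{u,v\}$ of the unweighted graph on $n$ vertices, associate the vector $\vv_e = \chi_u - \chi_v$, so that the Laplacian of any edge subset $S$ equals $\sum_{e \in S} \vv_e \vv_e^\top$. Since every $\vv_e$ lies in the $(n-1)$-dimensional subspace $W = \vec{1}^{\perp}$, and since the minimum eigenvalue of the Laplacian restricted to $W$ is exactly the algebraic connectivity $\lambda_2$, the problem is precisely E-design in dimension $d = n-1$ with the input vectors $\{\vv_e\}_{e \in E}$.

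I would then feed Theorem~\ref{t:E-combin} an optimal integral solution as the fractional input. Let $S^*$ be an optimal $b$-edge subgraph with $\lambda_2 = \lambda_2^*$, take $\vx$ to be its indicator vector, and set $\mX = \sum_{e \in S^*} \vv_e \vv_e^\top$; this satisfies $\sum_i \vx(i) = b$ and $\vx(i) \in \{0,1\} \subseteq [0,1]$. Two quantities drive the hypothesis: $\lambda_{\min}(\mX) = \lambda_2^*$ by optimality of $S^*$, and $\lambda_{\rm avg}(\mX) = \tr(\mX)/(n-1) = 2b/(n-1)$ because $\|\vv_e\|^2 = 2$ for every edge.

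Substituting into the condition $b \geq \Omega\bigl(\tfrac{d}{\eps^2}\sqrt{\lambda_{\rm avg}(\mX)/\lambda_{\min}(\mX)}\bigr)$ of Theorem~\ref{t:E-combin} yields $b \geq \Omega\bigl(\tfrac{n-1}{\eps^2}\sqrt{\tfrac{2b/(n-1)}{\lambda_2^*}}\bigr)$; squaring both sides and dividing by $b$ collapses this to $b \geq \Omega\bigl(\tfrac{n}{\eps^4 \lambda_2^*}\bigr)$, which is exactly the assumption of the corollary. Theorem~\ref{t:E-combin} then outputs in polynomial time an integral $\mZ$ with $\sum_i \vz(i) \leq b$ and $\lambda_{\min}(\mZ) \geq (1-\eps)\lambda_2^*$. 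Since the Laplacian is monotone under edge addition (in the Loewner order, and hence so is $\lambda_{\min}$ on $W$), we may pad the corresponding edge set arbitrarily up to $b$ edges without decreasing $\lambda_2$, producing a subgraph with algebraic connectivity at least $(1-\eps)\lambda_2^*$.

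The only subtlety is the routine reduction to E-design: checking that the $\vv_e$ need no explicit projection (they already lie in $W$), that $\lambda_{\min}$ on $W$ agrees with $\lambda_2$ on $\R^n$, and that an integer-valued $\vx$ is admissible as the fractional input to Theorem~\ref{t:E-combin}. Beyond this bookkeeping, the proof is the one-line algebraic manipulation above, so there is no substantive obstacle.
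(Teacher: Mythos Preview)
Your proposal is correct and follows essentially the same approach as the paper: both reduce to E-design in the $(n-1)$-dimensional subspace $\vec{1}^{\perp}$, feed the indicator of an optimal $b$-edge subgraph $S^*$ into Theorem~\ref{t:E-combin} as the fractional $\vx$, and then perform the same one-line algebra showing that $b \geq \Omega(n/(\eps^4\lambda_2^*))$ implies $b \geq \Omega\bigl(\tfrac{n}{\eps^2}\sqrt{\lambda_{\rm avg}(\mX)/\lambda_{\min}(\mX)}\bigr)$ since $\lambda_{\rm avg}(\mX) = \Theta(b/n)$. Your write-up is slightly more explicit about the padding step and the admissibility of an integral $\vx$, but these are minor expository differences.
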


\subsection{Techniques}

We extend the randomized approach in~\cite{LZ20} to analyze both combinatorial local search algorithms and to design improved approximation algorithms for D/A-design with knapsack constraints.
The approach in~\cite{LZ20} is based on the regret minimization framework developed in~\cite{AZLSW20} for the one-sided spectral rounding problem.
In the following, we will first present the techniques used in these two previous results, and then present the new ideas in this paper.

{\bf Previous Techniques:}
In~\cite{AZLSW20}, Allen-Zhu, Li, Singh, and Wang first solved the natural convex programming relaxation for experimental design and obtained a solution $\vx \in \R^n$, and performed a linear transformation so that $\sum_{i=1}^n \vx(i) \cdot \vv_i \vv_i^\top = \mI$.
They showed that the experimental design problem is reduced to the following one-sided spectral rounding problem, where the goal to find a subset $S \subseteq [n]$ so that $\sum_{i \in S} \vv_i \vv_i^\top \succeq (1-\eps)\mI$ and $|S| \leq \sum_{i=1}^n \vx(i) = b$.
To solve this problem, they started from an arbitrary initial set $S_0$ of $b$ vectors, and in each step $t \geq 1$ they sought to find a pair $i_t \in S_{t-1}$ and $j_t \notin S_{t-1}$ so that the new solution $S_{t} \gets S_{t-1} - \vv_{i_t} + \vv_{j_t}$ improves the current solution $S_{t-1}$ in terms of a potential function related to the minimum eigenvalue.  
Using the regret minimization framework that maintains a density matrix $\mA_t$ at each step $t$ (see Section~\ref{ss:regret}), they proved that choosing
\[
i_t := \argmin_{i \in S_{t-1}} \frac{\inner{\vv_i \vv_i^\top}{\mA_t}}{1-2\alpha \inner{\vv_i \vv_i^\top}{\mA_t^{\frac12}}}
\quad {\rm and} \quad
j_t := \argmax_{j \notin S_{t-1}} \frac{\inner{\vv_j \vv_j^\top}{\mA_t}}{1+2\alpha \inner{\vv_j \vv_j^\top}{\mA_t^{\frac12}}}\
\]
would improve the current solution $S_{t-1}$ as long as $\sum_{i \in S_t} \vv_i \vv_i^\top \not\succeq (1-\eps)\mI$.

In~\cite{LZ20}, the idea is to use the following probability distributions to sample $i_t$ and $j_t$:
\begin{equation} \label{e:E-distribution}
\Pr(i_t = i) \propto \big(1-\vx(i)\big) \cdot \left(1-2\alpha \inner{\vv_i \vv_i^\top}{\mA_t^{\frac12}}\right)
\quad {\rm and} \quad
\Pr(i_j = j) \propto \vx(j) \cdot \left(1+2\alpha \inner{\vv_j \vv_j^\top}{\mA_t^{\frac12}}\right).
\end{equation}
The advantage of doing random sampling is that the knapsack constraints will be approximately preserved, while the potential function related to the minimum eigenvalue is expected to improve.
Freedman's martingale inequality and a new concentration inequality for non-martingales are used to prove that all these quantities are close to their expected values with high probability.

{\bf Analysis of Combinatorial Algorithms:}
In this paper, we use this randomized approach in~\cite{LZ20} to analyze both combinatorial algorithms and rounding algorithms.
For combinatorial local search algorithms, one difference from the previous analysis in~\cite{MSTX19} is that we compare the objective of the current integral solution to that of an optimal {\em fractional} solution.
When the objective value of the fractional solution is considerably better than that of the current integral solution, we use the fractional solution to define appropriate probability distributions similar to that in (\ref{e:E-distribution}) to sample $i_t$ and $j_t$ so that the expected objective value of $S_{t} \gets S_{t-1} - \vv_{i_t} + \vv_{j_t}$ improves, and this would imply the existence of an improving pair in Fedorov's exchange method.
One advantage of this approach is that this allows us the flexibility to compare the current integral solution to a fractional solution with smaller budget which still has its objective value close to the optimal one.

Our analysis is arguably simpler than that in~\cite{MSTX19} which uses a dual fitting method while we only do a primal analysis.
More importantly, our analysis shows that if the optimal fractional solution is well-conditioned (e.g.~$\sum_{i=1}^n \vx(i) \cdot \vv_i \vv_i = \mI_d$), then the Fedorov's exchange method indeed performs as well as the best known rounding algorithms.
This gives us a new insight that the only important step in rounding algorithms for the unweighted experimental design problems is the ability to first transform the optimal fractional solution to the identity matrix.
For E-design, simply doing Fedorov's exchange method on the objective function $\lambda_{\min}\left(\sum_{i \in S_t} \vv_i \vv_i^\top\right)$ would not work (see Section~\ref{ss:examples}), and instead we apply the Fedorov's exchange method to the potential function in the regret minimization framework, which is morally the same as the potential function $\tr\left( \left( \sum_{i \in S_t} \vv_i \vv_i^\top - l \mI_d \right)^{-1} \right)$ used by Batson, Spielman and Srivastava for spectral sparsification~\cite{BSS12}.

{\bf Analysis of Rounding Algorithms:}
For the rounding algorithm for experimental design with knapsack constraints, surprisingly we prove that a minor modification of the algorithm for E-design in~\cite{LZ20} would work for D/A-design with improved approximation guarantees!
Essentially, we just use the algorithm for E-design but only require that the solution to have minimum eigenvalue $\frac34$ rather than $1-\eps$.
Our analysis has two phases.
In the first phase, using the results in~\cite{LZ20}, we show that the randomized exchange algorithm will find a solution with minimum eigenvalue at least $\frac34$ in polynomial time with high probability whenever $b \geq \Omega\left(\frac{d}{\eps}\right)$ (rather than $b \geq \Omega\left(\frac{d}{\eps^2}\right)$ in order to achieve minimum eigenvalue at least $1-\eps$).
In the second phase, we prove that the minimum eigenvalue will maintain to be at least $\frac14$ with high probability when $\eps$ is not too tiny, and then the objective value for D-design and A-design will improve to $(1\pm\eps)$ times the optimal objective value in polynomial time with high probability.
The condition that the minimum eigenvalue is at least $\frac14$ is used crucially in multiple places for the analysis of the second phase.  
Interestingly, it is used in showing that the probability distributions in~(\ref{e:E-distribution}) for E-design are also good for improving the objective value for D-design and A-design.
Moreover, it is used in the martingale concentration argument, e.g.~to show that the martingale is bounded and to prove upper bounds on the variance of the changes.
For the martingale concentration argument, we also use the optimality conditions for convex programs to prove that the vectors with fractional value are ``short'' in order to bound the quantities involved.
Overall, the analysis for the rounding algorithm is quite involved, but it provides a unifying algorithm to achieve the optimal results for the natural convex relaxations for D/A/E-design.
Please refer to Section~\ref{s:knapsack} for a more detailed outline of the analysis.

\subsection{Previous Work} \label{ss:previous}

All three experimental design problems are NP-hard~\cite{CM09,Wel82} and also APX-hard~\cite{SEFM15,NST19,CM09}.
Despite the long history and the wide interest, 
strong approximation algorithms for these problems are only obtained recently.

{\bf D-design:}
Singh and Xie~\cite{SX18} designed an $(1-\eps)$-approximation algorithm for D-design in the with repetition setting when $b \geq \frac{2d}{\eps}$,
and in the without repetition setting when 
$b = \Omega\left(\frac{d}{\eps} + \frac{1}{\eps^2} \log \frac{1}{\eps}\right)$. 
Their algorithm is by rounding an optimal solution to a natural convex program relaxation using approximate positively correlated distributions.

Madan, Singh, Tantipongpipat and Xie~\cite{MSTX19} analyzed the Fedorov's exchange method and proved that it gives an $(1-\eps)$-approximation algorithm for D-design as long as $b \geq d + \frac{d}{\eps}$, which improves upon the above result.
However, they only provide a polynomial time implementation of the local search algorithm to achieve this guarantee in the less general with repetition setting.

{\bf A-design:}
Nikolov, Singh and Tantipongpipat~\cite{NST19} designed an $(1+\eps)$-approximation algorithm for A-design in the with repetition setting when $b \geq d + \frac{d}{\eps}$, and in the without repetition setting when $b = \Omega\left(\frac{d}{\eps} + \frac{1}{\eps^2} \log \frac{1}{\eps}\right)$.
Their algorithm is by rounding an optimal solution to a natural convex program relaxation using proportional volume sampling. Their algorithm also works for D-design with the same guarantee.

Madan, Singh, Tantipongpipat and Xie~\cite{MSTX19} also analyzed the Fedorov's exchange method for A-design, and showed that there are arbitrarily bad local optimal solutions.
On the other hand, they proved that Fedorov's exchange method works when all the input vectors are ``short'', and they designed a ``capping procedure'' to reduce the general case to the case when all vectors are short.
As a result, they obtained a combinatorial $(1+\eps)$-approximation algorithm, without solving convex programs, for A-design when $b \geq \Omega\left(\frac{d}{\eps^4}\right)$ in the with repetition setting.

{\bf E-design:}
Allen-Zhu, Li, Singh and Wang~\cite{AZLSW17c,AZLSW20} designed an $(1-\eps)$-approximation algorithm for E-design in the with and without repetition settings when $b \geq \Omega\left( \frac{d}{\eps^2} \right)$.
Their algorithm is by rounding an optimal solution to a natural convex program relaxation using the regret minimization framework, which was initially developed for the spectral sparsification problem~\cite{AZLO15}.
They formulated and solved a ``one-sided spectral rounding problem'' (see Section~\ref{ss:iter}), and showed that experimental design with any objective function satisfying some mild regularity assumptions, including D/A/E-design, can be reduced to the one-sided spectral rounding problem.
Their algorithm for one-sided spectral rounding can be viewed as a local search algorithm, and this was the starting point of the current work.

Nikolov, Singh and Tantipongpipat~\cite{NST19} showed that the assumption $b \geq \Omega\left( \frac{d}{\eps^2} \right)$ is necessary to achieve $(1-\eps)$-approximation for E-design using the natural convex program,
and Lau and Zhou~\cite{LZ20} showed that the assumption $b \geq \Omega\left( \frac{d}{\eps^2} \right)$ is necessary for the one-sided spectral rounding problem.
These suggest that the regret minimization framework may not be used to match the results for D/A-design, but we bypass the one-sided spectral rounding problem to prove Theorem~\ref{t:DA-rounding}.

{\bf Experimental design with additional constraints:}
Lau and Zhou~\cite{LZ20} considered the generalization of the experimental design problem with additional knapsack constraints as in~\eqref{eq:convex}.
In particular, it generalizes the experimental design problems to the {\em weighted} problems, where each vector $\vv_i$ has a weight $\vc(i)$ and the goal is to choose a subset $S$ of vectors with $\sum_{i \in S} \vc(i) \leq b$ to optimize the objective value.
Using a randomized iterative rounding algorithm, they obtained a $(1\pm\eps)$-approximation algorithm for weighted D/A/E-design when $b \geq \Omega\left(\frac{d\norm{\vc}_{\infty}}{\eps^2} \right)$.

Using more sophisticated convex programming relaxations, Nikolov and Singh~\cite{NS16} designed an approximation algorithm for D-design under partition constraints.
Recently, Madan, Nikolov, Singh and Tantipongpipat~\cite{MNST20} designed an approximation algorithm for D-design under general matroid constraints.

\section{Preliminaries} \label{s:prelim}

We recall some basic linear algebra in Section~\ref{ss:algebra}.
Then, we review the regret minimization framework in Section~\ref{ss:regret}, and the iterative randomized rounding algorithm for one-sided spectral rounding in Section~\ref{ss:iter}. 
Finally, we state some useful inequalities for the analysis of martingales and the analysis of the objective functions in Section~\ref{ss:martingale} and Section~\ref{ss:comb-ineq} respectively.

\subsection{Linear Algebra} \label{ss:algebra}

% numbers
We write $\R$ and $\R_+$ as the sets of real numbers and non-negative real numbers.
Throughout the paper, we use italic sans-serif font for vectors and matrices, e.g.~$\vx$, $\mA$.

% vector
All the vectors in this paper only have real entries.
Let $\R^d$ denote the $d$-dimensional Euclidean space. 
We write $\vec{1}_d$ as the $d$-dimensional all-one vector.
Given a vector $\vx \in \R^d$, 
we write $\vx(i)$ as the $i$-th entry of vector $\vx$, 
and write $\vx(S) := \sum_{i \in S} \vx(i)$ for any subset $S \subseteq [d]$. 
We denote $\norm{\vx}_2$ the $\ell_2$-norm, $\norm{\vx}_1$ the $\ell_1$-norm, and $\norm{\vx}_\infty$ the $\ell_\infty$-norm of $\vx$. 
A vector $\vv \in \R^d$ is a column vector, and its transpose is denoted by $\vv^\top$.
Given two vectors $\vx,\vy \in \R^d$, the inner product is defined as $\inner{\vx}{\vy} := \sum_{i=1}^n \vx(i) \cdot \vy(i)$.
The Cauchy-Schwarz inequality says that $\inner{\vx}{\vy} \leq \norm{\vx} \norm{\vy}$.

% matrix norms and inner product
All matrices considered in this paper are real symmetric matrices.
We denote the $d \times d$ identity matrix by $\mI_d$ or simply $\mI$ when the dimension is clear from the context.
It is a fundamental result that any $d \times d$ real symmetric matrix has $d$ real eigenvalues $\lambda_1 \leq \ldots \leq \lambda_n$ and an orthonormal basis of eigenvectors.
We write $\lambda_{\max}(\mM)$ and $\lambda_{\min}(\mM)$ as the maximum and the minimum eigenvalue of a real symmetric matrix $\mM$.
The trace of a matrix $\mM$, denoted by $\tr(\mM)$, is defined as the sum of the diagonal entries of $\mM$.
It is well-known that $\tr(\mM) = \sum_{i=1}^d \lambda_i(\mM)$ where $\lambda_i(\mM)$ denotes the $i$-th eigenvalue of $\mM$.

A matrix $\mM$ is a positive semidefinite (PSD) matrix, denoted as $\mM \succeq 0$, if $\mM$ is symmetric and all the eigenvalues are nonnegative, or equivalently, the quadratic form $\vx^\top \mM \vx \geq 0$ for any vector $\vx$. 
We use $\mA \succeq \mB$ to denote $\mA - \mB \succeq 0$ for matrices $\mA$ and $\mB$. 
We write $\mathbb{S}^d_+$ as the set of all $d$-dimensional PSD matrices.
Let $\mM \succeq 0$ be a PSD matrix with eigendecomposition $\mM = \sum_i \lambda_i \vv_i \vv_i^\top$, where $\lambda_i \geq 0$ is the $i$-th eigenvalue and $\vv_i$ is the corresponding eigenvector. 
The square root of $\mM$ is $\mM^{1/2}:= \sum_i \sqrt{\lambda_i} \vv_i \vv_i^\top$.
Given two matrices $\mA$ and $\mB$ of the same size, the Frobenius inner product of $\mA, \mB$ is denoted as $\langle \mA, \mB \rangle := \sum_{i,j} \mA(i,j) \cdot \mB(i,j) = \tr(\mA^\top \mB)$. The following are two standard facts
\[
\mA, \mB \succeq 0 \; \Longrightarrow \; \langle \mA, \mB \rangle \geq 0 \quad \text{and} \quad \mA \succeq 0, \mB \succeq \mC \succeq 0 \; \Longrightarrow \; \langle \mA, \mB \rangle \geq \langle \mA, \mC \rangle.
\]

\subsection{Regret Minimization} \label{ss:regret}

We will use the regret minimization framework developed in~\cite{AZLO15,AZLSW20} for spectral sparsification and one-sided spectral rounding.
The framework is for an online optimization setting.
In each iteration $t$, the player chooses an action matrix $\mA_t$ from the set of density matrices $\Delta^d := \{\mA \in \R^{d \times d} \mid \mA \succeq 0, \tr(\mA)=1\}$, which can be understood as a probability distribution over the set of unit vectors.
The player then observes a feedback matrix $\mF_t$ and incurs a loss of $\inner{\mA_t}{\mF_t}$.
After $\tau$ iterations, the regret of the player is defined as 
\[
R_\tau := \sum_{t=1}^\tau \inner{\mA_t}{\mF_t} - \inf_{\mB \in \Delta^d} \sum_{t=1}^\tau \inner{\mB}{\mF_t} 
= \sum_{t=1}^\tau \inner{\mA_t}{\mF_t} - \lambda_{\min}\Bigg(\sum_{t=1}^\tau \mF_t\Bigg),
\]
which is the difference between the loss of the player actions and the loss of the best fixed action $\mB$, that can be assumed to be a rank one matrix $\vv\vv^\top$.
The objective of the player is to minimize the regret.
A well-known algorithm for regret minimization is follow-the-regularized-leader which plays the action
\[\mA_t = \argmin_{\mA \in \Delta^d} \bigg\{ w(\mA) + \alpha \cdot \sum_{l=0}^{t-1} \inner{\mA}{\mF_l} \bigg\},
\]
where $w(\mA)$ is a regularization term and $\alpha$ is a parameter called the learning rate that balances the loss and the regularization. 
Here $\mF_0$ is an initial feedback which is given before the game started.
Different choices of regularization give different algorithms for regret minimization.
The choice that we will use is the $\ell_{\frac12}$-regularizer $w(\mA) = -2\tr(\mA^{\frac12})$ introduced in~\cite{AZLO15}, which plays the action
\begin{equation} \label{e:closed-form}
\mA_t = \bigg( l_t \mI + \alpha \sum_{l=0}^{t-1} \mF_l \bigg)^{-2},
\end{equation}
where $l_t$ is the unique constant that ensures $\mA_t \in \Delta^d$.
Allen-Zhu, Li, Singh and Wang~\cite{AZLSW20} proves the following regret bound for rank-two feedback matrices with $\ell_{\frac12}$ regularizer.

\begin{theorem}[Lemma 2.5 in~\cite{AZLSW20}] \label{t:regret-rank-two}
Suppose the action matrix $\mA_t \in \R^{d \times d}$ is of the form of~\eqref{e:closed-form} for some $\alpha > 0$. 
Suppose the initial feedback matrix $\mF_0 \in {\mathbb S}^d$ is a symmetric matrix, and for all $t \geq 1$ each feedback matrix $\mF_t$ is of the form $\vv_{j_t} \vv_{j_t}^\top - \vv_{i_t} \vv_{i_t}^\top$ for some $\vv_{j_t}, \vv_{i_t} \in \R^d$ such that $\alpha \inner{\vv_{i_t} \vv_{i_t}^\top}{\mA_t^{\frac12}} < \frac12$. 
then for any density matrix $U$,
\[
\sum_{t=1}^\tau \inner{\mF_t}{\mU} \geq 
\sum_{t=1}^\tau 
\Bigg( \frac{ \inner{\vv_{j_t} \vv_{j_t}^\top}{\mA_t} }{1 + 2\alpha \inner{\vv_{j_t} \vv_{j_t}^\top}{\mA_t^{\frac12}} }   
- \frac{\inner{\vv_{i_t} \vv_{i_t}^\top }{\mA_t}}{1 - 2\alpha \inner{\vv_{i_t} \vv_{i_t}^\top}{\mA_t^{\frac12}} } 
\Bigg)
- \frac{D(\mA_1,\mU)}{\alpha},
\]
where $D(\mA_1, \mU):= \tr(\mA_1^{\frac12}) - 2\tr(\mU^{\frac12}) + \inner{\mA_1^{-\frac12}}{\mU}$ is the Bregman divergence of the $\ell_{\frac12}$-regularizer.
\end{theorem}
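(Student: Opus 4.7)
The plan is to follow the standard follow-the-regularized-leader (FTRL) template, instantiated with the $\ell_{1/2}$ regularizer $w(\mA)=-2\tr(\mA^{1/2})$, and to replace the usual rank-one feedback step by a rank-two version tailored to $\mF_t=\vv_{j_t}\vv_{j_t}^\top-\vv_{i_t}\vv_{i_t}^\top$. The argument naturally splits into a global telescoping step that produces the $-D(\mA_1,\mU)/\alpha$ term, and a per-step matrix-calculus step that produces the two fractions.

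For the \emph{global step}, I use the fact that $\mA_{t+1}=\argmin_{\mA\in\Delta^d}\{w(\mA)/\alpha+\sum_{l=0}^t\inner{\mA}{\mF_l}\}$. Evaluating this minimum at $\mA_{t+1}$ and at $\mU$ and telescoping gives the "be-the-leader" bound
\[
\sum_{t=1}^\tau \inner{\mA_{t+1}}{\mF_t}\;\le\;\sum_{t=1}^\tau \inner{\mU}{\mF_t}+\frac{w(\mU)-w(\mA_1)}{\alpha}+\inner{\mA_1-\mU}{\mF_0}.
\]
Using $\mA_1=(l_1\mI+\alpha\mF_0)^{-2}$ to substitute $\alpha\mF_0=\mA_1^{-1/2}-l_1\mI$ and expanding $w$, the right-hand side reorganizes into exactly $D(\mA_1,\mU)/\alpha$ up to the constant $\tr(\mA_1^{1/2})$, which cancels against $-2\tr(\mU^{1/2})+\tr(\mA_1^{1/2})$ coming from $w(\mU)-w(\mA_1)$. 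This delivers the divergence correction in the theorem and reduces the problem to producing a lower bound for each $\inner{\mA_{t+1}}{\mF_t}$ of the advertised form.

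For the \emph{per-step step}, write $M_t:=l_t\mI+\alpha\sum_{l<t}\mF_l$ so that $\mA_t=M_t^{-2}$, and decompose the update $M_{t+1}=M_t+\alpha\vv_{j_t}\vv_{j_t}^\top-\alpha\vv_{i_t}\vv_{i_t}^\top+(l_{t+1}-l_t)\mI$. The $l$-shift only makes $M_{t+1}$ larger in the PSD sense, hence makes $\mA_{t+1}$ smaller in the PSD sense; this is harmless for the upper bound on $\inner{\vv_{i_t}\vv_{i_t}^\top}{\mA_{t+1}}$ (the direction we want) and can be absorbed into the lower bound for $\inner{\vv_{j_t}\vv_{j_t}^\top}{\mA_{t+1}}$ by handling the $+\alpha\vv_{j_t}\vv_{j_t}^\top$ update first and observing that further PSD-increase of $M$ only increases the $(1+2\alpha\inner{\vv_{j_t}\vv_{j_t}^\top}{\mA_t^{1/2}})^{-1}$-shaped lower bound. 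Each rank-one perturbation is then analyzed via the Sherman--Morrison identity applied to $\mM^{-1}$ and squared: for $\mM\mapsto\mM+\alpha\vv\vv^\top$, a direct computation of $\inner{\vv\vv^\top}{(\mM+\alpha\vv\vv^\top)^{-2}}$ yields the denominator $1+2\alpha\inner{\vv\vv^\top}{\mM^{-1}}$ after noting that $\inner{\vv\vv^\top}{\mM^{-1}}=\inner{\vv\vv^\top}{\mA_t^{1/2}}$ since $\mA_t^{1/2}=M_t^{-1}$; the analogous calculation for $-\alpha\vv\vv^\top$ uses the hypothesis $\alpha\inner{\vv_{i_t}\vv_{i_t}^\top}{\mA_t^{1/2}}<1/2$ to keep the denominator $1-2\alpha\inner{\vv_{i_t}\vv_{i_t}^\top}{\mA_t^{1/2}}$ positive and $M_t-\alpha\vv_{i_t}\vv_{i_t}^\top$ invertible.

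The main obstacle is the bookkeeping in the per-step step: the two rank-one updates do not commute and interact with the implicit shift $l_{t+1}-l_t$ enforcing $\tr(\mA_{t+1})=1$. I would handle this by applying the two rank-one Sherman--Morrison identities \emph{sequentially}, using operator monotonicity of $\mM\mapsto-\mM^{-2}$ on the PSD cone to justify processing the $+\alpha\vv_{j_t}\vv_{j_t}^\top$ and $-\alpha\vv_{i_t}\vv_{i_t}^\top$ perturbations one at a time and to absorb the trace-preserving shift in the favourable direction for each inequality. This reduces the rank-two case to the rank-one case already established for the $\ell_{1/2}$-regularizer in the sparsification literature, after which the combination of the per-step fractions with the telescoped divergence bound completes the proof.
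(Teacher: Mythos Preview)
The paper does not prove this theorem; it is quoted from \cite{AZLSW20}, so there is no in-paper argument to compare against. Your global telescoping step is fine and does produce the $D(\mA_1,\mU)/\alpha$ term. The per-step analysis, however, has a real gap.

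You invoke ``operator monotonicity of $\mM\mapsto -\mM^{-2}$ on the PSD cone'' to justify processing the two rank-one updates and the $l$-shift sequentially. This map is \emph{not} operator monotone: take $\mA=\diag(1,2)$ and $\mB=\bigl(\begin{smallmatrix}2&1\\1&3\end{smallmatrix}\bigr)$; then $\mA\preceq\mB$ but $\mA^{-2}-\mB^{-2}$ has negative determinant. So even if the ordering $\mM_{t+1}\succeq \mM_t+\alpha\vv_{j_t}\vv_{j_t}^\top$ held, you could not conclude anything about $\mA_{t+1}=\mM_{t+1}^{-2}$ versus $(\mM_t+\alpha\vv_{j_t}\vv_{j_t}^\top)^{-2}$. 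Separately, the sign of $l_{t+1}-l_t$ is not determined when $\mF_t$ is indefinite, so the assertion that ``the $l$-shift only makes $\mM_{t+1}$ larger'' is unjustified. And even for the commuting shift (where $\mM\mapsto\mM^{-2}$ is antitone), a PSD increase of $\mM$ \emph{decreases} $\inner{\vv_{j_t}\vv_{j_t}^\top}{\mM^{-2}}$, which is the wrong direction for the lower bound you need; your claim that ``further PSD-increase of $\mM$ only increases the $(1+2\alpha\langle\cdot\rangle)^{-1}$-shaped lower bound'' is false.

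The proof in \cite{AZLSW20} does not decouple the two rank-one pieces. It handles the rank-two perturbation of $\mM_t^{-1}=\mA_t^{1/2}$ in one shot via the Woodbury identity, and then controls the resulting $2\times2$ block with the inequality restated here as Lemma~\ref{l:inverse-rank2} (their Claim~2.10). That lemma is exactly what absorbs the cross term $\inner{\vv_{i_t}\vv_{j_t}^\top}{\mA_t^{1/2}}$ that your sequential Sherman--Morrison approach cannot see; the condition $2\alpha\inner{\vv_{i_t}\vv_{i_t}^\top}{\mA_t^{1/2}}<1$ is precisely the hypothesis $2c<1$ there. The trace-normalizing shift is then handled using that it commutes with everything, not via a general operator-monotonicity claim.
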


The above theorem is used in~\cite{AZLSW20} to give a lower bound on the minimum eigenvalue of $\lambda_{\min}(\sum_{t=0}^{\tau} \mF_t)$, by bounding $D(\mA_1,\mU) \leq \alpha \inner{\mF_0}{\mU} + 2\sqrt{d}$. 
We will use the following more refined version where there is an extra $\lambda_{\min}(\mF_0)$ term.

\begin{corollary} \label{cor:lambda-min-rank-two}
Under the same assumptions as in Theorem~\ref{t:regret-rank-two},
\begin{equation*} %\label{e:regret-rank-two}
\lambda_{\min}\bigg( \sum_{t=0}^\tau \mF_t \bigg) 
\geq
\sum_{t=1}^\tau
\Bigg( \frac{ \inner{\vv_{j_t} \vv_{j_t}^\top}{\mA_t} }{1 + 2\alpha \inner{\vv_{j_t} \vv_{j_t}^\top}{\mA_t^{\frac12}} }   
- \frac{\inner{\vv_{i_t} \vv_{i_t}^\top }{\mA_t}}{1 - 2\alpha \inner{\vv_{i_t} \vv_{i_t}^\top}{\mA_t^{\frac12}} } 
\Bigg) 
- \frac{2\sqrt{d}}{\alpha} + \lambda_{\min}(\mF_0).
\end{equation*}
\end{corollary}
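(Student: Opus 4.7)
The plan is to apply Theorem~\ref{t:regret-rank-two} with a carefully chosen rank-one density matrix and then extract the extra $\lambda_{\min}(\mF_0)$ term by a sharper bookkeeping of the Bregman divergence, rather than absorbing it crudely as was done in~\cite{AZLSW20}.

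First, I would take $\vu$ to be a unit eigenvector corresponding to $\lambda_{\min}\bigl(\sum_{t=0}^{\tau} \mF_t\bigr)$ and set $\mU := \vu \vu^\top$, which is a rank-one density matrix. Splitting off the $t=0$ term gives
\[
\lambda_{\min}\Bigl(\sum_{t=0}^\tau \mF_t\Bigr) \;=\; \inner{\mF_0}{\mU} \;+\; \sum_{t=1}^\tau \inner{\mF_t}{\mU},
\]
so invoking Theorem~\ref{t:regret-rank-two} on the second sum reduces the corollary to showing
\[
\inner{\mF_0}{\mU} \;-\; \frac{D(\mA_1,\mU)}{\alpha} \;\geq\; -\frac{2\sqrt{d}}{\alpha} \;+\; \lambda_{\min}(\mF_0),
\]
i.e.\ $D(\mA_1,\mU) \leq 2\sqrt{d} + \alpha \inner{\mF_0}{\mU} - \alpha \lambda_{\min}(\mF_0)$.

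Next I would compute $D(\mA_1,\mU)$ explicitly using the closed form $\mA_1 = (l_1 \mI + \alpha \mF_0)^{-2}$ from \eqref{e:closed-form}. Since $\mU$ is rank one with $\tr(\mU)=1$, one has $\mU^{\frac12}=\mU$, so $\tr(\mU^{\frac12})=1$, while $\inner{\mA_1^{-\frac12}}{\mU} = \inner{l_1 \mI + \alpha \mF_0}{\mU} = l_1 + \alpha \inner{\mF_0}{\mU}$. Hence
\[
D(\mA_1,\mU) \;=\; \tr\bigl((l_1 \mI + \alpha \mF_0)^{-1}\bigr) \;+\; l_1 \;-\; 2 \;+\; \alpha \inner{\mF_0}{\mU},
\]
and the target bound reduces to the purely spectral estimate
\[
l_1 \;+\; \tr\bigl((l_1 \mI + \alpha \mF_0)^{-1}\bigr) \;\leq\; 2\sqrt{d} \;-\; \alpha\,\lambda_{\min}(\mF_0).
\]

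The only real work is proving this last inequality, and I expect this to be the main (though modest) obstacle. Let $\mu_1 \leq \cdots \leq \mu_d$ be the eigenvalues of $\mF_0$ and write $\sigma_i := l_1 + \alpha \mu_i$, so the normalization $\tr(\mA_1)=1$ becomes $\sum_i \sigma_i^{-2}=1$. Shifting by $\tilde{l} := l_1 + \alpha \lambda_{\min}(\mF_0)$, we have $\sigma_i \geq \tilde{l} > 0$ for every $i$, which forces $d/\tilde{l}^{\,2} \geq \sum_i \sigma_i^{-2} = 1$, i.e.\ $\tilde{l} \leq \sqrt{d}$. Cauchy--Schwarz then yields $\sum_i \sigma_i^{-1} \leq \sqrt{d \sum_i \sigma_i^{-2}} = \sqrt{d}$. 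Adding the two bounds gives $\tilde{l} + \tr\bigl((l_1 \mI + \alpha \mF_0)^{-1}\bigr) \leq 2\sqrt{d}$, which is exactly the required inequality once $\tilde{l}$ is re-expanded. Substituting back into $D(\mA_1,\mU)$, the two copies of $\alpha \inner{\mF_0}{\mU}$ cancel cleanly and the advertised lower bound follows. The argument is essentially a refinement of the $D(\mA_1,\mU)\leq 2\sqrt{d} + \alpha \inner{\mF_0}{\mU}$ estimate used in~\cite{AZLSW20}; the new observation is that tracking the shift of $l_1$ separately converts the hidden slack into the explicit additive term $\lambda_{\min}(\mF_0)$.
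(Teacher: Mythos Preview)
Your proposal is correct and follows essentially the same approach as the paper's proof. Both take $\mU$ to be the rank-one projector onto the minimum eigenspace of $\sum_{t=0}^\tau \mF_t$, reduce to the bound $D(\mA_1,\mU)\le 2\sqrt{d}+\alpha\inner{\mF_0}{\mU}-\alpha\lambda_{\min}(\mF_0)$, and establish it via the two ingredients $\tr(\mA_1^{1/2})\le\sqrt{d}$ (Cauchy--Schwarz; the paper cites this as Lemma~\ref{l:tr}) and $l_1+\alpha\lambda_{\min}(\mF_0)\le\sqrt{d}$ (from the trace normalization $\tr(\mA_1)=1$). Your eigenvalue bookkeeping with the $\sigma_i$'s is just an explicit rendering of those same two estimates; the only cosmetic difference is that you drop the harmless $-2$ on the target side and prove a slightly stronger inequality than needed.
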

\begin{proof}
Let $\mU$ be a rank-one projection on the minimum eigenspace of $\sum_{t=0}^\tau \mF_t$.
We will show that $D(\mA_1, \mU) \leq \inner{\alpha \mF_0}{\mU} + 2\sqrt{d} - \alpha \lambda_{\min}(\mF_0)$. 
Note that
\[
D(\mA_1, \mU) = \tr(\mA_1^{\frac12}) - 2\tr(\mU^{\frac12}) + \inner{\mA_1^{-\frac12}}{\mU} \leq \inner{\alpha \mF_0 + l_1 \mI}{\mU} + \sqrt{d} = \inner{\alpha \mF_0}{\mU} + l_1 + \sqrt{d},
\]
where we used $\tr(\mA_1^{\frac12}) \leq \sqrt{d}$ as stated in Lemma~\ref{l:tr} and $\mU$ is also a density matrix.
So it remains to upper bound $l_1$. 
Since $\mF_0 \succcurlyeq \lambda_{\min}(\mF_0) \cdot \mI$ and $\tr(\mA_1) = 1$, 
it follows that
\[
1 = \tr(\mA_1) \leq (\alpha \lambda_{\min}(\mF_0) + l_1)^{-2} \cdot \tr(\mI) = (\alpha \lambda_{\min}(\mF_0) + l_1)^{-2} \cdot d \quad \Longrightarrow \quad l_1 \leq \sqrt{d} - \alpha \lambda_{\min}(\mF_0).
\]
The corollary follows from Theorem~\ref{t:regret-rank-two} by plugging in this $\mU$ and this upper bound on $D(\mA_1,\mU)$.
\end{proof}

One technical point used in~\cite{AZLSW17c,AZLSW20} is that the partial solution $\mZ_t := \sum_{l=0}^{t-1} \mF_l$ at time $t$ and the action matrix $\mA_t$ at time $t$ have the same eigenbasis due to (\ref{e:closed-form}).
This allows one to bound $\inner{\mZ_t}{\mA_t}$ and $\inner{\mZ_t}{\mA_t^{\frac12}}$ as follows.

\begin{lemma}[Claim 2.11 in~\cite{AZLSW20}] \label{l:cospectral}
Let $\mZ \succeq 0$ be an $d \times d$ positive semidefinite matrix
and $\mA = (\alpha \mZ + l\mI)^{-2}$ for some $\alpha>0$ where $l$ is the unique constant such that $\mA$ is a density matrix.
Then
\[
\inner{\mZ}{\mA} \leq \frac{\sqrt{d}}{\alpha} + \lambda_{\min}(\mZ) \qquad \text{and} \qquad \alpha \inner{\mZ}{\mA^{\frac12}} \leq d + \alpha \sqrt{d} \cdot \lambda_{\min}(\mZ).
\]
\end{lemma}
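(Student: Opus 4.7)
The plan is to exploit the fact that, by construction, $\mA = (\alpha \mZ + l\mI)^{-2}$ is a polynomial in $\mZ$, so $\mA$ and $\mZ$ commute and share a common eigenbasis. Thus I would diagonalize simultaneously: let $\lambda_1 \le \dots \le \lambda_d$ be the eigenvalues of $\mZ$ (so $\lambda_1 = \lambda_{\min}(\mZ)$), and note that $\mA$ has eigenvalues $(\alpha\lambda_i + l)^{-2}$. The density-matrix constraint $\tr(\mA)=1$ becomes $\sum_i (\alpha\lambda_i+l)^{-2}=1$, and positivity of each eigenvalue of $\mA$ forces $\alpha\lambda_i + l > 0$ for every $i$, hence in particular $\alpha\lambda_{\min}(\mZ) + l > 0$. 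I would also record the standard fact (Cauchy-Schwarz) that $\tr(\mA^{1/2}) \le \sqrt{d\cdot \tr(\mA)} = \sqrt{d}$, stated as Lemma~\ref{l:tr} in the paper.

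For the first inequality, the key trick is the identity
\[
\lambda_i - \lambda_{\min}(\mZ) \;=\; \frac{(\alpha\lambda_i+l) - (\alpha\lambda_{\min}(\mZ)+l)}{\alpha},
\]
which lets me write
\[
\inner{\mZ}{\mA} \;=\; \lambda_{\min}(\mZ)\cdot\tr(\mA) + \sum_i \bigl(\lambda_i - \lambda_{\min}(\mZ)\bigr)(\alpha\lambda_i+l)^{-2} \;=\; \lambda_{\min}(\mZ) + \frac{\tr(\mA^{1/2}) - (\alpha\lambda_{\min}(\mZ)+l)}{\alpha}.
\]
Since $\tr(\mA^{1/2}) \le \sqrt{d}$ and $\alpha\lambda_{\min}(\mZ)+l > 0$, the numerator on the right is strictly less than $\sqrt{d}$, which yields $\inner{\mZ}{\mA} \le \lambda_{\min}(\mZ) + \sqrt{d}/\alpha$.

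For the second inequality, I would use the algebraic identity $\frac{\alpha\lambda_i}{\alpha\lambda_i+l} = 1 - \frac{l}{\alpha\lambda_i+l}$ to get
\[
\alpha\inner{\mZ}{\mA^{1/2}} \;=\; \sum_i \frac{\alpha\lambda_i}{\alpha\lambda_i+l} \;=\; d - l\cdot\tr(\mA^{1/2}).
\]
Then I split on the sign of $l$: if $l\ge 0$, the right side is at most $d$, which is stronger than the claimed bound; if $l<0$, then the constraint $\alpha\lambda_{\min}(\mZ)+l>0$ gives $-l < \alpha\lambda_{\min}(\mZ)$, and combined with $\tr(\mA^{1/2}) \le \sqrt{d}$ this yields $-l\cdot\tr(\mA^{1/2}) \le \alpha\sqrt{d}\,\lambda_{\min}(\mZ)$, so $\alpha\inner{\mZ}{\mA^{1/2}} \le d + \alpha\sqrt{d}\,\lambda_{\min}(\mZ)$ in both cases.

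There is no real obstacle here: the whole argument is a one-line decomposition once one notices the common eigenbasis and uses $\tr(\mA^{1/2}) \le \sqrt{d}$ together with the positivity constraint $\alpha\lambda_{\min}(\mZ) + l > 0$ that implicitly defines $l$. The mildly delicate point is simply remembering to case-split on the sign of $l$ in the second bound, since the inequality $-l \le \alpha\lambda_{\min}(\mZ)$ is the only reason the $\lambda_{\min}(\mZ)$ term appears there.
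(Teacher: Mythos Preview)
Your proof is correct. The paper does not actually prove this lemma; it is quoted verbatim as Claim~2.11 from~\cite{AZLSW20} and used as a black box, so there is no in-paper argument to compare against. Your approach---simultaneously diagonalize, use the identity $\alpha\lambda_i = (\alpha\lambda_i+l) - l$ to reduce both inner products to expressions in $\tr(\mA^{1/2})$ and $l$, then invoke $\tr(\mA^{1/2})\le\sqrt{d}$ together with the positivity constraint $\alpha\lambda_{\min}(\mZ)+l>0$---is exactly the standard argument and is what one finds in~\cite{AZLSW20}. The case split on the sign of $l$ in the second bound is the right observation.
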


We will use the above lemma in the analysis of the combinatorial algorithm for E-design. 
We also use the following simple fact about the action matrix.

\begin{lemma} \label{l:tr}
For any $d \times d$ matrix $\mA \succcurlyeq 0$ satisfying $\tr(\mA) = 1$, 
$\tr\big(\mA^{\frac12}\big) \leq \sqrt{d}$.
\end{lemma}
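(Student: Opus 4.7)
The plan is to reduce to a one-line inequality on eigenvalues via the spectral theorem. Since $\mA \succeq 0$ is symmetric, I would diagonalize it as $\mA = \mQ \diag(\lambda_1,\ldots,\lambda_d) \mQ^\top$ with $\lambda_i \geq 0$, so that $\mA^{1/2} = \mQ \diag(\sqrt{\lambda_1},\ldots,\sqrt{\lambda_d}) \mQ^\top$. Then $\tr(\mA) = \sum_i \lambda_i = 1$ and $\tr(\mA^{1/2}) = \sum_i \sqrt{\lambda_i}$, so the claim reduces to showing
\[
\sum_{i=1}^d \sqrt{\lambda_i} \leq \sqrt{d}
\qquad \text{whenever} \qquad
\lambda_i \geq 0, \ \sum_{i=1}^d \lambda_i = 1.
\]

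Next, I would apply the Cauchy-Schwarz inequality (stated in Section~\ref{ss:algebra}) to the two vectors $(\sqrt{\lambda_1},\ldots,\sqrt{\lambda_d})$ and $(1,\ldots,1)$ in $\R^d$, obtaining
\[
\sum_{i=1}^d \sqrt{\lambda_i} = \bigl\langle (\sqrt{\lambda_1},\ldots,\sqrt{\lambda_d}),\ \vec{1}_d \bigr\rangle \leq \sqrt{\sum_{i=1}^d \lambda_i} \cdot \sqrt{\sum_{i=1}^d 1} = \sqrt{1} \cdot \sqrt{d} = \sqrt{d},
\]
which gives the desired bound. (Equivalently, this is the power-mean inequality between the arithmetic mean of $\sqrt{\lambda_i}$ and the quadratic mean.)

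There is no real obstacle here: the only thing to be careful about is that $\mA$ is symmetric so that the spectral decomposition exists and $\mA^{1/2}$ is well-defined as in Section~\ref{ss:algebra}, both of which are assumed. Equality is achieved when all $\lambda_i = 1/d$, i.e.\ $\mA = \mI/d$, showing the bound is tight.
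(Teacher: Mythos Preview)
Your proof is correct. The paper states this lemma as a simple fact without giving a proof, and your argument via spectral decomposition followed by Cauchy--Schwarz on the eigenvalues is exactly the standard way to establish it.
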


\subsection{One-Sided Spectral Rounding and Iterative Randomized Rounding} \label{ss:iter}

The following one-sided spectral rounding result was formulated and proved by Allen-Zhu, Li, Singh and Wang~\cite{AZLSW20}.
It is the main theorem that implies a $(1\pm \eps)$-approximation algorithm for a large class of experimental design problems with one cardinality constraint, whenever the budget $b \geq \Omega\left(\frac{d}{\eps^2}\right)$.

\begin{theorem}[\cite{AZLSW20}] \label{t:swap}
Let $\vv_1, \vv_2, \ldots, \vv_n \in \R^d$, $\vx \in [0,1]^m$ and $b = \sum_{i=1}^m \vx(i)$.
Suppose $\sum_{i=1}^m \vx(i) \cdot \vv_i \vv_i^\top = \mI_d$ and $b \geq \frac{5d}{\eps^2}$ for some $\eps \in (0,\frac13]$. 
Then there is a polynomial time algorithm to return a subset $S \subseteq [m]$ with
\[|S| \leq b
\quad {\rm and} \quad 
\sum_{i \in S} v_i v_i^\top \succeq (1-3\eps) \cdot \mI_d.\]
\end{theorem}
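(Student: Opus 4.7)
The plan is to obtain Theorem~\ref{t:swap} as an application of the regret-minimization-driven swap procedure of Section~\ref{ss:regret}. Start with an arbitrary subset $S_0 \subseteq [n]$ of size $\lfloor b \rfloor$, let $\mZ_t := \sum_{i \in S_t} \vv_i \vv_i^\top$, and feed the regret minimizer the initial symmetric matrix $\mF_0 := \mZ_0$. At each step $t \geq 1$, find an exchange pair $(i_t, j_t)$ with $i_t \in S_{t-1}$ and $j_t \notin S_{t-1}$, set $S_t := S_{t-1} - i_t + j_t$, and feed back $\mF_t := \vv_{j_t} \vv_{j_t}^\top - \vv_{i_t} \vv_{i_t}^\top$, so that $\sum_{l=0}^{t} \mF_l = \mZ_t$. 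The $\ell_{1/2}$-regularized action matrix is then $\mA_t := (l_t \mI + \alpha \mZ_{t-1})^{-2}$ with learning rate $\alpha := \sqrt{d}/\eps$, and the cardinality constraint $|S_\tau| \leq b$ is automatic since swaps preserve $|S_t|$.

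Writing $c_i := \inner{\vv_i \vv_i^\top}{\mA_t}$ and $s_i := \inner{\vv_i \vv_i^\top}{\mA_t^{1/2}}$, I would adopt the regret-optimal greedy swap rule: pick $j_t \notin S_{t-1}$ maximizing $c_j / (1 + 2\alpha s_j)$ and $i_t \in S_{t-1}$ minimizing $c_i / (1 - 2\alpha s_i)$. The fractional solution $\vx$ enters only as an existence witness, through the two identities
\[
\sum_{i=1}^n \vx(i) \cdot c_i \;=\; \inner{\mI}{\mA_t} \;=\; 1
\qquad\text{and}\qquad
\sum_{i=1}^n \vx(i) \cdot s_i \;=\; \tr\bigl(\mA_t^{1/2}\bigr) \;\leq\; \sqrt{d},
\]
which use $\sum_i \vx(i) \vv_i \vv_i^\top = \mI$, $\tr(\mA_t)=1$, and Lemma~\ref{l:tr}. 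Splitting each identity along the partition $(S_{t-1}, [n] \setminus S_{t-1})$ and combining with the co-spectral bound $\inner{\mZ_{t-1}}{\mA_t} \leq \sqrt{d}/\alpha + \lambda_{\min}(\mZ_{t-1})$ from Lemma~\ref{l:cospectral}, together with the near-matching weights $\sum_{i \in S_{t-1}}(1-\vx(i)) \approx \sum_{j \notin S_{t-1}} \vx(j)$ (which follows from $|S_{t-1}| \leq b = \sum_i \vx(i)$), one arrives at a weighted-averaging statement: whenever $\lambda_{\min}(\mZ_{t-1}) < 1 - 3\eps$, the greedy swap satisfies the per-step progress bound
\[
G_t \;:=\; \frac{c_{j_t}}{1 + 2\alpha s_{j_t}} \;-\; \frac{c_{i_t}}{1 - 2\alpha s_{i_t}} \;\geq\; \frac{\eps}{b}.
\]

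Summing $G_t$ over $\tau$ iterations and invoking Corollary~\ref{cor:lambda-min-rank-two} with $\lambda_{\min}(\mF_0) \geq 0$ gives
\[
\lambda_{\min}(\mZ_\tau) \;\geq\; \sum_{t=1}^\tau G_t \;-\; \frac{2\sqrt{d}}{\alpha} \;=\; \sum_{t=1}^\tau G_t \;-\; 2\eps.
\]
If $\lambda_{\min}(\mZ_{t-1})$ had stayed below $1 - 3\eps$ for every $t \leq \tau$, then after $\tau = b/\eps$ swaps we would obtain $\lambda_{\min}(\mZ_\tau) \geq 1 - 2\eps > 1 - 3\eps$, a contradiction; so the threshold is crossed within $O(b/\eps)$ iterations, each of which only requires the spectral decomposition of $\mZ_{t-1}$ and is therefore polynomial time.

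The main obstacle is denominator control in the averaging step. Corollary~\ref{cor:lambda-min-rank-two} is applicable only under $\alpha s_{i_t} < \tfrac12$, and the ratios $c/(1 \pm 2\alpha s)$ are themselves ill-behaved if some $s_i$ is so large that $1 - 2\alpha s_i \leq 0$. Using $\sum_i \vx(i) s_i \leq \sqrt{d}$ with $\alpha = \sqrt{d}/\eps$ and the budget assumption $b \geq 5d/\eps^2$, the $\vx$-weighted average of $\alpha s_i$ is at most $d/(b \eps) \leq \eps/5$; promoting this average statement into a uniform bound is the delicate part and will probably require either a preprocessing step that thresholds out vectors with $\alpha s_i$ too large (arguing they are either unnecessary or safely committed into $S_{t-1}$ from the start) or a two-scale averaging that handles heavy and light indices separately. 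This is also where the specific constants $5$ and $3$ in the hypothesis $b \geq 5d/\eps^2$ and the conclusion $(1-3\eps)\mI_d$ should be calibrated.
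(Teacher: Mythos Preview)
The paper does not prove Theorem~\ref{t:swap}; it is quoted from~\cite{AZLSW20} as background. That said, the paper essentially re-derives it (in more generality) in Section~\ref{ss:comb-E}, and your sketch matches both the description in Section~1.2 and the argument in Section~\ref{ss:comb-E}: greedy swaps driven by the $\ell_{1/2}$ action matrix, with the fractional isotropic solution used only as an averaging witness, and Corollary~\ref{cor:lambda-min-rank-two} turning accumulated per-step progress into an eigenvalue lower bound.

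Your identification of the denominator issue is correct, but your proposed fixes (preprocessing heavy vectors, two-scale averaging) are more elaborate than needed. The resolution used in the paper (Lemma~\ref{l:loss-E}) and in~\cite{AZLSW20} is simply to restrict the removal candidates to $S' := \{ i \in S_{t-1} : 2\alpha s_i < 1 \}$ and observe that this restriction only \emph{helps} the averaging bound: since $(1-2\alpha s_i) \leq 0$ for $i \in S_{t-1}\setminus S'$, one has
\[
\sum_{i \in S'} (1 - 2\alpha s_i) \;\geq\; \sum_{i \in S_{t-1}} (1 - 2\alpha s_i) \;=\; |S_{t-1}| - 2\alpha \inner{\mZ_{t-1}}{\mA_t^{1/2}},
\]
and the right-hand side is then controlled by Lemma~\ref{l:cospectral}. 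No thresholding or case split is required; the ``bad'' indices are absorbed automatically by the inequality. With this in place your per-step bound $G_t \geq \eps/b$ goes through essentially as you wrote, and the constants $5$ and $3$ fall out of balancing the $2\sqrt{d}/\alpha = 2\eps$ regret overhead against the $\alpha \inner{\mZ}{\mA^{1/2}} \leq d + \alpha\sqrt{d}\,\lambda_{\min}(\mZ)$ term.
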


Theorem~\ref{t:swap} was extended in~\cite{LZ20} to incorporate non-negative linear constraints, where the goal is to output a subset $S$ that approximately satisfies the spectral lower bound and also $\vc(S) \approx \inner{\vc}{\vx}$ for any non-negative linear constraint $\vc \in \R^n_+$.
The randomized exchange algorithm in this paper is almost the same as the iterative randomized rounding algorithm in~\cite{LZ20}. 
In the following, we describe the algorithm in~\cite{LZ20}
and state several results that we will use in our analysis.

In the iterative randomized rounding algorithm, 
we start with an initial solution set $S_0 \subseteq [n]$, 
which is constructed by sampling each vector $i$ with probability $\vx(i)$ independently for all $i \in [n]$. 
In each iteration $t$, with the current solution set $S_{t-1}$, we randomly choose a vector $i_t \in S_{t-1}$ and a vector $j_t \in [n] \setminus S_{t-1}$ and set $S_t \gets S_{t-1} - i_t + j_t$. 
The sampling distributions in each iteration depend on the action matrix $\mA_t$ as defined in eq.~\eqref{e:closed-form}. 
The probability of choosing a vector $i_t \in S_{t-1}$ is $\frac{1}{k} \cdot \big(1-\vx(i)\big) \cdot \big(1 - 2\alpha \inner{\vv_i \vv_i^\top}{\mA_t^{\frac12}})\big)$,
and the probability of choosing a vector $j_t \in [n] \setminus S_{t-1}$ is $\frac{1}{k} \cdot \vx(j) \cdot \big(1 + 2\alpha \inner{\vv_j \vv_j^\top}{\mA_t^{\frac12}}\big)$,
where $k$ is a large enough denominator so that the two distributions are well-defined.
Informally, the terms $1-\vx(i)$ and $\vx(j)$ in the sampling probability help us maintain the cost $\vc(S_t) \approx \inner{\vc}{\vx}$, 
and the term depending on $\mA_t$ help us improve the spectral lower bound.
The following are the precise statements that we will use in this paper.

\begin{theorem}[Theorem 3.8 of~\cite{LZ20}] \label{t:iterative_rounding_terminate}
Let $\alpha = \frac{\sqrt{d}}{\gamma}$ be the learning rate used in computing the action matrix as defined in~\eqref{e:closed-form}.
Let $\tau$ be the first time such that the solution $S_{\tau}$ of the iterative randomized rounding algorithm satisfies $\sum_{i \in S_{\tau}} \vv_i \vv_i^\top \succeq (1-2\gamma) \cdot \mI$.
The probability that $\tau \leq \frac{2k}{\gamma}$ is at least $1 - e^{-\Omega(\sqrt{d})}$.
\end{theorem}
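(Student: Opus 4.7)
The plan is to apply Corollary~\ref{cor:lambda-min-rank-two} with initial feedback $\mF_0 := \sum_{i\in S_0} \vv_i\vv_i^\top$ and step feedback $\mF_t := \vv_{j_t}\vv_{j_t}^\top - \vv_{i_t}\vv_{i_t}^\top$ for $t\geq 1$, so that $\sum_{t=0}^T \mF_t = \mZ_T := \sum_{i\in S_T} \vv_i\vv_i^\top$. Let
\[
G_t := \frac{\inner{\vv_{j_t}\vv_{j_t}^\top}{\mA_t}}{1+2\alpha\inner{\vv_{j_t}\vv_{j_t}^\top}{\mA_t^{1/2}}} - \frac{\inner{\vv_{i_t}\vv_{i_t}^\top}{\mA_t}}{1-2\alpha\inner{\vv_{i_t}\vv_{i_t}^\top}{\mA_t^{1/2}}}
\]
denote the per-step regret-minimization gain. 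With $\alpha = \sqrt d/\gamma$, the error term $2\sqrt d/\alpha$ in the corollary equals $2\gamma$, so the corollary gives $\lambda_{\min}(\mZ_T) \geq \sum_{t=1}^T G_t + \lambda_{\min}(\mF_0) - 2\gamma$. Setting $T := 2k/\gamma$, it therefore suffices to show that with probability $1 - e^{-\Omega(\sqrt d)}$ the cumulative gain $\sum_{t=1}^{\min(\tau,T)} G_t$ is at least $1$; by contradiction if $\tau > T$, the corollary then forces $\lambda_{\min}(\mZ_T) \geq 1 - 2\gamma$, contradicting the definition of $\tau$.

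First I would compute the conditional expectation of $G_t$ using the sampling distributions described before the theorem. The $1\pm 2\alpha \inner{\cdot}{\mA_t^{1/2}}$ factors in the sampling probabilities cancel the identical factors in the denominators of $G_t$, yielding
\[
\expe[G_t \mid S_{t-1}] = \frac{1}{k}\bigg(\sum_{j\notin S_{t-1}}\vx(j)\inner{\vv_j\vv_j^\top}{\mA_t} - \sum_{i\in S_{t-1}}(1-\vx(i))\inner{\vv_i\vv_i^\top}{\mA_t}\bigg) = \frac{1 - \inner{\mZ_{t-1}}{\mA_t}}{k},
\]
using $\sum_i \vx(i)\vv_i\vv_i^\top = \mI$ and $\tr(\mA_t)=1$. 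Since $\mA_t = (\alpha \mZ_{t-1} + l_t\mI)^{-2}$ by construction, Lemma~\ref{l:cospectral} gives $\inner{\mZ_{t-1}}{\mA_t} \leq \lambda_{\min}(\mZ_{t-1}) + \gamma$, and at any $t \leq \tau$ we have $\lambda_{\min}(\mZ_{t-1}) < 1-2\gamma$. Hence $\expe[G_t\mid S_{t-1}] \geq \gamma/k$ before termination, and summing over $T=2k/\gamma$ steps yields an expected cumulative gain of at least $2$.

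The main obstacle is turning this expected bound into a high-probability bound, which I would handle by applying Freedman's martingale inequality (from Section~\ref{ss:martingale}) to the stopped process $M_t := \sum_{s \leq t \wedge \tau}(G_s - \expe[G_s\mid \mathcal F_{s-1}])$. The denominator $k$ in the sampling procedure is chosen large enough that $1\pm 2\alpha\inner{\vv\vv^\top}{\mA_t^{1/2}}$ is bounded away from $0$ and $2$, which gives $|G_t| = O(1/k)$ and $\mathrm{Var}(G_t\mid\mathcal F_{t-1}) = O(1/k^2)$ times a bounded factor; plugging $T=2k/\gamma$ into Freedman shows the deviation needed to make the cumulative gain drop from expectation $2$ to below $1$ has probability at most $\exp(-\Omega(\gamma^2 \cdot k/k)) = \exp(-\Omega(\alpha\cdot\gamma)) = \exp(-\Omega(\sqrt d))$ once we substitute $\alpha = \sqrt d/\gamma$. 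Combining the expected gain lower bound with this concentration bound and the deterministic consequence of Corollary~\ref{cor:lambda-min-rank-two} completes the proof.
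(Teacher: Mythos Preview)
Your overall strategy---apply Corollary~\ref{cor:lambda-min-rank-two}, lower bound the conditional expectation $\E[G_t\mid S_{t-1}]$ via Lemma~\ref{l:cospectral}, and then invoke Freedman's inequality---is exactly the approach of~\cite{LZ20}, and is captured in the paper by Lemma~\ref{l:spec-exp-whp-t}. The expectation step is essentially correct (with the minor caveat that the loss sum is over $S'_{t-1}$, not $S_{t-1}$, so your displayed equality should be an inequality; this goes the right way).

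The genuine gap is in the concentration step. Your claim that $|G_t| = O(1/k)$ is incorrect: the parameter $k$ only normalizes the \emph{sampling probabilities}, not the magnitude of $G_t$ itself. Once $i_t$ and $j_t$ are sampled, $G_t$ is a deterministic function of them with no factor of $1/k$. The correct pointwise bound is $|G_t| \leq O(1/\alpha) = O(\gamma/\sqrt{d})$: for the loss term, $i_t \in S'_{t-1}$ forces $2\alpha\inner{\vv_{i_t}\vv_{i_t}^\top}{\mA_t^{1/2}} \leq \tfrac12$, so the denominator is at least $\tfrac12$ and the numerator is at most $\inner{\vv_{i_t}\vv_{i_t}^\top}{\mA_t^{1/2}} \leq \tfrac{1}{4\alpha}$ (using $\mA_t \preccurlyeq \mA_t^{1/2}$ since $\tr\mA_t=1$); for the gain term, $\tfrac{x}{1+2\alpha x}\leq \tfrac{1}{2\alpha}$ for $x\geq 0$. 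Likewise the per-step second moment is $O\!\big(\tfrac{\gamma}{k\sqrt{d}}\big)$, obtained by bounding $\E[X_t^2]\leq R\cdot\E[|X_t|]$ with $R=O(\gamma/\sqrt{d})$ and $\E[|X_t|]\leq \E[g_t]+\E[l_t]=O(1/k)$. Plugging $T=2k/\gamma$ into Freedman with these correct values of $R$ and $\sigma^2=O(1/\sqrt{d})$ yields the claimed $e^{-\Omega(\sqrt{d})}$ bound. Your final line ``$\exp(-\Omega(\gamma^2\cdot k/k))=\exp(-\Omega(\alpha\gamma))$'' does not follow from your stated bounds and is algebraically inconsistent ($\gamma^2\neq\alpha\gamma$ in general); it only happens to land on the right exponent.
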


\begin{theorem}[Theorem 3.12 of~\cite{LZ20}] \label{t:cost}
Let $\alpha = \frac{\sqrt{d}}{\gamma}$ be the learning rate used in computing the action matrix as defined in~\eqref{e:closed-form}.
Suppose that the solution $S_t$ of the iterative randomized rounding algorithm satisfies $\lambda_{\min}\left(\sum_{i \in S_{t}} \vv_i \vv_i^\top\right) < 1$ for all $1 \leq t < \tau$. 
Then, for any given $\vc \in \R_+^n$ and any $\delta \in [0,1]$,
\[
\Pr\left[ \vc(S_\tau) \leq (1+\delta) \cdot \inner{\vc}{\vx} + \frac{15d \norm{\vc}_{\infty}}{\gamma} \right] \geq 1 - e^{-\Omega\big(\frac{\delta d}{\gamma}\big)}.
\]
\end{theorem}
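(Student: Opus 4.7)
The plan is to follow the random cost $Y_t := \vc(S_t)$ along the exchange trajectory, decompose its evolution into a deterministic drift plus a martingale fluctuation, and apply Freedman's inequality. The key structural observation is that the ``base'' sampling probabilities (the factors $\vx(j)/k$ and $(1-\vx(i))/k$ that are independent of the regret action $\mA_t$) induce a \emph{mean-reverting} drift toward $\langle \vc, \vx\rangle$, whose geometric discount is precisely what trims the naive $1/\gamma^2$ accumulation of the spectral correction down to the claimed $1/\gamma$.

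First I would handle the initial sample. Since each $i$ is included in $S_0$ independently with probability $\vx(i)$, $Y_0$ is a sum of independent $[0,\|\vc\|_\infty]$-valued random variables with mean $\langle\vc,\vx\rangle$, and a one-sided Chernoff/Hoeffding bound yields $Y_0 \leq \langle\vc,\vx\rangle + O\!\bigl(\delta\langle\vc,\vx\rangle + d\|\vc\|_\infty/\gamma\bigr)$ with failure probability comfortably smaller than $e^{-\Omega(\delta d/\gamma)}$. Next, by expanding the sampling probabilities and using $\sum_j \vx(j)\vc(j) = \langle\vc,\vx\rangle$ together with $\sum_{i \in S_{t-1}} \vc(i) = Y_{t-1}$, the per-step conditional mean decomposes as
\[
\E[Y_t - Y_{t-1} \mid \mathcal{F}_{t-1}] \;=\; \frac{\langle\vc,\vx\rangle - Y_{t-1}}{k} \;+\; r_t,
\]
where $r_t := \tfrac{2\alpha}{k}\bigl(\sum_{j \notin S_{t-1}} \vx(j)\vc(j)\inner{\vv_j\vv_j^\top}{\mA_t^{1/2}} + \sum_{i \in S_{t-1}}(1-\vx(i))\vc(i)\inner{\vv_i\vv_i^\top}{\mA_t^{1/2}}\bigr) \geq 0$. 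Invoking $\sum_j \vx(j)\vv_j\vv_j^\top = \mI$, Lemma~\ref{l:tr}, and Lemma~\ref{l:cospectral} applied to $\mZ_t := \sum_{i \in S_{t-1}} \vv_i\vv_i^\top$ (which shares an eigenbasis with $\mA_t$ by~\eqref{e:closed-form}), together with the hypothesis $\lambda_{\min}(\mZ_t) < 1$, one bounds $r_t \leq \tfrac{2\|\vc\|_\infty}{k}(d + 2\alpha\sqrt d) = O\!\bigl(d\|\vc\|_\infty/(k\gamma)\bigr)$.

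Writing $W_t := Y_t - \langle\vc,\vx\rangle$ and $\xi_t := W_t - \E[W_t \mid \mathcal{F}_{t-1}]$, the decomposition above becomes the affine recursion $W_t = (1-\tfrac{1}{k})W_{t-1} + r_t + \xi_t$, which I would unroll into
\[
W_\tau \;=\; (1-\tfrac{1}{k})^\tau\, W_0 \;+\; \sum_{s=1}^\tau (1-\tfrac{1}{k})^{\tau-s}\bigl(r_s + \xi_s\bigr).
\]
On the high-probability event $\tau \leq 2k/\gamma$ furnished by Theorem~\ref{t:iterative_rounding_terminate}, the geometric series of drift terms contributes at most $k \cdot \max_s r_s = O\!\bigl(d\|\vc\|_\infty/\gamma\bigr)$ deterministically, while the weighted martingale sum $\sum_s (1-\tfrac{1}{k})^{\tau-s}\xi_s$ is handled by Freedman's inequality: the jumps are bounded by $2\|\vc\|_\infty$, and an analogous spectral calculation (using $\vc(j)^2 \leq \|\vc\|_\infty \vc(j)$, Lemma~\ref{l:tr}, and Lemma~\ref{l:cospectral}) gives a per-step conditional variance of $O\!\bigl((\|\vc\|_\infty\langle\vc,\vx\rangle + d\|\vc\|_\infty^2/\gamma)/k\bigr)$. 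Combining the Chernoff, drift, and martingale pieces yields the advertised additive deviation $O(\delta\langle\vc,\vx\rangle + d\|\vc\|_\infty/\gamma)$ with failure probability $e^{-\Omega(\delta d/\gamma)}$.

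The main obstacle is keeping exactly a single $1/\gamma$ in the additive term $15 d\|\vc\|_\infty/\gamma$: a naive summation of the $r_t$'s over $\Theta(k/\gamma)$ iterations only yields $O(d\|\vc\|_\infty/\gamma^2)$, and it is precisely the geometric discount arising from the mean-reverting $1/k$ pull that recovers one factor of $\gamma$. A secondary technical nuisance is the random stopping time $\tau$, which I would handle by conditioning on the event $\tau \leq 2k/\gamma$ (losing only an $e^{-\Omega(\sqrt d)}$ term that is subdominant to $e^{-\Omega(\delta d/\gamma)}$) and invoking optional stopping inside the Freedman argument.
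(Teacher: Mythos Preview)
This theorem is not proved in the present paper; it is quoted verbatim as Theorem~3.12 of~\cite{LZ20} in the preliminaries (Section~\ref{ss:iter}) and used as a black box. So there is no in-paper proof to compare against, and your proposal should be read as a reconstruction of the argument in~\cite{LZ20}.

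As such a reconstruction, your high-level plan is sound and hits the essential ideas: the Chernoff bound for $Y_0$, the drift/martingale split, the spectral-correction bound via $\sum_j \vx(j)\vv_j\vv_j^\top=\mI$ and Lemma~\ref{l:cospectral}, and---crucially---the observation that the base term $(\langle\vc,\vx\rangle-Y_{t-1})/k$ yields a geometric discount that collapses $\sum_s r_s$ from $O(\tau\cdot r)$ down to $O(k\cdot r)=O(d\|\vc\|_\infty/\gamma)$. That last point is exactly what distinguishes a $1/\gamma$ from a $1/\gamma^2$ additive error, and you identified it correctly.

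Two technical points deserve attention. First, the removal distribution is supported on $S'_{t-1}$, not $S_{t-1}$, so the clean identity $\E[Y_t-Y_{t-1}\mid\mathcal F_{t-1}]=(\langle\vc,\vx\rangle-Y_{t-1})/k+r_t$ does not hold verbatim; the shortfall $\frac{1}{k}\sum_{i\in S_{t-1}\setminus S'_{t-1}}(1-\vx(i))\vc(i)(1-2\alpha\langle\vv_i\vv_i^\top,\mA_t^{1/2}\rangle)$ must be absorbed into $r_t$, which is harmless since $|S_{t-1}\setminus S'_{t-1}|=O(d/\gamma)$ by the same $\alpha\langle\mZ_t,\mA_t^{1/2}\rangle$ bound. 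Second, invoking Theorem~\ref{t:iterative_rounding_terminate} to cap $\tau$ is both unnecessary and slightly off: the statement of Theorem~\ref{t:cost} assumes only $\lambda_{\min}(\mZ_t)<1$ for $t<\tau$, with no bound on $\tau$. The geometric discount already makes every piece of the unrolled recursion uniformly bounded in $\tau$ (since $\sum_{s\geq 0}(1-1/k)^s=k$ and $\sum_{s\geq 0}(1-1/k)^{2s}=O(k)$), so the Freedman variance is $O(k)\cdot(\text{per-step variance})$ regardless of $\tau$. You should drop the appeal to Theorem~\ref{t:iterative_rounding_terminate} and instead handle the $\tau$-dependent weights by either fixing $\tau$ and applying Freedman to $\sum_{s\le t}(1-1/k)^{-s}\xi_s$ (then multiplying by $(1-1/k)^\tau\le 1$), or by a supermartingale/potential argument that sidesteps the unrolling.
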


These two results together imply a one-sided spectral rounding result with non-negative linear constraints, which can be used to give a $(1+\eps)$-approximation algorithm for experimental design problems with knapsack constraints, as long as the budget $b \geq \Omega\left(\frac{d\norm{\vc}_{\infty}}{\eps^2}\right)$ (see Theorem 4.11 in~\cite{LZ20}).

We will also use the following two lemmas in~\cite{LZ20} that were used in proving Theorem~\ref{t:iterative_rounding_terminate}.
Define
\begin{align} \label{eq:Delta}
\Delta_t^+ :=  \frac{\inner{\vv_{j_t} \vv_{j_t}^\top}{\mA_t} }{1 + 2\alpha \inner{\vv_{j_t} \vv_{j_t}^\top}{\mA_t^{\frac12}} }, \quad
\Delta_t^- := \frac{\inner{\vv_{i_t} \vv_{i_t}^\top }{\mA_t}}{1 - 2\alpha \inner{\vv_{i_t} \vv_{i_t}^\top}{\mA_t^{\frac12}}}
\quad {\rm and} \quad
\Delta_t := \Delta_t^+ - \Delta_t^-.
\end{align}

\begin{lemma}[Lemma 3.5 and Lemma 3.7 of~\cite{LZ20}] \label{l:spec-exp-whp-t}
Let $\tau \geq \tau' \geq 1$ be two time steps in the iterative randomized rounding algorithm,
and let $\lambda := \max_{\tau' \leq t \leq \tau} \lambda_{\min}\left(\sum_{i \in S_t} \vv_i \vv_i^\top\right)$.
Let $\Delta_t$ be defined as in~\eqref{eq:Delta} and $k$ be the denominator used in the sampling distributions.
Then
\[
\sum_{t=\tau'+1}^\tau 
\E \left[ \Delta_t \mid S_{t-1}\right] 
\geq \sum_{t=\tau'+1}^\tau \frac{1}{k} \big(1-\gamma-\lambda_{\min}(\mZ_{t-1})\big)
\geq \frac{\tau - \tau'}{k} \cdot (1-\gamma-\lambda).
\]
Furthermore, for any $\eta > 0$,
\[
\Pr \left[ \sum_{t=\tau'+1}^\tau \Delta_t \leq \sum_{t=\tau'+1}^\tau \E[\Delta_t \mid S_{t-1}] - \eta \right] 
\leq \exp\bigg( -\frac{\eta^2 k\sqrt{d}/2}{(\tau - \tau') \gamma(1+\lambda+\gamma) + \eta k\gamma/3} \bigg).
\]
\end{lemma}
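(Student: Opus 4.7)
The plan is to split the proof into the expectation calculation and the Freedman-style concentration bound, both starting from the explicit sampling distributions recalled in Section~\ref{ss:iter}.

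For the expectation bound, first compute $\E[\Delta_t^+ \mid S_{t-1}]$ and $\E[\Delta_t^- \mid S_{t-1}]$ term by term. The sampling probability $\frac{1}{k}\vx(j)\bigl(1+2\alpha\inner{\vv_j\vv_j^\top}{\mA_t^{1/2}}\bigr)$ for $j_t = j$ exactly cancels the denominator of $\Delta_t^+$, so
\[
\E[\Delta_t^+ \mid S_{t-1}] = \frac{1}{k}\sum_{j \notin S_{t-1}} \vx(j)\inner{\vv_j\vv_j^\top}{\mA_t}, \qquad \E[\Delta_t^- \mid S_{t-1}] = \frac{1}{k}\sum_{i \in S_{t-1}}(1-\vx(i))\inner{\vv_i\vv_i^\top}{\mA_t}.
\]
Adding and subtracting the $S_{t-1}$ terms telescopes their difference to $\frac{1}{k}\bigl(\inner{\mI}{\mA_t} - \inner{\mZ_{t-1}}{\mA_t}\bigr) = \frac{1}{k}\bigl(1 - \inner{\mZ_{t-1}}{\mA_t}\bigr)$, using $\sum_j \vx(j)\vv_j\vv_j^\top = \mI$ (the normalization assumption) and $\tr(\mA_t)=1$. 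Applying Lemma~\ref{l:cospectral} with $\alpha = \sqrt{d}/\gamma$ gives $\inner{\mZ_{t-1}}{\mA_t} \leq \gamma + \lambda_{\min}(\mZ_{t-1})$, which proves the first displayed inequality; the second is immediate from $\lambda_{\min}(\mZ_{t-1}) \leq \lambda$.

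For the concentration bound, define $X_t := \Delta_t - \E[\Delta_t \mid S_{t-1}]$ and apply Freedman's martingale inequality to $\{X_t\}_{t=\tau'+1}^{\tau}$. This requires an almost-sure bound $|X_t| \leq M$ and a bound on the predictable quadratic variation $V := \sum_{t=\tau'+1}^\tau \E[X_t^2 \mid S_{t-1}]$. For $M$, the condition $\alpha\inner{\vv_i\vv_i^\top}{\mA_t^{1/2}} < \tfrac12$ of Theorem~\ref{t:regret-rank-two} gives $\Delta_t^- \leq 2\inner{\vv_{i_t}\vv_{i_t}^\top}{\mA_t}$, while $\Delta_t^+ \leq \frac{1}{2\alpha}\cdot \inner{\vv_{j_t}\vv_{j_t}^\top}{\mA_t}/\inner{\vv_{j_t}\vv_{j_t}^\top}{\mA_t^{1/2}}$; bounding the remaining ratio via the closed form~\eqref{e:closed-form} of $\mA_t$ yields $M = O(\gamma/\sqrt{d})$. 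For $V$, the same cancellation used for the mean gives
\[
\E[(\Delta_t^+)^2 \mid S_{t-1}] = \frac{1}{k}\sum_{j \notin S_{t-1}} \vx(j) \cdot \frac{\inner{\vv_j\vv_j^\top}{\mA_t}^2}{1+2\alpha\inner{\vv_j\vv_j^\top}{\mA_t^{1/2}}},
\]
and analogously for $\Delta_t^-$. Splitting off one factor of $\inner{\vv_j\vv_j^\top}{\mA_t}$ and absorbing it into $\sum_j \vx(j)\vv_j\vv_j^\top = \mI$ reduces the sum to a trace $\tr(\mA_t \cdot(\text{bounded factor}))$, whose remaining factor is controlled by the second estimate of Lemma~\ref{l:cospectral}. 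This yields $\E[\Delta_t^2 \mid S_{t-1}] = O\!\bigl(\gamma(1+\gamma+\lambda_{\min}(\mZ_{t-1}))/(k\sqrt{d})\bigr)$, hence $V = O\!\bigl((\tau-\tau')\gamma(1+\gamma+\lambda)/(k\sqrt{d})\bigr)$. Substituting $M$ and $V$ into Freedman's inequality and simplifying recovers the stated exponent $\eta^2 k\sqrt{d}/2$ over $(\tau-\tau')\gamma(1+\lambda+\gamma) + \eta k\gamma/3$.

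The main obstacle is the variance estimate: a naive bound $\E[X_t^2] \leq M\cdot\E[|X_t|]$ would lose the $(1+\gamma+\lambda)$ factor that keeps the concentration useful when $\lambda$ is small. Extracting that factor requires carefully peeling off one power of $\inner{\vv\vv^\top}{\mA_t}$ inside $\E[\Delta_t^2\mid S_{t-1}]$, collapsing the resulting sum through $\sum_j \vx(j)\vv_j\vv_j^\top = \mI$, and invoking the cospectral bound $\alpha\inner{\mZ}{\mA^{1/2}} \leq d + \alpha\sqrt{d}\,\lambda_{\min}(\mZ)$ of Lemma~\ref{l:cospectral}. By contrast the expectation calculation and the almost-sure bound on $|X_t|$ are routine once the sampling distribution is unpacked.
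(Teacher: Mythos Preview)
The paper does not prove this lemma; it is quoted verbatim from~\cite{LZ20} as a preliminary result (Section~\ref{ss:iter}), so there is no in-paper proof to compare against. That said, your outline is the natural argument and matches what the proof in~\cite{LZ20} does: the sampling weights cancel the denominators of $\Delta_t^\pm$ to give $\E[\Delta_t \mid S_{t-1}] = \tfrac{1}{k}(1 - \inner{\mZ_{t-1}}{\mA_t})$, and Lemma~\ref{l:cospectral} converts $\inner{\mZ_{t-1}}{\mA_t}$ into $\gamma + \lambda_{\min}(\mZ_{t-1})$. For concentration, Freedman's inequality is applied with the almost-sure bound $R = \gamma/\sqrt{d}$ (from $\Delta_t^+ \leq \tfrac{1}{2\alpha}$ via $\inner{\vv\vv^\top}{\mA_t} \leq \inner{\vv\vv^\top}{\mA_t^{1/2}}$, and similarly $\Delta_t^- \leq \tfrac{1}{2\alpha}$ using the condition on $i_t$) and the variance bound $\sigma^2 \leq (\tau-\tau')\gamma(1+\gamma+\lambda)/(k\sqrt{d})$.

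One small overstatement: the ``main obstacle'' you flag is not really one. The naive route $\E[X_t^2] \leq \E[\Delta_t^2] \leq 2\E[(\Delta_t^+)^2] + 2\E[(\Delta_t^-)^2]$, bounding each second moment by $(\max \Delta_t^\pm)\cdot \E[\Delta_t^\pm]$, already produces $\tfrac{\gamma}{k\sqrt{d}}\cdot 1$ from the $\Delta_t^+$ term and $\tfrac{\gamma}{k\sqrt{d}}\cdot(\gamma+\lambda)$ from the $\Delta_t^-$ term (using $\E[\Delta_t^-] \leq \tfrac{1}{k}\inner{\mZ_{t-1}}{\mA_t}$ and Lemma~\ref{l:cospectral}). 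So the $(1+\gamma+\lambda)$ factor comes out without the more delicate peeling you describe; your route is correct but not strictly necessary.
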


\subsection{Martingales} \label{ss:martingale}

A sequence of random variables $Y_1, \ldots, Y_\tau$ is a martingale with respect to a sequence of random variables $Z_1, \ldots, Z_\tau$ if for all $t > 0$, it holds that 
\begin{enumerate}
\item $Y_t$ is a function of $Z_1, \ldots, Z_{t-1}$;
\item $\E[|Y_t|] < \infty$; 
\item $\E[Y_{t+1} | Z_1, \ldots, Z_t] = Y_t$. 
\end{enumerate}

We will use the following theorem by Freedman to bound the probability that $Y_\tau$ is large.

\begin{theorem}[\cite{Fre75,Tro11}] \label{t:Freedman}
Let $\{Y_t\}_t$ be a real-valued martingale with respect to $\{Z_t\}_t$,
and $\{X_t=Y_t - Y_{t-1}\}_t$ be the difference sequence. 
Assume that $X_t \leq R$ deterministically for $1 \leq t \leq \tau$. 
Let $W_t := \sum_{j=1}^t \E[X_j^2 | Z_1, ..., Z_{j-1}]$ for $1 \leq t \leq \tau$.
Then, for all $\delta \geq 0$ and $\sigma^2 > 0$,
\[
\Pr\left( \exists t \in [\tau]: Y_t \geq \delta~{\rm and}~W_t \leq \sigma^2 \right) \leq \exp\left( \frac{-\delta^2/2}{\sigma^2 + R\delta/3} \right).
\]
\end{theorem}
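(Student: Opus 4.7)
The plan is to prove Freedman's inequality by the standard exponential martingale method together with an optional stopping argument, since this is the classical and cleanest route to a Bernstein-type bound for martingales with bounded increments and controlled quadratic variation.

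First, I would construct an auxiliary exponential supermartingale. For a parameter $\theta \in (0, 3/R)$, define
\[
M_t := \exp\bigl(\theta Y_t - \phi(\theta R)\,\theta^2 W_t\bigr),
\]
where $\phi(y) := (e^y - 1 - y)/y^2$ is the standard increasing convex auxiliary function. The key analytic input is the pointwise inequality $e^{\theta x} \le 1 + \theta x + \phi(\theta R)\,\theta^2 x^2$ for $x \le R$, which follows from monotonicity of $\phi$. Applying this with $x = X_{t+1}$ and taking the conditional expectation given $Z_1,\dots,Z_t$, the linear term vanishes because $\E[X_{t+1}\mid Z_1,\dots,Z_t] = 0$, yielding
\[
\E\bigl[e^{\theta X_{t+1}}\mid Z_1,\dots,Z_t\bigr] \le 1 + \phi(\theta R)\,\theta^2 \E[X_{t+1}^2 \mid Z_1,\dots,Z_t] \le \exp\bigl(\phi(\theta R)\,\theta^2 \E[X_{t+1}^2\mid Z_1,\dots,Z_t]\bigr).
\]
Multiplying through by $\exp(\theta Y_t - \phi(\theta R)\,\theta^2 W_t)$ and using $W_{t+1} - W_t = \E[X_{t+1}^2\mid Z_1,\dots,Z_t]$ shows $\E[M_{t+1}\mid Z_1,\dots,Z_t]\le M_t$, so $(M_t)$ is a supermartingale with $M_0 = 1$.

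Next, I would convert the tail event on $Y_t$ into a tail event on $M_t$ via a stopping time. Let $\tau^\star := \inf\{t \le \tau : Y_t \ge \delta \text{ and } W_t \le \sigma^2\}$, with $\tau^\star = \infty$ if no such $t$ exists. On the event $\{\tau^\star \le \tau\}$, one has $Y_{\tau^\star}\ge \delta$ and $W_{\tau^\star}\le \sigma^2$ simultaneously, so
\[
M_{\tau^\star} \ge \exp\bigl(\theta\delta - \phi(\theta R)\,\theta^2\sigma^2\bigr).
\]
By optional stopping applied to the bounded stopping time $\tau^\star \wedge \tau$ and Markov's inequality,
\[
\Pr[\tau^\star \le \tau] \le \exp\bigl(-\theta\delta + \phi(\theta R)\,\theta^2 \sigma^2\bigr).
\]

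Finally, I would optimize $\theta$ using the standard Bernstein estimate $\phi(y) \le 1/(2 - 2y/3)$ valid for $y < 3$, and choose $\theta := \delta/(\sigma^2 + R\delta/3)$. A direct computation gives $\theta R/3 < 1$ and $1 - \theta R/3 = \sigma^2/(\sigma^2 + R\delta/3)$, so
\[
-\theta\delta + \frac{\theta^2 \sigma^2}{2 - 2\theta R/3} = -\frac{\delta^2}{\sigma^2+R\delta/3} + \frac{\delta^2}{2(\sigma^2+R\delta/3)} = -\frac{\delta^2/2}{\sigma^2 + R\delta/3},
\]
which gives the claimed bound. The only genuinely delicate step is the first one: verifying that $M_t$ is indeed a supermartingale with the correct coefficient $\phi(\theta R)$, which hinges on using the sharpest available pointwise bound on $e^{\theta x}$ for $x \le R$ rather than a cruder quadratic estimate; any looser bound would destroy the $R\delta/3$ term in the denominator and yield a weaker inequality.
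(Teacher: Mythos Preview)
The paper does not prove this theorem; it is stated in the preliminaries with citations to Freedman~\cite{Fre75} and Tropp~\cite{Tro11} and used as a black box. Your proposal is a correct rendition of the standard exponential-supermartingale proof from those references: the pointwise bound $e^{\theta x}\le 1+\theta x+\phi(\theta R)\theta^2 x^2$ for $x\le R$ via monotonicity of $\phi$, the supermartingale verification, the optional stopping/Markov step, and the optimization with $\phi(y)\le 1/(2-2y/3)$ and $\theta=\delta/(\sigma^2+R\delta/3)$ all check out. There is nothing to compare against in the paper itself.
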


\subsection{Inequalities for Objective Functions} \label{ss:comb-ineq}

To analyze the change of the D-design objective value under exchange operations, we will derive a bound for determinant under rank-two update based on the following well-known matrix determinant lemma.

\begin{lemma} \label{l:mdet}
For any invertible matrix $\mX$ and any vector $\vv \in \R^d$,
\[
\det(\mX \pm \vv \vv^\top) = \det(\mX)(1 \pm \inner{\vv \vv^\top}{\mX^{-1}}).
\]
\end{lemma}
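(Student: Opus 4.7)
The plan is to reduce the rank-one update of $\det(\mX \pm \vv\vv^\top)$ to the classical identity $\det(\mI + \vu \vw^\top) = 1 + \vw^\top \vu$, and then to translate the resulting scalar $\vv^\top \mX^{-1} \vv$ into the Frobenius inner product $\inner{\vv\vv^\top}{\mX^{-1}}$ that appears in the statement. Since $\mX$ is invertible, I can pull it out of the determinant as
\[
\det(\mX \pm \vv\vv^\top) \;=\; \det(\mX) \cdot \det\bigl(\mI \pm \mX^{-1}\vv\vv^\top\bigr),
\]
which immediately reduces the claim to evaluating $\det(\mI \pm \vu\vw^\top)$ with $\vu = \pm \mX^{-1}\vv$ and $\vw = \vv$.

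For the one-rank identity itself I would choose one of two standard routes. The cleanest is the $2 \times 2$ block-matrix Schur complement trick: the matrix
\[
\mM \;=\; \begin{pmatrix} \mI_d & -\vu \\ \vw^\top & 1 \end{pmatrix}
\]
has $\det(\mM) = \det(\mI_d) \cdot (1 + \vw^\top \mI_d^{-1} \vu)$ by eliminating the bottom row into the top block, and $\det(\mM) = 1 \cdot \det(\mI_d + \vu \vw^\top)$ by eliminating the right column into the bottom entry; equating the two expressions yields $\det(\mI_d + \vu\vw^\top) = 1 + \vw^\top \vu$. Alternatively, one can observe that $\vu\vw^\top$ has rank at most one with sole nonzero eigenvalue $\vw^\top \vu$, so its eigenvalues contribute $\prod_i (1 + \lambda_i) = 1 + \vw^\top \vu$ to the determinant of $\mI + \vu\vw^\top$; I would use whichever is more in line with the paper's style.

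Specializing with $\vu = \pm \mX^{-1}\vv$ and $\vw = \vv$ gives
\[
\det\bigl(\mI \pm \mX^{-1}\vv\vv^\top\bigr) \;=\; 1 \pm \vv^\top \mX^{-1} \vv.
\]
To finish, I would rewrite the scalar $\vv^\top \mX^{-1} \vv$ using the cyclic property of the trace as $\tr(\mX^{-1} \vv\vv^\top) = \inner{\vv\vv^\top}{\mX^{-1}}$, which is the Frobenius inner product defined in Section~\ref{ss:algebra}. Combining this with the factorization of $\det(\mX \pm \vv\vv^\top)$ gives the claimed identity.

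There is no real obstacle here; the only tiny subtlety is to be explicit that the Frobenius inner product on the right-hand side coincides with the quadratic form $\vv^\top \mX^{-1} \vv$ (this holds regardless of symmetry of $\mX$, since $\vv\vv^\top$ is symmetric and $\inner{\mA}{\mB} = \tr(\mA^\top \mB)$). Because the proof is a two-line consequence of standard linear algebra, I expect it to be presented in the paper essentially as a reminder rather than a new derivation.
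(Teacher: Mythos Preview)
Your proof is correct. The paper does not actually prove this lemma: it introduces it as ``the following well-known matrix determinant lemma'' and states it without argument, using it only as input to the proof of Lemma~\ref{l:Dlower}. Your derivation via the factorization $\det(\mX \pm \vv\vv^\top) = \det(\mX)\det(\mI \pm \mX^{-1}\vv\vv^\top)$ together with the rank-one identity $\det(\mI + \vu\vw^\top) = 1 + \vw^\top\vu$ is the standard proof, and your identification $\vv^\top \mX^{-1}\vv = \tr(\mX^{-1}\vv\vv^\top) = \inner{\vv\vv^\top}{\mX^{-1}}$ matches the paper's Frobenius inner product convention exactly.
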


The following determinant lower bound under a rank-two update is a simple consequence and was implicitly contained in~\cite{MSTX19}. 
We provide a proof for completeness.

\begin{lemma} \label{l:Dlower}
Given a matrix $\mA \succ 0$ and two vectors $\vu, \vv \in \R^d$, if $\inner{\vu \vu^\top}{\mA^{-1}} \leq 1$, then
\[
\det(\mA - \vu \vu^\top + \vv \vv^\top) \geq \det(\mA) \left( 1 - \inner{\vu \vu^\top}{\mA^{-1}} \right) \left( 1 + \inner{\vv \vv^\top}{\mA^{-1}} \right).
\]
\end{lemma}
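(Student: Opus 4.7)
The plan is to apply Lemma~\ref{l:mdet} twice and then use operator monotonicity of the matrix inverse. First I would handle the trivial boundary case: if $\inner{\vu\vu^\top}{\mA^{-1}} = 1$, then the right-hand side equals zero, while $\mA - \vu\vu^\top + \vv\vv^\top \succeq 0$ (since $\mA - \vu\vu^\top \succeq 0$ by the Sherman--Morrison condition and $\vv\vv^\top \succeq 0$), so the left-hand side is nonnegative and the inequality holds. Hence I may assume $\inner{\vu\vu^\top}{\mA^{-1}} < 1$, which guarantees that $\mB := \mA - \vu\vu^\top \succ 0$ is invertible.

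Next I would chain two applications of the matrix determinant lemma. Applying Lemma~\ref{l:mdet} to $\mA$ with the rank-one subtraction gives
\[
\det(\mA - \vu\vu^\top) \;=\; \det(\mA)\bigl(1 - \inner{\vu\vu^\top}{\mA^{-1}}\bigr),
\]
and applying it again to $\mB$ with the rank-one addition gives
\[
\det(\mB + \vv\vv^\top) \;=\; \det(\mB)\bigl(1 + \inner{\vv\vv^\top}{\mB^{-1}}\bigr).
\]
Multiplying these two identities yields
\[
\det(\mA - \vu\vu^\top + \vv\vv^\top) \;=\; \det(\mA)\bigl(1 - \inner{\vu\vu^\top}{\mA^{-1}}\bigr)\bigl(1 + \inner{\vv\vv^\top}{\mB^{-1}}\bigr).
\]

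The final step is to replace $\mB^{-1}$ with $\mA^{-1}$ at the cost of an inequality. Since $\vu\vu^\top \succeq 0$, we have $\mB \preceq \mA$, and as both matrices are positive definite, operator monotonicity of the inverse gives $\mB^{-1} \succeq \mA^{-1}$. Pairing this with the PSD matrix $\vv\vv^\top$ preserves the inequality, so $\inner{\vv\vv^\top}{\mB^{-1}} \geq \inner{\vv\vv^\top}{\mA^{-1}}$. Because the factor $1 - \inner{\vu\vu^\top}{\mA^{-1}}$ is nonnegative by hypothesis and $\det(\mA) > 0$, multiplying through preserves the direction of the inequality, yielding the claimed bound. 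There is essentially no obstacle here beyond the mild care needed at the boundary case where $\mB$ becomes singular; the argument is otherwise a direct computation built on Lemma~\ref{l:mdet} and the standard fact that inversion reverses the PSD order on positive definite matrices.
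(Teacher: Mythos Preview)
Your proof is correct and follows essentially the same approach as the paper: both handle the boundary case $\inner{\vu\vu^\top}{\mA^{-1}}=1$ separately, then in the strict case apply Lemma~\ref{l:mdet} twice and use $\mA - \vu\vu^\top \preceq \mA$ (hence $(\mA - \vu\vu^\top)^{-1} \succeq \mA^{-1}$) to replace the inverse. The only cosmetic difference is that you treat the boundary case first rather than last.
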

\begin{proof}
We first consider the case when $\inner{\vu \vu^\top}{\mA^{-1}} < 1$. 
This is equivalent to $\vu^\top \mA^{-\frac12} \mA^{-\frac12} \vu < 1$, which implies that $\mA^{-\frac12} \vu \vu^\top \mA^{-\frac12} \prec \mI$ and thus $\mA - \vu \vu^\top \succ 0$. 
Applying Lemma~\ref{l:mdet} twice,
\begin{align*}
\det(\mA - \vu \vu^\top + \vv \vv^\top) & = \det(\mA) \cdot \big(1 - \inner{\vu \vu^\top}{\mA^{-1}}\big) \cdot \big(1 + \inner{\vv \vv^\top}{(\mA - \vu \vu^\top)^{-1}} \big) \\
& \geq \det(\mA) \left( 1 - \inner{\vu \vu^\top}{\mA^{-1}} \right) \left( 1 + \inner{\vv \vv^\top}{\mA^{-1}} \right),
\end{align*}
the last inequality holds as $0 \prec \mA - \vu \vu^\top \preccurlyeq \mA$.

In the case when $\inner{\vu \vu^\top}{\mA^{-1}} = 1$, the RHS of the lemma is zero. 
The same argument as above implies that $\mA - \vu \vu^\top \succcurlyeq 0$, and it follows that the LHS is non-negative.
\end{proof}

Similarly, to analyze the change of A-design objective value under exchange operations, we need an upper bound on the trace of the inverse of a matrix under rank-two update. 
We use the following observation in~\cite{AZLSW20} to derive the inequality below.

\begin{lemma}[Claim 2.10 in~\cite{AZLSW20}] \label{l:inverse-rank2}
Let $\mE = \diag(1,-1) \in \R^{2\times 2}$, $a,c > 0$ and $b \in \R$. If $2c < 1$ and $\left(\begin{smallmatrix} a & b \\ b & c\end{smallmatrix}\right) \succcurlyeq 0$, then
\[
\left( \mE + \begin{pmatrix} a & b \\ b & c \end{pmatrix} \right)^{-1} \succcurlyeq \left( \mE + \begin{pmatrix} 2 a & \\ & 2c \end{pmatrix} \right)^{-1}.
\]
\end{lemma}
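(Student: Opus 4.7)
The plan is a direct computational proof that exploits the fact that both matrices being inverted are only $2\times 2$.

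First, I would observe that the PSD hypothesis on $\left(\begin{smallmatrix} a & b \\ b & c \end{smallmatrix}\right)$ combined with $a, c > 0$ yields $ac \geq b^2$, and combined with $2c < 1$ a direct calculation shows that both $\mE + \left(\begin{smallmatrix} a & b \\ b & c \end{smallmatrix}\right)$ and $\mE + \diag(2a, 2c)$ have strictly negative determinants, so they are both invertible and indefinite of signature $(+,-)$. The standard $2\times 2$ inverse formula then gives
\[
\left(\mE + \begin{pmatrix} a & b \\ b & c \end{pmatrix}\right)^{-1} = \frac{1}{K}\begin{pmatrix} 1-c & b \\ b & -(1+a) \end{pmatrix}, \qquad \left(\mE + \begin{pmatrix} 2a & 0 \\ 0 & 2c \end{pmatrix}\right)^{-1} = \frac{1}{L}\begin{pmatrix} 1-2c & 0 \\ 0 & -(1+2a) \end{pmatrix},
\]
where $K := (1+a)(1-c)+b^2 > 0$ and $L := (1+2a)(1-2c) > 0$.

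Let $\mathsfit{Q}$ denote the difference of these two inverses. Since a symmetric $2\times 2$ matrix is PSD iff both diagonal entries and the determinant are non-negative, I would verify these three conditions. A short common-denominator simplification yields
\[
\mathsfit{Q}_{11} = \frac{(1-2c)(a(1-c) - b^2)}{KL}, \qquad \mathsfit{Q}_{22} = \frac{(1+2a)((1+a)c + b^2)}{KL}, \qquad \mathsfit{Q}_{12} = \frac{b}{K}.
\]
The sign of $\mathsfit{Q}_{22}$ is immediate. For $\mathsfit{Q}_{11}$, I would use $b^2 \leq ac$ to bound $a(1-c) - b^2 \geq a(1-c) - ac = a(1-2c) \geq 0$, which combined with $1-2c > 0$ gives $\mathsfit{Q}_{11} \geq 0$.

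The principal obstacle is the determinant inequality $\mathsfit{Q}_{11} \mathsfit{Q}_{22} \geq \mathsfit{Q}_{12}^{\,2}$. Clearing denominators by multiplying through by $K^2 L^2$ and dividing out the positive factor $(1-2c)(1+2a)$, this reduces to
\[
(a(1-c) - b^2)\,((1+a)c + b^2) \;\geq\; b^2(1+2a)(1-2c).
\]
My plan is to introduce the slack $\delta := ac - b^2 \geq 0$ and rewrite the two bracketed factors as $a(1-c) - b^2 = a(1-2c) + \delta$ and $(1+a)c + b^2 = c(1+2a) - \delta$. The leading cross term $a(1-2c) \cdot c(1+2a)$ will then cancel against $ac(1-2c)(1+2a)$ coming from the right-hand side, and after collecting the remaining terms I expect the difference LHS${}-{}$RHS to simplify neatly to
\[
\delta \cdot \bigl((1 + a - c) - \delta\bigr).
\]
The first factor is non-negative by the PSD hypothesis, and the second because $\delta \leq ac \leq 1 + a - c$---the last inequality being equivalent to $c(1+a) \leq 1+a$, which follows from $c < 1/2 < 1$. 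This yields $\det \mathsfit{Q} \geq 0$ and completes the proof.
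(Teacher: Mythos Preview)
Your proof is correct. The paper does not actually prove this lemma; it is quoted verbatim as Claim~2.10 from~\cite{AZLSW20} and used as a black box, so there is no in-paper argument to compare against. Your direct $2\times 2$ computation is sound: the inverse formulas are right, the expressions for $\mathsfit{Q}_{11}, \mathsfit{Q}_{22}, \mathsfit{Q}_{12}$ check out, and the key algebraic step---rewriting the determinant inequality via $\delta = ac - b^2$ and collapsing LHS${}-{}$RHS to $\delta\bigl((1+a-c)-\delta\bigr)$---is exactly right (I verified the expansion). The final sign argument $\delta \le ac$ and $1+a-c-ac = (1+a)(1-c) > 0$ closes it cleanly.
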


\begin{lemma} \label{l:trace-rank2}
Let $\mA \in \R^{d \times d} \succ 0$ and $\vv, \vu \in \R^d$. 
If $2\inner{\vv \vv^\top}{\mA^{-1}} < 1$, then it holds for any $\mX \succcurlyeq 0$ that
\begin{align*}
& \inner{\mX}{\big(\mA - \vv \vv^\top + \vu \vu^\top\big)^{-1}} \leq \inner{\mX}{\mA^{-1}} + \frac{\inner{\mX}{\mA^{-1} \vv \vv^\top \mA^{-1}}}{1 - 2\inner{\vv \vv^\top}{\mA^{-1}}} - \frac{\inner{\mX}{\mA^{-1} \vu \vu^\top \mA^{-1}}}{1 + 2\inner{\vu \vu^\top}{\mA^{-1}}}.
\end{align*}
\end{lemma}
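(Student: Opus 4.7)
The plan is to reduce the rank-two inverse update to a $2 \times 2$ matrix inverse via the Woodbury identity, and then invoke Lemma~\ref{l:inverse-rank2} on the resulting $2 \times 2$ expression. Set $\mV := [\vu,\, \vv] \in \R^{d\times 2}$ and $\mE := \diag(1,-1) \in \R^{2\times 2}$, so that $\mV \mE \mV^\top = \vu\vu^\top - \vv\vv^\top$ and $\mE^{-1} = \mE$. First I would note that the assumption $2\inner{\vv\vv^\top}{\mA^{-1}} < 1$ gives $\mA - \vv\vv^\top \succ 0$ (as in the proof of Lemma~\ref{l:Dlower}), hence $\mA - \vv\vv^\top + \vu\vu^\top \succ 0$ and all inverses below exist.

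Apply the Woodbury formula to obtain
\[
(\mA - \vv\vv^\top + \vu\vu^\top)^{-1} = \mA^{-1} - \mA^{-1}\mV \big(\mE + \mV^\top \mA^{-1}\mV\big)^{-1}\mV^\top \mA^{-1}.
\]
Taking the Frobenius inner product with $\mX$ and using $\tr(\mX \mA^{-1}\mV\mN\mV^\top\mA^{-1}) = \inner{\mV^\top \mA^{-1}\mX\mA^{-1}\mV}{\mN}$, this becomes
\[
\inner{\mX}{(\mA - \vv\vv^\top + \vu\vu^\top)^{-1}} = \inner{\mX}{\mA^{-1}} - \inner{\mV^\top \mA^{-1}\mX\mA^{-1}\mV}{(\mE + \mV^\top\mA^{-1}\mV)^{-1}}.
\]
Observe that $\mV^\top\mA^{-1}\mV = \bigl(\begin{smallmatrix} \inner{\vu\vu^\top}{\mA^{-1}} & \vu^\top \mA^{-1}\vv \\ \vu^\top\mA^{-1}\vv & \inner{\vv\vv^\top}{\mA^{-1}}\end{smallmatrix}\bigr) \succeq 0$ (it equals $(\mA^{-1/2}\mV)^\top (\mA^{-1/2}\mV)$), so Lemma~\ref{l:inverse-rank2} applies with $a=\inner{\vu\vu^\top}{\mA^{-1}}$, $c=\inner{\vv\vv^\top}{\mA^{-1}}$, and the hypothesis $2c < 1$ is exactly our assumption.

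Lemma~\ref{l:inverse-rank2} then yields
\[
(\mE + \mV^\top\mA^{-1}\mV)^{-1} \succeq \diag\!\left( \tfrac{1}{1+2\inner{\vu\vu^\top}{\mA^{-1}}},\; -\tfrac{1}{1-2\inner{\vv\vv^\top}{\mA^{-1}}}\right).
\]
Since $\mV^\top \mA^{-1}\mX \mA^{-1}\mV \succeq 0$, taking the inner product of both sides with this PSD matrix and noting that only the diagonal entries $\vu^\top \mA^{-1}\mX\mA^{-1}\vu = \inner{\mX}{\mA^{-1}\vu\vu^\top \mA^{-1}}$ and $\vv^\top \mA^{-1}\mX\mA^{-1}\vv = \inner{\mX}{\mA^{-1}\vv\vv^\top \mA^{-1}}$ of $\mV^\top\mA^{-1}\mX\mA^{-1}\mV$ pair against the diagonal lower bound gives
\[
\inner{\mV^\top\mA^{-1}\mX\mA^{-1}\mV}{(\mE + \mV^\top\mA^{-1}\mV)^{-1}} \geq \frac{\inner{\mX}{\mA^{-1}\vu\vu^\top\mA^{-1}}}{1+2\inner{\vu\vu^\top}{\mA^{-1}}} - \frac{\inner{\mX}{\mA^{-1}\vv\vv^\top\mA^{-1}}}{1-2\inner{\vv\vv^\top}{\mA^{-1}}}.
\]
Substituting this lower bound (with the negative sign flipping it) into the Woodbury expression produces exactly the claimed inequality.

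The main obstacle is purely bookkeeping: getting the ordering of columns in $\mV$ right so that the ``$c$'' coordinate in Lemma~\ref{l:inverse-rank2} lines up with $\vv$ (the vector being \emph{removed}, which needs the $2c<1$ condition), and correctly tracking the sign on the $\vv$ term when inverting the ``$-1 + 2c$'' diagonal entry. Once the setup is aligned, the proof is essentially a one-line application of Woodbury combined with Lemma~\ref{l:inverse-rank2}.
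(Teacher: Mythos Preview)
Your proposal is correct and follows essentially the same approach as the paper: Woodbury identity to reduce to a $2\times 2$ block, then Lemma~\ref{l:inverse-rank2} to compare with the diagonal version. The only cosmetic difference is that the paper first establishes the PSD matrix inequality $(\mA - \vv\vv^\top + \vu\vu^\top)^{-1} \preccurlyeq \mA^{-1} + \frac{\mA^{-1}\vv\vv^\top\mA^{-1}}{1-2\inner{\vv\vv^\top}{\mA^{-1}}} - \frac{\mA^{-1}\vu\vu^\top\mA^{-1}}{1+2\inner{\vu\vu^\top}{\mA^{-1}}}$ and then takes the inner product with $\mX$, whereas you take the inner product with $\mX$ first and work with the $2\times 2$ pairing---the two orderings are equivalent.
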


\begin{proof}
Let $\mP := [\vu, \vv] \in \R^{d \times 2}$, then we have
\[
\big(\mA - \vv\vv^\top + \vu\vu^\top\big)^{-1} 
= \big(\mA + \mP \mE \mP^\top\big)^{-1} 
= \mA^{-1} - \mA^{-1} \mP\big(\mE + \mP^\top \mA^{-1}\mP\big)^{-1} \mP^\top \mA^{-1},
\]
where the second equality follows by the Woodbury formula that
\[\big(\mA+ \mU\mC\mV\big)^{-1} = \mA^{-1} - \mA^{-1} \mU\big(\mC^{-1} + \mV \mA^{-1} \mU\big)^{-1} \mV \mA^{-1}.\]
Since
\[
\mP^\top \mA^{-1} \mP = \begin{pmatrix} \inner{\vu\vu^\top}{\mA^{-1}} & \inner{\vv\vu^\top}{\mA^{-1}} \\ \inner{\vu\vv^\top}{\mA^{-1}} & \inner{\vv\vv^\top}{\mA^{-1}} \end{pmatrix} \succcurlyeq 0 \qquad \text{and} \qquad 2\inner{\vv\vv^\top}{\mA^{-1}} < 1,
\]
we can apply Lemma~\ref{l:inverse-rank2} to conclude that
\begin{align*}
\big(\mA - \vv \vv^\top + \vu\vu^\top\big)^{-1} 
& \preceq \mA^{-1} - \mA^{-1}\mP\left(\mE + \Big(\begin{smallmatrix} 2 \inner{\vu\vu^\top}{\mA^{-1}} & \\ & 2\inner{\vv \vv^\top}{\mA^{-1}} \end{smallmatrix}\Big) \right)^{-1}\mP^\top \mA^{-1} \\
& = \mA^{-1} + \frac{\mA^{-1} \vv\vv^\top \mA^{-1}}{1-2\inner{\vv\vv^\top}{\mA^{-1}}} - \frac{\mA^{-1} \vu\vu^\top \mA^{-1}}{1+2\inner{\vu\vu^\top}{\mA^{-1}}}.
\end{align*}
Finally, the lemma follows by noting that $(\mA - \vv\vv^\top + \vu\vu^\top)^{-1} \succ 0$,
because $2\inner{\vv \vv^\top}{\mA^{-1}} < 1$ implies that $\vv \vv^\top \prec \mA$ (as was done in the proof of Lemma~\ref{l:Dlower}). 
\end{proof}

We will also use the following simple claim in the analysis of combinatorial algorithms, whose proof is by checking the derivatives of $f(x)$ and $g(x)$.

\begin{claim} \label{cl:func}
The functions $f(x) = \frac{x - c_1}{c_2+ c_3 \sqrt{x}}$ and $g(x) = \frac{x - c_1}{c_2 + c_3 x}$ with $c_1, c_2, c_3 \geq 0$ are monotone increasing for $x \geq 0$.
\end{claim}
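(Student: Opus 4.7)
The plan is to verify monotonicity of each function by a direct differentiation, with the standing understanding that the denominators are strictly positive on the relevant domain (otherwise the functions are not even defined). The calculation is elementary in both cases, so the main task is just to write out the derivatives cleanly and observe that each resulting numerator is a non-negative combination of $c_1, c_2, c_3$ and $x$ (or $\sqrt{x}$).

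First I would handle $g(x) = (x - c_1)/(c_2 + c_3 x)$, which is the easier of the two. By the quotient rule,
\[
g'(x) \;=\; \frac{(c_2 + c_3 x) - c_3 (x - c_1)}{(c_2 + c_3 x)^2} \;=\; \frac{c_2 + c_1 c_3}{(c_2 + c_3 x)^2}.
\]
Since $c_1, c_2, c_3 \geq 0$, the numerator is non-negative, and the denominator is positive on the domain where $c_2 + c_3 x > 0$. Hence $g'(x) \geq 0$, so $g$ is monotone non-decreasing on $[0,\infty)$.

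For $f(x) = (x - c_1)/(c_2 + c_3 \sqrt{x})$ I would use the change of variable $u := \sqrt{x}$, so that $f$ viewed as a function of $u$ becomes $h(u) := (u^2 - c_1)/(c_2 + c_3 u)$. Applying the quotient rule to $h$ gives
\[
h'(u) \;=\; \frac{2u(c_2 + c_3 u) - c_3 (u^2 - c_1)}{(c_2 + c_3 u)^2} \;=\; \frac{c_3 u^2 + 2 c_2 u + c_1 c_3}{(c_2 + c_3 u)^2},
\]
which is a sum of non-negative terms for $u \geq 0$, and hence non-negative. Since $u = \sqrt{x}$ is itself monotone non-decreasing in $x$ for $x \geq 0$, composing with $du/dx = 1/(2\sqrt{x}) \geq 0$ yields $f'(x) \geq 0$ wherever $x > 0$; combined with continuity at $x = 0$ (assuming $c_2 > 0$ there, so the function is defined), this gives monotonicity of $f$ on $[0,\infty)$.

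There is no genuine obstacle; the only thing to be careful about is the domain where the denominators are positive, and to recognize that the $u = \sqrt{x}$ substitution is what turns the $\sqrt{x}$ expression into a standard rational function whose derivative is clearly non-negative.
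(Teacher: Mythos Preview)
Your proposal is correct and follows exactly the approach the paper indicates: the paper's proof is simply the one-line remark that the claim follows ``by checking the derivatives of $f(x)$ and $g(x)$,'' and you have carried out precisely that computation. Your substitution $u=\sqrt{x}$ for $f$ is a clean way to organize the derivative check, but it is not a different method.
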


\section{Combinatorial Algorithms} \label{s:comb}

In this section, we present combinatorial local search algorithms for D/A/E-design problems. 
In Section~\ref{ss:comb-D}, we show that Fedorov's exchange method is a polynomial time algorithm to achieve $\frac{b-d-1}{b}$-approximation for D-design, 
which extends the result in~\cite{MSTX19} to the without repetition setting. 
In Section~\ref{ss:comb-A}, we analyze Fedorov's exchange method for A-design, and prove that it works well as long as there is a well-conditioned optimal solution.
As a corollary, this extends the result in~\cite{MSTX19} for A-design to the without repetition setting, with an arguably simpler proof. 
In Section~\ref{ss:comb-E}, we show that Fedorov's exchange method does not work with the minimum eigenvalue objective, and we propose a modified local search algorithm and prove that it works well as long as there is a well-conditioned optimal solution. 

A common theme in the analysis of all these algorithms is that we compare the current integral solution $S$ to an optimal fractional solution $\vx$.
As long as the objective value of $\vx$ is significantly better than that of $S$, we use $\vx$ to define two probability distributions to sample a pair of vectors $\vv_i,\vv_j$ so that the expected objective value of $S-i+j$ improves that of $S$ considerably, and so we can conclude that the combinatorial algorithms will find such an improving exchange pair.
One advantage of this approach is that this allows us the flexibility to compare with a fractional solution with smaller budget (which still has objective value close to the optimal one), and this makes the analysis easier and simpler.

The following notations will be used throughout this section.
Given a fractional solution $\vx \in [0,1]^n$ and an integral solution $S \subseteq [n]$, we denote  
\[
\mX := \sum_{i=1}^n \vx(i) \cdot \vv_i \vv_i^\top,
\qquad 
\mX_{S} := \sum_{i \in S} \vx(i) \cdot \vv_i \vv_i^\top,
\qquad
%{\rm and~}
\vx(S) := \sum_{i \in S} \vx(i),
\qquad
\mZ := \sum_{i \in S} \vv_i \vv_i^\top.
\]

\subsection{Combinatorial Local Search Algorithm for D-Design} \label{ss:comb-D}

We analyze the following version of Fedorov's exchange method for D-design,
where we always choose a pair that maximizes the improvement of the objective value and we stop as soon as the improvement is not large enough.

\begin{framed}
{\bf Fedorov's Exchange Method for D-Design}

Input: $n$ vectors $\vv_1, ..., \vv_n \in \R^d$, a budget $b \geq d$.

    \begin{enumerate}
    
    \item Let $S_0 \subseteq [n]$ be an arbitrary set of full-rank vectors with  $|S_0| = b$. 
    
    \item Let $t \leftarrow 1$ and $\mZ_1 := \sum_{i \in S_{t-1}} \vv_i \vv_i^\top$.
    
    \item {\bf Repeat}
    \begin{enumerate}
        \item Find $i_t \in S_{t-1}$ and $j_t \in [n] \setminus S_{t-1}$ such that 
        \[
        (i_t, j_t) = \underset{(i,j): i \in S_{t-1}, j \in [n] \setminus S_{t-1}}{\arg\max} \det\Big( \mZ_t - \vv_i \vv_i^\top + \vv_j \vv_j^\top \Big).
        \]

        \item Set $S_t \leftarrow S_{t-1} \cup \{ j_t\} \setminus \{i_t\}$ and $\mZ_{t+1} \leftarrow \mZ_t - \vv_{i_t} \vv_{i_t}^\top + \vv_{j_t} \vv_{j_t}^\top $ and $t \gets t + 1$.
    \end{enumerate}
    {\bf Until} $\det(\mZ_t) < \Big( 1 + \frac{d}{4b^3} \Big) \det(\mZ_{t-1})$.
    \item Return $S_{t-2}$ as the solution set. 
\end{enumerate}
\end{framed}

To analyze the change of the objective value in each iteration, note that $\inner{\vv_{i_t} \vv_{i_t}^\top}{\mZ_t^{-1}} \leq 1$ for any $t$ as $i_t \in S_{t-1}$, 
and so it follows from Lemma~\ref{l:Dlower} that
\[
\det(\mZ_{t+1}) = \det(\mZ_t - \vv_{i_t} \vv_{i_t}^\top + \vv_{j_t} \vv_{j_t}^\top) \geq \det(\mZ_t) \cdot (1 - \underbrace{\inner{\vv_{i_t} \vv_{i_t}^\top}{\mZ_t^{-1}}}_{\text{loss}}) \cdot (1 + \underbrace{\inner{\vv_{j_t} \vv_{j_t}^\top}{\mZ_t^{-1}}}_{\text{gain}}).
\]
Therefore, in order to lower bound the determinant of the solution, we lower bound the ``gain'' term and upper bound the ``loss'' term to quantify the progress in each iteration. 
First, we prove the existence of $i_t$ with small loss, with respect to a fractional solution $\vx$ with $\norm{\vx}_1 = q < b$.

\begin{lemma}[Loss] \label{l:loss-D}
For any $\vx \in [0,1]^n$ with $\sum_{i=1}^n \vx(i) = q < b$ and any $S \subseteq [n]$ with $|S|=b$,
there exists $i \in S$ with 
\[
\inner{\vv_i \vv_i^\top}{\mZ^{-1}} \leq \frac{d - \inner{\mX_S}{\mZ^{-1}}}{b - \vx(S)}.
\]
\end{lemma}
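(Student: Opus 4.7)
The plan is to prove this by a simple averaging argument. The weights $1-\vx(i)$ for $i \in S$ are the natural choice, matching the sampling distribution for $i_t$ used in the randomized exchange algorithm described earlier in the paper.

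First, I would observe that since $\vx(i) \in [0,1]$ for all $i$ and $\vx(S) \le \sum_{i=1}^n \vx(i) = q < b = |S|$, the weights $w_i := 1-\vx(i)$ are non-negative and their sum $\sum_{i\in S}(1-\vx(i)) = b - \vx(S)$ is strictly positive. Hence the distribution $\Pr[i] = (1-\vx(i))/(b-\vx(S))$ is a well-defined probability distribution on $S$, and it suffices to upper bound the expectation of $\inner{\vv_i\vv_i^\top}{\mZ^{-1}}$ under this distribution; some $i \in S$ must then attain the bound.

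Next, I would compute the weighted sum by splitting:
\[
\sum_{i \in S} (1-\vx(i)) \inner{\vv_i \vv_i^\top}{\mZ^{-1}} = \sum_{i \in S} \inner{\vv_i \vv_i^\top}{\mZ^{-1}} - \sum_{i \in S} \vx(i) \inner{\vv_i \vv_i^\top}{\mZ^{-1}}.
\]
The first sum equals $\inner{\sum_{i \in S}\vv_i \vv_i^\top}{\mZ^{-1}} = \inner{\mZ}{\mZ^{-1}} = \tr(\mI_d) = d$, using that $S_0$ (and thus $S$, by the non-degeneracy of the exchanges) spans $\R^d$ so $\mZ$ is invertible. The second sum is exactly $\inner{\mX_S}{\mZ^{-1}}$ by definition of $\mX_S$. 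Dividing by $b-\vx(S)$ yields
\[
\E_{i}\big[\inner{\vv_i \vv_i^\top}{\mZ^{-1}}\big] = \frac{d - \inner{\mX_S}{\mZ^{-1}}}{b - \vx(S)},
\]
and the existence of $i \in S$ achieving (at most) this value is immediate.

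There is essentially no obstacle here—the only thing to be careful about is the positivity of $b-\vx(S)$ (handled by the hypothesis $q < b$) and the invertibility of $\mZ$ (which the algorithm preserves). The lemma should then plug into the subsequent ``gain'' estimate and the determinant lower bound Lemma~\ref{l:Dlower} to quantify the multiplicative progress $\det(\mZ_{t+1})/\det(\mZ_t)$ in each iteration of Fedorov's exchange method.
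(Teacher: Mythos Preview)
Your proof is correct and follows essentially the same approach as the paper: a weighted averaging argument over $S$ with weights $1-\vx(i)$, computing the expectation using $\sum_{i\in S}\vv_i\vv_i^\top = \mZ$ to get the $d$ in the numerator. Your justification that $b-\vx(S)>0$ via $\vx(S)\le q < b$ is in fact slightly more careful than the paper's wording.
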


\begin{proof}
Consider the probability distribution of removing a vector $\vv_i$ where each $i \in S$ is sampled with probability
$\big(1-\vx(i)\big) / \sum_{j \in S} \big(1-\vx(j)\big)$,
so that the ``staying'' probability is proportional to the value $\vx(i)$.
Note that the denominator is positive as $\vx(S) = q < b$, and thus the probability distribution is well-defined.
Then, the expected loss using this probability distribution is
\begin{align*}
\E\big[\inner{\vv_{i_t} \vv_{i_t}^\top}{\mZ^{-1}}\big] & 
= \frac{\sum_{i \in S} \big(1-\vx(i)\big) \cdot \inner{\vv_i \vv_i^\top}{\mZ^{-1}}}{\sum_{j \in S} \big(1-\vx(j)\big)} = \frac{d - \inner{\mX_S}{\mZ^{-1}}}{b - \vx(S)},
\end{align*}
where the last equality follows as $\sum_{i \in S} \vv_i \vv_i^\top = \mZ$ and $|S|=b$.
Therefore, there must exist one vector $i$ with $\inner{\vv_i \vv_i^\top}{\mZ^{-1}}$ at most the expected value.
\end{proof}

Next, we prove the existence of $j_t$ with large gain, again with respect to a fractional solution $\vx$ with $\norm{\vx}_1 = q < b$.

\begin{lemma}[Gain] \label{l:gain-D}
For any $\vx \in [0,1]^n$ with $\sum_{i=1}^n \vx(i) = q < b$ and any $S \subseteq [n]$ with $|S|=b$ and $\vx(S) < q$,
there exists $j \in [n] \backslash S$ with 
\begin{align*}
    \inner{\vv_j \vv_j^\top}{\mZ^{-1}} \geq \frac{\inner{\mX}{\mZ^{-1}} - \inner{\mX_S}{\mZ^{-1}}}{q - \vx(S)}.
\end{align*}
\end{lemma}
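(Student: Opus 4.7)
The plan is to mirror the probabilistic argument used in the proof of the Loss lemma (Lemma~\ref{l:loss-D}), but now with a distribution supported on $[n]\setminus S$ that weights each index by its fractional value $\vx(j)$. Concretely, I will define a random index $j \in [n]\setminus S$ by sampling each $j$ with probability $\vx(j)/\sum_{j'\notin S}\vx(j')$. The assumption $\vx(S)<q=\sum_{i=1}^n \vx(i)$ guarantees that $\sum_{j'\notin S}\vx(j') = q-\vx(S) > 0$, so the distribution is well-defined, and it automatically places zero mass on any $j$ with $\vx(j)=0$.

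Next, I will compute the expectation of $\inner{\vv_j \vv_j^\top}{\mZ^{-1}}$ under this distribution. Since
\[
\sum_{j\notin S} \vx(j)\,\vv_j\vv_j^\top \;=\; \mX - \mX_S,
\]
the linearity of inner product yields
\[
\E\big[\inner{\vv_j\vv_j^\top}{\mZ^{-1}}\big] \;=\; \frac{\sum_{j\notin S}\vx(j)\inner{\vv_j\vv_j^\top}{\mZ^{-1}}}{\sum_{j'\notin S}\vx(j')} \;=\; \frac{\inner{\mX-\mX_S}{\mZ^{-1}}}{q-\vx(S)}.
\]
Since the expectation is at most the maximum, there must exist at least one $j \in [n]\setminus S$ achieving $\inner{\vv_j \vv_j^\top}{\mZ^{-1}}$ at least this value, which is exactly the stated bound.

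This is essentially a one-line averaging argument, and no step poses a real obstacle; the only things to be careful about are (i) verifying that the denominator $q-\vx(S)$ is strictly positive so the probability distribution makes sense, and (ii) observing that the sum defining the numerator telescopes correctly to $\inner{\mX}{\mZ^{-1}}-\inner{\mX_S}{\mZ^{-1}}$ by linearity. The proof is symmetric in spirit to Lemma~\ref{l:loss-D}: there the ``staying'' probability was proportional to $\vx(i)$ on $S$, here the ``entering'' probability is proportional to $\vx(j)$ on $[n]\setminus S$, and together these two averaging arguments are exactly what is needed to later combine with Lemma~\ref{l:Dlower} to track the multiplicative change in $\det(\mZ_t)$ after a Fedorov exchange.
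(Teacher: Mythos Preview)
Your proposal is correct and matches the paper's own proof essentially line for line: the paper also samples $j\in[n]\setminus S$ with probability $\vx(j)/(q-\vx(S))$, computes the expected gain as $\frac{\inner{\mX}{\mZ^{-1}}-\inner{\mX_S}{\mZ^{-1}}}{q-\vx(S)}$, and concludes by averaging. There is nothing to add.
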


\begin{proof}
Consider the probability distribution of adding a vector $\vv_j$ where each $j \in [n]\setminus S$ is sampled with probability
$\vx(j) / \sum_{i \in [n] \backslash S} \vx(i)$,
so that the ``adding'' probability is proportional to the value $\vx(i)$.
Note that the denominator is positive by our assumption that $\vx(S) < q$, and so the probability distribution is well-defined.
Then, the expected gain using this probability distribution is
\begin{align*}
\E[\inner{\vv_j \vv_j^\top}{\mZ^{-1}}] 
& = \frac{\sum_{j \in [n] \backslash S} \vx(j) \cdot \inner{\vv_j \vv_j^\top}{\mZ^{-1}}}{\sum_{i \in [n] \backslash S} \vx(i)} 
= \frac{\inner{\mX}{\mZ^{-1}} - \inner{\mX_S}{\mZ^{-1}}}{q - \vx(S)}.
\end{align*}
Therefore, there must exist one vector $j$ with $\inner{\vv_j \vv_j^\top}{\mZ^{-1}}$ at least the expected value.
\end{proof}

We are about ready to analyze when the objective value would increase.
The following lemma will be used to relate the numerator of the gain term to the current objective value $\det(\mZ)$.

\begin{lemma} \label{l:det}
For any given $d \times d$ positive definite matrices $\mA, \mB \succ 0$, 
\[
\inner{\mA}{\mB} \geq d \cdot \det(\mA)^{\frac{1}{d}} \cdot \det(\mB)^{\frac{1}{d}}.
\]
\end{lemma}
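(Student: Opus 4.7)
The plan is to reduce the inequality to the scalar AM--GM inequality applied to the eigenvalues of a single positive definite matrix. The key manipulation is to rewrite the Frobenius inner product using the cyclic property of the trace:
\[
\inner{\mA}{\mB} = \tr(\mA \mB) = \tr\bigl(\mA^{1/2} \mB \mA^{1/2}\bigr).
\]
Since $\mA \succ 0$ and $\mB \succ 0$, the matrix $\mM := \mA^{1/2} \mB \mA^{1/2}$ is positive definite and its eigenvalues $\mu_1, \ldots, \mu_d$ are strictly positive.

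Next I would invoke AM--GM on these $d$ positive eigenvalues to get
\[
\tr(\mM) = \sum_{i=1}^d \mu_i \;\geq\; d \cdot \Bigl(\prod_{i=1}^d \mu_i\Bigr)^{1/d} = d \cdot \det(\mM)^{1/d}.
\]
Finally, by multiplicativity of the determinant, $\det(\mM) = \det(\mA^{1/2}) \det(\mB) \det(\mA^{1/2}) = \det(\mA) \det(\mB)$, which yields
\[
\inner{\mA}{\mB} = \tr(\mM) \;\geq\; d \cdot \bigl(\det(\mA)\det(\mB)\bigr)^{1/d} = d \cdot \det(\mA)^{1/d} \det(\mB)^{1/d},
\]
as desired.

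There is essentially no obstacle here: the proof is a textbook consequence of AM--GM, once one observes that $\mA^{1/2}\mB\mA^{1/2}$ is positive definite (so its eigenvalues are positive and AM--GM applies) and has determinant $\det(\mA)\det(\mB)$. The only minor point worth being careful about is that $\tr(\mA\mB)$ need not equal the sum of products of eigenvalues of $\mA$ and $\mB$ separately, which is why passing through the symmetrized matrix $\mA^{1/2}\mB\mA^{1/2}$ (rather than working with $\mA\mB$ directly, which is not symmetric in general) is the natural route.
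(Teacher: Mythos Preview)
Your proof is correct, and it takes a different route from the paper's. The paper expands both matrices in their eigenbases, $\mA = \sum_i a_i \vu_i \vu_i^\top$ and $\mB = \sum_j b_j \vw_j \vw_j^\top$, writes $\frac{1}{d}\inner{\mA}{\mB} = \sum_{i,j} a_i b_j \cdot \frac{\inner{\vu_i}{\vw_j}^2}{d}$, and applies the \emph{weighted} AM--GM inequality with weights $\frac{\inner{\vu_i}{\vw_j}^2}{d}$ (which sum to $1$ since the two eigenbases are orthonormal); the resulting product then collapses to $\det(\mA)^{1/d}\det(\mB)^{1/d}$. Your argument is cleaner: by symmetrizing to $\mM = \mA^{1/2}\mB\mA^{1/2}$ you reduce directly to the unweighted AM--GM on the eigenvalues of a single positive definite matrix, and the determinant identity $\det(\mM) = \det(\mA)\det(\mB)$ finishes it in one line. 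The paper's approach has the minor advantage of avoiding matrix square roots, but yours is shorter and more transparent about why the inequality is ``just AM--GM.''
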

\begin{proof}
Let $\mA = \sum_{i=1}^d a_i \vu_i \vu_i^\top$ and $\mB = \sum_{j=1}^d b_j \vw_j \vw_j^\top$ be the spectral decompositions of $\mA$ and $\mB$.
\begin{align*}
\frac{1}{d} \cdot \inner{\mA}{\mB} 
= \sum_{1 \leq i,j \leq d} a_i b_j \cdot \frac{\inner{\vu_i}{\vw_j}^2}{d} 
& \geq \prod_{1 \leq i,j \leq d} \big( a_i b_j \big)^{\frac{\inner{\vu_i}{\vw_j}^2}{d}} 
\\
& = \bigg(\prod_{i=1}^d \prod_{j=1}^d a_i^{\frac{\inner{\vu_i}{\vw_j}^2}{d}}\bigg) \bigg(\prod_{j=1}^d \prod_{i=1}^d b_j^{\frac{\inner{\vu_i}{\vw_j}^2}{d}} \bigg) 
\\
& = \prod_{i=1}^d a_i^{\frac{1}{d}} \cdot \prod_{j=1}^d b_j^{\frac{1}{d}} 
\\
& = \det(\mA)^{\frac{1}{d}} \det(\mB)^{\frac{1}{d}},
\end{align*}
where the inequality follows by the weighted AM-GM inequality as $\sum_{i,j=1}^d \inner{\vu_i}{\vw_j}^2 = d$, and the second last equality follows as $\{\vu_i\}_{i=1}^d$ and $\{\vw_j\}_{j=1}^d$ are orthonormal bases.
\end{proof}

The following is the main technical result for D-design,
which lower bounds the improvement of the objective value in each iteration.
In the proof, we compare our current integral solution $S$ with size $b$ to a fractional solution $\vy$ with size $q = b - d - \frac{1}{2}$.

\begin{proposition}[Progress] \label{prop:progress-D}
Let $\vx \in [0,1]^n$ be a feasible solution to the convex programming relaxation \eqref{eq:convex} for D-design with $\sum_{i=1}^n \vx(i) = b$ for $b \geq d+1$. Let $\mZ_t$ be the current solution in the $t$-th iteration of Fedorov's exchange method. Then
\[
\det(\mZ_t)^{\frac1d} \leq \frac{b-d-1}{b} \cdot \det(\mX)^{\frac1d}
\quad \implies \quad
\det(\mZ_{t+1}) \geq \Big( 1 + \frac{d}{4b^3} \Big) \cdot \det(\mZ_t).
\]
\end{proposition}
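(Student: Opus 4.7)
The plan is to exhibit, whenever the hypothesis holds, a single exchange pair $(i^\ast, j^\ast)$ with $i^\ast \in S_{t-1}$ and $j^\ast \notin S_{t-1}$ for which the resulting determinant increases by a factor of at least $1 + d/(4b^3)$; since Fedorov's rule picks the best pair, this implies the proposition. To find such a pair, I compare $S_{t-1}$ against the rescaled fractional solution $\vy := \frac{q}{b}\vx$ with $q := b - d - \tfrac12$, and write $\mY := \sum_i \vy(i)\vv_i\vv_i^\top$ and $\mY_{S} := \sum_{i\in S}\vy(i)\vv_i\vv_i^\top$ by analogy with $\mX, \mX_S$. Since $\vy(i) \in [0, q/b] \subseteq [0,1]$ and $\sum_i \vy(i) = q < b$, both Lemmas~\ref{l:loss-D} and~\ref{l:gain-D} apply to $\vy$. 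Writing $s := \vy(S_{t-1})$, $T := \inner{\mY_{S_{t-1}}}{\mZ_t^{-1}}$, and $U := \inner{\mY}{\mZ_t^{-1}}$, these lemmas produce $i^\ast$ with loss $L^\ast \leq (d-T)/(b-s)$ and $j^\ast$ with gain $G^\ast \geq (U-T)/(q-s)$. The strict inequality $s < q$ required by Lemma~\ref{l:gain-D} holds because the hypothesis forbids $\vx$ from being the characteristic vector of $S_{t-1}$ (which would force $\mX = \mZ_t$, incompatible with $(b{-}d{-}1)/b < 1$).

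Because $s \leq q$ and $T \geq 0$, the loss satisfies $L^\ast \leq d/(d+\tfrac12) < 1$, so Lemma~\ref{l:Dlower} (applied with $\mA = \mZ_t$, $\vu = \vv_{i^\ast}$, $\vv = \vv_{j^\ast}$) gives
\[
\det(\mZ_{t+1}) \geq \det(\mZ_t)(1-L^\ast)(1+G^\ast) \geq \det(\mZ_t)\Bigl(1 - \tfrac{d-T}{b-s}\Bigr)\Bigl(1 + \tfrac{U-T}{q-s}\Bigr).
\]
To lower bound $U$, I combine Lemma~\ref{l:det} with the hypothesis: $\det(\mY)^{1/d} = (q/b)\det(\mX)^{1/d}$, hence $U \geq d\cdot \det(\mY)^{1/d}/\det(\mZ_t)^{1/d} \geq dq/(b-d-1)$, and therefore $U - d \geq d/\bigl(2(b-d-1)\bigr)$. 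This is where the extra $\tfrac12$ in the choice of $q$ earns its keep.

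The last step is an algebraic identity enabled by the choice $b - q = d + \tfrac12$. Writing $(1-\overline L)(1+\underline G) - 1$ over the common denominator $(b-s)(q-s)$, where $\overline L := (d-T)/(b-s)$ and $\underline G := (U-T)/(q-s)$, the numerator $(U-T)(b-s) - (d-T)(q-s) - (d-T)(U-T)$ rewrites as
\[
(q-s)(U-d) + (U-T)\bigl(T + \tfrac12\bigr),
\]
a sum of two nonnegative terms (using $U \geq d$, $U \geq T$, and $T \geq 0$). Dropping the second and using $U - d \geq d/(2(b-d-1))$ together with $b - s \leq b$ and $b - d - 1 \leq b$ yields an improvement factor of at least $1 + d/(2b(b-d-1)) \geq 1 + d/(2b^2) \geq 1 + d/(4b^3)$, as required. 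The main obstacle is verifying this identity: the cross-term $-\overline L\, \underline G$ is a genuine obstruction in general, and only the precise scaling $q = b - d - \tfrac12$ produces the cancellation that splits the numerator into the two manifestly nonnegative pieces above; any other choice of $q$ leaves a residual $(U-T)(b - d - q - \tfrac12)$ of indefinite sign.
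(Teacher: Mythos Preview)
Your proof is correct and follows the paper's route up through the choice of the scaled solution $\vy = \frac{q}{b}\vx$ with $q = b - d - \tfrac12$, the appeals to Lemmas~\ref{l:loss-D}, \ref{l:gain-D}, \ref{l:Dlower}, and~\ref{l:det}, and the use of the hypothesis to bound $U = \inner{\mY}{\mZ_t^{-1}}$ from below. The two arguments diverge only in the final algebra. The paper substitutes its lower bound on $U$ immediately, writes the multiplicative factor as a concave quadratic $f(a)$ in $a := d - T$, and bounds $\min_{a\in[0,d]} f(a)$ by evaluating at the endpoints $a=0$ and $a=d$, obtaining a numerator of at least $\frac{d}{4(b-d-1)}$ over denominator $(b-s)(q-s)$. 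You instead keep $U$ abstract, verify the exact identity
\[
(U-T)(b-s) - (d-T)(q-s) - (d-T)(U-T) \;=\; (q-s)(U-d) + (U-T)\bigl(T + \tfrac12\bigr),
\]
drop the second (nonnegative) summand, and cancel $q-s$. Your route is a bit cleaner and in fact yields the stronger intermediate bound $1 + \frac{d}{2b(b-d-1)} \geq 1 + \frac{d}{2b^2}$, which comfortably implies the stated $1 + \frac{d}{4b^3}$. One small quibble: the claim that \emph{only} the precise scaling $q = b - d - \tfrac12$ makes the second summand nonnegative is overstated, since the general residual is $(U-T)\bigl(T + (b-q-d)\bigr)$, nonnegative whenever $b - q \geq d$; but this does not affect the argument.
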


\begin{proof}
We consider the following scaled-down version $\vy,\mY$ of the fractional solution $\vx,\mX$.
Define 
\[
q := b-d-\frac{1}{2}, \qquad
\vy := \frac{q}{b} \cdot \vx, \qquad
\mY := \sum_{i=1}^n \vy(i) \cdot \vv_i \vv_i^\top = \frac{q}{b} \cdot \mX.
\]
Note that $\det(\mY)^{\frac{1}{d}} = \frac{q}{b} \cdot \det(\mX)^{\frac{1}{d}}$
and $\frac{1}{2} \leq q < b$.
Let $S := S_{t-1}$ be the current solution set at time $t$.
Note that we can assume $\vx(S) < b$ and hence $\vy(S) < q$, as otherwise $\det(\mZ_t) \geq \det(\mX)$ and there is nothing to prove.
Hence, we can apply Lemma~\ref{l:loss-D} and Lemma~\ref{l:gain-D} on $\mY$ and $S$ to ensure the existence of $i_t \in S$ and $j_t \in [n] \setminus S$ such that
\begin{align*}
\det(\mZ_{t+1}) 
& \geq \det(\mZ_t) \cdot \left( 1 - \frac{d - \inner{\mY_S}{\mZ_t^{-1}}}{b - \vy(S)} \right) 
\cdot \left( 1 + \frac{\inner{\mY}{\mZ_t^{-1}} - \inner{\mY_S}{\mZ_t^{-1}}}{q - \vy(S)} \right) \\
& \geq \det(\mZ_t) \cdot \left( 1 - \frac{d - \inner{\mY_S}{\mZ_t^{-1}}}{b - \vy(S)} \right) \cdot \left( 1 + \frac{d \cdot \det(\mY)^{\frac1d} \cdot \det(\mZ_t^{-1})^{\frac1d} - \inner{\mY_S}{\mZ_t^{-1}}}{q - \vy(S)} \right) \\
& \geq \det(\mZ_t) \cdot \left( 1 - \frac{d - \inner{\mY_S}{\mZ_t^{-1}}}{b - \vy(S)} \right) \cdot \left( 1 + \frac{ d \cdot \frac{q}{b} \det(\mX)^{\frac1d} \cdot \frac{b}{b-d-1} \det(\mX)^{-\frac1d} - \inner{\mY_S}{\mZ_t^{-1}}}{q - \vy(S)} \right) \\
& = \det(\mZ_t) \cdot \left( 1 - \frac{d - \inner{\mY_S}{\mZ_t^{-1}}}{b - \vy(S)} \right) \cdot \left( 1 + \frac{ \big(1 + \frac{1}{2q-1}\big) d - \inner{\mY_S}{\mZ_t^{-1}}}{q - \vy(S)} \right),
\end{align*}
where the second inequality follows from Lemma~\ref{l:det}, the third inequality follows from $\mY = \frac{q}{b} \mX$ and the assumption $\det(\mZ)^{\frac1d} \leq \frac{b-d-1}{b} \det(\mX)^{\frac1d}$, and the last equality is by $q = b - d - \frac12$.

To lower bound the improvement, we write $a:= d - \inner{\mY_S}{\mZ_t^{-1}}$ as a shorthand, and then the multiplicative factor is
\begin{align*}
    \left( 1 - \frac{d - \inner{\mY_S}{\mZ_t^{-1}}}{b - \vy(S)} \right) \cdot \left( 1 + \frac{\big(1+\frac{1}{2q-1}\big)d - \inner{\mY_S}{\mZ_t^{-1}}}{q - \vy(S)} \right) 
    & = \left( 1 - \frac{a}{b - \vy(S)} \right) \cdot \left( 1 + \frac{a + \frac{d}{2q-1}}{q - \vy(S)} \right) \\
    & = 1 + \frac{(b-q)a - a^2 + \frac{(b-\vy(S))d}{2q-1} - \frac{ad}{2q-1}}{\big(b-\vy(S)\big) \cdot \big(q - \vy(S)\big)} \\
    & \geq 1 + \frac{(b-q)a - a^2 + \frac{(b-q)d}{2q-1} - \frac{ad}{2q-1}}{\big(b-\vy(S)\big) \cdot \big(q - \vy(S)\big)} ,
\end{align*}
where the last inequality follows as $\vy(S) \leq q$.
Let $f(x) = -x^2 + (b-q)x - \frac{dx}{2q-1} + \frac{(b-q)d}{2q-1}$ be a univariate quadratic function in $x$. 
Note that $f''(x) < 0$, and thus $\min_{x \in [x_1,x_2]} f(x)$ is attained at one of the two ends $x=x_1$ or $x=x_2$. 
Since $a = d - \inner{\mY_S}{\mZ_t^{-1}} \in [0,d]$ (the expected loss is nonnegative and $\mY_S, \mZ \succcurlyeq 0$), 
the numerator of the second term above is lower bounded by
\begin{align*}
f(a) \geq \min_{x \in [0,d]} f(x) \geq \min\{ f(0), f(d)\} & = \min\left\{ \frac{(b-q)d}{2q-1}, \frac{2qd(b-q-d)}{2q-1} \right\} \\
& = \min\left\{ \frac{(d+\frac12)d}{2(b-d-1)}, \frac{(b-d-\frac12)d}{2(b-d-1)} \right\} 
\geq \frac{d}{4(b-d-1)},
\end{align*}
where the equality in the second line is by plugging in $q = b- d - \frac12$, 
and the last inequality follows the assumption $b \geq d+1$.
Therefore, we conclude that
\begin{align*}
    \det(\mZ_{t+1}) & \geq \det(\mZ_t) \cdot \bigg(1 + \frac{d}{4(b-d-1)\big(b-\vy(S)\big) \big(q - \vy(S)\big)} \bigg) \geq \det(\mZ_t) \cdot \left(1 + \frac{d}{4b^3} \right). \qedhere
\end{align*}
\end{proof}

The main result in this subsection follows immediately from Proposition~\ref{prop:progress-D}.

\CombD*

\begin{proof}
Let $\mX^* = \sum_{i=1}^n \vx^*(i) \cdot \vv_i \vv_i^\top$ be an optimal fractional solution for D-design with budget $b$ for $\vx^* \in [0,1]^n$.
Let $\mZ_1 \succ 0$ be an arbitrary initial solution. 

When the combinatorial local search algorithm terminates at the $\tau$-th iteration, 
the termination condition implies that $\det(\mZ_{\tau+1}) < \Big(1 + \frac{d}{4b^3} \Big) \det(\mZ_{\tau})$. 
It follows from Proposition~\ref{prop:progress-D} with $\mX = \mX^*$ that
\[
\det(\mZ_{\tau})^{\frac1d} \geq  \frac{b-d-1}{b} \cdot \det(\mX^*)^{\frac1d},
\]
and thus the returned solution of the Fedorov's exchange method is an $\frac{b-d-1}{b}$-approximate solution.

Finally, we bound the time complexity of the algorithm.
If the algorithm runs for $\tau > \frac{8b^3}{d} \ln \frac{\det(\mX^*)}{\det(\mZ_1)}$ iterations, 
then the termination condition implies that the determinant of $\mZ_{\tau+1}$ is at least
\[
\det(\mZ_{\tau+1}) \geq \left( 1 + \frac{d}{4b^3} \right)^\tau \cdot \det(\mZ_1) \geq e^{\frac{d\tau}{8b^3}} \cdot \det(\mZ_1) > \det(\mX^*),
\]
where the second inequality follows as $(1+\frac{d}{4b^3}) \geq e^\frac{d}{8b^3}$ for $\frac{d}{4b^3} \leq \frac14$.
It was proved in Appendix~C~of~\cite{MSTX19} that $\ln \frac{\det(\mX^*)}{\det(\mZ_1)}$ is polynomial in $d,b$ and $\ell$, which is the maximum number of bits to represent the numbers in the entries of the vectors.
Specifically, they proved that $\det(\mZ_1) \geq 2^{-4(2b\ell+1)d^2}$ and $\det(\mX^*) \leq 2^{4(2n\ell+1)d^2}$, and so $\tau = O(db^3n\ell)$ iterations of the algorithm is enough.
\end{proof}

\subsection{Combinatorial Local Search Algorithm for A-Design} \label{ss:comb-A}

We analyze the following version of Fedorov's exchange method for A-design,
where we always choose a pair that maximizes the improvement of the objective value and we stop as soon as the improvement is not large enough.

\begin{framed}
{\bf Fedorov's Exchange Method for A-Optimal Design}

Input: $n$ vectors $\vv_1, ..., \vv_n \in \R^d$, a budget $b \geq d$, and an accuracy parameter $\eps \in (0,1)$.

    \begin{enumerate}
    
    \item Let $S_0 \subseteq [n]$ be an arbitrary set of full-rank vectors with $|S_0| = b$. 
    
    \item Let $t \leftarrow 1$ and $\mZ_1 \leftarrow \sum_{i \in S_0} \vv_i \vv_i^\top$.

    \item {\bf Repeat}
    \begin{enumerate}
        \item Find $i_t \in S_{t-1}$ and $j_t \in [n] \setminus S_{t-1}$ such that 
        \[
        (i_t, j_t) = \underset{(i,j): i \in S_{t-1}, j \in [n] \setminus S_{t-1}}{\arg\min} \tr\Big( \Big( \mZ_t - \vv_i \vv_i^\top + \vv_j \vv_j^\top \Big)^{-1} \Big).
        \]

        \item Set $S_t \leftarrow S_{t-1} \cup \{ j_t\} \setminus \{i_t\}$ and $\mZ_{t+1} \leftarrow \mZ_t - \vv_{i_t} \vv_{i_t}^\top + \vv_{j_t} \vv_{j_t}^\top $ and $t \gets t + 1$.
    \end{enumerate}
    {\bf Until} $\tr(\mZ_t^{-1}) > \Big( 1 - \frac{\eps}{b} \Big) \tr(\mZ_{t-1}^{-1}) $.
    \item Return $S_{t-2}$ as the solution set. 
\end{enumerate}
\end{framed}

To analyze the change of the objective value in each iteration,
we apply Lemma~\ref{l:trace-rank2} which states that if $2\inner{\vv_{i_t} \vv_{i_t}^\top}{\mZ_t^{-1}} < 1$ then
\begin{equation} \label{eq:progress-A}
\tr (\mZ_{t+1}^{-1}) - \tr (\mZ_t^{-1}) \leq \underbrace{\frac{\inner{\vv_{i_t} \vv_{i_t}^\top}{\mZ_t^{-2}}}{1 - 2 \inner{\vv_{i_t} \vv_{i_t}^\top}{\mZ_t^{-1}}}}_{\text{\rm loss}} - \underbrace{\frac{\inner{\vv_{j_t} \vv_{j_t}^\top}{\mZ_t^{-2}}}{1+2 \inner{\vv_{j_t} \vv_{j_t}^\top}{\mZ^{-1}_t}}}_{\text{\rm gain}}.
\end{equation}
Therefore, to upper bound the A-design objective of the solution,
we upper bound the loss term and lower bound the gain term to quantify the progress in each iteration.

In the following lemma, we first prove the existence of $i_t$ with small loss term, with respect to a fractional solution $\vx$ with $\norm{\vx}_1 = q < b - 2d$.
Note that we only restrict our choice of $i_t$ to those vectors that satisfy $2\inner{\vv_{i_t} \vv_{i_t}^\top}{\mZ_t^{-1}} < 1$ so that~\eqref{eq:progress-A} applies, clearly Fedorov's exchange method could only do better by considering all possible vectors in the current solution.

\begin{lemma}[Loss] \label{l:loss-A}
For any $\vx \in [0,1]^n$ with $\sum_{i=1}^n \vx(i) = q < b - 2d$ and any $S \subseteq [n]$ with $|S|=b$,
there exists $i \in S':= \{j \in S: 2\inner{\vv_j\vv_j^\top}{\mZ^{-1}} < 1 \}$ with
\[
\frac{\inner{\vv_i \vv_i^\top}{\mZ^{-2}}}{1 - 2 \inner{\vv_i \vv_i^\top}{\mZ^{-1}}} \leq \frac{\tr (\mZ^{-1})  - \inner{\mX_S}{\mZ^{-2}}}{b - \vx(S) - 2d}.
\]
\end{lemma}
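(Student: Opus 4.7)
My plan is to adapt the probabilistic argument from Lemma~\ref{l:loss-D} (the D-design Loss lemma), but with a sampling distribution tailored to cancel the denominator $1-2\inner{\vv_i\vv_i^\top}{\mZ^{-1}}$ appearing in the loss expression. Concretely, I would restrict to the set $S'$ and sample each $i \in S'$ with probability proportional to $(1-\vx(i))(1-2\inner{\vv_i\vv_i^\top}{\mZ^{-1}})$. The factor $1-\vx(i)$ plays the same role as in D-design (giving rise to $b-\vx(S)$-type terms), while the factor $1-2\inner{\vv_i\vv_i^\top}{\mZ^{-1}}$ is chosen to cancel the denominator of the loss term when computing the expectation, so that
\[
\E\!\left[\frac{\inner{\vv_i \vv_i^\top}{\mZ^{-2}}}{1 - 2 \inner{\vv_i \vv_i^\top}{\mZ^{-1}}}\right] = \frac{\sum_{i \in S'} (1-\vx(i)) \inner{\vv_i \vv_i^\top}{\mZ^{-2}}}{\sum_{j \in S'} (1-\vx(j))(1-2\inner{\vv_j \vv_j^\top}{\mZ^{-1}})}.
\]

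First, I would control the ``bad'' set $S \setminus S'$. Since $\sum_{j \in S} \inner{\vv_j\vv_j^\top}{\mZ^{-1}} = \inner{\mZ}{\mZ^{-1}} = d$ and each $j \in S \setminus S'$ contributes at least $\tfrac12$ to this sum, we get $|S \setminus S'| \leq 2d$. This is the key quantitative input for handling the restriction to $S'$.

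Next, I would upper bound the numerator. Using $\sum_{i \in S}\inner{\vv_i\vv_i^\top}{\mZ^{-2}} = \tr(\mZ^{-1})$ and $\sum_{i \in S}\vx(i)\inner{\vv_i\vv_i^\top}{\mZ^{-2}} = \inner{\mX_S}{\mZ^{-2}}$, expanding $(1-\vx(i))$ and shifting sums from $S$ to $S'$ yields
\[
\sum_{i \in S'}(1-\vx(i))\inner{\vv_i\vv_i^\top}{\mZ^{-2}} = \tr(\mZ^{-1}) - \inner{\mX_S}{\mZ^{-2}} - \sum_{i \in S \setminus S'} (1-\vx(i))\inner{\vv_i\vv_i^\top}{\mZ^{-2}} \leq \tr(\mZ^{-1}) - \inner{\mX_S}{\mZ^{-2}},
\]
since the removed terms are all nonnegative. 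For the denominator, dropping the nonnegative cross term $2\vx(j)\inner{\vv_j\vv_j^\top}{\mZ^{-1}}$ gives
\[
\sum_{j \in S'}(1-\vx(j))(1-2\inner{\vv_j\vv_j^\top}{\mZ^{-1}}) \geq |S'| - \vx(S') - 2\sum_{j \in S'}\inner{\vv_j\vv_j^\top}{\mZ^{-1}}.
\]
Writing $k := |S \setminus S'|$ and $s := \sum_{j \in S\setminus S'}\inner{\vv_j\vv_j^\top}{\mZ^{-1}}$, this equals $b - \vx(S) - 2d - k + \vx(S\setminus S') + 2s$, which is at least $b-\vx(S)-2d$ because $2s \geq k$ and $\vx(S\setminus S') \geq 0$. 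Also $b - \vx(S) - 2d > 0$ by the hypothesis $\vx(S) \leq q < b-2d$, so the sampling distribution is well-defined.

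Combining these two bounds, the expected loss is at most $\frac{\tr(\mZ^{-1}) - \inner{\mX_S}{\mZ^{-2}}}{b - \vx(S) - 2d}$, and the existence of an $i \in S'$ achieving this bound follows immediately. The main obstacle, which I expect to be only mild, is verifying that the cancellations in the numerator and the slackness in the denominator line up perfectly with the count $|S \setminus S'| \leq 2d$; the choice of sampling weights $(1-\vx(i))(1-2\inner{\vv_i\vv_i^\top}{\mZ^{-1}})$ is precisely what makes both the numerator telescoping and the denominator lower bound come out clean.
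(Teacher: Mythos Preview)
Your proposal is correct and follows essentially the same approach as the paper: the same sampling distribution on $S'$ with weights proportional to $(1-\vx(i))(1-2\inner{\vv_i\vv_i^\top}{\mZ^{-1}})$, the same cancellation in the expected loss, and the same numerator bound. The only cosmetic difference is in the denominator estimate: the paper first extends the sum from $S'$ to $S$ (the added terms being nonpositive) and then uses $1-\vx(j)\le 1$ to reach $b-\vx(S)-2d$, whereas you stay on $S'$ and track the complement via $2s\ge k$; both routes yield the identical lower bound, and your observation $|S\setminus S'|\le 2d$ is true but not actually needed.
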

\begin{proof}
Consider the probability distribution of removing a vector $\vv_i$ with probability 
\[
\Pr[i_t = i] =
\frac{\big(1-\vx(i)\big) \cdot \big(1 - 2\inner{\vv_i \vv_i^\top}{\mZ^{-1}}\big)}{\sum_{j \in S'} \big(1-\vx(j)\big) \cdot \big(1 - 2\inner{\vv_j \vv_j^\top}{\mZ^{-1}}\big) } \quad \text{for each } i \in S'.
\]
We first check that the probability distribution is well-defined.
Note that the numerator is non-negative as $1-2\inner{\vv_i \vv_i^\top}{\mZ^{-1}} > 0$ for each $i \in S'$.
The denominator is 
\begin{align*}
    \sum_{j \in S'} \big(1-\vx(j)\big) \cdot \big(1 - 2 \inner{\vv_j \vv_j^\top}{\mZ^{-1}}\big) 
& \geq \sum_{j \in S} \big(1-\vx(j)\big) \cdot \big(1-2\inner{\vv_j \vv_j^\top}{\mZ^{-1}}\big) \nonumber 
\\
    & \geq \sum_{j \in S} \big(1-\vx(j)\big) - 2\sum_{j \in S} \inner{\vv_j\vv_j^\top}{\mZ^{-1}} \nonumber 
\\
    & = b - \vx(S) - 2d ~>~ 0, 
\end{align*}
where the first inequality holds as $1-2\inner{\vv_j \vv_j^\top}{\mZ^{-1}} \leq 0$ for $j \in S\setminus S'$, 
the second inequality follows from $1-\vx(j) \leq 1$ for each $j \in [n]$,
and the equality is by $|S| = b$ and $\inner{ \sum_{j \in S} \vv_j \vv_j^\top}{\mZ^{-1}} = \inner{\mZ}{\mZ^{-1}} = d$, 
and the strict inequality is by the assumption $b > q + 2d \geq \vx(S) + 2d$. 
Thus, $\Pr[i_t = i] \geq 0$ for each $i \in S'$, and clearly $\sum_{i \in S'} \Pr[i_t=i]=1$.

The expected loss using this probability distribution is
\begin{align*}
\E\left[\frac{\inner{\vv_{i_t} \vv_{i_t}^\top}{\mZ^{-2}}}{1 - 2 \inner{\vv_{i_t} \vv_{i_t}^\top}{\mZ^{-1}}}\right] 
& = \sum_{i \in S'} \frac{\big(1-\vx(i)\big) \cdot \big(1-2\inner{\vv_i \vv_i^\top}{\mZ^{-1}}\big)}{\sum_{j \in S'} \big(1-\vx(j)\big) \cdot \big(1-2\inner{\vv_j \vv_j^\top}{\mZ^{-1}}\big)} \cdot \frac{\inner{\vv_i \vv_i^\top}{\mZ^{-2}}}{1 - 2 \inner{\vv_i \vv_i^\top}{\mZ^{-1}}} 
\\
& = \frac{\sum_{i \in S'} \big(1-\vx(i)\big) \cdot \inner{\vv_i \vv_i^\top}{\mZ^{-2}}}{\sum_{j \in S'} \big(1-\vx(j)\big) \cdot \big(1 - 2 \inner{\vv_j \vv_j^\top}{\mZ^{-1}}\big)} 
\\
& \leq \frac{\tr(\mZ^{-1}) - \inner{\mX_S}{\mZ^{-2}}}{b - \vx(S) - 2d},
\end{align*}
where the last inequality follows from 
the inequality above for the denominator and 
\[
\sum_{i \in S'}\big(1-\vx(i)\big) \cdot \inner{\vv_i \vv_i^\top}{\mZ^{-2}} 
\leq \sum_{i \in S} \big(1-\vx(i)\big) \cdot \inner{\vv_i \vv_i^\top}{\mZ^{-2}} 
= \inner{\mZ}{\mZ^{-2}} - \inner{\mX_S}{\mZ^{-2}} 
= \tr (\mZ^{-1}) - \inner{\mX_S}{\mZ^{-2}}
\]
for the numerator. 
Therefore, there exists an $i \in S'$ with loss at most the expected value.
\end{proof}

Next we show the existence of $j_t$ with large gain term, again with respect to a fractional solution $\vx$.

\begin{lemma}[Gain] \label{l:gain-A}
For any $\vx \in [0,1]^n$ with $\sum_{i=1}^n \vx(i) = q < b$ and any $S \subseteq [n]$ with $|S| = b$ and $\vx(S) < q$,
there exists $j \in [n] \setminus S$ with
\[
\frac{\inner{\vv_j \vv_j^\top}{\mZ^{-2}}}{1 + 2 \inner{\vv_j \vv_j^\top}{\mZ^{-1}}} \geq \frac{\inner{\mX}{\mZ^{-2}}  - \inner{\mX_S}{\mZ^{-2}}}{q - \vx(S) + 2\inner{\mX}{\mZ^{-1}}}.
\]
\end{lemma}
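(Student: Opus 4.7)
The plan is to mimic the probabilistic argument from Lemma~\ref{l:gain-D}, but with a carefully tilted distribution whose tilt cancels the denominator $1+2\inner{\vv_j\vv_j^\top}{\mZ^{-1}}$ appearing in the gain term. Concretely, I would sample a vector $j \in [n]\setminus S$ with probability proportional to
\[
\vx(j) \cdot \bigl(1 + 2\inner{\vv_j\vv_j^\top}{\mZ^{-1}}\bigr),
\]
so that each summand in the expected gain collapses to $\vx(j) \cdot \inner{\vv_j\vv_j^\top}{\mZ^{-2}}$. The normalizing constant equals
\[
\sum_{j \in [n]\setminus S} \vx(j) + 2\sum_{j \in [n]\setminus S} \vx(j)\inner{\vv_j\vv_j^\top}{\mZ^{-1}} = \bigl(q-\vx(S)\bigr) + 2\inner{\mX - \mX_S}{\mZ^{-1}},
\]
which is strictly positive because $\vx(S) < q$ by assumption and $\inner{\mX - \mX_S}{\mZ^{-1}} \geq 0$ (both matrices are PSD). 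Thus the distribution is well defined.

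Next, I would compute the expected gain directly:
\[
\E\!\left[\frac{\inner{\vv_{j_t}\vv_{j_t}^\top}{\mZ^{-2}}}{1 + 2\inner{\vv_{j_t}\vv_{j_t}^\top}{\mZ^{-1}}}\right] = \frac{\sum_{j \in [n]\setminus S}\vx(j)\inner{\vv_j\vv_j^\top}{\mZ^{-2}}}{\bigl(q-\vx(S)\bigr) + 2\inner{\mX - \mX_S}{\mZ^{-1}}} = \frac{\inner{\mX}{\mZ^{-2}} - \inner{\mX_S}{\mZ^{-2}}}{\bigl(q-\vx(S)\bigr) + 2\inner{\mX - \mX_S}{\mZ^{-1}}}.
\]
The numerator is already in the desired form, so it only remains to relax the denominator. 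Since $\mX_S \succeq 0$ and $\mZ^{-1} \succeq 0$ we have $\inner{\mX_S}{\mZ^{-1}} \geq 0$, and hence
\[
\bigl(q - \vx(S)\bigr) + 2\inner{\mX - \mX_S}{\mZ^{-1}} \;\leq\; \bigl(q - \vx(S)\bigr) + 2\inner{\mX}{\mZ^{-1}}.
\]
Because the numerator $\inner{\mX - \mX_S}{\mZ^{-2}}$ is non-negative (again by PSD-ness of $\mX - \mX_S = \sum_{j \notin S}\vx(j)\vv_j\vv_j^\top$ and $\mZ^{-2}$), enlarging the denominator only decreases the ratio, giving the claimed lower bound on the expectation. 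A standard averaging argument then yields a single index $j \in [n]\setminus S$ achieving at least the expected value, finishing the proof.

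There is no real obstacle here: the only subtle point is choosing the correct tilt factor $1 + 2\inner{\vv_j\vv_j^\top}{\mZ^{-1}}$ in the sampling distribution so that both the cancellation in the expectation and the well-definedness of the distribution work out simultaneously, and then observing that the replacement of $\inner{\mX - \mX_S}{\mZ^{-1}}$ by $\inner{\mX}{\mZ^{-1}}$ in the denominator goes in the correct (weakening) direction.
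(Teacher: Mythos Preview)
Your proposal is correct and follows essentially the same approach as the paper's proof: the paper also samples $j\in[n]\setminus S$ with probability proportional to $\vx(j)\bigl(1+2\inner{\vv_j\vv_j^\top}{\mZ^{-1}}\bigr)$, computes the expected gain, and relaxes the denominator by replacing $\inner{\mX-\mX_S}{\mZ^{-1}}$ with $\inner{\mX}{\mZ^{-1}}$ before applying the averaging argument.
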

\begin{proof}
Consider the probability distribution of adding a vector $\vv_j$ where each $j \in [n]\setminus S$ is sampled with probability
\[
\Pr[j_t = j] = \frac{\vx(j) \cdot \big(1+2\inner{\vv_j \vv_j^\top}{\mZ^{-1}}\big) }{ \sum_{i \in [n] \setminus S} ~\vx(i) \cdot \big(1+2\inner{\vv_i \vv_i^\top}{\mZ^{-1}}\big) }
\quad \text{for each } j \in [n] \setminus S.
\] 
Note that the denominator is positive by the assumption $\vx(S) < q$ which implies that $\vx([n] \setminus S) > 0$, and so the probability distribution is well-defined.

The expected gain using this probability distribution is
\begin{align*}
    \E\left[\frac{\inner{\vv_{j_t} \vv_{j_t}^\top}{\mZ^{-2}}}{1 + 2 \inner{\vv_{j_t} \vv_{j_t}^\top}{\mZ^{-1}}}\right] 
    & = \sum_{j \in [n] \backslash S} \frac{\vx(j) \cdot \big(1+2\inner{\vv_j \vv_j^\top}{\mZ^{-1}}\big)}{\sum_{i \in [n] \backslash S} ~\vx(i) \cdot \big(1+2\inner{\vv_i \vv_i^\top}{\mZ^{-1}}\big)} \cdot \frac{\inner{\vv_j \vv_j^\top}{\mZ^{-2}}}{1 + 2 \inner{\vv_j \vv_j^\top}{\mZ^{-1}}} 
\\
    & = \frac{\sum_{j \in [n] \backslash S} ~\vx(j) \cdot \inner{\vv_j \vv_j^\top}{\mZ^{-2}}}{\sum_{i \in [n] \backslash S}  ~\vx(i) \cdot \big(1 + 2 \inner{\vv_i \vv_i^\top}{\mZ^{-1}}\big)}  
\\
    & = \frac{\inner{\mX}{\mZ^{-2}} - \inner{\mX_S}{\mZ^{-2}}}{q - \vx(S) + \sum_{i \in [n] \backslash S} 2\vx(i) \cdot \inner{\vv_i \vv_i^\top}{\mZ^{-1}}} 
\\
    & \geq \frac{\inner{\mX}{\mZ^{-2}} - \inner{\mX_S}{\mZ^{-2}}}{q - \vx(S) + 2\inner{\mX}{\mZ^{-1}}},
\end{align*}
where the third equality is by $\sum_{i=1}^n \vx(i) = q$, 
and the last inequality holds as $\sum_{i \in [n] \backslash S} \vx(i) \cdot \vv_i \vv_i^\top \preccurlyeq \mX$. 
Therefore, there exists $j \in [n] \backslash S$ with gain at least the expected value.
\end{proof}

We are about ready to analyze when the objective value would decrease.
The following lemma will be used to bound the denominator of the gain term,
and also to relate the numerator of the gain term to the current objective value $\tr(\mZ^{-1})$.

\begin{lemma} \label{l:trace}
For any given $d \times d$ positive definite matrices $\mA, \mB \succ 0$,
\begin{align}
& \inner{\mA}{\mB^2} \geq \frac{(\tr(\mB))^2}{\tr\big(\mA^{-1}\big)} \qquad \qquad \text{\rm and} \label{eq:inner_lower} \\
& \inner{\mA}{\mB} \leq \sqrt{\tr(\mA) \cdot \inner{\mA}{\mB^2}}. \label{eq:inner_upper}
\end{align}
\end{lemma}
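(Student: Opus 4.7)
My plan is to prove both inequalities as direct applications of the Cauchy--Schwarz inequality for the Frobenius (trace) inner product, namely $|\tr(\mX^\top \mY)|^2 \leq \tr(\mX^\top \mX) \cdot \tr(\mY^\top \mY)$. The key trick in each case is to split the relevant trace into a product of two factors using the square root $\mA^{1/2}$ (which exists since $\mA \succ 0$), chosen so that one factor produces the target term $\inner{\mA}{\mB^2}$ via cyclicity.

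For inequality~\eqref{eq:inner_lower}, I would write $\tr(\mB) = \tr(\mA^{-1/2} \cdot \mA^{1/2} \mB)$ and apply Cauchy--Schwarz with $\mX = \mA^{-1/2}$ and $\mY = \mA^{1/2} \mB$. This yields $(\tr \mB)^2 \leq \tr(\mA^{-1}) \cdot \tr(\mB \mA \mB)$, and by cyclicity $\tr(\mB \mA \mB) = \tr(\mA \mB^2) = \inner{\mA}{\mB^2}$. Rearranging gives the claim.

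For inequality~\eqref{eq:inner_upper}, I would write $\inner{\mA}{\mB} = \tr(\mA \mB) = \tr(\mA^{1/2} \cdot \mA^{1/2} \mB)$ and apply Cauchy--Schwarz with $\mX = \mA^{1/2}$ and $\mY = \mA^{1/2} \mB$. This gives $(\tr \mA \mB)^2 \leq \tr(\mA) \cdot \tr(\mB \mA \mB) = \tr(\mA) \cdot \inner{\mA}{\mB^2}$, and taking square roots yields the stated bound (both sides being nonnegative since $\mA, \mB \succ 0$ implies $\inner{\mA}{\mB} > 0$).

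There is no real obstacle here: the only point requiring minor care is the observation that $\tr(\mB \mA^{1/2} \mA^{1/2} \mB) = \tr(\mA \mB^2)$, which follows from the cyclic invariance of the trace together with symmetry of $\mA^{1/2}$ and $\mB$. Both parts of the lemma are clean one-line consequences once the factorization is set up.
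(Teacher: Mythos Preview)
Your proof is correct and rests on the same tool as the paper's proof: Cauchy--Schwarz. The paper arrives at both bounds by expanding $\mA$ and $\mB$ in their eigenbases and applying the scalar Cauchy--Schwarz inequality to the resulting double sums over $\inner{\vu_i}{\vw_j}^2$; your version instead stays at the matrix level, factoring through $\mA^{1/2}$ and invoking the Frobenius form $|\tr(\mX^\top \mY)|^2 \le \tr(\mX^\top \mX)\tr(\mY^\top \mY)$ together with cyclicity of trace. The two arguments are equivalent once unpacked, but your presentation is slightly more streamlined since it avoids writing out the eigendecompositions explicitly.
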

\begin{proof}
Let $\mA = \sum_{i=1}^d a_i \vu_i \vu_i^\top$ and $\mB = \sum_{j=1}^d b_j \vw_j \vw_j^\top$ be the eigendecomposition of $\mA$ and $\mB$. 
Then,
\begin{align*}
\tr(\mB) = \sum_{j=1}^d b_j 
= \sum_{1 \leq i,j \leq d} b_j \cdot \inner{\vu_i}{\vw_j}^2 
& = \sum_{1 \leq i,j \leq d} \sqrt{a_i} b_j \inner{\vu_i}{\vw_j} \cdot \frac{1}{\sqrt{a_i}} \inner{\vu_i}{\vw_j} 
\\
& \leq \sqrt{\sum_{1 \leq i,j \leq d} a_i b_j^2 \inner{\vu_i}{\vw_j}^2 \cdot \sum_{1 \leq i,j \leq d} \frac{1}{a_i} \inner{\vu_i}{\vw_j}^2} 
\\
& = \sqrt{ \inner{\mA}{\mB^2} \cdot  \tr(\mA^{-1})},
\end{align*}
where the second equality and the last equality hold as $\{\vu_i\}_{i=1}^d$ and $\{\vw_j\}_{j=1}^d$ are orthonormal bases, 
and the inequality is by Cauchy-Schwarz.
For the second inequality,
\begin{align*}
\inner{\mA}{\mB} = \sum_{1 \leq i,j \leq d} a_i b_j \inner{\vu_i}{\vw_j}^2 
\leq \sqrt{\sum_{1 \leq i,j \leq d} a_i \inner{\vu_i}{\vw_j}^2 \cdot \sum_{1 \leq i,j \leq d} a_i b_j^2 \inner{\vu_i}{\vw_j}^2} 
= \sqrt{\tr(\mA) \cdot \inner{\mA}{\mB^2}},
\end{align*}
where the equalities hold as $\{\vu_i\}$ and $\{\vw_j\}$ are orthonormal bases and the inequality is by Cauchy-Schwarz.
\end{proof}

The following is the main technical result for A-design,
which lower bounds the improvement of the objective value in each iteration.
Note that the result depends on $\tr(\mX) \cdot \tr(\mX^{-1})$.

\begin{proposition}[Progress] \label{prop:progress-A}
Let $\vx \in [0,1]^n$ be a fractional solution with $\sum_{i=1}^n \vx(i) = q$.
Let $\mZ_t$ be the current solution in the $t$-th iteration of Fedorov's exchange method. 
For any $\eps > 0$, if
\[
\tr (\mZ_t^{-1}) \geq (1+\eps) \tr(\mX^{-1}) \qquad \text{and} \qquad b \geq q + 2d + 2(1+\eps)\sqrt{\tr (\mX) \cdot \tr( \mX^{-1})},
\]
then
\[
\tr \big(\mZ_{t+1}^{-1}\big) \leq \left(1- \frac{\eps}{b} \right) \cdot \tr \big(\mZ_t^{-1}\big). 
\]
\end{proposition}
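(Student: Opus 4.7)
My plan is to combine Lemma~\ref{l:loss-A} and Lemma~\ref{l:gain-A} with the rank-two trace inequality~\eqref{eq:progress-A}. Since Fedorov's exchange method chooses the pair $(i_t,j_t)$ minimizing the resulting trace, the actual change $\tr(\mZ_{t+1}^{-1}) - \tr(\mZ_t^{-1})$ is bounded above by the difference of the loss bound and the gain bound produced by any witness pair. First I would invoke Lemma~\ref{l:loss-A} with the fractional solution $\vx$ to produce some $i \in S_{t-1}$ with $2\inner{\vv_i \vv_i^\top}{\mZ_t^{-1}} < 1$ (so that~\eqref{eq:progress-A} applies) and loss at most $(a-M)/r$, where $a := \tr(\mZ_t^{-1})$, $M := \inner{\mX_{S_{t-1}}}{\mZ_t^{-2}}$ and $r := b - \vx(S_{t-1}) - 2d$; the budget hypothesis $b \geq q + 2d + 2(1+\eps)\sqrt{\tr(\mX)\tr(\mX^{-1})}$ guarantees $r > 0$ and the precondition $q < b - 2d$ of the lemma. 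Then I would invoke Lemma~\ref{l:gain-A} to get some $j \in [n]\setminus S_{t-1}$ with gain at least $(N-M)/p$, where $N := \inner{\mX}{\mZ_t^{-2}}$, $B := \inner{\mX}{\mZ_t^{-1}}$ and $p := q - \vx(S_{t-1}) + 2B$. The precondition $\vx(S_{t-1}) < q$ holds because equality would force $\mX = \mX_{S_{t-1}} \preceq \mZ_t$ and hence $\tr(\mZ_t^{-1}) \leq \tr(\mX^{-1})$, contradicting $\tr(\mZ_t^{-1}) \geq (1+\eps)\tr(\mX^{-1})$.

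The task then reduces to showing $(a-M)/r - (N-M)/p \leq -\eps a/b$, i.e.\ that the cross-product numerator $ap - Nr + M(r-p)$ is sufficiently negative. The ingredients I would extract from Lemma~\ref{l:trace} are: (i)~$N \geq a^2/T \geq (1+\eps) a$ using $a \geq (1+\eps)T$ with $T := \tr(\mX^{-1})$; (ii)~$B \leq \sqrt{\tr(\mX) N}$; and (iii)~the cross inequality $NK \geq aB$ with $K := \sqrt{\tr(\mX) T}$, which follows by squaring the lower bound $NT \geq a^2$ and combining with (ii). Together with $M \leq a$ (from $\mX_{S_{t-1}} \preceq \mZ_t$) and the budget translation $\Xi := b - q - 2d \geq 2(1+\eps) K$, these are the only facts needed.

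To finish I would do a two-case analysis. Let $u := q - \vx(S_{t-1})$, so that $r = u + \Xi$ and $p = u + 2B$. In the case $r \geq p$, I bound $M(r-p) \leq a(r-p)$ using $M \leq a$; the numerator collapses to $ap - Nr + a(r-p) = (a - N)r \leq -\eps a r$, yielding $\text{Loss} - \text{Gain} \leq -\eps a/p \leq -\eps a/b$ since $p \leq r \leq b$. In the case $r < p$, I bound $M(r-p) \leq 0$ using $M \geq 0$, so the numerator is at most $ap - Nr$; expanding $Nr - ap = (N-a)u + N\Xi - 2aB$, the bound $(N-a)u \geq \eps a u$ together with the chain $N\Xi \geq 2(1+\eps) NK \geq 2(1+\eps) aB = 2aB + 2\eps a B$ gives $N\Xi - 2aB \geq 2\eps a B$, hence $Nr - ap \geq \eps a(u+2B) = \eps a p$, so $\text{Loss} - \text{Gain} \leq -\eps a/r \leq -\eps a/b$.

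The main obstacle is that the gain denominator contains $B = \inner{\mX}{\mZ_t^{-1}}$, which has no a priori upper bound purely in terms of the clean quantity $\sqrt{\tr(\mX)\tr(\mX^{-1})}$: the only available bound $B \leq \sqrt{\tr(\mX) N}$ is large exactly when $N$ is large. The resolution is the cross inequality $NK \geq aB$, in which the upper bound on $B$ is paid for precisely by the lower bound on $N$, so that the budget slack $\Xi \geq 2(1+\eps) K$ translates into an extra factor of $(1+\eps)$ rather than being absorbed by the $2aB$ term. A secondary subtlety is that neither $r \geq p$ nor $r < p$ can be guaranteed from the hypotheses, so the bounds must be set up symmetrically enough to handle both orderings.
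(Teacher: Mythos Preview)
Your proof is correct, but it takes a genuinely different route from the paper's.

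The paper does not keep the two denominators $r = b-\vx(S)-2d$ and $p = q-\vx(S)+2\inner{\mX}{\mZ_t^{-1}}$ separate. Instead it manipulates the gain lower bound until its denominator \emph{coincides} with $r$: starting from $\frac{N-M}{p}$, it replaces $B=\inner{\mX}{\mZ_t^{-1}}$ by $\sqrt{\tr(\mX)\,N}$ via~\eqref{eq:inner_upper}, then invokes the monotonicity Claim~\ref{cl:func} with $f(x)=\frac{x-c_1}{c_2+c_3\sqrt x}$ at $N\geq a^2/T$, then again with $g(x)=\frac{x-c_1}{c_2+c_3x}$ at $a/T\geq 1+\eps$, and finally uses the budget hypothesis to land at $\frac{(1+\eps)a-M}{r}$. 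At that point the subtraction is trivial: $\text{Loss}-\text{Gain}\leq \frac{a-M}{r}-\frac{(1+\eps)a-M}{r}=-\frac{\eps a}{r}\leq -\frac{\eps a}{b}$, with no case analysis.

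Your approach leaves $r$ and $p$ distinct and instead bounds the combined numerator $ap-Nr+M(r-p)$ directly, splitting on the sign of $r-p$. The work that the paper does via the two applications of Claim~\ref{cl:func} is replaced in your argument by the single cross inequality $NK\geq aB$ (which you derive cleanly by multiplying $NT\geq a^2$ and $\tr(\mX)\,N\geq B^2$). This is slightly more elementary---you never appeal to Claim~\ref{cl:func}---at the cost of the dichotomy on $r$ versus $p$. Both arguments land on the same final bound $-\eps a/b$, and your handling of the preconditions ($\vx(S)<q$, $q<b-2d$) matches the paper's.
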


\begin{proof}
Let $S := S_{t-1}$ be the current solution set at time $t$. 
Note that $\vx(S) < q$, as otherwise $\tr(\mZ^{-1}) \leq \tr(\mX^{-1})$ and the assumption does not hold.
Hence, we can apply Lemma~\ref{l:gain-A} to prove the existence of a $j_t \in [n] \backslash S$ such that the gain term is
\begin{eqnarray*}
\frac{\inner{\vv_{j_t} \vv_{j_t}^\top}{\mZ^{-2}}}{1 + 2 \inner{\vv_{j_t} \vv_{j_t}^\top}{\mZ^{-1}}} 
& \geq & \frac{\inner{\mX}{\mZ^{-2}}  - \inner{\mX_S}{\mZ^{-2}}}{q - \vx(S) + 2\inner{\mX}{\mZ^{-1}}} 
\\
& \geq & \frac{\inner{\mX}{\mZ^{-2}}  - \inner{\mX_S}{\mZ^{-2}}}{q - \vx(S) + 2\sqrt{\tr (\mX) \cdot \inner{\mX}{\mZ^{-2}}}}
\\
& \geq & \frac{\frac{\tr (\mZ^{-1})^2}{\tr(\mX^{-1})} - \inner{\mX_S}{\mZ^{-2}}}{q - \vx(S) + 2\sqrt{\tr (\mX) \cdot \frac{\tr (\mZ^{-1})^2}{\tr(\mX^{-1})}}}
\\ 
& = & \frac{\frac{\tr (\mZ^{-1})}{\tr(\mX^{-1})} \cdot \tr (\mZ^{-1})- \inner{\mX_S}{\mZ^{-2}} }{q- \vx(S) + \frac{2\tr (\mZ^{-1})}{\tr(\mX^{-1})} \cdot \sqrt{\tr (\mX) \cdot \tr (\mX^{-1})}}   
\\
& \geq & \frac{(1+\eps)\tr (\mZ^{-1})- \inner{\mX_S}{\mZ^{-2}} }{q- \vx(S) + 2(1+\eps) \sqrt{\tr (\mX) \cdot \tr (\mX^{-1})}} 
\\
& \geq & \frac{(1+\eps)\tr (\mZ^{-1})- \inner{\mX_S}{\mZ^{-2}} }{b - \vx(S) -2d}, 
\end{eqnarray*}
where the second inequality is by \eqref{eq:inner_upper},
the third inequality follows from $\inner{\mX}{\mZ^{-2}} \geq \frac{(\tr (\mZ^{-1}))^2}{\tr(\mX^{-1})}$ by~\eqref{eq:inner_lower} and an application of Claim~\ref{cl:func} with $f(x) = \frac{x - c_1}{c_2 + c_3\sqrt{x}}$ to establish monotonicity,
the fourth inequality follows from the first assumption that $\tr (\mZ^{-1}) \geq (1+\eps) \tr (\mX^{-1})$ and another application of Claim~\ref{cl:func} with $g(x) = \frac{x - c_1}{c_2 + c_3 x}$ to establish monotonicity,
and the last inequality follows from the second assumption that $b \geq q + 2d + 2(1+\eps) \sqrt{\tr (\mX) \cdot \tr(\mX^{-1})}$.  

For the loss term, note that $q < b-2d$ by the assumption on $b$, 
and so we can apply Lemma~\ref{l:loss-A} to prove the existence of an $i_t \in S' \subseteq S$ such that the loss term is
\begin{align*}
\frac{\inner{\vv_{i_t} \vv_{i_t}^\top}{\mZ^{-2}}}{1 - 2 \inner{\vv_{i_t} \vv_{i_t}^\top}{\mZ^{-1}}} \leq \frac{\tr (\mZ^{-1})  - \inner{\mX_S}{\mZ^{-2}}}{b - \vx(S) - 2d}.
\end{align*}

Since $i_t \in S'$ satisfies $2\inner{\vv_{i_t} \vv_{i_t}^\top}{\mZ_t^{-1}} < 1$, 
we can apply~\eqref{eq:progress-A} to conclude that
\begin{align*}
\tr (\mZ_{t+1}^{-1}) - \tr (\mZ_t^{-1}) 
& = \tr \left((\mZ_t - \vv_{i_t} \vv_{i_t}^\top + \vv_{j_t} \vv_{j_t}^\top)^{-1}\right) - \tr \left(\mZ_t^{-1}\right) 
\\
& \leq \frac{\inner{\vv_{i_t} \vv_{i_t}^\top}{\mZ^{-2}}}{1 - 2 \inner{\vv_{i_t} \vv_{i_t}^\top}{\mZ^{-1}}} - \frac{\inner{\vv_{j_t} \vv_{j_t}^\top}{\mZ^{-2}}}{1 + 2 \inner{\vv_{j_t} \vv_{j_t}^\top}{\mZ^{-1}}} 
\leq \frac{-\eps \tr (\mZ_t^{-1})}{b - \vx(S) - 2d}  \leq -\frac{\eps }{b} \tr (\mZ_t^{-1}). \qedhere
\end{align*}
\end{proof}

The main result in this subsection follows from Proposition~\ref{prop:progress-A} by a simple scaling argument.

\CombA*

\begin{proof}
We consider the following scaled-down version $\vy, \mY$ of the fractional solution $\vx,\mX$.
Let
\[
q := b - 2d - 2(1+\eps)\sqrt{\tr(\mX) \cdot \tr((\mX)^{-1})}, \qquad
\vy := \frac{q}{b} \cdot \vx, \qquad
\mY := \sum_{i=1}^n \vy(i) \cdot \vv_i \vv_i^\top = \frac{q}{b} \cdot \mX. 
\]
Note that $\tr(\mY) \cdot \tr\left(\mY^{-1}\right) = \tr(\mX) \cdot \tr\left(\mX^{-1}\right)$ and so it holds that $b \geq q + 2d + 2(1+\eps) \sqrt{\tr(\mY) \cdot \tr(\mY^{-1})}$.
Thus, we can apply Proposition~\ref{prop:progress-A} on $\vy$ to conclude that if the algorithm terminates at the $\tau$-th iteration such that $\tr\left(\mZ_{\tau+1}^{-1}\right) > \Big( 1 - \frac{\eps}{b} \Big) \tr\left(\mZ_{\tau}^{-1}\right)$ then
\[
\tr\left(\mZ_{\tau}^{-1}\right) 
< (1+\eps) \cdot \tr\left(\mY^{-1}\right) 
= \frac{(1+\eps)b}{q} \cdot \tr\left(\mX^{-1}\right) 
\leq \big(1+O(\eps)\big) \cdot \tr\left(\mX^{-1}\right),
\]
where the last inequality follows from the assumption $b = \Omega\Big( \frac{1}{\eps} \big( d + \sqrt{\tr(\mX) \tr(\mX^{-1})} \big) \Big)$ which implies that $q \geq (1-O(\eps)) b$. 
This proves the approximation guarantee of the returned solution.

Finally, we bound the time complexity of the algorithm.
If the algorithm runs for $\tau > \frac{b}{\eps} \ln \frac{\tr(\mZ_1^{-1})}{\tr(\mX^{-1})}$ iterations, then the termination condition implies that the objective value of $\mZ_{\tau+1}$ is at most
\[
\tr(\mZ_{\tau+1}^{-1}) 
\leq \left( 1 - \frac{\eps}{b} \right)^{\tau} \cdot \tr\left(\mZ^{-1}_1\right) 
\leq e^{- \frac{\eps \tau}{b}} \cdot \tr\left(\mZ^{-1}_1\right) 
\leq \tr\left(\mX^{-1}\right).
\]
Note that $\ln \frac{\tr(\mZ_1^{-1})}{\tr(\mX^{-1})}$ is upper bounded by a polynomial in $d,n$ and the input size as proved in~\cite{MSTX19} (and the corresponding bound for D-design is discussed in the proof of Theorem~\ref{t:D-combin} in Section~\ref{ss:comb-D}).
\end{proof}

As a corollary, we extend the analysis of Fedorov's exchange method in~\cite{MSTX19} to the more general without repetition setting.

\begin{corollary}
Let $\vx \in [0,1]^n$ be a fractional solution to the convex programming relaxation~\eqref{eq:convex} for A-design with $\sum_{i=1}^n \vx(i) = b$.
If $\norm{\vv_i}^2 \leq \frac{ \eps^2 b}{2 \tr(\mX^{-1})}$ for each $1 \leq i \leq n$ and $b \geq \Omega\left(\frac{d}{\eps}\right)$ for some $\eps \in (0,1)$,
then Fedorov's exchange method for A-design returns a solution with at most $b$ vectors with objective value at most $\big(1+O(\eps)\big) \cdot \tr\left(\mX^{-1}\right)$ in polynomial time.
\end{corollary}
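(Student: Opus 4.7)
The plan is to derive the corollary as a direct consequence of Theorem~\ref{t:A-combin} by using the short-vector assumption to bound the quantity $\sqrt{\tr(\mX)\tr(\mX^{-1})}$ that appears in its hypothesis.

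First, I would observe that since $\vx$ is a fractional solution with $\sum_{i=1}^n \vx(i) = b$,
\[
\tr(\mX) \;=\; \sum_{i=1}^n \vx(i)\,\norm{\vv_i}^2 \;\leq\; b \cdot \max_i \norm{\vv_i}^2 \;\leq\; b \cdot \frac{\eps^2 b}{2\tr(\mX^{-1})},
\]
which rearranges to $\sqrt{\tr(\mX)\,\tr(\mX^{-1})} \leq \eps b/\sqrt{2}$. This is the key estimate that links the short-vector assumption to the expression appearing in Theorem~\ref{t:A-combin}.

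Next, I would invoke Theorem~\ref{t:A-combin} with a slightly smaller accuracy parameter $\eps' = c\eps$ for a sufficiently small absolute constant $c$, so that the resulting guarantee $\tr(\mZ^{-1}) \leq (1+\eps')\tr(\mX^{-1})$ is still $\bigl(1 + O(\eps)\bigr)\tr(\mX^{-1})$ as required. The budget hypothesis of Theorem~\ref{t:A-combin} with parameter $\eps'$ has the form $b \geq K\bigl(d + \sqrt{\tr(\mX)\tr(\mX^{-1})}\bigr)/\eps'$ for some absolute constant $K$; combining it with the bound from the previous step, this becomes
\[
b \;\geq\; \frac{K}{c} \cdot \frac{d}{\eps} \;+\; \frac{K}{c\sqrt{2}} \cdot b.
\]
Choosing $c$ large enough (in terms of $K$) makes the coefficient of $b$ on the right-hand side strictly less than one, so the inequality is equivalent to $b \geq \Omega(d/\eps)$, which is exactly the hypothesis of the corollary. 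The polynomial running time is inherited directly from Theorem~\ref{t:A-combin}.

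The only subtle point, and the main obstacle, is the apparent circularity: the required budget in Theorem~\ref{t:A-combin} depends on $\sqrt{\tr(\mX)\tr(\mX^{-1})}$, which in turn scales linearly with $b$ under the short-vector assumption. The resolution is to pick the constant $c$ above small enough so that the contribution of the $\sqrt{\tr(\mX)\tr(\mX^{-1})}/\eps'$ term is only a fraction of $b$ and can therefore be absorbed into the left-hand side. This is a standard constant-juggling argument, but one must verify that the absolute constant hidden in the resulting $\Omega(d/\eps)$ is consistent with what is claimed in the corollary.
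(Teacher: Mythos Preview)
Your approach is essentially the same as the paper's: bound $\tr(\mX)\tr(\mX^{-1}) \le \eps^2 b^2/2$ from the short-vector assumption and then invoke Theorem~\ref{t:A-combin}. One small slip: you write ``choosing $c$ large enough'' and later ``pick the constant $c$ above small enough''; only the former is correct, since with $\eps'=c\eps$ the coefficient of $b$ on the right is $K/(c\sqrt{2})$, which must be made less than $1$ by taking $c$ large (and one should note that for such fixed $c$ the result is only nontrivial for $\eps<1/c$, which is harmless under the $O(\eps)$ conclusion). The paper simply applies Theorem~\ref{t:A-combin} with the same $\eps$ and absorbs this constant into the $\Omega(\cdot)$ notation rather than introducing $\eps'$.
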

\begin{proof}
It follows from the assumption $\norm{\vv_i}^2 \leq \frac{ \eps^2 b}{2\tr(\mX^{-1})}$ that 
\[
\tr\left(\mX^{-1}\right) \cdot \tr(\mX) 
= \tr\left(\mX^{-1}\right) \cdot \sum_{i=1}^n \vx(i) \cdot \|\vv_i\|_2^2
\leq \tr(\mX^{-1}) \cdot \frac{ \eps^2 b^2 }{2\tr(\mX^{-1})} = \frac{\eps^2 b^2}{2}.
\]
Thus, for $b \geq \Omega\left(\frac{d}{\eps}\right)$, 
it follows that $b \geq \Omega\left(\frac{1}{\eps} \Big(d + \sqrt{ \eps^2 b^2/2}\Big) \right) = \Omega\left(\frac{1}{\eps} \left( d + \sqrt{\tr(\mX) \tr(\mX^{-1})} \right) \right)$,
and so Theorem~\ref{t:A-combin} implies that Fedorov's exchange method will find a $(1+O(\eps))$-approximate solution in polynomial time. 
\end{proof}

\subsection{Combinatorial Local Search Algorithm for E-Design} \label{ss:comb-E}

Unlike D-design and A-design, there are simple examples (see Section~\ref{ss:examples}) showing that Fedorov's exchange method does not work for E-design, even if there is a well-conditioned optimal solution.  

Instead, we prove that the rounding algorithm by Allen-Zhu, Li, Singh and Wang~\cite{AZLSW20} for E-design can be used as a combinatorial local search algorithm as well.
The only difference is that the rounding algorithm in~\cite{AZLSW20} will first compute an optimal fractional solution $\vx$ to the convex programming relaxation and then perform a linear transformation so that $\sum_{i=1}^n \vx(i) \cdot \vv_i \vv_i^\top = \mI$, before applying the following combinatorial algorithm.
Our analysis will show that the combinatorial algorithm works well as long as there is an approximately optimal fractional solution with good condition number,
so this tells us that the only essential use of an optimal fractional solution in the rounding algorithm is for preconditioning.

The following algorithm assumes the knowledge of the objective value $\lambda^*$ of the targeted fractional solution.  
We will guess this value in the proof of Theorem~\ref{t:E-combin}.

\begin{framed}
{\bf Combinatorial Local Search Algorithm for E-Optimal Design}

Input: $n$ vectors $\vv_1, ..., \vv_n \in \R^d$, a budget $b \geq d$,
an accuracy parameter $\eps \in (0,1)$, 
and a targeted objective value $\lambda^*$.
    \begin{enumerate}
    
    \item Initialization: Let $S_0 \subseteq [n]$ be an arbitrary set with $|S_0| = b$.
          Set $t \leftarrow 1$ and $\alpha \gets \frac{\sqrt{d}}{\eps \lambda^*}$.
    
    \item {\bf Repeat}
    \begin{enumerate}%[label*=\arabic*.]
        \item Let $\mZ_t := \sum_{i \in S_{t-1}} \vv_i \vv_i^\top$. Compute $\mA_t \leftarrow \left(\alpha \mZ_t- l_t \mI\right)^{-2}$ where $l_t \in \R$ is the unique scalar such that $\mA_t \succ 0$ and $\tr(\mA_t) = 1$.

        \item Let $S'_{t-1} := \{i \in S_{t-1}: 2\alpha \inner{\vv_i \vv_i^\top}{ \mA_t^{1/2}} < 1 \}$.
        
        \item Find $i_t \in S'_{t-1}$ and $j_t \in [n] \setminus S_{t-1}$ such that
        \[
            (i_t, j_t) = \underset{(i,j):~ i \in S'_{t-1},~ j \in [n] \setminus S_{t-1}}{\arg\max} \Phi(\mA_t, i, j) 
            := \frac{\inner{\vv_j \vv_j^\top}{\mA_t}}{1+2\alpha \inner{\vv_j \vv_j^\top}{\mA^{\frac12}_t}} - \frac{\inner{\vv_i \vv_i^\top}{\mA_t}}{1 - 2\alpha \inner{\vv_i \vv_i^\top}{\mA_t^{\frac12}}}.
        \]
        
        \item Set $S_t \leftarrow S_{t-1} \cup \{ j_t\} \setminus \{i_t\}$ and $t \gets t+1$.
    \end{enumerate}
    {\bf Until} $\Phi(\mA_{t-1}, i_{t-1}, j_{t-1}) < \frac{\eps \lambda^*}{b}$ or $\lambda_{\min}(\mZ_{t-1}) \geq (1-2\eps) \lambda^*$.
    \item Return $S_{t-2}$ as the solution set. 
\end{enumerate}
\end{framed}

The regret minimization framework developed in~\cite{AZLO15,AZLSW20} bounds the minimum eigenvalue of the current solution using the potential functions $\Phi(\mA_t, i,j)$ that we are optimizing in each iteration.
Applying Corollary~\ref{cor:lambda-min-rank-two} with feedback matrices $\mF_0 = \mZ_1 \succcurlyeq 0$ and $\mF_t = \vv_{j_t} \vv_{j_t}^\top - \vv_{i_t} \vv_{i_t}^\top$ for $t \geq 1$, as long as $1 > 2\alpha \inner{\vv_{i_t} \vv_{i_t}^\top}{\mA_t^{\frac12}}$ for all $1 \leq t \leq \tau$, we have
\begin{align} \label{eq:lambda_min_lower}
\lambda_{\min}(\mZ_{\tau+1}) \geq \sum_{t=1}^\tau \Bigg( 
\underbrace{\frac{\inner{\vv_{j_t} \vv_{j_t}^\top}{\mA_t} }{1 + 2\alpha \inner{\vv_{j_t} \vv_{j_t}^\top}{\mA_t^{\frac12}} }}_{\text{gain}} 
- \underbrace{\frac{\inner{\vv_{i_t} \vv_{i_t}^\top }{\mA_t}}{1 - 2\alpha \inner{\vv_{i_t} \vv_{i_t}^\top}{\mA_t^{\frac12}}}}_{\text{loss}} \Bigg)
- \frac{2\sqrt{d}}{\alpha} 
= \sum_{t=1}^{\tau} \Phi(\mA_t, i_t, j_t) - \frac{2\sqrt{d}}{\alpha}.
\end{align}
Therefore, in order to lower bound the minimum eigenvalue of the solution, we upper bound the loss term and lower bound the gain term to quantify the progress in each iteration.

First, we show the existence of $i_t$ with small loss, with respect to a fractional solution $\vx$.

\begin{lemma}[Loss] \label{l:loss-E}
Let $S := S_{t-1}$, $S' := S'_{t-1}$, $\mZ := \mZ_t$ and $\mA := \mA_t$.
For any $\vx \in [0,1]^n$ with $\sum_{i=1}^n \vx(i) = q < b - 2\alpha\inner{\mZ}{\mA^{\frac12}}$, there exists $i \in S'$ with
\begin{align*}
    \frac{\inner{\vv_i \vv_i^\top}{\mA} }{1 - 2\alpha \inner{\vv_i \vv_i^\top}{\mA^{\frac12}} } \leq \frac{\inner{\mZ}{\mA} - \inner{\mX_S}{\mA}}{b - \vx(S) - 2\alpha \inner{\mZ}{\mA^{\frac12}}}.
\end{align*}
\end{lemma}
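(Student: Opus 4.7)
The plan is to mirror the probabilistic arguments used for the loss lemmas in D-design (Lemma~\ref{l:loss-D}) and A-design (Lemma~\ref{l:loss-A}), but now against the regret-minimization potential. Concretely, I would define a probability distribution on $S'$ by setting
\[
\Pr[i_t = i] \;\propto\; \bigl(1 - \vx(i)\bigr)\bigl(1 - 2\alpha \inner{\vv_i \vv_i^\top}{\mA^{1/2}}\bigr)
\quad\text{for } i \in S',
\]
so that the denominator $1 - 2\alpha \inner{\vv_i \vv_i^\top}{\mA^{1/2}}$ of the quantity we want to bound cancels against the sampling weight, leaving only $(1 - \vx(i)) \inner{\vv_i \vv_i^\top}{\mA}$ in the numerator after taking expectation. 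An existential $i \in S'$ achieving at most the expected value then yields the desired bound.

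The first step is to verify this distribution is well-defined. Each numerator is non-negative (by definition of $S'$ and $\vx(i) \in [0,1]$), so I only need positivity of the normalizing constant. For $j \in S \setminus S'$ the factor $1 - 2\alpha\inner{\vv_j \vv_j^\top}{\mA^{1/2}}$ is non-positive, hence
\[
\sum_{j \in S'}(1-\vx(j))\bigl(1 - 2\alpha\inner{\vv_j \vv_j^\top}{\mA^{1/2}}\bigr)
\;\geq\; \sum_{j \in S}(1-\vx(j))\bigl(1 - 2\alpha\inner{\vv_j \vv_j^\top}{\mA^{1/2}}\bigr).
\]
Dropping the $\vx(j)$ factor on the positive $2\alpha\inner{\cdot}{\cdot}$ part (which only decreases the sum) and expanding using $|S|=b$ and $\sum_{j \in S} \vv_j \vv_j^\top = \mZ$ gives a lower bound of $b - \vx(S) - 2\alpha \inner{\mZ}{\mA^{1/2}}$, which is strictly positive by the hypothesis $q < b - 2\alpha \inner{\mZ}{\mA^{1/2}}$ (and $\vx(S) \leq q$).

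The second step is a direct computation of the expectation. The definition of $\Pr[i_t = i]$ cancels the $1 - 2\alpha \inner{\vv_i \vv_i^\top}{\mA^{1/2}}$ factor, giving
\[
\E\!\left[\frac{\inner{\vv_{i_t} \vv_{i_t}^\top}{\mA}}{1 - 2\alpha \inner{\vv_{i_t} \vv_{i_t}^\top}{\mA^{1/2}}}\right]
= \frac{\sum_{i \in S'}(1-\vx(i))\inner{\vv_i \vv_i^\top}{\mA}}{\sum_{j \in S'}(1-\vx(j))\bigl(1 - 2\alpha \inner{\vv_j \vv_j^\top}{\mA^{1/2}}\bigr)}.
\]
For the numerator, since $(1-\vx(i))\inner{\vv_i \vv_i^\top}{\mA} \geq 0$, restricting the sum from $S$ to $S' \subseteq S$ can only decrease it, so the numerator is at most $\sum_{i \in S}(1-\vx(i))\inner{\vv_i \vv_i^\top}{\mA} = \inner{\mZ}{\mA} - \inner{\mX_S}{\mA}$. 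Combined with the lower bound on the denominator from the previous step, the expected value is at most $(\inner{\mZ}{\mA} - \inner{\mX_S}{\mA})/(b - \vx(S) - 2\alpha \inner{\mZ}{\mA^{1/2}})$, and existence of the required $i \in S'$ follows from the probabilistic method.

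I do not foresee a significant obstacle; the only subtlety is being careful that both inequalities (restricting $S \to S'$ in the numerator and $S' \to S$ in the denominator) go in the favorable direction, which they do because of the sign of $1 - 2\alpha\inner{\vv_j \vv_j^\top}{\mA^{1/2}}$ on $S \setminus S'$ versus the non-negativity of $(1-\vx(i))\inner{\vv_i \vv_i^\top}{\mA}$ on $S'$.
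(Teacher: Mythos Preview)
Your proposal is correct and follows essentially the same argument as the paper: the same weighted distribution on $S'$, the same passage from $S'$ to $S$ in the denominator (using that $1 - 2\alpha\inner{\vv_j \vv_j^\top}{\mA^{1/2}} \leq 0$ on $S\setminus S'$) and from $S'$ to $S$ in the numerator (using non-negativity), and the same final bound $b - \vx(S) - 2\alpha\inner{\mZ}{\mA^{1/2}}$ followed by the probabilistic-method conclusion.
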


\begin{proof}
Consider the probability distribution of removing a vector $\vv_i$ with probability 
\[
\Pr[i_t = i] = \frac{(1-\vx(i)) (1 - 2\alpha \inner{\vv_i \vv_i^\top}{\mA^{\frac12}})}{\sum_{j \in S'} (1-\vx(j)) (1 - 2\alpha \inner{\vv_j \vv_j^\top}{\mA^{\frac12}}) } \quad \text{for all } i \in S'.
\]
We check that the probability distribution is well-defined.
Note that the numerator is non-negative as $1-2 \alpha \inner{\vv_i \vv_i^\top}{\mA^{\frac12}} > 0$ for each $i \in S'$. 
The denominator is
\begin{align*}
\sum_{j \in S'} \big(1-\vx(j)\big) \cdot \big(1 - 2 \alpha \inner{\vv_j \vv_j^\top}{\mA^{\frac12}}\big) 
& \geq \sum_{j \in S} \big(1-\vx(j)\big) \cdot \big(1-2\alpha \inner{\vv_j \vv_j^\top}{\mA^{\frac12}}\big) \nonumber 
\\
& \geq \sum_{j \in S} \big(1-\vx(j)\big) - 2\alpha \sum_{j \in S} \inner{\vv_j\vv_j^\top}{\mA^{\frac12}} \nonumber 
\\
& = b - \vx(S) - 2 \alpha \inner{\mZ}{\mA^{\frac12}} > 0
\end{align*}
where the first inequality holds as $1-2\alpha\inner{\vv_j \vv_j^\top}{\mA^{\frac12}} \leq 0$ for $j \in S \setminus S'$, 
the second inequality follows from $1-\vx(j) \leq 1$ for each $j \in [n]$, 
and the equality is by $|S| = b$,
and the strict inequality is by the assumption $b > q + 2 \alpha \inner{\mZ}{\mA^{\frac12}} \geq \vx(S) + 2 \alpha \inner{\mZ}{\mA^{\frac12}}$. 
Thus, $\Pr[i_t=i] \geq 0$ for each $i \in S'$, and clearly $\sum_{i \in S'} \Pr[i_t = i] = 1$.

The expected loss using this probability distribution is
\begin{align*}
\E\left[ \frac{\inner{\vv_{i_t} \vv_{i_t}^\top}{\mA} }{1 - 2\alpha \inner{\vv_{i_t} \vv_{i_t}^\top}{A^{\frac12}} } \right] 
& = \sum_{i \in S'} \frac{\big(1-\vx(i)\big) \cdot \big(1-2\alpha \inner{\vv_i \vv_i^\top}{\mA^{\frac12}}\big)}{\sum_{j \in S'} \big(1-\vx(j)\big) \cdot \big(1-2\alpha \inner{\vv_j \vv_j^\top}{\mA^{\frac12}}\big)} \cdot \frac{\inner{\vv_i \vv_i^\top}{\mA}}{1-2\alpha \inner{\vv_i \vv_i^\top}{\mA^{\frac12}}} 
\\
& = \frac{\sum_{i \in S'} \big(1-\vx(i)\big) \cdot \inner{\vv_i \vv_i^\top}{\mA}}{\sum_{j \in S'} \big(1-\vx(j)\big) \cdot \big(1-2\alpha \inner{\vv_j \vv_j^\top}{\mA^{\frac12}}\big)} 
\\
& \leq \frac{\inner{\mZ}{\mA} - \inner{\mX_S}{\mA}}{b - \vx(S) - 2 \alpha \inner{\mZ}{\mA^{\frac12}}},
\end{align*}
where the inequality is from the above inequality for the denominator and 
\begin{align*}
    \sum_{i \in S'} \big(1-\vx(i)\big) \cdot \inner{\vv_i \vv_i^\top}{\mA} 
\leq \sum_{i \in S} \big(1-\vx(i)\big) \cdot \inner{\vv_i \vv_i^\top}{\mA} = \inner{\mZ}{\mA} - \inner{\mX_S}{\mA}
\end{align*}
for the numerator. Therefore, there exists an $i \in S'$ with loss at most the expected value.
\end{proof}

Next, we show the existence of $j_t$ with large gain term, again with respect to a fractional solution.

%HERE
\begin{lemma}[Gain] \label{l:gain-E}
Let $S := S_{t-1}$ and $\mA := \mA_t$.
For any $\vx \in [0,1]^n$ with $\sum_{i=1}^n \vx(i) = q < b$ and $\vx(S) < q$, there exists $j \in [n] \setminus S$ with 
\begin{align*}
    \frac{\inner{\vv_j \vv_j^\top}{\mA} }{1 + 2\alpha \inner{\vv_j \vv_j^\top}{\mA^{\frac12}} } \geq \frac{\inner{\mX}{\mA} - \inner{\mX_S}{\mA}}{q - \vx(S) + 2\alpha \inner{\mX}{\mA^{\frac12}}}.
\end{align*}
\end{lemma}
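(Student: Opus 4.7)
The plan is to mirror the probabilistic argument used in Lemma~\ref{l:gain-A} for A-design, adapted to the E-design potential function. I would define a sampling distribution over $[n]\setminus S$ that favors vectors $\vv_j$ with large fractional value $\vx(j)$ and large denominator $1+2\alpha\inner{\vv_j\vv_j^\top}{\mA^{\frac12}}$, then show that its expected gain is exactly the desired lower bound, so some realized $j$ must meet or exceed this value.

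Concretely, for each $j \in [n]\setminus S$ I would set
\[
\Pr[j_t = j] \;=\; \frac{\vx(j) \cdot \bigl(1 + 2\alpha \inner{\vv_j \vv_j^\top}{\mA^{\frac12}}\bigr)}{\sum_{i \in [n]\setminus S} \vx(i) \cdot \bigl(1 + 2\alpha \inner{\vv_i \vv_i^\top}{\mA^{\frac12}}\bigr)}.
\]
This is well-defined: each factor $1 + 2\alpha\inner{\vv_i\vv_i^\top}{\mA^{\frac12}}$ is positive since $\alpha > 0$ and $\mA^{\frac12}\succeq 0$, and the normalizing $\vx$-mass $\sum_{i\in [n]\setminus S}\vx(i) = q-\vx(S) > 0$ by the hypothesis $\vx(S) < q$, so the denominator is strictly positive.

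The expected gain under this distribution telescopes nicely: the factor $1+2\alpha\inner{\vv_j\vv_j^\top}{\mA^{\frac12}}$ in the numerator of the sampling probability cancels with the denominator of the gain term, leaving
\[
\E\!\left[\frac{\inner{\vv_{j_t}\vv_{j_t}^\top}{\mA}}{1 + 2\alpha \inner{\vv_{j_t}\vv_{j_t}^\top}{\mA^{\frac12}}}\right]
=\frac{\sum_{j \in [n]\setminus S} \vx(j) \inner{\vv_j\vv_j^\top}{\mA}}{\sum_{i \in [n]\setminus S} \vx(i)\bigl(1 + 2\alpha\inner{\vv_i\vv_i^\top}{\mA^{\frac12}}\bigr)}
=\frac{\inner{\mX}{\mA}-\inner{\mX_S}{\mA}}{(q-\vx(S)) + 2\alpha\inner{\mX - \mX_S}{\mA^{\frac12}}}.
\]
Here I use $\sum_{j\in[n]\setminus S}\vx(j)\vv_j\vv_j^\top = \mX - \mX_S$ in both the numerator (paired with $\mA$) and the denominator (paired with $\mA^{\frac12}$), together with $\sum_{i\in [n]\setminus S}\vx(i) = q - \vx(S)$.

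The only remaining step is a single monotonicity observation to weaken the denominator: since $\mX_S \succeq 0$ and $\mA^{\frac12}\succeq 0$, we have $\inner{\mX_S}{\mA^{\frac12}}\geq 0$, hence $\inner{\mX-\mX_S}{\mA^{\frac12}} \leq \inner{\mX}{\mA^{\frac12}}$. Enlarging the denominator to $(q - \vx(S)) + 2\alpha\inner{\mX}{\mA^{\frac12}}$ yields exactly the bound claimed in the lemma. Since some $j \in [n]\setminus S$ must attain value at least the expectation, the conclusion follows. I do not expect any serious obstacle here; the main subtlety is simply verifying that the denominator stays positive (so that both the distribution and the weakened bound make sense), which is handled by the hypothesis $\vx(S) < q$.
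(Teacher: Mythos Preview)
Your proposal is correct and essentially identical to the paper's proof: the same sampling distribution, the same cancellation to compute the expectation, and the same final step of enlarging the denominator via $\inner{\mX-\mX_S}{\mA^{\frac12}} \leq \inner{\mX}{\mA^{\frac12}}$ (the paper phrases this as $\sum_{i\in[n]\setminus S}\vx(i)\vv_i\vv_i^\top \preccurlyeq \mX$, which is the same thing).
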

\begin{proof}
Consider the probability distribution of adding a vector $\vv_j$ where each $j \in [n]\setminus S$ is sampled with probability
\[
\Pr[j_t = j] = \frac{\vx(j) \cdot \big(1+2\alpha \inner{\vv_j \vv_j^\top}{\mA^{\frac12}}\big)}{\sum_{i \in [n] \setminus S} ~\vx(i) \cdot \big(1+2\alpha \inner{\vv_i \vv_i^\top}{\mA^{\frac12}}\big)}
\quad \text{for each } j \in [n] \setminus S.
\] 
Note that the denominator is positive by the assumption $\vx(S) < q$ which implies that $\vx([n]\setminus S)>0$.

The expected gain with respect to this probability distribution is
\begin{align*}
    \E\left[ \frac{\inner{\vv_{j_t} \vv_{j_t}^\top}{\mA} }{1 + 2\alpha \inner{\vv_{j_t} \vv_{j_t}^\top}{\mA^{\frac12}} } \right] 
    & = \sum_{j \in [n] \setminus S} \frac{\vx(j) \cdot \big(1+2\alpha \inner{\vv_j \vv_j^\top}{\mA^{\frac12}}\big)}{\sum_{i \in [n] \setminus S} ~\vx(i) \big(1+2\alpha \inner{\vv_i \vv_i^\top}{\mA^{\frac12}}\big)} \cdot \frac{\inner{\vv_j \vv_j^\top}{\mA}}{1+2\alpha \inner{\vv_j \vv_j^\top}{\mA^{\frac12}}} \\
    & = \frac{\sum_{j \in [n] \setminus S} ~\vx(j) \cdot \inner{\vv_j \vv_j^\top}{\mA}}{\sum_{i \in [n] \setminus S} ~\vx(i) \cdot \big(1+2\alpha \inner{\vv_i \vv_i^\top}{\mA^{\frac12}}\big)} \\
    & = \frac{\inner{\mX}{\mA} - \inner{\mX_S}{\mA}}{q - \vx(S) + 2\alpha \sum_{i \in [n] \setminus S} ~\vx(i) \inner{\vv_i \vv_i^\top} {\mA^{\frac12}}} \\
    & \geq \frac{\inner{\mX}{\mA} - \inner{\mX_S}{\mA}}{q - \vx(S) + 2\alpha \inner{\mX} {\mA^{\frac12}}}.
\end{align*}
where the third equality is by $\sum_{i=1}^n \vx(i) = q$ 
and the last inequality holds as $\sum_{i \in [n] \setminus S} \vx(i) \cdot \vv_i \vv_i^\top \preccurlyeq \mX$. 
Therefore, there exist $j \in [n] \setminus S$ with gain at least the expected value.
\end{proof}

The following is the main technical result for E-design, which lower bounds the improvement of the potential function in each iteration.
Note that the result depends on the condition number of the fractional solution.

\begin{proposition}[Progress] \label{prop:progress-E}
Let $\vx \in [0,1]^n$ be a fractional solution with $\sum_{i=1}^n \vx(i) = q$. 
Let $\mZ_t = \sum_{i \in S_{t-1}} \vv_i \vv_i^\top$ be the current solution in the $t$-th iteration.
For any $0 < \eps < \frac12$, if
\[
\alpha = \frac{\sqrt{d}}{\eps \cdot \lambda_{\min}(\mX)}, 
\qquad \lambda_{\min}(\mZ_t) \leq (1-2\eps) \cdot \lambda_{\min}(\mX), 
\qquad \text{and} \qquad b \geq q + 2\Big(d+\frac{d}{\eps}\Big) + \frac{2d}{\eps} \sqrt{\frac{\lambda_{\rm{avg}}(\mX)}{\lambda_{\min}(\mX)}}
\]
where $\lambda_{\rm avg}(\mX) = \frac{\tr(\mX)}{d}$ is the average eigenvalue of $\mX$,
then the value of the potential function is
\[
\Phi(\mA_t, i_t, j_t) = \frac{\inner{\vv_{j_t} \vv_{j_t}^\top}{\mA_t} }{1 + 2\alpha \inner{\vv_{j_t} \vv_{j_t}^\top}{\mA_t^{\frac12}} }
- \frac{\inner{\vv_{i_t} \vv_{i_t}^\top }{\mA_t}}{1 - 2\alpha \inner{\vv_{i_t} \vv_{i_t}^\top}{\mA_t^{\frac12}}} \geq \frac{\eps}{b} \cdot \lambda_{\min}(\mX).
\]
\end{proposition}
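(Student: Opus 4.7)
The strategy is to compare the current integral solution $\mZ_t$ to the fractional solution $\mX$ via the probabilistic sampling ideas behind Lemma~\ref{l:gain-E} and Lemma~\ref{l:loss-E}, then invoke the fact that $(i_t,j_t)$ is the $\arg\max$ of $\Phi(\mA_t,\cdot,\cdot)$. Write $S := S_{t-1}$, $D_g := q - \vx(S) + 2\alpha\inner{\mX}{\mA_t^{\frac12}}$, and $D_l := b - \vx(S) - 2\alpha\inner{\mZ_t}{\mA_t^{\frac12}}$. First, I would apply Lemma~\ref{l:gain-E} to the given $\vx$ to exhibit some $j \in [n]\setminus S$ whose gain term is at least $(\inner{\mX}{\mA_t} - \inner{\mX_S}{\mA_t})/D_g$, and Lemma~\ref{l:loss-E} to exhibit some $i \in S'_{t-1}$ whose loss term is at most $(\inner{\mZ_t}{\mA_t} - \inner{\mX_S}{\mA_t})/D_l$. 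Applicability of the two lemmas reduces to checking $\vx(S) < q$ (if $\vx(S) = q$ then $\mX = \mX_S \preceq \mZ_t$, which will contradict the numerator gap below) and $q < b - 2\alpha\inner{\mZ_t}{\mA_t^{\frac12}}$ (which follows from the budget hypothesis and the bound on $\alpha\inner{\mZ_t}{\mA_t^{\frac12}}$ developed in Step 3).

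Second, I would show the numerator gap $\inner{\mX - \mZ_t}{\mA_t} \geq \eps\lambda_{\min}(\mX)$. The lower bound $\inner{\mX}{\mA_t} \geq \lambda_{\min}(\mX)$ is immediate from $\mX \succeq \lambda_{\min}(\mX)\mI$ and $\tr(\mA_t) = 1$. For the upper bound on $\inner{\mZ_t}{\mA_t}$, Lemma~\ref{l:cospectral} yields $\inner{\mZ_t}{\mA_t} \leq \sqrt{d}/\alpha + \lambda_{\min}(\mZ_t)$; plugging in $\alpha = \sqrt{d}/(\eps\lambda_{\min}(\mX))$ together with the hypothesis $\lambda_{\min}(\mZ_t) \leq (1-2\eps)\lambda_{\min}(\mX)$ gives $\inner{\mZ_t}{\mA_t} \leq \eps\lambda_{\min}(\mX) + (1-2\eps)\lambda_{\min}(\mX) = (1-\eps)\lambda_{\min}(\mX)$, and subtracting gives the claim.

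Third, I would establish $D_l \geq D_g > 0$, which after cancelling $\vx(S)$ reduces to $b - q \geq 2\alpha(\inner{\mZ_t}{\mA_t^{\frac12}} + \inner{\mX}{\mA_t^{\frac12}})$. Using Lemma~\ref{l:cospectral}, $\lambda_{\min}(\mZ_t) \leq \lambda_{\min}(\mX)$, and the choice of $\alpha$, I get $2\alpha\inner{\mZ_t}{\mA_t^{\frac12}} \leq 2(d + d/\eps)$. For the $\mX$ term, applying the Cauchy--Schwarz bound \eqref{eq:inner_upper} together with a PSD estimate on $\inner{\mX}{\mA_t}$ yields $2\alpha\inner{\mX}{\mA_t^{\frac12}} \leq \frac{2d}{\eps}\sqrt{\lambda_{\rm avg}(\mX)/\lambda_{\min}(\mX)}$. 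Summing these two bounds and invoking the budget hypothesis closes the step.

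Finally, since both numerators are nonnegative, $D_l \geq D_g > 0$, and $D_l \leq b$ (because $\vx(S), 2\alpha\inner{\mZ_t}{\mA_t^{\frac12}} \geq 0$), the optimality of $(i_t, j_t)$ gives
\[
\Phi(\mA_t, i_t, j_t) \;\geq\; \frac{\inner{\mX - \mX_S}{\mA_t}}{D_g} - \frac{\inner{\mZ_t - \mX_S}{\mA_t}}{D_l} \;\geq\; \frac{\inner{\mX - \mZ_t}{\mA_t}}{D_l} \;\geq\; \frac{\eps\lambda_{\min}(\mX)}{b}.
\]
The main obstacle is Step 3, specifically the bound $2\alpha\inner{\mX}{\mA_t^{\frac12}} \leq \frac{2d}{\eps}\sqrt{\lambda_{\rm avg}(\mX)/\lambda_{\min}(\mX)}$: this is the single place where the quantity $\sqrt{\lambda_{\rm avg}/\lambda_{\min}}$ enters, and the cospectral structure of $(\mZ_t, \mA_t)$ from Lemma~\ref{l:cospectral} cannot be applied to $\mX$ directly, so one must work harder with Cauchy--Schwarz and the eigenvalue constraints on $\mA_t$.
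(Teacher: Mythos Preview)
Your overall architecture matches the paper's proof: invoke Lemma~\ref{l:gain-E} and Lemma~\ref{l:loss-E}, control $\inner{\mZ_t}{\mA_t}$ and $\inner{\mZ_t}{\mA_t^{1/2}}$ via Lemma~\ref{l:cospectral}, and subtract. However, Step~3 contains a genuine gap.

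The bound you assert, $2\alpha\inner{\mX}{\mA_t^{1/2}} \le \frac{2d}{\eps}\sqrt{\lambda_{\rm avg}(\mX)/\lambda_{\min}(\mX)}$, is \emph{false} in general. Applying \eqref{eq:inner_upper} gives $\inner{\mX}{\mA_t^{1/2}} \le \sqrt{\tr(\mX)\cdot\inner{\mX}{\mA_t}}$, and plugging in $\alpha$ and $\tr(\mX)=d\lambda_{\rm avg}(\mX)$ shows that your target is equivalent to $\inner{\mX}{\mA_t}\le \lambda_{\min}(\mX)$. But the inequality goes the other way: $\inner{\mX}{\mA_t}\ge \lambda_{\min}(\mX)$ always, with strict inequality whenever $\mA_t$ puts mass on a non-minimal eigenspace of $\mX$. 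Concretely, if $\mX=\diag(1,K)$ and $\mZ_t$ has its small eigenvalue in the second coordinate, then $\mA_t$ concentrates on that coordinate and $2\alpha\inner{\mX}{\mA_t^{1/2}}$ scales like $K$, while your right-hand side scales like $\sqrt{K}$. So $D_l\ge D_g$ cannot be established, and your final display collapses.

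The paper sidesteps this by \emph{not} separating numerator from denominator in the gain bound. After Lemma~\ref{l:gain-E} and \eqref{eq:inner_upper}, the gain lower bound reads
\[
\frac{\inner{\mX}{\mA_t}-\inner{\mX_S}{\mA_t}}{q-\vx(S)+2\alpha\sqrt{\tr(\mX)\cdot\inner{\mX}{\mA_t}}},
\]
which has the form $f(x)=\frac{x-c_1}{c_2+c_3\sqrt{x}}$ with $x=\inner{\mX}{\mA_t}$. By Claim~\ref{cl:func} this $f$ is monotone increasing, so the \emph{lower} bound $\inner{\mX}{\mA_t}\ge\lambda_{\min}(\mX)$ may be substituted simultaneously in numerator and denominator, yielding a gain lower bound with denominator exactly $q-\vx(S)+\frac{2d}{\eps}\sqrt{\lambda_{\rm avg}/\lambda_{\min}}$. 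The loss bound from Lemma~\ref{l:loss-E} is then massaged (via Lemma~\ref{l:cospectral} and the budget hypothesis) to have the \emph{same} denominator, after which subtraction gives $\eps\lambda_{\min}(\mX)$ over that common denominator, which is at most $b$. The monotonicity trick is the missing idea; once you use it, your Step~2 computation of the numerator gap is no longer needed in that form, and Step~3 becomes unnecessary.
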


\begin{proof}
Let $S := S_{t-1}$ be the current solution set at time $t$, $\mA = \mA_t$ and $\mZ = \mZ_t$. 
Note that $\vx(S) < q$, as otherwise $\lambda_{\min}(\mZ) \geq \lambda_{\min}(\mX)$ and the assumption does not hold.
Hence, we can apply Lemma~\ref{l:gain-E} to show the existence of $j_t \in [n] \backslash S$ with gain
\begin{align*}
\frac{\inner{\vv_{j_t} \vv_{j_t}^\top}{\mA} }{1 + 2\alpha \inner{\vv_{j_t} \vv_{j_t}^\top}{\mA^{\frac12}} } 
& \geq \frac{\inner{\mX}{\mA} - \inner{\mX_S}{\mA}}{q - \vx(S) + 2\alpha \inner{\mX}{\mA^{\frac12}}}
\geq \frac{\inner{\mX}{\mA} - \inner{\mX_S}{\mA}}{q - \vx(S) + 2\alpha \sqrt{\tr(\mX) \cdot \inner{\mX}{\mA}}}
\\
& \geq \frac{\lambda_{\min}(\mX) - \inner{\mX_S}{\mA}}{q - \vx(S) + 2\alpha \sqrt{\tr(\mX) \cdot \lambda_{\min}(\mX)}}
=
\frac{\lambda_{\min}(\mX) - \inner{\mX_S}{\mA}}{q - \vx(S) + \frac{2d}{\eps} \sqrt{ \frac{\lambda_{\rm avg}(\mX)}{\lambda_{\min}(\mX)}}}, 
\end{align*}
where the second inequality is by \eqref{eq:inner_upper} in Lemma~\ref{l:trace},
the third inequality is by Claim~\ref{cl:func} and the fact that $\inner{\mX}{\mA} \geq \inner{\lambda_{\min}(\mX) \cdot \mI}{\mA} \geq \lambda_{\min}(\mX)$ as $\tr(\mA) = 1$,
and the last equality is by the choice $\alpha = \frac{\sqrt{d}}{\eps \lambda_{\min}(\mX)}$ and the definition of $\lambda_{\rm avg}(\mX)$.

For the loss term, we need to check the condition that $b > q + 2\alpha \inner{\mZ_t}{\mA^{\frac12}}$ before applying Lemma~\ref{l:loss-E}. 
It follows from Lemma~\ref{l:cospectral} and the assumptions of $\alpha$, $\lambda_{\min}(\mZ)$, and $b$ that
\begin{align*} 
\alpha \inner{\mZ}{\mA^{\frac12}} 
\leq d + \alpha \sqrt{d} \cdot \lambda_{\min}(\mZ) 
\leq d + \frac{d}{\eps}
\quad \implies \quad
b > q + 2\alpha \cdot \inner{\mZ}{\mA^{\frac{1}{2}}}.
\end{align*}
Hence, Lemma~\ref{l:loss-E} implies the existence of an $i_t \in S$ with loss
\[
\frac{\inner{\vv_{i_t} \vv_{i_t}^\top }{\mA}}{1 - 2\alpha \inner{\vv_{i_t} \vv_{i_t}^\top}{\mA^{\frac12}}} \!\leq\! \frac{\inner{\mZ_t}{\mA} - \inner{\mX_S}{\mA}}{b - \vx(S) - 2\alpha \inner{\mZ}{\mA^{\frac12}}}
\leq \frac{\lambda_{\min}(\mZ) + \frac{\sqrt{d}}{\alpha} - \inner{\mX_S}{\mA}}{b-\vx(S)-2\left(d + \frac{d}{\eps}\right)}
\leq \frac{(1-\eps)\lambda_{\min}(\mX) - \inner{\mX_S}{\mA}}{q - \vx(S) + \frac{2d}{\eps} \sqrt{\frac{\lambda_{\rm{avg}}(\mX)}{\lambda_{\min}(\mX)}} },
\]
where the second inequality is by Lemma~\ref{l:cospectral} and the inequality above about $\alpha\inner{\mZ}{\mA^{\frac12}}$,
and the last inequality is by our assumptions about $\alpha$, $\lambda_{\min}(\mZ)$ and $b$.

Therefore, we conclude that the progress in each iteration is
\[
\Phi(\mA, i_t, j_t) = 
\frac{\inner{\vv_{j_t} \vv_{j_t}^\top}{\mA} }{1 + 2\alpha \inner{\vv_{j_t} \vv_{j_t}^\top}{\mA^{\frac12}} }
- \frac{\inner{\vv_{i_t} \vv_{i_t}^\top }{\mA}}{1 - 2\alpha \inner{\vv_{i_t} \vv_{i_t}^\top}{\mA\frac12}} 
\geq \frac{\eps \cdot \lambda_{\min}(\mX)}{q - \vx(S) + \frac{2d}{\eps} \sqrt{ \frac{\lambda_{\rm avg}(\mX)}{\lambda_{\min}(\mX)}}}
\geq \frac{\eps}{b} \cdot \lambda_{\min}(\mX),
\]
where the last inequality follows from the assumption about $b$.
\end{proof}

%HERE

By guessing the targeted objective value,
the main result in this subsection follows from Proposition~\ref{prop:progress-E} by a simple scaling argument.

\CombE*

\begin{proof}
We consider the following scaled-down version $\vy,\mY$ of the fractional solution $\vx,\mX$.
Let
\[
q = b - 2\Big( d+\frac{d}{\eps} \Big) - \frac{2d}{\eps} \sqrt{ \frac{\lambda_{\rm avg}(\mX)}{\lambda_{\min}(\mX)}},
\qquad
\vy := \frac{q}{b} \cdot \vx,
\qquad
\mY := \sum_{i=1}^n \vy(i) \cdot \vv_i \vv_i^\top = \frac{q}{b} \cdot \mX.
\]
Note that $\lambda_l(Y) = \frac{q}{b} \cdot \lambda_l(X)$ for each $1 \leq l \leq d$, and this implies that $b = q + 2\Big(d + \frac{d}{\eps} \Big) + \frac{2d}{\eps} \sqrt{ \frac{\lambda_{\rm avg}(\mY)}{\lambda_{\min}(\mY)}}$.

Suppose the combinatorial local search algorithm is run with the accuracy parameter $\eps$ and $\lambda^* := \lambda_{\min}(\mY)$
and terminates at the $\tau$-th iteration. 
If $\Phi(\mA_{\tau},i_\tau,j_\tau) < \frac{\eps}{b} \cdot \lambda_{\min}(\mY)$,
then we can apply Proposition~\ref{prop:progress-E} on $\vy$ to conclude that 
\[
\lambda_{\min}(\mZ_{\tau}) > (1 - 2\eps) \cdot \lambda_{\min}(\mY) = \frac{(1-2\eps) q}{b} \cdot \lambda_{\min}(\mX) \geq \big(1-O(\eps)\big) \cdot \lambda_{\min}(\mX),
\]
where the last inequality is by the assumption that $b = \Omega\Big( \frac{d}{\eps^2} \sqrt{\frac{\lambda_{\rm avg}(\mX)}{\lambda_{\min}(\mX)}} \Big)$.
This proves the approximate guarantee of the returned solution if the algorithm is run with $\lambda^* = \lambda_{\min}(\mY)$.

Our final algorithm runs the local search algorithm on different values of $\lambda^*$.
Initially, we start from an upper bound on $\lambda_{\min}(\mX)$ by setting $\lambda^* = \lambda_{\min}\left( \sum_{i=1}^n \vv_i \vv_i^\top \right)$.
Then it runs the local search algorithm with targeted objective value $\lambda^*$.
If the returned solution $\mZ$ satisfies $\lambda_{\min}(\mZ) \geq \big(1-O(\eps)\big) \cdot \lambda^*$ then it stops and returns $\mZ$ as our final solution;
otherwise, we set $\lambda^* \gets (1-\eps) \cdot \lambda^*$ and repeat until the first time that the local search algorithm finds a solution with $\lambda_{\min}(\mZ) \geq \big(1-O(\eps)\big) \cdot \lambda^*$.
For correctness, it is enough to show that the algorithm will stop when $(1-\eps) \cdot \lambda_{\min}(\mX) \leq \lambda^* \leq \lambda_{\min}(\mX)$.
This follows by applying the argument in the previous paragraphs on $\mX' := \frac{\lambda^*}{\lambda_{\min}(\mX)} \cdot \mX$, 
so that $\lambda_{\min}(\mX') = \lambda^*$ and then the returned solution $\mZ'$ will satisfy $\lambda_{\min}(\mZ') \geq \big(1-O(\eps)\big) \cdot \lambda_{\min}(\mX') \geq \big(1-O(\eps)\big) \cdot \lambda_{\min}(\mX)$.  

Finally, we bound the time complexity of the algorithm.
Note that $\frac{b}{n} \sum_{i=1} \vv_i \vv_i^\top$ is a feasible solution with objective value $\frac{b}{n} \cdot \lambda_{\min} \left( \sum_{i=1}^n \vv_i \vv_i^\top \right)$.
This implies that the number of executions of the local search algorithm is at most $O\left( \frac1\eps \log \frac{n}{b} \right)$.
In each execution with a fixed $\lambda^*$,
if the algorithm runs for $\tau \geq \frac{b}{\eps}$ iterations, 
the termination condition together with~\eqref{eq:lambda_min_lower} imply that
\[
\lambda_{\min}(\mZ_{\tau+1}) 
\geq \sum_{t=1}^{\tau} \Phi(\mA_t, i_t, j_t) - \frac{2\sqrt{d}}{\alpha} 
\geq \tau \Big( \frac{\eps \lambda^*}{b} \Big) - 2\eps \lambda^*
> (1-2\eps) \lambda^*,
\]
and so it would stop.  
Thus, the total number of iterations is at most $O\left( \frac{b}{\eps^2} \log \frac{n}{b} \right)$.
Each iteration can be implemented in polynomial time as shown in~\cite{AZLSW20}.
\end{proof}

The following is a corollary in the short vector setting.

\begin{corollary}
Let $\vx \in [0,1]^n$ be a fractional solution to the E-design problem with budget $b$. 
For any $0 < \eps < 1$, 
if $\norm{\vv_i}^2 \leq \eps^2 \cdot \lambda_{\min}(\mX)$ for $1 \leq i \leq n$ and $b \geq \Omega\left(\frac{d}{\eps^2}\right)$,
then the combinatorial local search algorithm for E-design returns a solution with at most $b$ vectors and objective value at least $\big(1-O(\eps)\big) \cdot \lambda_{\min}(\mX)$ in polynomial time.
\end{corollary}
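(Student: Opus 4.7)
The plan is to derive this corollary directly from Theorem~\ref{t:E-combin}, whose assumption $b \geq \Omega\bigl(\frac{d}{\eps^2}\sqrt{\lambda_{\rm avg}(\mX)/\lambda_{\min}(\mX)}\bigr)$ involves the ratio $\lambda_{\rm avg}(\mX)/\lambda_{\min}(\mX)$. Under the short vector hypothesis, this ratio can be cheaply bounded in terms of $b$ and $\eps$ itself, and the corollary's hypothesis $b \geq \Omega(d/\eps^2)$ will be just enough to close the resulting self-referential inequality.

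First, I would bound the trace using the short vector assumption. Since $\mX = \sum_i \vx(i) \vv_i \vv_i^\top$ and $\sum_i \vx(i) = b$, we have
\[
\tr(\mX) = \sum_{i=1}^n \vx(i) \cdot \norm{\vv_i}_2^2 \leq \eps^2 \lambda_{\min}(\mX) \sum_{i=1}^n \vx(i) = \eps^2 b \cdot \lambda_{\min}(\mX).
\]
Therefore $\lambda_{\rm avg}(\mX) = \tr(\mX)/d \leq \eps^2 b \lambda_{\min}(\mX)/d$, which yields
\[
\sqrt{\frac{\lambda_{\rm avg}(\mX)}{\lambda_{\min}(\mX)}} \leq \eps \sqrt{\frac{b}{d}}.
\]

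Next, I would plug this bound into the hypothesis of Theorem~\ref{t:E-combin}. The theorem requires $b \geq C \cdot \frac{d}{\eps^2}\sqrt{\lambda_{\rm avg}(\mX)/\lambda_{\min}(\mX)}$ for some absolute constant $C$. Using the inequality above, it suffices to ensure $b \geq C \cdot \frac{d}{\eps^2} \cdot \eps\sqrt{b/d} = C \cdot \frac{\sqrt{db}}{\eps}$. Squaring and rearranging gives the condition $b \geq C^2 \cdot d/\eps^2$, which is exactly the hypothesis of the corollary (for a suitable choice of the hidden constant in the $\Omega(\cdot)$). Theorem~\ref{t:E-combin} then produces, in polynomial time, an integral solution $\mZ$ with at most $b$ vectors satisfying $\lambda_{\min}(\mZ) \geq (1 - O(\eps)) \lambda_{\min}(\mX)$, which is the desired conclusion.

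There is no real obstacle here: the only computation is the trace bound, and the rest is algebraic bookkeeping to verify that the short vector assumption tames the condition-number factor $\sqrt{\lambda_{\rm avg}/\lambda_{\min}}$ well enough that a linear-in-$d/\eps^2$ budget suffices. The polynomial-time claim and approximation guarantee are inherited verbatim from Theorem~\ref{t:E-combin}.
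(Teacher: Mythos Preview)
Your proposal is correct and follows essentially the same approach as the paper: both bound $\tr(\mX)\le \eps^2 b\,\lambda_{\min}(\mX)$ from the short-vector hypothesis, substitute into the budget requirement of Theorem~\ref{t:E-combin} to obtain the self-referential inequality $b\ge \Omega(\sqrt{bd}/\eps)$, and then solve to recover $b\ge \Omega(d/\eps^2)$.
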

\begin{proof}
It follows from the assumption $\norm{\vv_i}^2 \leq \eps^2 \cdot \lambda_{\min}(\mX)$ that 
\[
\lambda_{\rm avg}(\mX) 
= \frac{\tr(\mX)}{d} 
\leq \frac{b \eps^2 \cdot \lambda_{\min}(\mX)}{d}
\qquad \implies \qquad
\frac{2d}{\eps^2} \sqrt{ \frac{\lambda_{\rm avg}(\mX)}{\lambda_{\min}(\mX)}} 
\leq \frac{2\sqrt{bd}}{\eps}.
\]
Thus, for $b \geq \Omega(\frac{d}{\eps^2})$, 
it follows that $b \geq \Omega(\frac{\sqrt{bd}}{\eps}) \geq \Omega\Big(\frac{d}{\eps^2}\sqrt{\frac{\lambda_{\rm avg}(\mX)}{\lambda_{\min}(\mX)}}\Big)$,
and so Theorem~\ref{t:E-combin} implies that the combinatorial local search algorithm will find a $\big(1-O(\eps)\big)$-approximate solution in polynomial time.
\end{proof}

\subsubsection{Maximizing Algebraic Connectivity} \label{ss:comb-app-E}

In this problem, we are given a graph $G=(V,E)$ with Laplacian matrix $\mL_G = \sum_{e \in E} \vb_e \vb_e^\top$, 
and the goal is to find a subgraph $H$ with at most $b$ edges to maximize 
$\lambda_2(\mL_H)$. 
This problem is known as maximizing algebraic connectivity in the literature.
It is a special case of E-design and Theorem~\ref{t:E-combin} bounds the performance guarantee of a simple combinatorial local search algorithm.

\CombLambda*

\begin{proof}
Note that this is an E-design problem by using the vectors $\vb_e$ projecting onto the $(n-1)$-dimensional subspace orthogonal to the all-one vector.
Let $H^*$ be an optimal subgraph with $b$ edges.
Note that $\lambda_{\rm avg}(\mL_{H^*}) = \frac{\tr(\mL_{H^*})}{n} \leq \frac{2b}{n}$, 
and so
\[
b \geq \Omega\bigg( \frac{n}{\eps^4  \lambda_2(\mL_{H^*})} \bigg) 
\implies
b \geq \Omega\bigg( \frac{n}{\eps^2} \sqrt{\frac{2b}{n\lambda_2(\mL_{H^*})}} \bigg)
\geq 
\Omega\bigg( \frac{n}{\eps^2} \sqrt{\frac{\lambda_{\rm avg}(\mL_{H^*})}{\lambda_2(\mL_{H^*})}} \bigg).
\]
Therefore, by Theorem~\ref{t:E-combin}, the combinatorial local search algorithm for E-design returns a subgraph $H$ with $\lambda_2(\mL_H) \geq \big(1-O(\eps)\big) \cdot \lambda_2(\mL_{H^*})$ in polynomial time whenever $b \geq \Omega\Big(\frac{n}{\eps^4 \lambda_2(\mL_{H^*})} \Big)$.
\end{proof}

\subsubsection{Bad Examples for Local Search Algorithms} \label{ss:examples}

We first present a simple example showing that Fedorov's exchange method does not work with the E-design objective function, even if there is a well-conditioned optimal solution. 
The reason is simply that the E-design objective function is not smooth and sometimes it is impossible to improve it by an exchange operation.

\begin{example} 
Suppose the input vectors $\vv_1, ..., \vv_n$ are in $\R^d$ for some $d \geq 3$. 
Suppose that we have an initial solution set $S_0 \subseteq [n]$ such that $\mZ_1 = \sum_{i \in S_0} \vv_i \vv_i^\top = \mI$. 
For any $i_1 \in S_0$ and $j_1 \in [n] \backslash S_0$,
note that $\lambda_{\min}(\mZ_1 - \vv_{i_1} \vv_{i_1}^\top + \vv_{j_1} \vv_{j_1}^\top) \leq 1$. 
Therefore, Fedorov's method fails to improve the objective value even if there is a well-conditioned optimal solution say $N\ve_1, \ldots, N\ve_d$ for a large $N$.
\end{example}

Then, we present an example where all exchanges strictly decrease the minimum eigenvalue, even though the current solution is far away from the well-conditioned optimal solution.

\begin{example} 
Let $N \geq 0$ be some large scalar. 
The input contains exactly $\frac{b}{2}$ copies of each $\vv_1, \vv_2, \vw_1, \vw_2 \in \R^2$ defined as follows:
\begin{align*}
&\vv_1 \vv_1^\top = \begin{pmatrix} 1 & 0 \\ 0 & 0 \end{pmatrix}, & &
\vv_2 \vv_2^\top = \begin{pmatrix} 0 & 0 \\ 0 & 1  \end{pmatrix}, \\
&\vw_1 \vw_1^\top = \frac{N}{2} \begin{pmatrix} 1 & 1 \\ 1 & 1 \end{pmatrix}, & &
\vw_2 \vw_2^\top = \frac{N}{2} \begin{pmatrix} 1 & -1 \\ -1 & 1 \end{pmatrix}.
\end{align*}
The optimal solution $\mZ^*$ contains $\frac{b}{2}$ copies of $\vw_1 \vw_1^\top$ and $\vw_2 \vw_2^\top$.
Suppose the algorithm starts with the solution $\mZ_1$ containing $\frac{b}{2}$ copies of $\vv_1 \vv_1^\top$ and $\vv_2 \vv_2^\top$ such that
\begin{align*}
\mZ^* = \begin{pmatrix} \frac{b N}{2} & \\ & \frac{b N}{2} \end{pmatrix} \text{~\rm with~} \lambda_{\min}(\mZ^*) = \frac{b N}{2}, & & \mZ_1 = \begin{pmatrix} \frac{b}{2} & \\ & \frac{b}{2} \end{pmatrix}  \text{~\rm with~} \lambda_{\min}(\mZ_1) = \frac{b}{2}.
\end{align*}
Without loss of generality, we assume the exchange step removes $\vv_1$ and adds $\vw_1$. 
After the exchange, the solution is
\[
\mZ_2 = \begin{pmatrix} \frac{b+N}{2} - 1& \frac{N}{2} \\ \frac{N}{2} & \frac{b+N}{2} \end{pmatrix}.
\]
We can verify that the minimum eigenvalues of $\mZ_2$ is $\frac{b-1 + N - \sqrt{N^2 - 1}}{2}$, which tends to $\frac{b-1}{2}$ when $N \rightarrow \infty$. 
Since all other exchanges are symmetric, we conclude that all exchanges will decrease the objective value by $\frac12$, and thus Fedorov's exchange method fails.
\end{example}

Finally, we adopt an example by Madan, Singh, Tantipongpipat and Xie~\cite{MSTX19} to show that even we use a smooth objective function from the regret minimization framework, the combinatorial local search algorithm may return bad solution when there are no well-conditioned optimal solutions.

\begin{example}
Let $N \geq 0$ be some large scalar. 
The input contains $M \gg b \geq 3$ copies of each $\vv_1, \vv_2, \vw_1, \vw_2 \in \R^2$ defined as follows:
\begin{align*}
& \vv_1 \vv_1^\top = \begin{pmatrix} N^2 & 1 \\ 1 & \frac{1}{N^2} \end{pmatrix}, \qquad
\vv_2 \vv_2^\top = \begin{pmatrix} N^2 & -1 \\ -1 & \frac{1}{N^2} \end{pmatrix}, \\
& \vw_1 \vw_1^\top = \frac1b \cdot \begin{pmatrix} N^8 & N^4 \\ N^4 & 1 \end{pmatrix}, \qquad
\vw_2 \vw_2^\top = \frac1b \cdot \begin{pmatrix} N^8 & -N^4 \\ -N^4 & 1 \end{pmatrix}.
\end{align*}
\end{example}

\begin{lemma}
The combinatorial local search algorithm proposed in this subsection may return a solution with an unbounded approximation ratio.
\end{lemma}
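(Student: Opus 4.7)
The plan is to exhibit a starting set $S_0$ for which, at every guess $\lambda^*$ close to the true optimum, the inner local-search loop terminates immediately at $S_0$ because no candidate exchange improves the smoothed potential $\Phi$. The outer guessing loop will then eventually accept $S_0$ once $\lambda^*$ has been shrunk below $\lambda_{\min}(S_0)/(1-2\eps)$, so the algorithm returns $S_0$, whose minimum eigenvalue is vanishingly small compared to the optimum as $N \to \infty$.

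First we observe that the optimal value is $\lambda^*_{\text{opt}} = 1$, attained by taking $b/2$ copies of $\vw_1$ together with $b/2$ copies of $\vw_2$, which produces $\diag(N^8, 1)$. Fix $S_0$ to consist of $b/2$ copies each of $\vv_1$ and $\vv_2$, so that $\mZ_1 = \diag(bN^2, b/N^2)$ is diagonal with $\lambda_{\min}(\mZ_1) = b/N^2$. The closed form $\mA_1 = (\alpha\mZ_1 + l_1\mI)^{-2}$ then makes $\mA_1$ diagonal, and for $N$ large the trace-one constraint forces $l_1 \approx 1$, so $\mA_1 \approx \diag(0, 1)$ while the $(1,1)$-entry of $\mA_1^{1/2}$ is of order $1/(\alpha b N^2)$.

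The decisive step is to show $\Phi(\mA_1, i, j) \leq 0$ for every candidate exchange. For any vector $\vw_c$, the dominant contribution to $\inner{\vw_c \vw_c^\top}{\mA_1^{1/2}}$ comes from its enormous first coordinate and is of order $N^6/(\alpha b^2)$; the denominator $1 + 2\alpha \inner{\vw_c\vw_c^\top}{\mA_1^{1/2}}$ is therefore $\Theta(N^6/b^2)$, which crushes the gain for adding any $\vw_c$ down to $\Theta(\eps^2/(bN^2))$. The loss for removing any $\vv_a$ is of order $1/N^2$, which strictly dominates this gain. By the symmetries $\vv_1 \leftrightarrow \vv_2$ and $\vw_1 \leftrightarrow \vw_2$ it suffices to check a few representative cases; in particular, the $\vv_a \to \vv_{a'}$ swaps give $\Phi = b/(N^2(b+2)) - b/(N^2(b-2)) < 0$ because $1/(1+2/b) < 1/(1-2/b)$. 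Hence the first iteration already triggers $\Phi < \eps\lambda^*/b$, and $S_0$ is returned by the inner loop.

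Since this argument applies at every $\lambda^* \in [b/N^2, 1]$, the outer loop repeatedly halves $\lambda^*$ and restarts with the same $S_0$ until $\lambda^* \leq b/N^2/(1-2\eps)$, at which point $S_0$ is accepted as the final output. The resulting approximation ratio is $\lambda_{\min}(S_0)/\lambda^*_{\text{opt}} = b/N^2 \to 0$ as $N \to \infty$, proving the lemma. The main obstacle is the case-by-case verification that $\Phi \leq 0$ for all candidate exchanges: the $\vv \to \vw$ cases are easy once the $N^6$ blowup of the gain denominator is observed, but the $\vv \to \vv$ and $\vw \to \vw$ cases require the subtler observation that the asymmetry between $1 + 2\alpha\inner{\cdot}{\mA_1^{1/2}}$ and $1 - 2\alpha\inner{\cdot}{\mA_1^{1/2}}$ makes even identity-like swaps strictly hurt the potential.
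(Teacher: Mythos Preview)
Your core computation is correct and matches the paper: at $\lambda^*=1$ (so $\alpha=\sqrt{2}/\eps$), the action matrix is essentially $\diag(0,1)$, the loss of removing any $\vv_a$ is $\Theta(1/N^2)$, and the gain of adding any $\vw_c$ is only $\Theta(\eps^2/(bN^2))$ because the huge first coordinate of $\vw_c$ blows up the denominator $1+2\alpha\inner{\vw_c\vw_c^\top}{\mA^{1/2}}$ to order $N^6/b^2$. Together with your observation that $\vv\to\vv$ swaps have $\Phi<0$ by the $1/(1+c)<1/(1-c)$ asymmetry, this shows the inner loop halts at $S_0$. The paper's own proof stops exactly here, treating the lemma as a statement about the boxed algorithm run with its input $\lambda^*$ set to the optimal value $1$.

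You go further and analyze the outer guessing loop from Theorem~\ref{t:E-combin}, which is a reasonable reading of ``the algorithm proposed in this subsection.'' That extra analysis, however, has a gap: the outer loop does not start at $\lambda^*=1$ but at $\lambda^*_0=\lambda_{\min}\big(\sum_i \vv_i\vv_i^\top\big)\approx 2M/b$, which is far above $1$ since $M\gg b$. You only verify the $\Phi\le 0$ claim for $\lambda^*\in[b/N^2,1]$ and simply assert that the outer loop ``repeatedly halves $\lambda^*$'' down into this range. To close the gap you need two more sentences: (i) for $\lambda^*>1/(1-O(\eps))$ the acceptance threshold $(1-O(\eps))\lambda^*$ exceeds the true optimum $1$ (since $\lambda_{\min}\le \mZ_{22}\le 1+b/N^2$ for every feasible $\mZ$), so whatever the inner loop returns is rejected; and (ii) for $\lambda^*\in(1,1/(1-O(\eps))]$, your $\Phi\le 0$ computation still applies because $\alpha$ changes only by a $1-O(\eps)$ factor. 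With this patch your outer-loop argument is complete and strictly more thorough than the paper's.
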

\begin{proof}
Note that $\frac{b}{2}$ copies of $\vw_1 \vw_1^\top$ and $\frac{b}{2}$ copies of $\vw_2 \vw_2^\top$ form an optimal solution $\mZ^*$ with budget $b$ such that
\[
\mZ^* = \begin{pmatrix} N^8 & \\ & 1 \end{pmatrix} \qquad \text{\rm and} \qquad \lambda_{\min}(\mZ^*) = 1.
\]
So our algorithm will choose $\alpha = \frac{\sqrt{d}}{\eps \lambda_{\min}(\mZ^*)} = \frac{\sqrt{d}}{\eps} = \frac{\sqrt{2}}{\eps}$.

Consider an initial solution $\mZ$ containing $\frac{b}{2}$ copies of $\vv_1 \vv_1^\top$ and $\frac{b}{2}$ copies of $\vv_2 \vv_2^\top$ such that
\[
\mZ = \begin{pmatrix} b N^2 & \\ & \frac{b}{N^2} \end{pmatrix} \qquad \text{\rm and} \qquad \lambda_{\min}(\mZ) = \frac{b}{N^2}.
\]
The approximation ratio between $\mZ$ and $\mZ^*$ is $\frac{N^2}{b}$, which is unbounded for fixed $b$ when $N \rightarrow \infty$.

With $\mZ$ as the current solution, the action matrix $\mA$ is
\[
\mA = (\alpha \mZ - l \mI)^{-2} = \begin{pmatrix} \frac{\sqrt{2} b N^2}{\eps} - l & \\ &  \frac{\sqrt{2} b}{\eps N^2} - l \end{pmatrix}^{-2} \approx \begin{pmatrix} \frac{\eps^2}{2 b^2 N^4} & \\ & 1 \end{pmatrix},
\]
where the last approximate equality holds when $N \rightarrow \infty$ as $\tr(\mA) = 1$. 

The loss of removing vector $\vv_1$ (removing $\vv_2$ is similar) from the current solution is
\begin{align*}
    \frac{\inner{\vv_1 \vv_1^\top}{\mA}}{1 - 2\alpha \inner{\vv_1 \vv_1^\top}{\mA^{\frac12}}} \approx \frac{\frac{\eps^2}{2 b^2 N^2} + \frac{1}{N^2}}{1 - \frac{2\sqrt{2}}{\eps} \big(\frac{\eps}{\sqrt{2} b} + \frac{1}{N^2}\big)} \geq \frac{1}{N^2},
\end{align*}
where we used $b \geq 3$ and $N$ is large for the last inequality.

The gain of adding vector $\vv_2$ is strictly less than the loss of removing $\vv_2$
\[
\frac{\inner{\vv_2 \vv_2^\top}{\mA}}{1 - 2\alpha \inner{\vv_2 \vv_2^\top}{\mA^{\frac12}}} < \frac{\inner{\vv_1 \vv_1^\top}{\mA}}{1 - 2\alpha \inner{\vv_1 \vv_1^\top}{\mA^{\frac12}}},
\]
as $\inner{\vv_2 \vv_2^\top}{\mA} = \inner{\vv_1 \vv_1}{\mA}$ and $\inner{\vv_2 \vv_2^\top}{\mA^{\frac12}} = \inner{\vv_1 \vv_1}{\mA^{\frac12}}$.
Also, the gain of adding vector $\vw_1$ (adding $\vw_2$ is similar) to the current solution is
\begin{align*}
    \frac{\inner{\vw_1 \vw_1^\top}{A}}{1 + 2\alpha \inner{\vw_1 \vw_1^\top}{\mA^{\frac12}}} \approx \frac{\frac{\eps^2 N^4}{2 b^3} + \frac1b}{1 + \frac{2\sqrt{2}}{\eps} (\frac{\eps N^6}{\sqrt{2} b^2} + \frac1b )} \leq \frac{\eps^2 }{4 b N^2} + \frac{b}{2 N^6}.
\end{align*}
For appropriately fixed $b$ and $\eps$, this gain is always less than the loss when $N \rightarrow \infty$. 
Therefore, the combinatorial local search algorithm will stop and return the initial solution $\mZ$.
\end{proof}

\section{D/A-Optimal Design with Knapsack Constraints} \label{s:knapsack}

In this section, we propose the following randomized exchange algorithm to solve the D/A-optimal design problems with knapsack constraints.

\begin{framed}
{\bf Randomized Exchange Algorithm}

    Input: $n$ vectors $\vu_1, ..., \vu_n \in \R^d$, an accuracy parameter $\eps \in (0,1)$, and $m$ knapsack constraints $\vc_j \in \R^n_+$ with budgets $b_j \geq \frac{d \norm{\vc_j}_{\infty}}{\eps}$ for all $j \in [m]$.
    \begin{enumerate}
    
    \item Solve the convex programming relaxation~\eqref{eq:convex} for D-design or A-design and obtain an optimal solution $\vx \in [0,1]^n$ with at most $d^2 + m$ fractional entries, i.e.~$|\{i \in [n] \mid 0 < \vx(i) < 1\}| \leq d^2 + m$.
    Let $\mX = \sum_{i=1}^n \vx(i) \cdot \vu_i \vu_i^\top$. 

    \item Preprocessing: 
	Let $\vv_i \leftarrow \mX^{-1/2} \vu_i$ for all $i \in [n]$, 
	so that $\sum_{i=1}^n \vx(i) \cdot \vv_i \vv_i^\top = \mI_d$.
    
    \item Initialization: $t \gets 1$, $S_0 \gets \emptyset$, $\alpha \gets 8\sqrt{d}$, and $k \gets 16d + d^2 + m$.
    
    \item Add $i$ into $S_0$ independently with probability $\vx(i)$ for each $i \in [n]$. Let $\mZ_1 \gets \sum_{i \in S_0} \vv_i \vv_i^\top$.
    
    \item While the termination condition is not satisfied and 
	$t = O\Big( \frac{k}{\eps} \Big)$ do the following,
    where the termination conditions for D-design and A-design are respectively
        \[\det(\mZ_t)^{1/d} \geq 1-10\eps \quad {\rm and} \quad \inner{\mX^{-1}}{\mZ^{-1}_t} \leq (1+\eps) \tr(\mX^{-1}). \]
    \begin{enumerate}
        \item $S_t \gets$ Exchange($S_{t-1}$).
        \item Set $\mZ_{t+1} \gets \sum_{i \in S_t} \vv_i \vv_i^\top$ and $t \gets t+1$.
    \end{enumerate}
    
    \item Return $S_{t-1}$ as the solution. 
\end{enumerate}
\end{framed}

The exchange subroutine is described as follows.

\begin{framed}
{\bf Exchange Subroutine}

    \begin{enumerate}
        \item Compute the action matrix $\mA_t := (\alpha \mZ_t - l_t \mI)^{-2}$, where $\mZ_t = \sum_{i \in S_{t-1}} \vv_i \vv_i^\top$ and $l_t$ is the unique scalar such that $\mA_t \succ 0$ and $\tr(\mA_t) = 1$.
        \item Define $S'_t := \{ i \in S_{t-1} \mid 2\alpha \inner{\vv_i \vv_i^\top}{\mA_t^{1/2}} \leq \frac12 \}$.
        \item Sample $i_t \in S'_{t-1}$ from the following probability distribution 
        \begin{align*}
            & \Pr(i_t = i) =\frac{1-\vx(i)}{k} \cdot \left(1-2\alpha \inner{\vv_i \vv_i^\top}{\mA_t^{1/2}}\right), \text{ for } i \in S'_{t-1} \text{ and} \\
            & \Pr(i_t = \emptyset) = 1- \sum_{i \in S'_{t-1}} \frac{1-\vx(i)}{k} \cdot \left(1-2\alpha \inner{\vv_i \vv_i^\top}{\mA_t^{1/2}}\right).
        \end{align*}
        
        \item Sample $j_t \in [n] \backslash S_{t-1}$ from the following probability distribution
        \begin{align*}
            & \Pr(j_t = j) =\frac{\vx(j)}{k} \cdot \left(1+ 2\alpha \inner{\vv_j \vv_j^\top}{\mA_t^{1/2}}\right), \text{ for } j  \in [n] \backslash  S_{t-1} \text{ and} \\ 
            & \Pr(j_t = \emptyset) = 1- \sum_{j \in [n] \backslash S_{t-1}} \frac{\vx(j)}{k} \cdot \left(1+ 2\alpha \inner{\vv_j \vv_j^\top}{\mA_t^{1/2}}\right).
        \end{align*}
        
        \item Return $S_t := S_{t-1} \cup \{ j_t\} \backslash \{i_t\}$.
    \end{enumerate}
\end{framed}

\begin{remark} \label{r:similar}
The randomized exchange algorithm is almost the same as the iterative randomized rounding algorithm in~\cite{LZ20}.
There are only two differences.
One is that $\alpha \gets 8\sqrt{d}$ instead of $\alpha \gets \frac{\sqrt{d}}{\eps}$ in~\cite{LZ20}.
The other is that the termination condition for E-design, which is $\lambda_{\min}(\mZ_t) \geq 1-2\eps$, is replaced by the termination condition for D-design or the termination condition for A-design.

The parameter $\alpha$ is used to control the approximation guarantee of the algorithm for E-design.
If the termination condition is $\lambda_{\min}(\mZ_t) \geq \frac{3}{4}$, then 
it was proved in Theorem~\ref{t:iterative_rounding_terminate} (Theorem 3.8 of~\cite{LZ20}) that the algorithm will terminate successfully in $O(k)$ steps with high probability.
\end{remark}

\subsubsection*{Intuition and Proof Ideas}

Given the analysis of Fedorov's exchange method for D-design and A-design in Section~\ref{s:comb}, the most natural algorithm is to use the same distributions there for the rounding algorithm as well.
We use D-design to illustrate the difficulty of analyzing this natural algorithm and to motivate the modifications made in the randomized exchange algorithm.
By applying Lemma~\ref{l:Dlower} repeatedly, for any $\tau \geq 1$,
\begin{align*} %\label{eq:det_bound}
    \det(\mZ_{\tau+1}) & \geq \det(\mZ_1) \cdot \prod_{t=1}^\tau \left( 1 - \vv_{i_t}^\top \mZ_t^{-1} \vv_{i_t} \right) \left( 1 + \vv_{j_t}^\top \mZ_t^{-1} \vv_{j_t} \right).
\end{align*}
Using the distributions $\Pr(i_t = i) \propto 1-x_i$ and $\Pr(j_t=j) \propto x_j$ as in Section~\ref{ss:comb-D}, Lemma~\ref{l:gain-D} and Lemma~\ref{l:loss-D} shows that there exist $i_t \in S_{t-1}$ and $j_t \notin S_{t-1}$ such that setting $S_{t} \gets S_{t-1} - i_t + j_t$ will improve the D-design objective.
However, if we randomly sample $i_t$ and $j_t$ from these distributions, we cannot prove that the objective value is consistently improving with good probability.
For D-design, we are analyzing a product of random variables where each random variable could have a large variance, and existing martingale inequalities are not applicable to establish concentration of the product.

To bound the variance, one important observation is that when $\vx$ is an optimal fractional solution, it follows from the optimality condition of the convex programming relaxation that any vector $\vv_i$ with $\vx(i) \in (0,1)$ satisfies $\norm{\vv_i}_2^2 \leq \eps$.
The current algorithm is motivated by the observation that if we can also lower bound the minimum eigenvalue of $\mZ_t$, then we can upper bound $\vv^\top \mZ_t^{-1} \vv$ and this would allow us to establish concentration of the objective value.
So our idea is to use the same algorithm in~\cite{LZ20} for E-design to ensure that the minimum eigenvalue of $\mZ_t$ is at least $\Omega(1)$ as mentioned in Remark~\ref{r:similar}.
Surprisingly, we prove that sampling from the distributions for E-design can also improve the objective values for D-design and A-design, and this is particularly interesting for A-design where the minimum eigenvalue condition is needed to prove so.
Having these in place, we can use Freedman's martingale inequality to prove that the objective values for D-design and A-design will be improving consistently after the minimum eigenvalue of the current solution is at least $\Omega(1)$.

\subsubsection*{Proof Outline and Organization}

In the analysis of the randomized exchange algorithm, 
we conceptually divide the algorithm into two phases.
In the first phase, we show that the minimum eigenvalue of the current solution will reach $\frac34$ in $O(k)$ iterations with high probability.
In the second phase, we prove that the objective value for D/A-design will be $(1\pm \eps)$-approximation of the optimal in $O\left(\frac{k}{\eps}\right)$ iterations with high probability.
The following is an outline of the proof steps.
\begin{enumerate}
\item In Section~\ref{ss:well-defined}, we first prove that the randomized exchange algorithm is well-defined. In particular, we show that a fractional optimal solution to the convex relaxation~\eqref{eq:convex} with at most $O(d^2+m)$ fractional entries can be found in polynomial time, and the probability distributions in the exchange subroutine are well-defined for $k = O(d^2+m)$.
% This will be used to upper bound the number of iterations and the failure probability.
\item In Section~\ref{ss:minE}, we prove that the minimum eigenvalue will reach $\frac34$ in $O(k)$ iterations with high probability.
Furthermore, the minimum eigenvalue will be at least $\frac{1}{4}$ during the next $\Theta\left(\frac{k}{\eps}\right)$ iterations with good probability, for which we require the assumption that $\eps$ is not too small.
The proofs are based on the regret minimization framework and the iterative randomized rounding algorithm developed in~\cite{AZLSW20,LZ20}.
\item In Section~\ref{ss:D-knapsack} and Section~\ref{ss:A-knapsack}, we prove that the objective value of D-design and A-design will improve consistently with high probability.
These are the more technical parts of the proof.
We use the minimum eigenvalue condition in multiple places, both in the martingale concentration arguments for D/A-design and in the expected improvement of the A-design objective.
We also need the optimality conditions for the martingale concentration arguments.
\item In Section~\ref{ss:main-proof}, we prove the main approximation results including Theorem~\ref{t:DA-rounding} for experimental design, by combining the previous steps and using the concentration inequality for the knapsack constraints proved in~\cite{LZ20}.
As a corollary, we slightly improve the previous results of D/A-design with a single cardinality constraint in~\cite{SX18,NST19}.
We also prove Corollary~\ref{c:Reff-rounding} as an application of the main result.
\end{enumerate}

\subsection{Analysis of the Common Algorithm} \label{ss:common}

The algorithm is identical for D-design and A-design except the termination condition.
In this subsection, we will present the proofs of the common parts and the main results, and then present the specific proofs for D-design and A-design in Section~\ref{ss:D-knapsack} and Section~\ref{ss:A-knapsack} respectively.

\subsubsection{Sparse Optimal Solution and Probability Distributions in the Exchange Subroutine} \label{ss:well-defined}

In this subsection, we first show that we can find an optimal fractional solution to the convex programming relaxation~\eqref{eq:convex} with sparse support in polynomial time.
The sparsity of an optimal solution to the convex program~\eqref{eq:convex} was proved and used in~\cite{WYS17, MNST20} for experimental design problems.
The following lemma is proved using similar ideas.

\begin{lemma} \label{l:sparse}
Given any feasible fractional solution $\hat{\vx}$ to the convex relaxation~\eqref{eq:convex}, there exists another feasible fractional solution $\vx$ with $|\{i \in [n] \mid 0< \vx(i) < 1\}| \leq d^2 + m$ such that
\begin{align*}
& \det\left( \sum_{i=1}^n \vx(i) \cdot \vu_i \vu_i^\top \right) = \det\left(\sum_{i=1}^n \hat{\vx}(i) \cdot \vu_i \vu_i^\top \right)\quad \text{for D-Design, or} \\
& \tr\left( \bigg(\sum_{i=1}^n \vx(i) \cdot \vu_i \vu_i^\top \bigg)^{-1} \right) = \tr\left( \bigg( \sum_{i=1}^n \hat{\vx}(i) \cdot \vu_i \vu_i^\top \bigg)^{-1} \right) \quad \text{for A-Design}.
\end{align*}
Furthermore, the solution $\vx$ can be found in polynomial time.
\end{lemma}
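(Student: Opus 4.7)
The plan is to sparsify the given $\hat{\vx}$ while exactly preserving the matrix $\mU := \sum_{i=1}^n \hat{\vx}(i) \cdot \vu_i \vu_i^\top$. Since the D-design objective $\det(\mU)$ and the A-design objective $\tr(\mU^{-1})$ depend on $\vx$ only through the matrix $\mU$, any feasible $\vx$ satisfying $\sum_i \vx(i) \vu_i \vu_i^\top = \mU$ automatically matches the objective value of $\hat{\vx}$. So the whole lemma is reduced to a pure linear-algebraic sparsification inside a polytope.

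Concretely, I would consider
\[
P := \Bigl\{ \vx \in [0,1]^n : \sum_{i=1}^n \vx(i) \vu_i \vu_i^\top = \mU,\ \inner{\vc_j}{\vx} \leq b_j \text{ for } 1 \leq j \leq m \Bigr\},
\]
which is nonempty because $\hat{\vx} \in P$, and then prove that every vertex $\vx$ of $P$ has at most $d^2 + m$ strictly fractional coordinates. Writing $F := \{ i : 0 < \vx(i) < 1 \}$, a vertex in $\R^n$ requires $n$ linearly independent tight constraints. The box constraints contribute exactly $n - |F|$; the matrix equation lives in the space of $d \times d$ symmetric matrices, which has dimension $\binom{d+1}{2} \leq d^2$ and therefore contributes at most $d^2$ linearly independent equalities; and at most $m$ knapsack constraints can be tight. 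Combining, $(n - |F|) + d^2 + m \geq n$, which gives $|F| \leq d^2 + m$.

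To actually produce such a vertex from $\hat{\vx}$ in polynomial time, I would iterate the following standard step: while the current $\vx \in P$ has $|F| > d^2 + m$, solve the homogeneous linear system on $\vy \in \R^n$ consisting of (i) $\vy(i) = 0$ for $i \notin F$, (ii) $\sum_i \vy(i) \vu_i \vu_i^\top = 0$, and (iii) $\inner{\vc_j}{\vy} = 0$ for every currently tight knapsack $j$. By the same counting as above, this system has at most $d^2 + m < |F|$ nontrivial equations in the $|F|$ free coordinates on $F$, so Gaussian elimination yields a nonzero $\vy$. Update $\vx \gets \vx + \lambda \vy$, choosing $\lambda \in \R$ of maximum magnitude (of either sign) keeping $\vx \in P$: the matrix $\mU$ and every previously tight knapsack are preserved, and either some coordinate reaches $0$ or $1$ (strictly shrinking $|F|$) or a new knapsack becomes tight with $|F|$ unchanged. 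Since $|F|$ never grows, the number of tight knapsacks never shrinks while $|F|$ is fixed, and the latter is bounded by $m$, the procedure halts in $O(n + m)$ iterations at a vertex of $P$.

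The main obstacle is essentially bookkeeping: verifying that each iteration can be implemented over the rationals in polynomial time (Gaussian elimination for $\vy$ plus a minimum-ratio computation for $\lambda$), and confirming that the counting bound exactly matches the claimed sparsity $d^2 + m$ (the slight slack between $\binom{d+1}{2}$ and $d^2$ is absorbed harmlessly). The preservation of the D-design and A-design objectives is automatic because $\mU$ is held fixed throughout the sparsification, so there is no separate calculation to do for each of the two objectives.
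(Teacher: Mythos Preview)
Your proposal is correct and follows essentially the same approach as the paper: define the polytope that fixes the matrix $\sum_i \vx(i)\vu_i\vu_i^\top$ and the knapsack constraints, and argue that any extreme point has at most $d^2+m$ fractional coordinates by counting tight constraints. The only difference is cosmetic: the paper simply asserts that an extreme point can be computed in polynomial time, whereas you spell out the standard iterative ray-shooting procedure to reach a vertex.
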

\begin{proof}
Given the feasible fractional solution $\hat{\vx}$, 
we compute an extreme point solution $\vx$ to the following polytope, 
which can be done in polynomial time.
\[
\left\{ 
\begin{aligned} 
& \sum_{i=1}^n \vx(i) \cdot \vu_i \vu_i^\top = \sum_{i=1}^n \hat{\vx}(i) \cdot \vu_i \vu_i^\top, \\
& \inner{\vc_j}{\vx} \leq b_j, \quad \text{~for $1 \leq j \leq m$}, \\
& 0 \leq \vx(i) \leq 1, \quad \text{for $1 \leq i \leq n$}.
\end{aligned} \right.
\]
In the extreme point solution $\vx$, the number of variables is equal to the number of linearly independent tight constraints attained by $\vx$.
Clearly, the number of integral variables in $\vx$ is equal to the number of linear independent tight constraints in $0 \leq \vx(i) \leq 1 \text{~for $1 \leq i \leq n$}$ attained by $\vx$.
So, the number of fractional variables in $\vx$ is equal to the number of linear independent tight constraints in $\sum_{i=1}^n \vx(i) \cdot \vu_i \vu_i^\top = \sum_{i=1}^n \hat{\vx}(i) \cdot \vu_i \vu_i^\top$ and $\inner{\vc_j}{\vx} \leq b_j \text{~for $1 \leq j \leq m$}$ attained by $\vx$. 
As there are only $d^2 + m$ such constraints in the above linear program,
there are at most $d^2 + m$ fractional entries in $\vx$. 
Due to the first matrix equality constraint of the polytope, $\vx$ and $\hat{\vx}$ have the same objective value.
\end{proof}

Then, we make a simple observation of the randomized exchange algorithm,
that only vectors with fractional entries will be exchanged, 
as those vectors with $\vx(i)=1$ will always be in the solution and vectors with $\vx(i)=0$ will always not be in the solution.

\begin{observation} \label{obs:prob}
For any $t \geq 0$, it holds that $i \in S_t$ for all $i$ with $\vx(i) = 1$ and $j \in [n] \backslash S_t$ for all $j$ with $\vx(j) = 0$. 
This further implies that $\Pr(i_t = i) = 0$ for all $i$ with $\vx(i) \in \{0,1\}$ and $\Pr(j_t = j) = 0$ for all $j$ with $\vx(j) \in \{0,1\}$.
\end{observation}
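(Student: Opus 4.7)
The plan is to prove the membership claim by induction on $t$ and then derive the probability consequence directly from the definitions of the sampling distributions in the Exchange Subroutine.

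For the base case $t=0$, I would note that $S_0$ is constructed by including each $i$ independently with probability $\vx(i)$. Hence if $\vx(i) = 1$ then $i \in S_0$ with probability $1$, and if $\vx(j) = 0$ then $j \notin S_0$ with probability $1$. This gives both membership properties at time $0$.

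For the inductive step, assume the claim holds at time $t-1$. The update rule is $S_t = S_{t-1} \cup \{j_t\} \setminus \{i_t\}$, so the only way an index $i$ with $\vx(i)=1$ could leave $S_t$ is if $i_t = i$, and the only way an index $j$ with $\vx(j)=0$ could enter $S_t$ is if $j_t = j$. Inspecting the sampling distributions in the Exchange Subroutine: the probability $\Pr(i_t = i)$ is proportional to $(1-\vx(i))$, which vanishes when $\vx(i)=1$; and by the inductive hypothesis any $i$ with $\vx(i)=0$ lies outside $S_{t-1}$ (hence outside $S'_{t-1}$), so $\Pr(i_t = i)=0$ in that case as well. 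Symmetrically, $\Pr(j_t = j)$ is proportional to $\vx(j)$, which vanishes when $\vx(j)=0$; and for $j$ with $\vx(j)=1$ the inductive hypothesis yields $j \in S_{t-1}$, so $j \notin [n] \setminus S_{t-1}$ and again $\Pr(j_t = j)=0$. This completes both the inductive step for $S_t$ and simultaneously establishes the probability statement.

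There is no real obstacle here; the observation is a direct consequence of how the initial random set and the exchange distributions are defined, combined with a one-line induction. The only care needed is to handle both cases $\vx(\cdot) = 0$ and $\vx(\cdot) = 1$ for each of $i_t$ and $j_t$: one case follows from the proportional weight in the distribution and the other from the fact that the support of the distribution is restricted (to $S'_{t-1}$ or $[n]\setminus S_{t-1}$), where the inductive hypothesis applies.
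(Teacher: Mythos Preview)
Your proposal is correct and follows essentially the same approach as the paper's proof: both argue by noting the initialization places all $i$ with $\vx(i)=1$ in $S_0$ and excludes all $j$ with $\vx(j)=0$, and then observe that the exchange distributions (with factors $1-\vx(i)$ and $\vx(j)$) never remove the former nor add the latter. Your version is in fact more detailed than the paper's, which compresses the induction into a single sentence and does not spell out the two sub-cases for each of $i_t$ and $j_t$ as carefully as you do.
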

\begin{proof}
The observation follows as all vectors with $\vx(i)=1$ are selected and all vectors with $\vx(j)=0$ are not selected initially. 
In each iteration, the probability distributions in the exchange subroutine guarantee that vectors with $\vx(i)=1$ have zero probability to be removed from the solution set, and vectors with $\vx(j)=0$ have zero probability to be added into the solution set. Therefore, the exchange subroutine of the algorithm would only exchange those vectors with fractional entries $\vx(i)$'s.
\end{proof}

Finally, we are ready to show that the probability distributions in the exchange subroutine are well-defined for $k = O(d^2+m)$,
which will be used to upper bound the number of iterations and the failure probability of the algorithm.

\begin{claim} \label{cl:validM}
The probability distributions at any $t$-th iteration of the randomized exchange algorithm are well-defined for $k=16d+d^2+m$.
\end{claim}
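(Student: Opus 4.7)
The plan is to verify the two conditions that make a discrete probability distribution well-defined: each atom has non-negative mass, and the total mass is at most $1$ (so that the residual mass assigned to $\emptyset$ is non-negative).

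First I would check non-negativity. For the $i_t$-distribution, the factor $1-\vx(i)$ is non-negative since $\vx(i) \in [0,1]$, and the factor $1-2\alpha \inner{\vv_i \vv_i^\top}{\mA_t^{1/2}}$ is at least $\tfrac12$ by the very definition of $S'_{t-1}$. For the $j_t$-distribution, both $\vx(j) \geq 0$ and $1+2\alpha\inner{\vv_j \vv_j^\top}{\mA_t^{1/2}} \geq 1$ since $\mA_t^{1/2} \succeq 0$. So every individual probability is non-negative.

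The bulk of the work is bounding the two sums. For the $i_t$-distribution, drop the factor $(1-2\alpha \inner{\vv_i\vv_i^\top}{\mA_t^{1/2}}) \leq 1$ and use the fact (from Observation~\ref{obs:prob}) that only indices with $\vx(i)<1$ contribute, together with Lemma~\ref{l:sparse} which says there are at most $d^2+m$ such indices. This yields
\[
\sum_{i \in S'_{t-1}} (1-\vx(i))\big(1-2\alpha \inner{\vv_i\vv_i^\top}{\mA_t^{1/2}}\big) \;\leq\; \sum_{i \in S'_{t-1}, \vx(i)<1} (1-\vx(i)) \;\leq\; d^2+m \;\leq\; k,
\]
so dividing by $k$ makes the total mass at most $1$.

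The main obstacle is the $j_t$-distribution, where the factor $1+2\alpha \inner{\vv_j\vv_j^\top}{\mA_t^{1/2}}$ cannot be bounded trivially by a constant. I would split the sum and treat the two pieces separately:
\[
\sum_{j \notin S_{t-1}} \vx(j)\big(1+2\alpha \inner{\vv_j\vv_j^\top}{\mA_t^{1/2}}\big) \;=\; \sum_{j \notin S_{t-1}} \vx(j) \;+\; 2\alpha \sum_{j \notin S_{t-1}} \vx(j)\inner{\vv_j\vv_j^\top}{\mA_t^{1/2}}.
\]
The first sum is again bounded by $d^2+m$ via Observation~\ref{obs:prob} and Lemma~\ref{l:sparse}. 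For the second, I extend the sum to all $j \in [n]$ (the added terms are non-negative) and exploit the preprocessing identity $\sum_{j=1}^n \vx(j)\vv_j\vv_j^\top = \mI_d$ to get $\sum_j \vx(j)\inner{\vv_j\vv_j^\top}{\mA_t^{1/2}} = \inner{\mI_d}{\mA_t^{1/2}} = \tr(\mA_t^{1/2}) \leq \sqrt{d}$ by Lemma~\ref{l:tr}. Plugging in $\alpha = 8\sqrt{d}$ gives $2\alpha\sqrt{d} = 16d$, so the total sum is at most $d^2 + m + 16d = k$, and dividing by $k$ makes the total probability mass at most $1$. This establishes that both distributions are well-defined.
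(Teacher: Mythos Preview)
Your proposal is correct and follows essentially the same approach as the paper: non-negativity is immediate from the definitions, the $i_t$-sum is bounded by the number of fractional entries $d^2+m$ via Observation~\ref{obs:prob} and Lemma~\ref{l:sparse}, and the $j_t$-sum is handled by splitting off the term $2\alpha\sum_j \vx(j)\inner{\vv_j\vv_j^\top}{\mA_t^{1/2}} \leq 2\alpha\,\tr(\mA_t^{1/2}) \leq 16d$ using the isotropy condition and Lemma~\ref{l:tr}. The paper's write-up is slightly more compact but the ingredients and the arithmetic are identical.
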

\begin{proof}
First, we verify that the probability distribution for sampling $i_t$ is well-defined. 
We need to show that $\Pr(i_t = i) \geq 0$ for $i \in S_{t-1}'$ and $\sum_{i \in S_{t-1}'} \Pr(i_t = i) \leq 1$.
Since $\mA_t \succ 0$ and $x_i \in [0,1]$ and $2\alpha \inner{\vv_i \vv_i^\top}{\mA_t^{\frac12}}\leq 1/2$ for $i \in S_{t-1}'$, it holds for $i \in S_{t-1}'$ that
\[
0 \leq \Pr(i_t = i) = \frac{1}{k} (1-\vx(i))(1-2\alpha \inner{\vv_i \vv_i^\top}{\mA_t^{\frac12}}) \leq \frac{1}{k}.
\]
Thus, $\sum_{i \in S_{t-1}'} \Pr(i_t = i) \leq \frac{1}{k} |\{i \in [n] \mid 0 < \vx(i) < 1\}| < 1$, where the first inequality follows by \Cref{obs:prob}, and the second inequality follows by the the choice of $k=16d + d^2 + m$ and \Cref{l:sparse}.

Next, we verify that the probability distribution for sampling $j_t$ is well-defined. 
It is clear that $\Pr(j_t = j) \geq 0$ as $\mA_t \succ 0$ and $\vx(j) \in [0,1]$.
Then, we consider
\begin{align*}
\sum_{j \in [n] \setminus S_{t-1}} \Pr(j_t = j)
= \frac{1}{k} \sum_{j \in [n] \setminus S_{t-1}} \vx(j) \cdot \left(1+2\alpha \inner{\vv_j \vv_j^\top}{\mA_t^{\frac12}}\right)  
\leq \frac{1}{k} \Bigg( \sum_{j \in [n] \backslash S_{t-1}} \vx(j) + 2\alpha \tr\Big(\mA_t^{\frac12}\Big)\Bigg),
\end{align*}
where the inequality is by $\sum_{j=1}^n \vx(j) \cdot \vv_j \vv_j^\top = \mI_d$.
Notice that $\sum_{j \in [n] \backslash S_{t-1}} \vx(j) \leq |\{i \in [n] \mid 0 < \vx(i) < 1\}| \leq d^2 + m$ by \Cref{obs:prob} and \Cref{l:sparse}. Thus,
\[
\sum_{j \in [n] \setminus S_{t-1}} \Pr(j_t = j) \leq \frac{1}{k} \left( d^2 + m  + 2\alpha \tr(\mA_t^{\frac12}) \right) \leq \frac1k (d^2 + m + 16d) \leq 1,
\]
where the second last inequality is by $\alpha = 8\sqrt{d}$ and $\tr\big( \mA_t^{\frac12} \big) \leq \sqrt{d}$ from Lemma~\ref{l:tr},
and the last inequality is by the choice of $k$.
\end{proof}

Combining \Cref{l:sparse} and \Cref{cl:validM}, we have shown that the randomized exchange algorithm is well-defined.

\subsubsection{Lower Bounding Minimum Eigenvalue} \label{ss:minE}

As discussed above, the minimum eigenvalue of $\mZ_t$ plays a key role in our analysis of the algorithm. 
We conceptually divide the execution of the randomized exchange algorithm into two phases.
In the first phase, we show that the minimum eigenvalue of the current solution will reach $\frac{3}{4}$ in $O(k)$ iterations with high probability.

\begin{proposition} \label{p:terminate1}
The probability that the randomized exchange algorithm has terminated successfully within $16k$ iterations or three exists $\tau_1 \leq 16k$ with $\lambda_{\min}(\mZ_{\tau_1}) \geq \frac{3}{4}$ is at least $1-\exp(-\Omega(\sqrt{d}))$. 
\end{proposition}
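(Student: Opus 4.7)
The plan is to reduce the proposition to a direct application of Theorem~\ref{t:iterative_rounding_terminate}. As noted in Remark~\ref{r:similar}, the randomized exchange algorithm is essentially the iterative randomized rounding algorithm of~\cite{LZ20}, differing only in the choice of $\alpha$ and in the termination condition. With the choice $\alpha = 8\sqrt{d}$, we identify $\alpha = \sqrt{d}/\gamma$ with $\gamma = \frac{1}{8}$, so that the target spectral threshold $(1-2\gamma)\mI = \frac{3}{4}\mI$ matches exactly the quantity appearing in the statement of the proposition, and the iteration budget $\frac{2k}{\gamma}$ in Theorem~\ref{t:iterative_rounding_terminate} becomes $16k$.

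The main step is a coupling argument. Fix a common random tape and run, side by side, both the randomized exchange algorithm and the iterative randomized rounding algorithm of~\cite{LZ20} with parameter $\gamma = \frac{1}{8}$. As long as neither has halted, the two algorithms compute the same action matrix $\mA_t$, sample $i_t$ and $j_t$ from the same distributions, and thus produce identical sequences $(S_t)$ and $(\mZ_t)$. Let $\tau^{*}$ denote the first iteration at which $\lambda_{\min}(\mZ_{\tau^{*}}) \geq \frac{3}{4}$ in this coupled process. Applying Theorem~\ref{t:iterative_rounding_terminate} with $\gamma = \frac{1}{8}$ yields
\[
\Pr\!\left[\tau^{*} \leq \tfrac{2k}{\gamma}\right] \;=\; \Pr\!\left[\tau^{*} \leq 16k\right] \;\geq\; 1 - e^{-\Omega(\sqrt{d})}.
\]

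To conclude, suppose the event $\{\tau^{*} \leq 16k\}$ holds. Either the randomized exchange algorithm has already terminated successfully at some iteration $\tau \leq \tau^{*} \leq 16k$, which realizes the first alternative of the proposition; or it has not yet terminated by iteration $\tau^{*}$, in which case its state coincides with the coupled process, so that $\lambda_{\min}(\mZ_{\tau^{*}}) \geq \frac{3}{4}$, realizing the second alternative with $\tau_{1} = \tau^{*}$. Taking complements gives the claimed failure bound $e^{-\Omega(\sqrt{d})}$. The main (and only) technical obstacle is verifying that the coupling is genuinely valid, but this is immediate: the two algorithms differ only in their stopping rules (and by a matched choice of $\alpha$), so every sampling decision made before either stopping time is governed by the same conditional distribution, and Theorem~\ref{t:iterative_rounding_terminate} can be invoked without modification.
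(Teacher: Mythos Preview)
Your proof is correct and follows essentially the same approach as the paper: both reduce the proposition to Theorem~\ref{t:iterative_rounding_terminate} with $\gamma = \frac{1}{8}$, noting via Remark~\ref{r:similar} that the two algorithms coincide apart from their stopping rules. The paper's proof is terser (two sentences), whereas you have spelled out the coupling argument explicitly, but the substance is identical.
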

\begin{proof}
As noted in Remark~\ref{r:similar}, except for the termination condition, the randomized exchange algorithm is exactly the same as the algorithm in~\cite{LZ20} with $\alpha = 8\sqrt{d}$. 
So, the proposition follows from Theorem~\ref{t:iterative_rounding_terminate} with $\gamma = \frac18$ and $q = 2$.
\end{proof}

In the second phase, we prove that the minimum eigenvalue of $\mZ_t$ is at least $\frac{1}{4}$ in the next $\Theta\left(\frac{k}{\eps}\right)$ iterations with good probability.

\begin{proposition} \label{p:bounded_min}
Suppose $\lambda_{\min}(\mZ_{\tau_1}) \geq \frac34$ for some $\tau_1$.
In the randomized exchange algorithm, the probability that $\lambda_{\min}(\mZ_t) \geq \frac14$ for all $\tau_1 \leq t \leq \tau_1 + \frac{2k}{\eps}$ is at least $1 -\frac{4k^2}{\eps^2} \cdot e^{- \Omega(\sqrt{d})}$. 
\end{proposition}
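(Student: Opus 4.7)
The plan is to convert the event $\lambda_{\min}(\mZ_t)<\tfrac14$ into a large-deviation event on the cumulative drift $\sum\Delta_t$, localize it to the interval following the last time the process was safely above $\tfrac34$, and then apply the Freedman-type bound of Lemma~\ref{l:spec-exp-whp-t}. Since $\mA_t=(\alpha\mZ_t-l_t\mI)^{-2}$ depends only on $\mZ_t$, Corollary~\ref{cor:lambda-min-rank-two} can be invoked with initial feedback matrix $\mF_0=\mZ_{\tau'}$ and rank-two updates $\mF_s=\vv_{j_{\tau'+s}}\vv_{j_{\tau'+s}}^\top-\vv_{i_{\tau'+s}}\vv_{i_{\tau'+s}}^\top$ for any $\tau'\ge\tau_1$. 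With $\alpha=8\sqrt d$, so that $2\sqrt d/\alpha=\tfrac14$, this gives the deterministic inequality
\[
\lambda_{\min}(\mZ_{\tau'+s})\ \ge\ \sum_{t=\tau'+1}^{\tau'+s}\Delta_t\ +\ \lambda_{\min}(\mZ_{\tau'})\ -\ \tfrac14.
\]

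Suppose now the bad event occurs: some $t\in(\tau_1,\tau_1+2k/\eps]$ satisfies $\lambda_{\min}(\mZ_t)<\tfrac14$. Let $\tau^*$ be the earliest such $t$ and $\tau^{**}:=\max\{t<\tau^*:\lambda_{\min}(\mZ_t)\ge\tfrac34\}$, which is well-defined because $\tau_1$ qualifies. Applying the restarted inequality at $\tau^{**}$ forces $\sum_{t=\tau^{**}+1}^{\tau^*}\Delta_t<-\tfrac14$. By maximality of $\tau^{**}$, $\lambda_{\min}(\mZ_t)<\tfrac34$ for all $t\in(\tau^{**},\tau^*]$. The single boundary value $\lambda_{\min}(\mZ_{\tau^{**}})$ is controlled by the one-step Weyl inequality $\lambda_{\min}(\mZ_{\tau^{**}})\le\lambda_{\min}(\mZ_{\tau^{**}+1})+\|\vv_{i_{\tau^{**}+1}}\|_2^2$, combined with Observation~\ref{obs:prob} (only fractional vectors are exchanged) and the optimality-based bound $\|\vv_i\|_2^2=O(\eps)$ for fractional $\vx(i)$ that is shared with the D- and A-specific concentration arguments later in the section. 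This yields $\lambda_{\min}(\mZ_{\tau^{**}})\le\tfrac34+O(\eps)\le1$, so the parameter $\lambda$ in Lemma~\ref{l:spec-exp-whp-t} is at most $1$ over the whole interval.

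For each fixed pair $(\tau^{**},\tau^*)$ of length $L:=\tau^*-\tau^{**}$, Lemma~\ref{l:spec-exp-whp-t} with $\gamma=\tfrac18$ gives $\sum\E[\Delta_t\mid S_{t-1}]\ge L/(8k)-O(\eps L/k)$. Choosing $\eta=\tfrac14+L/(8k)$, the Freedman exponent $\tfrac{\eta^2 k\sqrt d/2}{L\gamma(1+\lambda+\gamma)+\eta k\gamma/3}$ evaluates to $\Omega(\sqrt d)$ in the regime $L\lesssim k$ (a $\Theta(k\sqrt d)$ numerator against an $O(k)$ denominator), and to $\Omega(L\sqrt d/k)\ge\Omega(\sqrt d)$ in the regime $L\gtrsim k$ (a $\Theta(L^2\sqrt d/k)$ numerator against an $O(L)$ denominator), so the single-pair failure probability is $e^{-\Omega(\sqrt d)}$ uniformly in $L$. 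A union bound over the at most $(2k/\eps+1)^2\le 4k^2/\eps^2$ ordered pairs $(\tau^{**},\tau^*)\in[\tau_1,\tau_1+2k/\eps]^2$ then produces the claimed probability bound.

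The main obstacle is the uniform $O(1)$ upper bound on $\lambda_{\min}(\mZ_{\tau^{**}})$; without it, the $\lambda$-dependence in the Freedman denominator would allow the exponent to degrade substantially. This is precisely where the assumption that $\eps$ is not too small enters: the single-step oscillation $O(\eps)$ coming from the KKT norm bound must remain well below the gap $\tfrac12$ separating the return level $\tfrac34$ from the failure level $\tfrac14$, so that $\lambda\le1$ holds throughout $[\tau^{**},\tau^*]$.
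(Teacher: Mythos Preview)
Your skeleton is exactly the paper's: localize to an excursion interval, restart the regret inequality to force $\sum\Delta_t<-\tfrac14$, compare against the expectation bound from Lemma~\ref{l:spec-exp-whp-t}, apply the Freedman-type tail, and union-bound over the at most $4k^2/\eps^2$ endpoint pairs. The two-regime check on the Freedman exponent is also correct.

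The paragraph controlling $\lambda_{\min}(\mZ_{\tau^{**}})$ is an unnecessary detour and contains a gap. In Lemma~\ref{l:spec-exp-whp-t} the parameter is $\lambda=\max_{\tau'\le t\le\tau}\lambda_{\min}\big(\sum_{i\in S_t}\vv_i\vv_i^\top\big)$, and since $\sum_{i\in S_t}\vv_i\vv_i^\top=\mZ_{t+1}$, with $\tau'=\tau^{**}$ and $\tau=\tau^*-1$ this maximum ranges over $\mZ_s$ for $s\in(\tau^{**},\tau^*]$ only. On that range you already have $\lambda_{\min}(\mZ_s)<\tfrac34$ by maximality of $\tau^{**}$, so $\lambda\le\tfrac34$ directly; the boundary value $\lambda_{\min}(\mZ_{\tau^{**}})$ never enters. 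This is what the paper does: it simply takes $\lambda\le\tfrac34$, yielding the clean expectation bound $\sum\E[\Delta_t\mid S_{t-1}]\ge (t_1-t_0)/(8k)$ with no $O(\eps)$ slack. Your appeal to Weyl plus a KKT norm bound is therefore not needed, and in fact it does not go through as stated: the bound $\|\vv_i\|_2^2=O(\eps)$ for fractional $i$ is proved only for D-design (Lemma~\ref{l:optimality-D}); for A-design, Lemma~\ref{l:optimality-A} gives $\inner{\mX^{-1}}{\vv_i\vv_i^\top}\le\tfrac{\eps}{d}\tr(\mX^{-1})$, which is not a bound on $\|\vv_i\|_2^2$. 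Consequently your final remark that ``this is precisely where the assumption that $\eps$ is not too small enters'' is misplaced: the paper's proof of this proposition uses neither any optimality condition nor any smallness assumption on $\eps$.
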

\begin{proof}
Consider the bad event that there exists a time $t \in [\tau_1, \tau_1 + \frac{2k}{\eps}]$ with $\lambda_{\min}(\mZ_t) < \frac{1}{4}$.
As the initial solution $\mZ_{\tau_1}$ satisfies $\lambda_{\min}(\mZ_{\tau_1}) \geq \frac34$, 
there must exist a time period $[t_0, t_1] \subseteq [\tau_1, \tau_1 + \frac{2k}{\eps})$ such that $\lambda_{\min}(\mZ_{t_0}) \geq \frac34$, $\lambda_{\min}(\mZ_{t_1+1}) < \frac14$, and $\lambda_{\min}(\mZ_t) \in [\frac14, \frac34)$ for all $t \in [t_0+1, t_1]$.

We show that the decrease of the minimum eigenvalue from $t_0$ to $t_1$ implies that the sum of $\Delta_t$ defined in~\eqref{eq:Delta} has decreased significantly. 
Let $\mF_{t_0} = \mZ_{t_0}$ and $\mF_t = \vv_{j_t} \vv_{j_t}^\top - \vv_{i_t} \vv_{i_t}^\top$ for all $t \in [t_0+1, t_1]$. 
Note that the exchange subroutine ensures that $\alpha \inner{\vv_{i_t} \vv_{i_t}^\top}{\mA^{\frac12}_t} \leq \frac14$ for any $t$.  
So, it follows from Corollary~\ref{cor:lambda-min-rank-two} with $\alpha=8\sqrt{d}$ that
\begin{align*} %\label{eq:delta_upper}
\frac14 > \lambda_{\min}(\mZ_{t_1+1}) \geq \sum_{t=t_0+1}^{t_1} \Delta_t
- \frac{2\sqrt{d}}{\alpha} + \lambda_{\min}(\mZ_{t_0}) \geq \sum_{t=t_0+1}^{t_1} \Delta_t - \frac14 + \frac34 \quad \Longrightarrow \quad \sum_{t=t_0+1}^{t_1} \Delta_t < -\frac14.
\end{align*}
On the other hand, $\Delta_t$ is expected to be positive when $\lambda_{\min}(\mZ_t) < \frac{3}{4}$. 
The expectation bound in Lemma~\ref{l:spec-exp-whp-t} with $\tau'=t_0$, $\tau=t_1$, $\lambda \leq \frac{3}{4}$ and $\gamma = \frac18$ implies that
\begin{align*} %\label{eq:exp_lower}
\sum_{t=t_0+1}^{t_1} \E[\Delta_t | S_{t-1}]
\geq \frac{t_1 - t_0}{8k}.
\end{align*}
So, the sum of $\Delta_t$'s has a large deviation from the expectation, i.e.
\begin{align*}
\sum_{t=t_0+1}^{t_1} \Delta_t \leq \sum_{t=t_0+1}^{t_1} \E[\Delta_t | S_{t-1}] - \left( \frac14 + \frac{t_1 - t_0}{8k} \right).
\end{align*}
We can apply the concentration bound in Lemma~\ref{l:spec-exp-whp-t} with $\gamma = \frac18$, $\lambda \leq \frac34$ and $\eta = \frac14 + \frac{t_1 - t_0}{8k}$ to upper bound this probability by
\begin{eqnarray*}
& & \Pr \left[ \sum_{t=t_0+1}^{t_1} \Delta_t \leq \sum_{t=t_0+1}^{t_1} \E[\Delta_t \mid S_{t-1}] - \Big( \frac14 + \frac{t_1 - t_0}{8k} \Big) \right]
\\ 
& \leq & \exp\Bigg( -\frac{4\Big(\frac14 + \frac{t_1 - t_0}{8k}\Big)^2 k\sqrt{d}}{(t_1 - t_0)(1+\frac34+\frac18) + \Big(\frac14 + \frac{t_1 - t_0}{8k}\Big) k/3} \Bigg) ~\leq~ \exp\left( - \Omega(\sqrt{d}) \right),
\end{eqnarray*}
where the last inequality follows as the denominator is in the order of $\Theta(k + t_1 - t_0)$, and the numerator is in the order of $\Omega\Big(1 + \frac{t_1 - t_0}{k} + \frac{(t_1 - t_0)^2}{k^2}\Big) \cdot k \sqrt{d} = \Omega(k + t_1 - t_0) \cdot \sqrt{d}$.
The proposition follows by applying the union bound over the at most $\frac{4k^2}{\eps^2}$ possible pairs of $t_0$ and $t_1$ from time $\tau_1$ to $\tau_1+\frac{2k}{\eps}$.
\end{proof}

\subsubsection{Main Approximation Results} \label{ss:main-proof}

In this subsection, we prove the main approximation results for experimental design, including Theorem~\ref{t:DA-rounding}.
We will do so by first assuming the following theorem about the improvement of the objective value in the second phase, which will be proved in Section~\ref{ss:D-knapsack} for D-design and in Section~\ref{ss:A-knapsack} for A-design.

\begin{theorem} \label{t:termination}
Suppose that $\lambda_{\min}(\mZ_{\tau_1}) \geq \frac{3}{4}$ and $\lambda_{\min}(\mZ_t) \geq \frac{1}{4}$ for $t \geq \tau_1$.
For both D-design and A-design, if $b_j \geq \frac{d \norm{\vc_j}_{\infty}}{\eps}$ for all $j \in [m]$ for some $\eps \leq \frac{1}{100}$, 
then the probability that the randomized exchange algorithm has not terminated by time $\tau_1 + \frac{2k}{\eps}$ is at most $e^{-\Omega(\sqrt{d})}$.
\end{theorem}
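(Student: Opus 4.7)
The plan is to treat D-design and A-design separately in Sections~\ref{ss:D-knapsack} and~\ref{ss:A-knapsack}, but each proof will follow the same template: (i) compute a lower bound on the expected one-step improvement of the relevant objective under the Exchange Subroutine's sampling distributions; (ii) express the running objective value as a martingale plus a predictable drift; (iii) apply Freedman's inequality (Theorem~\ref{t:Freedman}) to show concentration; and (iv) conclude that within $\frac{2k}{\eps}$ iterations the drift pushes the objective past the termination threshold with probability $1 - e^{-\Omega(\sqrt{d})}$.

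For D-design I would track $\log\det(\mZ_t)$. Lemma~\ref{l:Dlower} gives
\[
\log\det(\mZ_{t+1}) - \log\det(\mZ_t) \geq \log\bpar{1-\inner{\vv_{i_t}\vv_{i_t}^\top}{\mZ_t^{-1}}} + \log\bpar{1+\inner{\vv_{j_t}\vv_{j_t}^\top}{\mZ_t^{-1}}}
\]
whenever the first term is well-defined, which is guaranteed by $i_t \in S'_{t-1}$ and $\lambda_{\min}(\mZ_t) \geq \frac14$. Taking conditional expectation over the distributions in the Exchange Subroutine (with weights involving $\mA_t^{1/2}$) and using $\sum_i \vx(i)\vv_i\vv_i^\top = \mI$ together with Lemma~\ref{l:det}, the expected increment should be lower bounded in terms of $1 - \det(\mZ_t)^{1/d}$. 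When the termination condition fails, i.e.~$\det(\mZ_t)^{1/d} < 1-10\eps$, this expected gain should be at least $\Omega(\eps/k)$. For A-design the analogous step uses Lemma~\ref{l:trace-rank2} to upper bound $\inner{\mX^{-1}}{\mZ_{t+1}^{-1}} - \inner{\mX^{-1}}{\mZ_t^{-1}}$ by a difference of gain/loss ratios, which in expectation is at most $-\Omega(\eps/k)\cdot \tr(\mX^{-1})$ when $\inner{\mX^{-1}}{\mZ_t^{-1}} > (1+\eps)\tr(\mX^{-1})$.

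For concentration I would define $Y_t$ as the cumulative deviation of $\log\det(\mZ_t)$ (resp.~$\inner{\mX^{-1}}{\mZ_t^{-1}}$) from its predictable drift, which is a martingale. To apply Theorem~\ref{t:Freedman} I need a deterministic upper bound $R$ on each step's increment and an upper bound on the conditional variance. Here two ingredients come together: the hypothesis $\lambda_{\min}(\mZ_t) \geq \frac14$ gives $\norm{\mZ_t^{-1}}_{\rm op} \leq 4$, and the KKT conditions for the convex program~\eqref{eq:convex}, together with $b_j \geq d\norm{\vc_j}_\infty/\eps$, imply $\norm{\vv_i}_2^2 = O(\eps)$ for every $i$ with $\vx(i)\in(0,1)$. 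By Observation~\ref{obs:prob}, only such vectors are ever exchanged, so each $\inner{\vv_i\vv_i^\top}{\mZ_t^{-1}} = O(\eps)$ and the per-step increment is $O(\eps)$ in magnitude, with conditional second moment $O(\eps^2/k)$. Plugging into Freedman with $\delta = \Theta(1)$ and $\sigma^2 = O(\eps/k)\cdot (2k/\eps) = O(1)$ yields failure probability $\exp\bpar{-\Omega(1/\eps)} \leq e^{-\Omega(\sqrt{d})}$ after taking a union bound over the $O(k/\eps)$ iterations, using the assumption $\eps \leq 1/100$ and that $k = \Theta(d^2+m) \gg \sqrt{d}$.

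The main obstacle is step (i) for A-design: the sampling distributions are motivated by the regret minimization framework tailored to $\lambda_{\min}$, yet we must show they also yield positive expected decrease of $\inner{\mX^{-1}}{\mZ_t^{-1}}$. The key is that Lemma~\ref{l:cospectral} makes $\mA_t$ and $\mZ_t$ cospectral, so $\mA_t^{1/2}$ can be compared to $\mZ_t^{-1}$ up to factors controlled by $\lambda_{\min}(\mZ_t)$; without the bound $\lambda_{\min}(\mZ_t) \geq \frac14$ the loss term would dominate. A secondary difficulty is handling the second-order Taylor remainder in $\log(1\pm x)$ for D-design, which again is controlled by $\norm{\vv_i}_2^2 = O(\eps)$ coming from the optimality conditions. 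Once both are in place, summing the expected drift of $\Omega(\eps/k)$ over $2k/\eps$ iterations exceeds any $O(1)$ gap in the objective, and Freedman concentrates this drift, finishing the proof.
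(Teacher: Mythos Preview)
Your overall strategy is the same as the paper's --- compute expected one-step progress, set up a martingale, apply Freedman --- but the quantitative bookkeeping is off in a way that breaks the final step.

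For D-design, the expected per-step progress is not $\Omega(\eps/k)$ but $\Omega(\eps d/k)$: the gain term picks up $\tr(\mZ_t^{-1}) \geq d/\det(\mZ_t)^{1/d}$ via Lemma~\ref{l:det}, and the loss term contributes $d$, so the difference is $\geq \frac{d}{k}\bigl(\frac{1-4\eps}{1-10\eps} - (1+5\eps)\bigr) \geq \frac{\eps d}{k}$. Over $\tau = 2k/\eps$ iterations this accumulates to $\geq 2d$, while the actual progress sum is at most $d$ (since $\det(\mZ_1) \geq (3/4)^d$ and $\det(\mZ_{\tau+1}) < 1$), so the deviation threshold is $\eta = d$, not $\Theta(1)$. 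Likewise the conditional second moment is not $O(\eps^2/k)$: one has $\E[X_t^2 \mid S_{t-1}] \leq R \cdot \E[|X_t|]$ with $R = O(\eps)$ and $\E[|X_t|] = O(d^{1.5}/k)$ (the $\sqrt{d}$ comes from the $\alpha = 8\sqrt{d}$ factor in the sampling weights), giving $\sigma^2 = O(\eps d^{1.5}/k) \cdot \tau = O(d^{1.5})$. Freedman then yields $\exp\bigl(-\Omega\bigl(\frac{d^2}{d^{1.5} + \eps d}\bigr)\bigr) = e^{-\Omega(\sqrt{d})}$ directly. The analogous A-design calculation has drift $\Omega(\eps/k)\cdot\tr(\mX^{-1})$ summing to $\Omega(\tr(\mX^{-1}))$, with $\eta = \Theta(\tr(\mX^{-1}))$ and $\sigma^2 = O(\sqrt{d}\,\tr(\mX^{-1})^2)$.

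Your route via $e^{-\Omega(1/\eps)}$ does not reach the claimed $e^{-\Omega(\sqrt{d})}$: the only hypothesis on $\eps$ is $\eps \leq 1/100$, so $1/\eps$ can be $100$ while $d$ is arbitrarily large. The union bound over $O(k/\eps)$ iterations is also unnecessary (and counterproductive) --- Freedman is applied once to the whole sum over $t \leq \tau$. Fixing the $d$-dependence in the drift, the deviation threshold, and the variance is essential to land on $e^{-\Omega(\sqrt{d})}$.
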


First, we prove the following bicriteria approximation result for D/A-design with knapsack constraints,
by combining the previous steps and using the concentration inequality for the knapsack constraints proved in~\cite{LZ20}.

\begin{theorem} \label{t:main-knapsack}
Given $\eps \leq \frac{1}{100}$, 
if $b_j \geq \frac{d \norm{\vc_j}_{\infty}}{\eps}$ for all $j \in [m]$, 
then the randomized exchange algorithm returns a solution set $S$ within $16k + \frac{2k}{\eps}$ iterations such that
\[
\det\bigg(\sum_{i \in S} \vu_i \vu_i^\top\bigg)^{\frac{1}{d}} \geq (1-10\eps) \cdot \det\left( \mX \right)^{\frac{1}{d}}
\quad {\rm or} \quad
\tr\bigg(\bigg( \sum_{i \in S} \vu_i \vu_i^\top\bigg)^{-1}\bigg) \leq (1+\eps) \cdot \tr\left(\mX^{-1}\right)
\]
for D-design and A-design respectively with probability at least $1-O\Big(\frac{k^2}{\eps^2} \cdot e^{-\Omega\left(\sqrt{d}\right)} \Big)$, 
where $\mX$ is an optimal fractional solution.
Moreover, for each $j \in [m]$, the solution set $S$ satisfies
\[
\vc_j(S) \leq (1+\eps)b_j + 120 d \norm{\vc_j}_{\infty} \leq \big(1+O(\eps)\big)b_j
\]
with probability at least $1-e^{-\Omega(\eps d)}$. 
\end{theorem}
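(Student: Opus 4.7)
The plan is to assemble Theorem~\ref{t:main-knapsack} by stitching together the three probabilistic pieces already isolated in the section (Proposition~\ref{p:terminate1}, Proposition~\ref{p:bounded_min}, and Theorem~\ref{t:termination} for the objective; Theorem~\ref{t:cost} for the knapsack constraints) and then translating the termination condition back from $\mZ_t=\sum_{i\in S_{t-1}}\vv_i\vv_i^\top$ to $\sum_{i\in S}\vu_i\vu_i^\top$ via the preconditioning step $\vv_i=\mX^{-1/2}\vu_i$.

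First I would run the first phase. Proposition~\ref{p:terminate1} gives that, with failure probability $e^{-\Omega(\sqrt{d})}$, either the algorithm has already satisfied its termination condition within $16k$ iterations, or there is some $\tau_1\leq 16k$ with $\lambda_{\min}(\mZ_{\tau_1})\geq \tfrac34$. In the first sub-case we are done with the objective part (going to Step~4 below for the cost bound); in the second sub-case we move to phase~2. I then apply Proposition~\ref{p:bounded_min}, which tells me that conditioned on $\lambda_{\min}(\mZ_{\tau_1})\geq\tfrac34$, with failure probability $\tfrac{4k^2}{\eps^2}\cdot e^{-\Omega(\sqrt{d})}$ the minimum eigenvalue remains $\geq \tfrac14$ throughout the interval $[\tau_1,\tau_1+\tfrac{2k}{\eps}]$. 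On this good event the hypotheses of Theorem~\ref{t:termination} hold, so with an additional $e^{-\Omega(\sqrt{d})}$ failure probability the algorithm reaches its termination condition at some step $\tau\leq\tau_1+\tfrac{2k}{\eps}\leq 16k+\tfrac{2k}{\eps}$. A union bound over these three events gives the claimed overall failure probability $O\bigl(\tfrac{k^2}{\eps^2}e^{-\Omega(\sqrt{d})}\bigr)$.

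Next I need to translate the termination condition in $\mZ_\tau$ to the statement for $\sum_{i\in S}\vu_i\vu_i^\top$. Because $\mZ_\tau=\mX^{-1/2}\bigl(\sum_{i\in S}\vu_i\vu_i^\top\bigr)\mX^{-1/2}$, we have $\det(\mZ_\tau)=\det(\sum_{i\in S}\vu_i\vu_i^\top)/\det(\mX)$, so $\det(\mZ_\tau)^{1/d}\geq 1-10\eps$ is exactly $\det(\sum_{i\in S}\vu_i\vu_i^\top)^{1/d}\geq(1-10\eps)\det(\mX)^{1/d}$. Similarly $\mZ_\tau^{-1}=\mX^{1/2}(\sum_{i\in S}\vu_i\vu_i^\top)^{-1}\mX^{1/2}$, and so
\[
\inner{\mX^{-1}}{\mZ_\tau^{-1}}=\tr\!\Bigl(\bigl(\textstyle\sum_{i\in S}\vu_i\vu_i^\top\bigr)^{-1}\Bigr),
\]
which means the A-design termination condition is exactly the desired inequality against $(1+\eps)\tr(\mX^{-1})$. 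This is essentially the only calculation unique to the main theorem; everything else is bookkeeping.

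Finally, for the knapsack constraints I would apply Theorem~\ref{t:cost} with $\gamma=\tfrac18$ (to match $\alpha=8\sqrt{d}$) and $\delta=\eps$ at the termination time $\tau$. This gives $\vc_j(S)\leq(1+\eps)\inner{\vc_j}{\vx}+\tfrac{15 d\|\vc_j\|_\infty}{1/8}=(1+\eps)\inner{\vc_j}{\vx}+120 d\|\vc_j\|_\infty\leq(1+\eps)b_j+120 d\|\vc_j\|_\infty$, with failure probability $e^{-\Omega(\eps d)}$; the final $(1+O(\eps))b_j$ bound then follows from the hypothesis $b_j\geq d\|\vc_j\|_\infty/\eps$, which makes the additive slack at most $120\eps\cdot b_j$. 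The main obstacle here is not in this assembly, which is essentially a union bound plus the spectral-identity translation above, but rather in Theorem~\ref{t:termination} itself; that theorem is the real workhorse and will be proved in Sections~\ref{ss:D-knapsack} and \ref{ss:A-knapsack}, requiring the delicate martingale arguments for $\det$ and $\tr$ that depend crucially on the minimum-eigenvalue floor guaranteed by phase~1 and phase~2.
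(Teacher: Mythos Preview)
Your proposal is correct and follows essentially the same approach as the paper: a union bound over the three failure events from Proposition~\ref{p:terminate1}, Proposition~\ref{p:bounded_min}, and Theorem~\ref{t:termination}, followed by the spectral translation back to the $\vu_i$'s and an application of Theorem~\ref{t:cost} with $\gamma=\tfrac18$ and $\delta=\eps$. The one technical point you glossed over is the hypothesis of Theorem~\ref{t:cost}, which requires $\lambda_{\min}(\mZ_t)<1$ for all $t$ prior to termination; the paper explicitly checks this by observing that if $\lambda_{\min}(\mZ_t)\geq 1$ then $\mZ_t\succeq \mI$, so either $\det(\mZ_t)^{1/d}\geq 1>1-10\eps$ or $\inner{\mX^{-1}}{\mZ_t^{-1}}\leq \tr(\mX^{-1})$, and the algorithm would already have terminated.
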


\begin{proof}
We start with defining some bad events for the randomized exchange algorithm.
\vspace{-2mm}
\begin{itemize}
\setlength\itemsep{-1pt}
    \item $B_1$: the algorithm has not terminated successfully within $16k$ iterations and $\tau_1 > 16k$ where $\tau_1$ is the first time such that $\lambda_{\min}(\mZ_{\tau_1}) \geq \frac34$.
    \item $B_2$: there exists some $\tau_1 \leq t \leq \tau_1 + \frac{2k}{\eps}$ such that $\lambda_{\min}(\mZ_t) < 1/4$.
    \item $B_3$: the termination condition for D/A-design is not satisfied for all $\tau_1 \leq t \leq \tau_1 + \frac{2k}{\eps}$.
\end{itemize}
\vspace{-2mm}
If none of the bad events happens, then either the algorithm has terminated successfully within $16k$ iterations or the termination condition for D/A-design will be satisfied at some time $t \leq \tau_1 + \frac{2k}{\eps} \leq 16k + \frac{2k}{\eps}$.
So, the probability that the randomized exchange algorithm has not satisfied the termination condition within $16k + \frac{2k}{\eps}$ iterations is upper bounded by
\begin{align*}
\Pr[B_1 \cup B_2 \cup B_3] 
& \leq \Pr[B_1] 
+ \Pr[B_2 \cap \neg B_1] 
+ \Pr[B_3 \cap \neg B_2 \cap \neg B_1]
\\
& \leq O\left(e^{-\Omega(\sqrt{d})} \right)
+ O\left( \frac{k^2}{\eps^2} \cdot e^{-\Omega(\sqrt{d})}  \right)
+ O\left( e^{-\Omega(\sqrt{d})}  \right)
\\
& \leq O\left( \frac{k^2}{\eps^2} \cdot e^{-\Omega(\sqrt{d})}  \right),
\end{align*}
where $\Pr[B_1]$ is bounded in Proposition~\ref{p:terminate1},
$\Pr[B_2 \cap \neg B_1]$ is bounded in Proposition~\ref{p:bounded_min},
and $\Pr[B_3 \cap \neg B_2 \cap \neg B_1]$ is bounded in Theorem~\ref{t:termination}.

For D-design, since $\vv_i = \mX^{-\frac{1}{2}}\vu_i$, 
the termination condition implies the approximation guarantee as 
\[
\det\bigg( \sum_{i \in S} \vv_i \vv_i^\top \bigg)^{\frac{1}{d}} > 1-10\eps 
\quad \Longrightarrow \quad 
\det\bigg( \sum_{i \in S} \vu_i \vu_i^\top \bigg)^{\frac{1}{d}} \geq  (1- 10\eps) \cdot \det(\mX)^{\frac{1}{d}}.
\]
For A-design, note that
\begin{equation} \label{eq:A-obj-transf} 
\begin{aligned}
\bigg\langle \mX^{-1}, \bigg( \sum_{i \in S} \vv_i \vv_i^\top \bigg)^{-1} \bigg\rangle
& =
\bigg\langle \mX^{-1}, \bigg( \sum_{i \in S} \mX^{-\frac12} \vu_i \vu_i^\top \mX^{-\frac12} \bigg)^{-1} \bigg\rangle \\
& = \bigg\langle \mI, \bigg( \sum_{i \in S} \vu_i \vu_i^\top \bigg)^{-1} \bigg\rangle 
= \tr\bigg( \bigg( \sum_{i \in S} \vu_i \vu_i^\top \bigg)^{-1} \bigg),
\end{aligned}
\end{equation}
and so the termination condition also implies the approximation guarantee as
\[
 \bigg\langle \mX^{-1}, \bigg(\sum_{i \in S} \vv_i \vv_i^\top\bigg)^{-1} \bigg\rangle \leq (1+\eps) \tr(\mX^{-1}) 
\quad \Longrightarrow \quad 
\tr\bigg( \bigg(\sum_{i \in S} \vu_i \vu_i^\top \bigg)^{-1}\bigg) \leq (1+\eps) \tr(\mX^{-1}).
\]
Finally, we consider the knapsack constraints.
Note that the termination conditions of both D/A-design imply $\lambda_{\min}(\mZ_t) < 1$ before the algorithm terminates.
So, we can apply Theorem~\ref{t:cost} with $\gamma=\frac18$ to conclude that the returned solution $S$ satisfies 
\[
\vc_j(S) \leq (1+\eps)\inner{\vc_j}{\vx} + 120d\norm{\vc_j}_{\infty}
\leq (1+\eps)b_j + 120d\norm{\vc_j}_{\infty} 
\leq (1+O(\eps))b_j  
\]
with probability at least $1-\exp(-\Omega(\eps d))$, where the last inequality follows from $b_j \geq \frac{d\norm{\vc_j}_{\infty}}{\eps}$.
\end{proof}

We are ready to prove main theorem in this section by turning the above bicriteria approximation result to a true approximation result using a simple scaling argument.

\DARounding*

\begin{proof}
Let $b_1, \ldots, b_m$ be the input budgets for the $m$ knapsack constraints. 
We scale down the budget to $\tilde{b}_j = \frac{b_j}{1+100 \eps}$ for each $j \in [m]$.
Since $\eps \leq \frac{1}{200}$ and $b_j \geq \frac{2d \|\vc_j\|_\infty}{\eps}$ by the assumption, 
the rescaled budget $\tilde{b}_j \geq \frac{d \|\vc_j\|_\infty}{\eps}$. 
Therefore, the budget assumptions in Theorem~\ref{t:main-knapsack} are satisfied by all $\tilde{b}_1, ..., \tilde{b}_m$. 
In the following, we prove the theorem for D-design only, as the proof for A-design follows by the same argument.

Let $\tilde{\vx} \in [0,1]^n$ be an optimal fractional solution of~\eqref{eq:convex} with budget $\tilde{b}_j$ for $j \in \{1,...,m\}$. Let $\tilde{\mX} := \sum_{i=1}^n \tilde{\vx}(i) \cdot \vv_i \vv_i^\top$ and $\mX = \sum_{i=1}^n \vx(i) \cdot \vv_i \vv_i^\top$.
We run the randomized exchange algorithm with budgets $\tilde{b}_1, ..., \tilde{b}_m$. 
By Theorem~\ref{t:main-knapsack}, with probability at least $1 - O\Big( \frac{k^2}{\eps^2} \cdot e^{-\Omega(\sqrt{d})} \Big)$, the algorithm returns a solution set $S$ within $O(\frac{k}{\eps})$ iterations such that
\[
\det\bigg( \sum_{i \in S} \vu_i \vu_i^\top \bigg)^{\frac{1}{d}} 
\geq (1-10\eps) \cdot \det(\tilde{\mX})^{\frac{1}{d}} 
\geq \frac{1-10\eps}{1+100\eps} \cdot \det(\mX)^{\frac{1}{d}} 
= \big(1-O(\eps)\big) \cdot \det(\mX)^{\frac{1}{d}},
\]
where the second inequality holds as $\frac{1}{1+100\eps} \cdot \mX$ is a feasible solution to~\eqref{eq:convex} with budget $\tilde{b}_1, ..., \tilde{b}_m$.
Furthermore, for each knapsack constraint $j \in [m]$, it follows from Theorem~\ref{t:main-knapsack} that
\[
\vc_j(S) \leq (1+\eps) \tilde{b}_j + 120 d \|\vc_j\|_\infty 
\leq \frac{1+\eps}{1+100\eps} \cdot b_j + 60 \eps b_j \leq b_j,
\]
with probability at least $1-\exp(-\Omega(\eps d))$, where the second inequality follows by the assumption $b_j \geq \frac{2 d\|\vc_j\|_\infty}{\eps}$ and the last inequality follows as $\eps \leq \frac{1}{200}$.
\end{proof}

{\bf Unweighted D/A-Design:} Using the main result, we improve the previous result on D/A-design with a single cardinality constraint by replacing the assumption $b \geq \Omega\left(\frac{d}{\eps} + \frac{1}{\eps^2} \log \left(\frac{1}{\eps}\right) \right)$ in~\cite{SX18,NST19} with $b \geq \frac{2d}{\eps}$, although there is a mild assumption on the range of $\eps$.

\begin{corollary} \label{c:DA}
For any $\frac{1}{200} \geq \eps \geq e^{-\delta \sqrt{d}}$ for a small enough constant $\delta$,
if $b \geq \frac{2d}{\eps}$,
then there is a randomized polynomial time algorithm that returns a $\big(1+O(\eps)\big)$-approximate solution for D/A-design with constant probability.
\end{corollary}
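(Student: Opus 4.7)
The plan is to deduce Corollary~\ref{c:DA} as a direct specialization of Theorem~\ref{t:DA-rounding} to the single cardinality constraint. I take $m = 1$, $\vc_1 = \vec{1}_n$, and $b_1 = b$, so that $\norm{\vc_1}_\infty = 1$ and the budget assumption $b \geq 2d/\eps$ of the corollary is exactly the hypothesis $b_1 \geq 2d\norm{\vc_1}_\infty / \eps$ of Theorem~\ref{t:DA-rounding}. Invoking the theorem, the randomized exchange algorithm runs in polynomial time and returns a $(1+O(\eps))$-approximate solution for D/A-design with probability at least $1 - O(k^2/\eps^2 \cdot e^{-\Omega(\sqrt{d})})$, where $k = O(d^2+m) = O(d^2)$ since $m=1$, and it additionally satisfies $|S| \leq b$ with probability at least $1 - e^{-\Omega(\eps d)}$.

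The remaining step is to confirm that both failure probabilities are strictly less than, say, $1/4$ throughout the specified range $e^{-\delta\sqrt{d}} \leq \eps \leq 1/200$. For the approximation failure $O(d^4/\eps^2 \cdot e^{-c\sqrt{d}})$, I would substitute $\eps^{-2} \leq e^{2\delta\sqrt{d}}$ to rewrite the bound as $O(d^4 \cdot e^{-(c - 2\delta)\sqrt{d}})$, where $c>0$ is the implicit constant in the $\Omega(\sqrt{d})$ exponent; choosing $\delta < c/2$ makes this $o(1)$ in $d$ and hence at most $1/4$ for $d$ beyond a constant threshold. For the knapsack failure $e^{-\Omega(\eps d)}$, the inequality $\eps d \geq d \cdot e^{-\delta\sqrt{d}}$ stays bounded below by a positive constant in the asymptotic regime we care about, keeping that probability at most $1/4$ as well. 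A union bound then delivers the desired constant overall success.

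All the substantive work is already carried out by Theorem~\ref{t:DA-rounding}; the only calibration is the choice of $\delta$ so that both exponential tails are simultaneously controlled, which is routine book-keeping. The point of the corollary is to highlight that in the single-cardinality case the required budget sharpens from $b = \Omega\!\big(d/\eps + \eps^{-2}\log(1/\eps)\big)$ as in~\cite{SX18,NST19} to simply $b \geq 2d/\eps$, matching the known integrality gap, at the modest price of restricting $\eps$ to be not exponentially small in $\sqrt{d}$.
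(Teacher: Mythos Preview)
Your reduction to Theorem~\ref{t:DA-rounding} with $m=1$, $\vc_1=\vec{1}$ is correct, and your treatment of the approximation failure probability $O(d^4/\eps^2 \cdot e^{-c\sqrt d})$ via $\eps^{-2}\le e^{2\delta\sqrt d}$ is fine. The gap is in the knapsack failure bound. You claim that $\eps d \ge d\cdot e^{-\delta\sqrt d}$ ``stays bounded below by a positive constant,'' but this is false: for any fixed $\delta>0$ the quantity $d\cdot e^{-\delta\sqrt d}\to 0$ as $d\to\infty$. Consequently, when $\eps$ is near the lower endpoint $e^{-\delta\sqrt d}$, the knapsack failure probability $e^{-\Omega(\eps d)}$ tends to $1$, and a single run of the algorithm does \emph{not} succeed with constant probability.

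The paper handles this by a separate amplification step. First, one may assume $\eps \ge 2d/n$ without loss of generality: if $\eps < 2d/n$ then $b \ge 2d/\eps > n$, so every vector can be selected and the problem is trivial. Hence $\eps d \ge 2d^2/n$, and a single run succeeds with probability at least
\[
1 - e^{-\Omega(\eps d)} - e^{-\Omega(\sqrt d)} \;\ge\; \Omega(\eps d) - e^{-\Omega(\sqrt d)} \;\ge\; \Omega\!\bigl(d^2/n\bigr),
\]
using $1-e^{-x}\ge \Omega(x)$ for small $x$ and the lower bound on $\eps$ to ensure the second term is dominated. Running $O(n/d^2)$ independent trials then boosts the success probability to a universal constant while keeping the total running time polynomial. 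Your argument is missing this amplification; without it the claimed constant success probability does not follow across the full range of $\eps$.
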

\begin{proof}
We apply Theorem~\ref{t:DA-rounding} on the input.
The probability that the output is a $\big(1+O(\eps)\big)$-approximate solution and satisfies the cardinality constraint is at least $1 - e^{-\Omega(\eps d)} - e^{-\Omega(\sqrt{d})}$ as $k=O(d^2)$.
When $\eps d = \Omega(1)$, this success probability is at least a constant for large enough $d$.
Otherwise, this success probability can be lower bounded by
\[
1 - e^{-\Omega(\eps d)} - e^{-\Omega(\sqrt{d})}
\geq \Omega(\eps d) - e^{-\Omega(\sqrt{d})}
\geq \max\Big\{e^{-\Omega(\sqrt{d})}, \Omega\Big(\frac{d^2}{n}  \Big)  \Big\} \geq \Omega\Big(\frac{d^2}{n}\Big), 
\]
where the first inequality is by $e^{-\Omega(\eps d)} \leq 1 - \Omega(\eps d)$ for $\eps d = o(1)$, and the second inequality is by the assumption $\eps \geq \exp(-\delta \sqrt{d})$ for a small enough $\delta$ and  the fact that we can assume $\eps \geq \frac{2d}{n}$ without loss of generality.
Therefore, we can amplify the success probability to be a constant by applying Theorem~\ref{t:DA-rounding} at most $O\left( \frac{n}{d^2} \right)$ times,
and the total time complexity is still polynomial in $n$ and $d$. 
\end{proof}

{\bf Minimizing Total Effective Resistance:}
We present an application of the main result to the total effective resistance minimization problem.
In this problem, we are given a graph $G=(V,E)$ with Laplacian matrix $\mL_G = \sum_{e \in E} \vb_e \vb_e^\top$ and a cost vector $\vc \in \R_+^m$ on the edges, 
and the goal is to find a subgraph $H$ with cost at most $b$ to minimize the sum of all pairs effective resistances
$\sum_{u,v} \Reff_H(u,v) = n \cdot \tr(\mL_H^{\dagger})$.

\RoundReff*

\begin{proof}
This is an A-design problem by using the vectors $\vb_e$ projecting onto the $(n-1)$-dimensional subspace orthogonal to the all-one vector.
Let $\vx^* \in [0,1]^m$ be an optimal fractional solution to the problem and let $\mL_{\vx^*} := \sum_{e \in E} \vx^*(e) \cdot \vb_e \vb_e^\top$. 
Since $b \geq \Omega\Big( \frac{n\|\vc\|_\infty}{\eps}\Big)$, by Theorem~\ref{t:DA-rounding}, 
there is a randomized algorithm that returns a subgraph $H$ with $\tr\big(\mL_H^{\dagger}\big) \leq \big(1+O(\eps)\big) \cdot \tr\big(\mL_{\vx^*}^\dagger\big)$ within $O\big( \frac{k}{\eps} \big)$ iterations with probability at least $1 - O\big( \frac{k^2}{\eps^2} \cdot e^{-\Omega(\sqrt{n})} \big)$ where $k = O(n^2)$. 
Moreover, the cost constraint is satisfied with probability at least $1 - e^{-\Omega(\eps n)}$.
Since the number of edges $m = O(n^2)$, we can assume $\eps \geq \frac{n}{m} = \Omega(\frac1n)$ without loss of generality. The running time is polynomial in the graph size and the failure probability of the algorithm is at most
$O\big( \frac{k^2}{\eps^2} \cdot e^{-\Omega(\sqrt{n})}  \big) + e^{-\Omega(\eps n)} \leq e^{-\Omega(\sqrt{n})} + e^{-\Omega(1)}$, a constant bounded away from 1, when $n$ is large enough.
\end{proof}

\subsection{Analysis of the D-Design Objective} \label{ss:D-knapsack}

We will prove Theorem~\ref{t:termination} for D-design in this subsection.
Let $\tau_1$ be the start time of the second phase.
For ease of notation, we simply reset $\tau_1=1$,
as the first time step in the second phase.
By assumption, $\lambda_{\min}(\mZ_1) \geq \frac{3}{4}$ and $\lambda_{\min}(\mZ_t) \geq \frac{1}{4}$ for all $t \geq 1$,
which will be crucial in the analysis.

To analyze the objective value for D-design, our plan is to transform the product of random variables in Lemma~\ref{l:Dlower} into a sum of random variables in the exponent as follows,
\begin{eqnarray}
\det(\mZ_{\tau+1}) & \geq & 
\det(\mZ_1) \cdot \prod_{t=1}^\tau \left( 1 - \inner{\vv_{i_t} \vv_{i_t}^\top}{\mZ_t^{-1}} \right) \left( 1 + \inner{\vv_{j_t} \vv_{j_t}^\top}{\mZ_t^{-1}} \right)
\nonumber
\\
& \geq & \det(\mZ_1) \cdot \exp\bigg( \sum_{t=1}^\tau \Big( (1-4\eps) \underbrace{\inner{\vv_{j_t} \vv_{j_t}^\top}{\mZ_t^{-1}}}_{\text{gain $g_t$}} - (1+5\eps) \underbrace{\inner{\vv_{i_t} \vv_{i_t}^\top}{\mZ_t^{-1}}}_{\text{loss $l_t$}} \Big) \bigg),
\label{eq:det-expo-bound}
\end{eqnarray}
where the inequalities $1-x \geq e^{(1-4\eps)x}$ and $1-x \geq e^{-(1+5\eps)x}$ only hold when $x \in [0,4\eps]$ and $\eps$ is small enough such as $\eps \leq \frac{1}{50}$.

So, for our plan to work, we need to bound the gain term $\inner{\vv_{j_t} \vv_{j_t}^\top}{\mZ_t^{-1}}$ and the loss term $\inner{\vv_{i_t} \vv_{i_t}^\top}{\mZ_t^{-1}}$.
To do so, we prove in Lemma~\ref{l:optimality-D} that in an optimal fractional solution $\vx$, 
every vector $\vv_i$ with $0 < \vx(i) < 1$ satisfies the condition that $\norm{\vv_i}_2^2 \leq \eps$.
Recall that, \Cref{obs:prob} implies $0 < \vx(i_t), \vx(j_t) < 1$ for all $t \geq 1$.
Therefore, Lemma~\ref{l:optimality-D} implies that $\norm{\vv_{i_t}}_2^2 \leq \eps$ and $\norm{\vv_{j_t}}_2^2 \leq \eps$ for all $t \geq 1$.
Together with the assumption that $\mZ_t \succeq \frac{1}{4}\mI$ for all $t \geq 1$, we can ensure that $\inner{\vv_{j_t} \vv_{j_t}^\top}{\mZ_t^{-1}} \leq 4\eps$ and $\inner{\vv_{i_t} \vv_{i_t}^\top}{\mZ_t^{-1}} \leq 4\eps$ for all $t \geq 1$,
and hence~\eqref{eq:det-expo-bound} holds.   

Once this transformation is done and \eqref{eq:det-expo-bound} is established, we can apply Freedman's martingale inequality to prove concentration of the exponent.
In the following, we define the gain $g_t$, loss $l_t$ and progress $\Gamma_t$ in the $t$-th iteration as 
\[
g_t := \inner{\vv_{j_t} \vv_{j_t}^\top}{\mZ_t^{-1}}, 
\qquad l_t := \inner{\vv_{i_t} \vv_{i_t}^\top}{\mZ_t^{-1}}, 
\qquad \text{and} \qquad \Gamma_t := (1-4\eps)g_t - (1+5\eps) l_t.
\]
In Section~\ref{ss:exp-progress-D}, we will prove that the expected progress is large if the current solution is far from optimal. 
Then, in Section~\ref{ss:progress-whp-D}, we will prove that the total progress is concentrated around its expectation, where the minimum eigenvalue assumption is crucial in the martingale concentration argument. 
Finally, we finish the proof of Theorem~\ref{t:termination} for D-design in Section~\ref{ss:D-terminate}, and present the proof of Lemma~\ref{l:optimality-D} in Section~\ref{ss:optimality-D}.

\subsubsection{Expected Improvement of the D-Design Objective} \label{ss:exp-progress-D}

Here we bound the conditional expectation of progress $\Gamma_t$, and show that $\E[\Gamma_t]$ is large if the current objective value $\det(\mZ_t)^{\frac{1}{d}}$ is small. 

\begin{lemma} \label{l:D-exp-progress}
Let $\gamma \geq 1$.
Let $S_{t-1}$ be the solution set at time $t$ and $\mZ_t = \sum_{i \in S_{t-1}} \vv_i \vv_i^\top$ for $1 \leq t \leq \tau$.
Suppose $\det(\mZ_t)^{\frac{1}{d}} \leq \lambda$ for $1 \leq t \leq \tau$.
Then 
\[
\sum_{t = 1}^\tau \E[\Gamma_t \mid S_{t-1}] \geq \left( \frac{1-4\eps}{\lambda} - (1+5\eps) \right) \cdot \frac{d \tau}{k}.
\]
\end{lemma}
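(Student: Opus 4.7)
The plan is to compute the one-step conditional expectations $\E[g_t \mid S_{t-1}]$ and $\E[l_t \mid S_{t-1}]$ directly from the sampling distributions in the exchange subroutine, then combine them, drop the terms common to both (which will cancel a cross term with sign we can control), and finally use an AM--GM inequality on eigenvalues to convert the resulting $\tr(\mZ_t^{-1})$ lower bound into a $\det(\mZ_t)^{1/d}$ lower bound. No concentration is needed here; this is purely a pointwise conditional-expectation computation.

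First I will bound the expected gain from below. Using $\Pr(j_t=j) \geq \vx(j)/k$ for $j \notin S_{t-1}$ (dropping the nonnegative term $2\alpha\inner{\vv_j\vv_j^\top}{\mA_t^{1/2}}$), and the identity $\sum_{j=1}^n \vx(j)\vv_j\vv_j^\top = \mI$, I get
\[
\E[g_t \mid S_{t-1}] \geq \frac{1}{k} \inner{\mI - \textstyle\sum_{i \in S_{t-1}} \vx(i)\vv_i\vv_i^\top}{\mZ_t^{-1}} = \frac{1}{k}\bigl(\tr(\mZ_t^{-1}) - A_t\bigr),
\]
where I define $A_t := \inner{\sum_{i \in S_{t-1}} \vx(i)\vv_i\vv_i^\top}{\mZ_t^{-1}} \geq 0$. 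Next, for the expected loss I use $\Pr(i_t=i) \leq (1-\vx(i))/k$ on $S'_{t-1} \subseteq S_{t-1}$, then extend the sum to all of $S_{t-1}$ (this only increases the bound since every term is nonnegative), and use $\sum_{i \in S_{t-1}} \vv_i\vv_i^\top = \mZ_t$ so that $\inner{\mZ_t}{\mZ_t^{-1}} = d$:
\[
\E[l_t \mid S_{t-1}] \leq \frac{1}{k}\sum_{i \in S_{t-1}} (1-\vx(i))\inner{\vv_i\vv_i^\top}{\mZ_t^{-1}} = \frac{1}{k}(d - A_t).
\]

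Combining, $\E[\Gamma_t \mid S_{t-1}] \geq \tfrac{1}{k}\bigl((1-4\eps)\tr(\mZ_t^{-1}) - (1+5\eps)d + 9\eps\,A_t\bigr)$, and since $A_t \geq 0$ we can drop the last term. The final ingredient is the AM--GM inequality on the eigenvalues $\lambda_1,\dots,\lambda_d$ of $\mZ_t$, which gives $\tr(\mZ_t^{-1}) = \sum_i \lambda_i^{-1} \geq d\bigl(\prod_i \lambda_i\bigr)^{-1/d} = d\cdot \det(\mZ_t)^{-1/d} \geq d/\lambda$ under the hypothesis $\det(\mZ_t)^{1/d} \leq \lambda$. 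Substituting and summing over $t=1,\dots,\tau$ yields the stated bound $\sum_{t=1}^\tau \E[\Gamma_t \mid S_{t-1}] \geq \bigl(\tfrac{1-4\eps}{\lambda} - (1+5\eps)\bigr)\cdot \tfrac{d\tau}{k}$.

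There is no real obstacle here — the calculation is routine once one spots two small things: (i) that dropping the action-matrix correction terms works in the right direction (the $+$ term helps the gain bound, the $-$ term helps the loss bound), and (ii) that the cross term $A_t$ cancels with matching signs up to a $9\eps$ slack, which is nonnegative and can be discarded. The only nonobvious step is converting a determinant upper bound into a trace-of-inverse lower bound via AM--GM; everything else is bookkeeping using $\sum_i \vx(i)\vv_i\vv_i^\top = \mI$ and $\tr(\mZ_t^{-1}\mZ_t)=d$.
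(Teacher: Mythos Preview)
Your proposal is correct and essentially identical to the paper's proof: the same lower bound on $\E[g_t\mid S_{t-1}]$ by dropping the $2\alpha\inner{\vv_j\vv_j^\top}{\mA_t^{1/2}}$ term, the same upper bound on $\E[l_t\mid S_{t-1}]$ by extending from $S'_{t-1}$ to $S_{t-1}$, the same observation that the $9\eps A_t$ cross term is nonnegative, and the same AM--GM step (which the paper phrases as an application of Lemma~\ref{l:det} with $\mA=\mI$, $\mB=\mZ_t^{-1}$).
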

\begin{proof}
Let $t \in [1, \tau]$.
Using the probability distribution for sampling $\vv_{j_t}$ in the randomized exchange algorithm, 
the expected gain of adding vector $\vv_{j_t}$ is
\begin{eqnarray*}
\E[g_t \mid S_{t-1}]  
& = & \sum_{j \in [n] \backslash S_{t-1}} \frac{\vx(j)}{k} \cdot 
\left(1+ 2\alpha \inner{\vv_j \vv_j^\top}{\mA_t^{\frac12}}\right) \cdot 
\inner{\vv_j \vv_j^\top}{\mZ_t^{-1}} 
\\ 
& \geq & 
\sum_{j \in [n] \backslash S_{t-1}} \frac{\vx(j)}{k} \cdot \inner{\vv_j \vv_j^\top}{\mZ_t^{-1}} 
\\
& = & \frac{1}{k} \bigg( \tr(\mZ_t^{-1}) - \sum_{i \in S_{t-1}} \vx(i) \cdot \inner{\vv_i \vv_i^\top}{\mZ^{-1}_t} \bigg),
\end{eqnarray*}
where the last equality uses $\sum_{j=1}^n \vx(j) \cdot \vv_j \vv_j^\top = \mI$.

Using the probability distribution for sampling $\vv_{i_t}$ in the randomized exchange algorithm, 
the expected loss of removing vector $\vv_{i_t}$ is
\begin{equation} \label{eq:loss_exp}
\begin{aligned}
    \E[l_t \mid S_{t-1}] & = \sum_{i \in S'_{t-1}} \frac{1-\vx(i)}{k} \cdot \left(1-2\alpha \inner{\vv_i \vv_i^\top}{\mA_t^{\frac12}} \right) \cdot \inner{\vv_i \vv_i^\top}{\mZ_t^{-1}} \\
    & \leq \frac{1}{k} \sum_{i \in S'_{t-1}} \big(1-\vx(i)\big) \cdot \inner{\vv_i \vv_i^\top}{\mZ_t^{-1}} \\
    & \leq \frac{1}{k} \sum_{i \in S_{t-1}} \big(1-\vx(i)\big) \cdot \inner{\vv_i \vv_i^\top}{\mZ_t^{-1}} \\
    & = \frac{1}{k}\bigg( d - \sum_{i \in S_{t-1}} \vx(i) \cdot \inner{\vv_i \vv_i^\top}{\mZ^{-1}_t}  \bigg),
\end{aligned}
\end{equation}
where the two inequalities hold as $1-2\alpha \inner{\vv_i \vv_i^\top}{\mA_t^{\frac12}} \leq 1$ and $(1-x_i) \cdot \inner{\vv_i \vv_i^\top}{\mZ_t^{-1}} \geq 0$ for all $i \in [n]$, and the last equality holds as $\sum_{i \in S_{t-1}} \vv_i \vv_i^\top = \mZ_t$.

Therefore, the expected progress is
\begin{align*}
\E[\Gamma_t \mid S_{t-1}] 
&= \E[(1-4\eps) g_t - (1+5\eps) l_t \mid S_{t-1}] 
\\
& \geq \frac{1-4\eps}{k}\bigg( \tr(\mZ_t^{-1}) - \sum_{i \in S_{t-1}} \vx(i) \cdot \inner{\vv_i \vv_i^\top}{\mZ^{-1}_t} \bigg) - \frac{1+5\eps}{k} \bigg( d - \sum_{i \in S_{t-1}} \vx(i) \cdot \inner{\vv_i \vv_i^\top}{\mZ^{-1}_t} \bigg) 
\\
& \geq \frac{1}{k} \left( (1-4\eps) \cdot \tr(\mZ_t^{-1}) - (1+5\eps) \cdot d \right) 
\\
& \geq \frac{1}{k} \bigg( (1-4\eps) \cdot \frac{d}{\det(\mZ_t)^{\frac{1}{d}}} - (1+5\eps) \cdot d \bigg) 
\\
& \geq \bigg(\frac{1-4\eps}{\lambda} - (1+5\eps) \bigg) \cdot \frac{d}{k},
\end{align*}
where the second last inequality follows from Lemma~\ref{l:det}, 
and the last inequality is by the assumption that 
$\max_t \det(\mZ_t)^{\frac{1}{d}} \leq \lambda$. 
The lemma follows by summing over all $1 \leq t \leq \tau$.
\end{proof}

\subsubsection{Martingale Concentration Argument} \label{ss:progress-whp-D}

Here we show that the total progress is concentrated around the expectation.
The proof uses the minimum eigenvalue assumption 
and the short vector condition from Lemma~\ref{l:optimality-D} 
to bound the variance of the random process.

\begin{lemma} \label{l:progress-whp-D}
Suppose $\mZ_t \succeq \frac14 \mI$ and $\norm{\vv_{i_t}}_2^2 \leq \eps$ and $\norm{\vv_{j_t}}_2^2 \leq \eps$ for $\eps \leq \frac{1}{100}$ for all $1 \leq t \leq \tau$. 
Then, for any $\eta > 0$,
\[
\Pr \left[ \sum_{t=1}^\tau \Gamma_t \leq \sum_{t=1}^\tau \E[\Gamma_t \mid S_{t-1}] - \eta \right] 
\leq \exp\Big( -\Omega\Big(\frac{\eta^2 k}{\eps \tau d^{1.5} + \eps \eta k} \Big)\Big).
\]
\end{lemma}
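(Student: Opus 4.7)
The plan is to apply Freedman's inequality (Theorem~\ref{t:Freedman}) to the martingale $Y_\tau := \sum_{t=1}^\tau X_t$ with increments $X_t := \E[\Gamma_t \mid S_{t-1}] - \Gamma_t$, so that the event $\sum_{t=1}^\tau \Gamma_t \leq \sum_{t=1}^\tau \E[\Gamma_t \mid S_{t-1}] - \eta$ coincides with $Y_\tau \geq \eta$.

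The crucial preliminary observation is that the combination of the two hypotheses $\mZ_t \succeq \tfrac{1}{4}\mI$ and $\|\vv_{i_t}\|_2^2, \|\vv_{j_t}\|_2^2 \leq \eps$ yields the \emph{pointwise} bounds $g_t = \inner{\vv_{j_t}\vv_{j_t}^\top}{\mZ_t^{-1}} \leq 4\|\vv_{j_t}\|_2^2 \leq 4\eps$ and, analogously, $l_t \leq 4\eps$. Consequently $|\Gamma_t| = O(\eps)$ almost surely, so the martingale differences satisfy $|X_t| \leq |\Gamma_t| + |\E[\Gamma_t \mid S_{t-1}]| \leq R$ with $R = O(\eps)$, which supplies the deterministic bound required by Freedman's inequality.

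For the predictable quadratic variation $W_\tau = \sum_{t=1}^\tau \E[X_t^2 \mid S_{t-1}]$, I use the same pointwise estimates to write $\Gamma_t^2 \leq O(\eps)(g_t + l_t)$, reducing the task to upper-bounding $\E[g_t + l_t \mid S_{t-1}]$. Expanding the sampling distributions in the exchange subroutine exactly as in the proof of Lemma~\ref{l:D-exp-progress}, the ``uniform part'' contributes $\tfrac{1}{k}\tr(\mZ_t^{-1}) \leq \tfrac{4d}{k}$, while the ``reweighting part'' involving $2\alpha\inner{\vv\vv^\top}{\mA_t^{1/2}}$ is controlled by $\tr(\mA_t^{1/2}) \leq \sqrt{d}$ from Lemma~\ref{l:tr}, the short vector condition from Lemma~\ref{l:optimality-D}, and the choice $\alpha = 8\sqrt{d}$. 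A straightforward accounting of these ingredients gives $\E[g_t + l_t \mid S_{t-1}] = O(d^{1.5}/k)$, hence $W_\tau \leq O(\eps\tau d^{1.5}/k) =: \sigma^2$.

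Finally, plugging $R = O(\eps)$ and $\sigma^2 = O(\eps\tau d^{1.5}/k)$ into Theorem~\ref{t:Freedman} yields
\[
\Pr[Y_\tau \geq \eta] ~\leq~ \exp\!\left( -\frac{\eta^2/2}{\sigma^2 + R\eta/3} \right) ~=~ \exp\!\left( -\Omega\!\left( \frac{\eta^2 k}{\eps\tau d^{1.5} + \eps\eta k} \right) \right),
\]
matching the stated bound. The main technical point to get right is that both the pointwise increment bound and the variance bound require the \emph{combination} of the minimum-eigenvalue condition (used to invert $\mZ_t$) and the optimality-based short vector condition of Lemma~\ref{l:optimality-D} (used to pass from $\inner{\vv\vv^\top}{\mZ_t^{-1}}$ to the $O(\eps)$ pointwise factor). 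Neither hypothesis alone would supply the extra $O(\eps)$ factor in front of the variance that ultimately drives the exponential tail, which is exactly what differentiates this lemma from the analogous concentration bound for E-design in Lemma~\ref{l:spec-exp-whp-t}.
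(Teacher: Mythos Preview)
Your proposal is correct and follows essentially the same approach as the paper: set up the martingale $Y_t = \sum_{l\le t}(\E[\Gamma_l\mid S_{l-1}]-\Gamma_l)$, obtain the pointwise bound $R=O(\eps)$ from $\mZ_t\succeq\tfrac14\mI$ together with the short-vector hypothesis, bound the predictable quadratic variation by $O(\eps\tau d^{1.5}/k)$ via the sampling distribution and the same two hypotheses, and plug into Freedman's inequality. The only cosmetic difference is that the paper controls the reweighting factor $2\alpha\langle\vv_j\vv_j^\top,\mA_t^{1/2}\rangle$ using $\mA_t^{1/2}\preceq\mI$ (from $\tr(\mA_t)=1$) rather than $\tr(\mA_t^{1/2})\le\sqrt{d}$, but either bookkeeping gives the same $O(d^{1.5}/k)$ bound on $\E[g_t\mid S_{t-1}]$.
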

\begin{proof}
We define two sequences of random variables $\{X_t\}_t$ and $\{Y_t\}_t$, where $X_t := \E[\Gamma_t \mid S_{t-1}] - \Gamma_t$ and $Y_t := \sum_{l=1}^t X_l$.
It is easy to check that $\{Y_t\}_t$ is a martingale with respect to $\{S_t\}_t$.
We will use Freedman's inequality to bound $\Pr(Y_\tau \geq \eta)$.

To apply Freedman's inequality, we need to upper bound $X_t$ and $E[X_t^2 \mid S_{t-1}]$.
Note that
\[
0 \leq g_t = \inner{\vv_{j_t} \vv_{j_t}^\top}{\mZ^{-1}_t} \leq 4\eps \quad \text{and} \quad 0 \leq l_t = \inner{\vv_{i_t} \vv_{i_t}^\top}{\mZ_t^{-1}} \leq 4\eps
\]
by our assumptions that $\mZ_t \succcurlyeq \frac14 \mI$ and $\norm{\vv_{i_t}}_2^2 \leq \eps$ and $\norm{\vv_{j_t}}_2^2 \leq \eps$ for $1 \leq t \leq \tau$.
These imply that
\[
X_t = \E[\Gamma_t \mid S_{t-1}] - \Gamma_t
\leq (1-4\eps) \cdot \E[g_t \mid S_{t-1}] + (1+5\eps) \cdot l_t \leq (2+\eps) \cdot 4\eps \leq 10\eps,
\]
where the last inequality holds for $\eps \leq \frac{1}{2}$.

To upper bound $\E[X_t^2 \mid S_{t-1}]$, 
we first upper bound $\E[g_t \mid S_{t-1}]$ and $\E[l_t \mid S_{t-1}]$.
Note that
\begin{eqnarray*}
    \E[g_t \mid S_{t-1}] 
& = & \sum_{j \in [n] \backslash S_{t-1}} \frac{\vx(j)}{k} \cdot \left(1+ 2\alpha \inner{\vv_j \vv_j^\top}{\mA_t^{\frac12}}\right) \cdot \inner{\vv_j \vv_j^\top}{\mZ_t^{-1}} 
\\
& \leq & 
\frac{1+16\eps \sqrt{d}}{k} \cdot \sum_{j \in [n] \backslash S_{t-1}} \vx(j) \cdot \inner{\vv_j \vv_j^\top}{\mZ_t^{-1}} 
\\
& \leq & \frac{1+16\eps \sqrt{d}}{k} \cdot \tr(\mZ_t^{-1})
\\
& \leq & \frac{4d + 64\eps d^{1.5}}{k},
\end{eqnarray*}
where the first inequality holds as $\alpha = 8\sqrt{d}$, $\mA_t \preccurlyeq \mI$ and $\norm{\vv_j}_2^2 \leq \eps$ for $j \in [n]\setminus S_{t-1}$ with $\vx(j)>0$, 
the second inequality follows as $\sum_{i=1}^n \vx(i) \cdot \vv_i \vv_i^\top = \mI$, 
and the last inequality follows from the assumption that $\mZ_t \succcurlyeq \frac14 \mI$. 
Note also that $\E[l_t \mid S_{t-1}] \leq \frac{d}{k}$ from~\eqref{eq:loss_exp} in Lemma~\ref{l:D-exp-progress}. 
So, we can upper bound $\E[X_t^2 \mid S_{t-1}]$ by
\begin{align*} 
\E[X^2_t \mid S_{t-1}] 
\leq 10\eps  \E[|X_t| \mid S_{t-1}]
\leq 20\eps  \Big((1-4\eps) \E[g_t \mid S_{t-1}] + (1+5\eps) \E[l_t \mid S_{t-1}]\Big)
\leq O\left( \frac{\eps d^{1.5}}{k} \right),
\end{align*}
where the first inequality is by the upper bound on $X_t$, 
and the last inequality is by the loose bound that $\E[g_t \mid S_{t-1}] \leq O\left( \frac{d^{1.5}}{k} \right)$.
Therefore, $\sum_{t=1}^\tau \E[X^2_t \mid S_{t-1}] \leq  O\left(\frac{\eps \tau d^{1.5}}{k}\right)$.

Finally, we can apply Freedman's martingale inequality (Theorem~\ref{t:Freedman}) with $R=10\eps$ and $\sigma^2 = O\left(\frac{\eps \tau d^{1.5}}{k}\right)$ to 
conclude that
\[
\Pr(Y_\tau \geq \eta)
\leq \exp\Big(-\frac{\eta^2/2}{\sigma^2 + R\eta/3}\Big)
= \exp\Big( -\Omega\Big(\frac{\eta^2 k}{\eps \tau d^{1.5} + \eps \eta k} \Big)\Big).
\]
The lemma follows by noting that $Y_\tau \geq \eta$ is equivalent to 
$\sum_{t=1}^\tau \Gamma_t \leq \sum_{t=1}^\tau \E[\Gamma_t \mid S_{t-1}] - \eta$. 
\end{proof}

\subsubsection{Proof of Theorem~\ref{t:termination} for D-design} \label{ss:D-terminate}

We are ready to prove Theorem~\ref{t:termination} for D-design.
Let $\tau = \frac{2k}{\eps}$. 
Suppose the second phase of the algorithm has not terminated by time $\tau$. 
Then $\lambda = \max_{1 \leq t \leq \tau+1} \det(Z_t)^{\frac{1}{d}} < 1-10\eps$.
Thus, Lemma~\ref{l:D-exp-progress} implies that 
\begin{align*} %\label{eq:exp-progress-D}
\sum_{t=1}^{\tau} \E \left[ \Gamma_t \mid S_{t-1}\right] \geq \Big(\frac{1-4\eps}{\lambda} - (1+5\eps) \Big) \cdot \frac{d\tau}{k}
\geq \frac{\eps d \tau}{k}
= 2d.
\end{align*}
On the other hand, the initial solution of the second phase satisfies $\mZ_1 \succcurlyeq \frac34 \mI$, which implies that $\det(\mZ_1) \geq \left(\frac34\right)^{d}$. 
As the knapsack constraints satisfy $b_j \geq \frac{d\norm{\vc_j}_{\infty}}{\eps}$ for $j \in [m]$, we know from Lemma~\ref{l:optimality-D} that $\norm{\vv_i}_2^2 \leq \eps$ for each $i$ with $0 < \vx(i) < 1$.
Note that, in the randomized exchange algorithm, all $i_t$ and $j_t$ satisfy $0 < \vx(i_t) < 1$ and $0 < \vx(j_t) < 1$ by \Cref{obs:prob}.
Together with the assumption that $\mZ_t \succeq \frac{1}{4}\mI$ for all $1 \leq t \leq \tau$, we have $\inner{\vv_{j_t} \vv_{j_t}^\top}{\mZ_t^{-1}} \leq 4\eps$ and $\inner{\vv_{i_t} \vv_{i_t}^\top}{\mZ_t^{-1}} \leq 4\eps$ for all $1 \leq t \leq \tau$. 
Hence, we can apply~\eqref{eq:det-expo-bound} to deduce that
\[
1 > \det(\mZ_{\tau+1}) \geq \det(\mZ_1) \cdot \exp\bigg( \sum_{t=1}^{\tau} \Gamma_t \bigg) \geq \Big(\frac34\Big)^d \exp\bigg( \sum_{t=1}^{\tau} \Gamma_t \bigg) \quad \Longrightarrow \quad \sum_{t=1}^{\tau} \Gamma_t \leq d \cdot \ln \frac43 \leq d.
\]
Therefore, we can apply Lemma~\ref{l:progress-whp-D} with $\eta=d$ and $\tau=\frac{2k}{\eps}$ to conclude that
\begin{equation*}
\begin{aligned}
\Pr\left[ \max_{1 \leq t \leq \tau+1} \det(\mZ_t)^{\frac{1}{d}} < 1-10\eps \right]
& \leq \Pr\left[ \sum_{t=1}^{\tau} \Gamma_t < \left( \sum_{t=1}^{\tau} \E \left[ \Gamma_t \mid S_{t-1}\right] \right) - d \right]
\\
& \leq \exp\Bigg( -\Omega\Bigg(\frac{d^2 k}{\eps \Big(\frac{2k}{\eps}\Big) d^{1.5} + \eps d k} \Bigg)\Bigg)
\\
& \leq \exp(-\Omega(\sqrt{d})).
\end{aligned}
\end{equation*}

\subsubsection{Optimality Condition of the Convex Program for D-Design} \label{ss:optimality-D}

The following lemma uses the assumption about the budgets to prove that all vectors with fractional value are short.

\begin{lemma} \label{l:optimality-D}
Let $\vx \in [0,1]^n$ be an optimal fractional solution of the convex programming relaxation~\eqref{eq:convex} for D-design.
Let $\mX = \sum_{i=1}^n \vx(i) \cdot \vu_i \vu_i^\top$,
and $\vv_i = \mX^{-\frac12} \vu_i$ for $1 \leq i \leq n$.
Suppose $b_j \geq \frac{d\norm{c_j}_{\infty}}{\eps}$ for $1 \leq j \leq m$.
Then $\norm{\vv_i}_2^2 \leq \eps$ for each $1 \leq i \leq n$ with $0 < \vx(i) < 1$. 
\end{lemma}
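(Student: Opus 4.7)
The plan is to apply KKT/Lagrangian optimality conditions to the convex program~\eqref{eq:convex} for D-design. Writing the problem as minimizing $-\log\det(\sum_i \vx(i) \vu_i \vu_i^\top)$ subject to the $m$ knapsack constraints and the box constraints $0 \le \vx(i) \le 1$, I would introduce multipliers $\mu_j \ge 0$ for $\inner{\vc_j}{\vx} \le b_j$, and $\alpha_i,\beta_i \ge 0$ for the lower and upper box constraints respectively. The key identity is that the gradient of $\log\det(\mX)$ with respect to $\vx(i)$ equals $\tr(\mX^{-1}\vu_i\vu_i^\top) = \vu_i^\top \mX^{-1} \vu_i = \norm{\vv_i}_2^2$.

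The stationarity condition at the optimum therefore reads
\[
\norm{\vv_i}_2^2 \;=\; \sum_{j=1}^m \mu_j \vc_j(i) \;+\; \beta_i \;-\; \alpha_i \qquad \text{for each } i \in [n].
\]
For any $i$ with $0 < \vx(i) < 1$, complementary slackness forces $\alpha_i = \beta_i = 0$, so $\norm{\vv_i}_2^2 = \sum_{j} \mu_j \vc_j(i) \le \sum_j \mu_j \norm{\vc_j}_\infty$. Thus it remains to prove the \emph{global} bound $\sum_j \mu_j \norm{\vc_j}_\infty \le \eps$.

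To do so, I would take the stationarity identity, multiply by $\vx(i)$ and sum over $i$. On the left this gives $\sum_i \vx(i) \norm{\vv_i}_2^2 = \tr\!\left(\mX^{-1}\sum_i \vx(i)\vu_i\vu_i^\top\right) = \tr(\mX^{-1}\mX) = d$. On the right, the term $\sum_i \vx(i) \alpha_i$ vanishes by complementary slackness ($\alpha_i \vx(i)=0$), the term $\sum_i \vx(i) \beta_i$ is nonnegative, and $\sum_i \vx(i) \sum_j \mu_j \vc_j(i) = \sum_j \mu_j \inner{\vc_j}{\vx} = \sum_j \mu_j b_j$ by complementary slackness on the knapsacks ($\mu_j(b_j-\inner{\vc_j}{\vx})=0$). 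Combining,
\[
d \;=\; \sum_j \mu_j b_j \;+\; \sum_i \vx(i) \beta_i \;\ge\; \sum_j \mu_j b_j.
\]
Invoking the assumption $b_j \ge d\norm{\vc_j}_\infty/\eps$ gives $\sum_j \mu_j \norm{\vc_j}_\infty \le \eps \sum_j \mu_j b_j / d \le \eps$, which finishes the proof.

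The only real obstacle is one of rigor rather than ideas: I must justify strong duality / KKT for this convex program (the objective is concave and smooth on the relative interior where $\mX \succ 0$, and the constraints are affine, so Slater's condition holds whenever the feasible region has a point with $\mX \succ 0$, which can be assumed WLOG by restricting to the span of $\{\vu_i\}$). Once KKT is in hand, the computation above is short, and the budget hypothesis plugs in at the very last step to convert a bound on $\sum_j \mu_j b_j$ into the desired bound on $\sum_j \mu_j \norm{\vc_j}_\infty$.
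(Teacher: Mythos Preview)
Your proposal is correct and follows essentially the same approach as the paper: both derive the KKT/stationarity condition $\norm{\vv_i}_2^2 = \sum_j \mu_j \vc_j(i) + \beta_i^+ - \beta_i^-$, use complementary slackness to kill the box multipliers when $0<\vx(i)<1$, and reduce the problem to showing $\sum_j \mu_j b_j \le d$. The only cosmetic difference is how that last inequality is obtained: the paper writes out the full Lagrangian dual and invokes strong duality (primal value $=$ dual value) to read off $\sum_j \mu_j b_j + \sum_i \beta_i^+ = d$, whereas you stay on the primal side and multiply the stationarity identity by $\vx(i)$, sum over $i$, and use complementary slackness on both the box and knapsack constraints. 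These are two equivalent ways of packaging the same computation, and your version is arguably a touch more direct since it avoids explicitly solving for the dual program.
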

\begin{proof}
We rewrite the convex programming relaxation~\eqref{eq:convex} for D-design as follows:
\begin{equation*}
    \begin{aligned}
        & \underset{\vx \in \R^d, \mX \in \R^{d \times d}}{\rm maximize} & & \log \det\left( \mX \right) \\
        & \text{\rm subject to} & & \mX = \sum_{i=1}^n \vx(i) \cdot \vu_i \vu_i^\top, \\
        & & & \inner{\vc_j}{\vx} \leq b_j, \quad ~\forall j \in [m], \\
        & & & 0 \leq \vx(i) \leq 1, \quad \forall i \in [n].
    \end{aligned} 
\end{equation*}
We will use a dual characterization to investigate the length of the vectors. 
We introduce a dual variable $\mY$ for the first equality constraint, 
a dual variable $\mu_j \geq 0$ for each of the budget constraint $b_j - \inner{\vc_j}{\vx} \geq 0$, 
a dual variable $\beta^-_i\geq 0$ for each non-negative constraint $\vx(i)\geq 0$, and a dual variable $\beta^+_i \geq 0$ for each capacity constraint $1 - \vx(i) \geq 0$.

The Lagrange function $L(\vx, \mX, \mY, \mu, \beta^+, \beta^-)$ is defined as
\begin{eqnarray*}
& & \log \det(\mX) + \bigg\langle \mY, \sum_{i=1}^n \vx(i) \cdot \vu_i \vu_i^\top - \mX \bigg\rangle  + \sum_{j=1}^m \mu_j \Big(b_j - \inner{\vc_j}{\vx}\Big) + \sum_{i=1}^n \beta^-_i \vx(i) + \sum_{i=1}^n \beta^+_i (1-\vx(i))
\\
& = & \log \det(\mX) - \inner{\mY}{\mX} + \sum_{j=1}^m \mu_j b_j  + \sum_{i=1}^n \beta^+_i 
 + \sum_{i = 1}^n \vx(i) \cdot \bigg(\inner{\mY}{\vu_i \vu_i^\top} - \sum_{j=1}^m \mu_j \vc_j(i) + \beta^-_i - \beta^+_i \bigg).
\end{eqnarray*}
The Lagrangian dual program is $\min_{\mY, \mu \geq 0, \beta^+ \geq 0, \beta^- \geq 0} \max_{\vx, \mX} L(\vx,\mX,\mY,\mu,\beta^+,\beta^-)$.
Note that we can assume that $\mY \succeq 0$, as otherwise the inner maximization problem is unbounded above.
Given $\mY \succeq 0, \mu \geq 0, \beta^+ \geq 0, \beta^- \geq 0$, 
the maximizers $\vx,\mX$ of the Lagrange function satisfy the optimality conditions that
\[
\nabla_{\mX} L = \mX^{-1} - \mY = 0 \quad \text{and} \quad \nabla_{\vx(i)} L = \inner{\mY}{\vu_i \vu_i^\top} - \sum_{j=1}^m \mu_j \vc_j(i) + \beta^-_i - \beta^+_i = 0.
\]
Therefore, the Lagrangian dual program can be written as
\begin{equation*}
    \begin{aligned}
        & \underset{\substack{\mY \succeq 0, \mu \geq 0, \\ \beta^+ \geq 0, \beta^- \geq 0}}{\rm minimize} & & \log \det\left( \mY^{-1} \right) - d + \sum_{j=1}^m \mu_j b_j + \sum_{i=1}^n \beta^+_i\\
        & \text{\rm subject to} & & \inner{\mY}{\vu_i \vu_i^\top} = \sum_{j=1}^m \mu_j \vc_j(i) - \beta^-_i + \beta^+_i, \quad \forall i \in [n].
    \end{aligned} 
\end{equation*}
It is easy to verify that $\vx = \delta \vec{1}$ is a strictly feasible solution of the primal program for a small enough $\delta$. 
So, Slater's condition implies that strong duality holds. 
Let $\vx,\mX$ be an optimal solution for the primal program,
and $\mY,\mu,\beta^+,\beta^-$ be an optimal solution for the dual program.
The Lagrangian optimality condition implies that $\mY = \mX^{-1}$,
and it follows that
\begin{align*}
\log \det(\mX) = \log \det(\mX) - d + \sum_{j=1}^m \mu_j b_j + \sum_{i=1}^n \beta^+_i ~\implies~ \sum_{j=1}^m \mu_j b_j \leq d 
 ~\implies~ \sum_{j=1}^m \mu_j \norm{\vc_j}_{\infty} \leq \eps,
\end{align*}
where the last implication follows by the assumption $b_j \geq \frac{d \norm{\vc_j}_{\infty}}{\eps}$ for each $j\in [m]$.

Finally, by the complementary slackness conditions, we have $\beta^-_i \cdot \vx(i) = 0$ and $\beta^+_i \cdot (1-\vx(i)) = 0$ for each $i \in [n]$. 
Therefore, for each $i$ with $0 < \vx(i) < 1$, we must have $\beta^+_i = \beta^-_i = 0$, which implies that
\[
\sum_{j=1}^m \mu_j \vc_j(i) = \inner{\mY}{\vu_i \vu_i^\top} = \inner{\mX^{-1}}{\vu_i \vu_i^\top} = \|\vv_i\|_2^2 \qquad \Longrightarrow \qquad \|\vv_i\|_2^2 \leq \sum_{j=1}^m \mu_j \norm{\vc_j}_{\infty} \leq \eps. \qedhere
\]
\end{proof}

\subsection{Analysis of the A-Design Objective} \label{ss:A-knapsack}

We will prove Theorem~\ref{t:termination} for A-design in this subsection.
Let $\tau_1$ be the start time of the second phase.
For ease of notation, we simply reset $\tau_1=1$, as the first time step in the second phase.
By assumption, $\lambda_{\min}(\mZ_1) \geq \frac{3}{4}$ and $\lambda_{\min}(\mZ_t) \geq \frac{1}{4}$ for all $t \geq 1$,
which will be crucial in the analysis.

To analyze the A-design objective $\tr\big( \big(\sum_{i \in S_{t-1}} \vu_i \vu_i^\top \big)^{-1} \big)$, we analyze the equivalent quantity $\inner{\mX^{-1}}{\mZ_t^{-1}}$ after the linear transformation $\vv_i = \mX^{-\frac12} \vu_i$ as shown in~\eqref{eq:A-obj-transf}. 
By Lemma~\ref{l:trace-rank2}, if $2\inner{\vv_{i_t} \vv_{i_t}^\top}{\mZ_t^{-1}} \leq \frac12$, then the change of the objective value is bounded by
\begin{align*}
\inner{\mX^{-1}}{\mZ_{t+1}^{-1}} & = \inner{\mX^{-1}}{(\mZ_t - \vv_{i_t} \vv_{i_t}^\top + \vv_{j_t} \vv_{j_t}^\top)^{-1}} \\
& \leq \inner{\mX^{-1}}{\mZ_t^{-1}} + \frac{\inner{\mX^{-1}}{\mZ_t^{-1} \vv_{i_t} \vv_{i_t}^\top \mZ_t^{-1}}}{1 - 2\inner{\vv_{i_t} \vv_{i_t}^\top}{\mZ^{-1}_t}} 
- \frac{\inner{\mX^{-1}}{\mZ_t^{-1} \vv_{j_t} \vv_{j_t}^\top \mZ_t^{-1}}}{1 + 2\inner{\vv_{j_t} \vv_{j_t}^\top}{\mZ^{-1}_t}}.
\end{align*}
In Section~\ref{ss:comb-A}, we show in Lemma~\ref{l:gain-A} and Lemma~\ref{l:loss-A} that if we sample $i_t$ and $j_t$ from the distributions 
\[
\Pr[ i_t = i ] \propto \big(1-\vx(i)\big) \cdot \left(1-2\inner{\vv_i \vv_i^\top}{\mZ_t^{-1}}\right) 
\quad {\rm and} \quad 
\Pr[j_t = j] \propto \vx(j) \cdot \left(1+2\inner{\vv_j \vv_j^\top}{\mZ_t^{-1}}\right),
\] 
then the objective value will improve in expectation when the current objective value is far from optimal.
In the randomized exchange algorithm, however, we sample $i_t$ and $j_t$ from the E-design distributions.
An important observation is that the quantities in these two distributions can be related when the minimum eigenvalue assumption holds.
The following lemma will be proved in Section~\ref{ss:exp-progress-A}.

\begin{lemma} \label{l:E-A}
If $\mZ_t \succcurlyeq \frac14 \mI$, then
$\inner{\vv_i \vv_i^\top}{\mZ_t^{-1}} \leq \alpha \cdot \inner{\vv_i \vv_i^\top}{\mA_t^{\frac12}} \leq \alpha \lambda_{\min}(\mZ_t) \cdot \inner{\vv_i \vv_i^\top}{\mZ_t^{-1}}$ for $1 \leq i \leq n$.
\end{lemma}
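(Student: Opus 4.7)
The plan is to reduce both inequalities to scalar inequalities in the common eigenbasis of $\mZ_t$ and $\mA_t$. The key observation is that $\mA_t = (\alpha \mZ_t - l_t \mI)^{-2}$ is a polynomial in $\mZ_t$, so $\mZ_t$, $\mA_t$, and $\mA_t^{1/2} = (\alpha \mZ_t - l_t \mI)^{-1}$ can all be simultaneously diagonalized (here I take the sign convention $\alpha \mZ_t - l_t \mI \succ 0$ so that $\mA_t^{1/2}$ is the PSD square root). Since both sides of each claimed inequality are quadratic forms of the form $\vv^\top M \vv$, it suffices to establish the operator inequalities $\mZ_t^{-1} \preceq \alpha (\alpha \mZ_t - l_t \mI)^{-1}$ and $\alpha (\alpha \mZ_t - l_t \mI)^{-1} \preceq \alpha \lambda_{\min}(\mZ_t) \mZ_t^{-1}$, and then apply them to each input vector $\vv_i$.

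In the common eigenbasis, these two operator inequalities become, for every eigenvalue $\lambda_i$ of $\mZ_t$,
\[
\frac{1}{\lambda_i} \leq \frac{\alpha}{\alpha \lambda_i - l_t}
\quad \text{and} \quad
\frac{\alpha}{\alpha \lambda_i - l_t} \leq \frac{\alpha \lambda_{\min}(\mZ_t)}{\lambda_i}.
\]
A short algebraic manipulation shows that the first is equivalent to $l_t \geq 0$, and the second is equivalent to $\lambda_i(\alpha \lambda_{\min}(\mZ_t) - 1) \geq l_t \lambda_{\min}(\mZ_t)$, which (using $\lambda_i \geq \lambda_{\min}(\mZ_t)$) follows from $l_t \leq \alpha \lambda_{\min}(\mZ_t) - 1$, provided $\alpha \lambda_{\min}(\mZ_t) \geq 1$. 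Since $\alpha = 8\sqrt{d}$ and $\mZ_t \succeq \tfrac14 \mI$, we have $\alpha \lambda_{\min}(\mZ_t) \geq 2\sqrt{d} \geq 1$, so this hypothesis is harmless.

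Thus the lemma reduces to verifying the two-sided sandwich $0 \leq l_t \leq \alpha \lambda_{\min}(\mZ_t) - 1$, which I would read off from the normalization $\tr(\mA_t) = 1$. For the upper bound, $1 = \tr(\mA_t) \geq (\alpha \lambda_{\min}(\mZ_t) - l_t)^{-2}$, so $\alpha \lambda_{\min}(\mZ_t) - l_t \geq 1$ and hence $l_t \leq \alpha \lambda_{\min}(\mZ_t) - 1$. For the lower bound, plugging $l_t = 0$ gives $\tr((\alpha \mZ_t)^{-2}) \leq d/(\alpha \lambda_{\min}(\mZ_t))^2 \leq d/(4d) = 1/4 < 1$, so to reach $\tr(\mA_t) = 1$ we must shift $l_t$ upward into the positive range (while staying below $\alpha \lambda_{\min}(\mZ_t)$ to preserve positivity of $\mA_t$), giving $l_t > 0$.

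There is no real obstacle here; the only delicate point is keeping the sign convention for $l_t$ consistent so that $\mA_t^{1/2} = (\alpha \mZ_t - l_t \mI)^{-1}$ makes sense as a PSD matrix, and using the hypothesis $\mZ_t \succeq \tfrac14 \mI$ in tandem with $\alpha = 8\sqrt{d}$ exactly where it is needed (to force $l_t \geq 0$ and $\alpha \lambda_{\min}(\mZ_t) \geq 1$). Once the bracket $0 \leq l_t \leq \alpha \lambda_{\min}(\mZ_t) - 1$ is established, both inequalities of the lemma fall out by the eigenvalue-wise comparison above.
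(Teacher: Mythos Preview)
Your proposal is correct and follows essentially the same approach as the paper: both arguments reduce to the two scalar facts $l_t \geq 0$ and $\alpha\lambda_{\min}(\mZ_t) - l_t \geq 1$, derived from the normalization $\tr(\mA_t)=1$ together with $\alpha=8\sqrt{d}$ and $\mZ_t \succcurlyeq \tfrac14 \mI$. The only cosmetic differences are that the paper obtains $l_t \geq 0$ from the upper bound $\alpha\lambda_{\min}(\mZ_t)-l_t \leq \sqrt{d}$ (rather than your monotonicity argument), and handles the second inequality as a ratio bound $\max_j \frac{\lambda_j}{\alpha\lambda_j - l_t} \leq \lambda_{\min}(\mZ_t)$ rather than as the operator inequality you state; both packagings are equivalent here since all the matrices commute.
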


In the exchange subroutine of the randomized exchange algorithm, only those $i_t$ with $2\alpha \inner{\vv_{i_t} \vv_{i_t}^\top}{\mA_t^{\frac12}} \!\leq\! \frac{1}{2}$ are sampled.
So, when the minimum eigenvalue assumption holds,
Lemma~\ref{l:E-A} implies that the randomized exchange algorithm only samples $i_t$ that satisfies $2\inner{\vv_{i_t} \vv_{i_t}^\top}{\mZ_t^{-1}} \leq \frac12$.
Therefore, we can apply Lemma~\ref{l:trace-rank2} repeatedly to obtain that for any $\tau \geq 1$,
\begin{align} \label{eq:trace-bound}
    \inner{\mX^{-1}}{\mZ_{\tau+1}^{-1}} \leq \inner{\mX^{-1}}{\mZ_1^{-1}} - \sum_{t = 1}^\tau \bigg( \frac{\inner{\mX^{-1}}{\mZ_t^{-1} \vv_{j_t} \vv_{j_t}^\top \mZ_t^{-1}}}{1 + 2\inner{\vv_{j_t} \vv_{j_t}^\top}{\mZ^{-1}_t}} - \frac{\inner{\mX^{-1}}{\mZ_t^{-1} \vv_{i_t} \vv_{i_t}^\top \mZ_t^{-1}}}{1 - 2\inner{\vv_{i_t} \vv_{i_t}^\top}{\mZ^{-1}_t}} \bigg).
\end{align}

As in Section~\ref{ss:D-knapsack}, we define gain $g_t$, loss $l_t$ and progress $\Gamma_t$ in the $t$-th iteration as follows
\[
g_t := \frac{\inner{\mX^{-1}}{\mZ_t^{-1} \vv_{j_t} \vv_{j_t}^\top \mZ_t^{-1}}}{1 + 2\inner{\vv_{j_t} \vv_{j_t}^\top}{\mZ^{-1}_t}}, \quad l_t := \frac{\inner{\mX^{-1}}{\mZ_t^{-1} \vv_{i_t} \vv_{i_t}^\top \mZ_t^{-1}}}{1 - 2\inner{\vv_{i_t} \vv_{i_t}^\top}{\mZ^{-1}_t}}, \quad \text{and} \quad \Gamma_t := g_t - l_t.
\]
In Section~\ref{ss:exp-progress-A}, we will prove Lemma~\ref{l:E-A}, and use it to prove that the expected progress is large if the current objective value is far from optimal. 
Then, in Section~\ref{ss:progress-whp-A}, we will prove that the total progress is concentrated around its expectation, while the minimum eigenvalue assumption and the optimality condition of the convex programming relaxation are crucial in the martingale concentration argument. 
Finally, we complete the proof of Theorem~\ref{t:termination} for A-design in Section~\ref{ss:A-terminate}, 
and present the proof of the optimality condition in Section~\ref{ss:optimality-A}.

\subsubsection{Expected Improvement of the A-Design Objective} \label{ss:exp-progress-A}

We first prove Lemma~\ref{l:E-A}, which will also be needed in bounding the expectation.

\begin{proofof}{Lemma~\ref{l:E-A}}
Recall that $\mA_t = (\alpha \mZ_t - l_t \mI)^{-2}$ where $l_t$ is the unique value such that $\mA_t \succ 0$ and $\tr(\mA_t) = 1$. 
Since $\mZ_t \succcurlyeq \lambda_{\min}(\mZ_t) \cdot \mI$, it follows that
\[
1 = \tr(\mA_t) \leq \left(\alpha \lambda_{\min}(\mZ_t) - l_t\right)^{-2} \cdot \tr(\mI) 
\quad \Longrightarrow \quad 
\alpha \lambda_{\min}(\mZ_t) - l_t \leq \sqrt{d} 
\quad \Longrightarrow \quad 
l_t \geq 0,
\]
where the last implication holds as $\alpha = 8\sqrt{d}$ and $\lambda_{\min}(\mZ_t) \geq \frac14$.
This implies that $\mA^{\frac12}_t = (\alpha \mZ_t - l_t \mI)^{-1} \succcurlyeq \alpha^{-1} \mZ^{-1}_t$, proving the first inequality.

For the second inequality, consider the eigen-decomposition of $\mZ_t = \sum_{j =1}^d \lambda_j \vw_j \vw_j^\top$, 
where $0 < \lambda_1 \leq \ldots \leq \lambda_d$ are the eigenvalues and $\{\vw_j\}$ are the corresponding orthonormal eigenvectors.
Then,
\begin{align*}
    \frac{\inner{\vv_i \vv_i^\top}{\mA_t^{\frac12}}}{\inner{\vv_i \vv_i^\top}{\mZ_t^{-1}}} & = \frac{\sum_{j=1}^d \frac{\inner{\vv_i}{\vw_j}^2}{\alpha \lambda_j - l_t}}{\sum_{j=1}^d \frac{\inner{\vv_i}{\vw_j}^2}{\lambda_j}} \leq \max_{j \in [d]} \frac{\lambda_j}{\alpha \lambda_j - l_t} \leq \frac{\lambda_1}{\alpha \lambda_1 - l_t} \leq \lambda_1,
\end{align*}
where the first inequality holds since $\alpha \lambda_j - l_t > 0$ as $\mA_t \succ 0$, 
the second inequality holds as $l_t \geq 0$ and the function $f(x) = \frac{x}{\alpha x - l_t}$ is decreasing for $x \geq \frac{l_t}{\alpha}$ when $l_t \geq 0$,
and the last inequality follows as $1=\tr(\mA_t) \geq (\alpha \lambda_1 - l_t)^{-2}$ which implies $\alpha \lambda_1 - l_t \geq 1$. 
\end{proofof}

The following lemma shows that the expected progress is large if the current objective value is far from optimal.
Note that, in contrast to Section~\ref{ss:D-knapsack} for D-design, the minimum eigenvalue assumption is needed in the proof.

\begin{lemma} \label{l:A-exp-progress}
Let $\lambda \geq 1$.
Let $S_{t-1}$ be the solution set at time $t$ and $\mZ_t = \sum_{i \in S_{t-1}} \vv_i \vv_i^\top$ for $1 \leq t \leq \tau$.
Suppose $\mZ_t \succcurlyeq \frac14 \mI$ and $\inner{\mX^{-1}}{\mZ^{-1}_t} \geq \lambda \cdot \tr\left( \mX^{-1} \right)$ for $1 \leq t \leq \tau$.
Then
\[
\sum_{t=1}^\tau \E[\Gamma_t \mid S_{t-1}] \geq \frac{(\lambda - 1) \tau}{k} \cdot \tr(\mX^{-1}).
\]
\end{lemma}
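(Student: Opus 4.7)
The plan is to compute $\E[g_t \mid S_{t-1}]$ and $\E[l_t \mid S_{t-1}]$ explicitly from the sampling distributions, use Lemma~\ref{l:E-A} to replace the $\mA_t^{\frac12}$ terms by $\mZ_t^{-1}$ terms so that the denominators in $g_t, l_t$ cancel, and then apply Cauchy--Schwarz to obtain a bound in terms of $\inner{\mX^{-1}}{\mZ_t^{-1}}$ and $\tr(\mX^{-1})$.

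First I would handle the loss. Since $\mZ_t \succeq \frac14 \mI$, Lemma~\ref{l:E-A} gives $\alpha \inner{\vv_i \vv_i^\top}{\mA_t^{\frac12}} \geq \inner{\vv_i \vv_i^\top}{\mZ_t^{-1}}$, so the factor $1 - 2\alpha \inner{\vv_i \vv_i^\top}{\mA_t^{\frac12}}$ in $\Pr[i_t = i]$ is at most $1 - 2\inner{\vv_i \vv_i^\top}{\mZ_t^{-1}}$, which cancels the denominator of $l_t$. Together with $1-\vx(i) \leq 1$ and extending the sum from $S'_{t-1}$ to all of $S_{t-1}$ (the extra terms are non-negative), one obtains
\[
\E[l_t \mid S_{t-1}] \leq \frac{1}{k}\sum_{i \in S_{t-1}} (1-\vx(i)) \cdot \inner{\mX^{-1}}{\mZ_t^{-1} \vv_i \vv_i^\top \mZ_t^{-1}} = \frac{1}{k}\big(\inner{\mX^{-1}}{\mZ_t^{-1}} - \inner{\mX^{-1}}{\mZ_t^{-1} \mY_S \mZ_t^{-1}}\big),
\]
where $\mY_S := \sum_{i \in S_{t-1}} \vx(i) \vv_i \vv_i^\top$, using $\sum_{i \in S_{t-1}} \vv_i \vv_i^\top = \mZ_t$. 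A symmetric argument for the gain uses the other direction of Lemma~\ref{l:E-A} to show $\frac{1+2\alpha\inner{\vv_j \vv_j^\top}{\mA_t^{\frac12}}}{1+2\inner{\vv_j \vv_j^\top}{\mZ_t^{-1}}} \geq 1$, and combined with $\sum_{j=1}^n \vx(j) \vv_j \vv_j^\top = \mI$ (the preprocessing identity), this gives
\[
\E[g_t \mid S_{t-1}] \geq \frac{1}{k}\big(\inner{\mX^{-1}}{\mZ_t^{-2}} - \inner{\mX^{-1}}{\mZ_t^{-1} \mY_S \mZ_t^{-1}}\big).
\]

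The $\mY_S$ terms cancel when I subtract, leaving $\E[\Gamma_t \mid S_{t-1}] \geq \frac{1}{k}\big(\inner{\mX^{-1}}{\mZ_t^{-2}} - \inner{\mX^{-1}}{\mZ_t^{-1}}\big)$. To lower bound this I apply Cauchy--Schwarz in the form of inequality~\eqref{eq:inner_upper} of Lemma~\ref{l:trace} with $\mA = \mX^{-1}$ and $\mB = \mZ_t^{-1}$, which, after squaring, yields $\inner{\mX^{-1}}{\mZ_t^{-2}} \geq \frac{\inner{\mX^{-1}}{\mZ_t^{-1}}^2}{\tr(\mX^{-1})}$. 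Combining with the hypothesis $\inner{\mX^{-1}}{\mZ_t^{-1}} \geq \lambda \cdot \tr(\mX^{-1})$,
\[
\inner{\mX^{-1}}{\mZ_t^{-2}} - \inner{\mX^{-1}}{\mZ_t^{-1}} \geq \inner{\mX^{-1}}{\mZ_t^{-1}} \cdot \Big(\frac{\inner{\mX^{-1}}{\mZ_t^{-1}}}{\tr(\mX^{-1})} - 1\Big) \geq (\lambda-1) \cdot \inner{\mX^{-1}}{\mZ_t^{-1}} \geq (\lambda-1) \cdot \tr(\mX^{-1}),
\]
where the last step uses $\lambda \geq 1$ and $\inner{\mX^{-1}}{\mZ_t^{-1}} \geq \tr(\mX^{-1})$. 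Summing over $t = 1, \ldots, \tau$ gives the claim.

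There is no genuine obstacle here; the main care is to keep the signs right when using Lemma~\ref{l:E-A} to remove the $\mA_t^{\frac12}$ terms (the direction of the inequality is opposite for the numerator of $g_t$ and the denominator of $l_t$, but both work in our favor), and to verify that the truncation from $S_{t-1}$ to $S'_{t-1}$ in the loss term only makes the loss larger and hence the bound we need weaker. The role of the minimum-eigenvalue assumption $\mZ_t \succeq \frac14 \mI$ is exclusively to invoke Lemma~\ref{l:E-A}; everything else is an algebraic identity followed by one application of Cauchy--Schwarz.
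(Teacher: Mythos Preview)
Your proposal is correct and follows essentially the same argument as the paper: both use Lemma~\ref{l:E-A} to replace the $\mA_t^{\frac12}$ factors by $\mZ_t^{-1}$ factors so that the ratios collapse, arrive at $\E[\Gamma_t\mid S_{t-1}]\geq \frac{1}{k}(\inner{\mX^{-1}}{\mZ_t^{-2}}-\inner{\mX^{-1}}{\mZ_t^{-1}})$, and then apply inequality~\eqref{eq:inner_upper}. The only cosmetic difference is in the final step: the paper lower bounds $\inner{\mX^{-1}}{\mZ_t^{-1}}\geq \tr(\mX^{-1})$ via the optimality of $\mX$ (using~\eqref{eq:A-obj-transf}), whereas you get it directly from the hypothesis $\inner{\mX^{-1}}{\mZ_t^{-1}}\geq \lambda\,\tr(\mX^{-1})$ with $\lambda\geq 1$, which is cleaner.
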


\begin{proof}
The expected gain of adding vector $\vv_{j_t}$ is
\begin{eqnarray*}
\E[g_t | S_{t-1}] 
& = & \sum_{j \in [n] \backslash S_{t-1}} \frac{\vx(j)}{k} 
\cdot \Big(1+ 2\alpha \inner{\vv_j \vv_j^\top}{\mA_t^{\frac12}}\Big) 
\cdot \frac{\inner{\mX^{-1}}{\mZ_t^{-1} \vv_j \vv_j^\top \mZ_t^{-1}}}{1+2\inner{\vv_j \vv_j^\top}{\mZ_t^{-1}}} 
\\
& \geq & \sum_{j \in [n] \backslash S_{t-1}} \frac{\vx(j)}{k} \cdot \inner{\mX^{-1}}{\mZ_t^{-1} \vv_j \vv_j^\top \mZ_t^{-1}} 
\\
& = & \frac{1}{k} \bigg( \inner{\mX^{-1}}{\mZ_t^{-2}} - \bigg\langle \mX^{-1}, \mZ_t^{-1}\bigg( \sum_{i \in S_{t-1}} \vx(i) \cdot \vv_i \vv_i^\top \bigg) \mZ_t^{-1} \bigg\rangle \bigg),
\end{eqnarray*}
where the inequality follows from Lemma~\ref{l:E-A} and the last equality follows as $\sum_{j=1}^n \vx(j) \cdot \vv_j \vv_j^\top = \mI$.

The expected loss of removing vector $\vv_{i_t}$ is
\begin{eqnarray} 
\E[l_t \mid S_{t-1}] 
& = & \sum_{i \in S'_{t-1}} \frac{1-\vx(i)}{k} 
\cdot \Big(1-2\alpha \inner{\vv_i \vv_i^\top}{\mA_t^{\frac12}}\Big) 
\cdot \frac{\inner{\mX^{-1}}{\mZ_t^{-1} \vv_i \vv_i^\top \mZ_t^{-1}}}{1-2\inner{\vv_i \vv_i^\top}{\mZ_t^{-1}}}  \nonumber
\\
& \leq & \frac{1}{k} \sum_{i \in S'_{t-1}} \big(1-\vx(i)\big) \cdot \inner{\mX^{-1}}{\mZ_t^{-1} \vv_i \vv_i^\top \mZ_t^{-1}} \nonumber
\\
& \leq & \frac{1}{k} \sum_{i \in S_{t-1}} \big(1-\vx(i)\big) \cdot \inner{\mX^{-1}}{\mZ_t^{-1} \vv_i \vv_i^\top \mZ_t^{-1}} \nonumber
\\
& = & \frac{1}{k}\bigg( \inner{\mX^{-1}}{\mZ_t^{-1}} - 
\bigg\langle \mX^{-1}, \mZ_t^{-1} \bigg( \sum_{i \in S_{t-1}} \vx(i) \cdot \vv_i \vv_i^\top \bigg) \mZ_t^{-1} \bigg\rangle \bigg),
\label{eq:loss_exp-A} 
\end{eqnarray}
where the first inequality follows from Lemma~\ref{l:E-A} and $2\alpha \inner{\vv_i \vv_i^\top}{\mA_t^{\frac12}} \leq \frac{1}{2}$ by the definition of $S'_{t-1}$, 
and the last equality holds as $\sum_{i \in S_{t-1}} \vv_i \vv_i^\top = \mZ_t$.

Therefore, the expected progress is
\begin{align*}
\E[\Gamma_t \mid S_{t-1}] = \E[g_t \mid S_{t-1}] - \E[l_t \mid S_{t-1}]& \geq \frac{1}{k}\left( \inner{\mX^{-1}}{\mZ_t^{-2}} - \inner{\mX^{-1}}{\mZ_t^{-1}} \right).
\end{align*}
The term $\inner{\mX^{-1}}{\mZ_t^{-2}}$ can be lower bounded by
\[
\inner{\mX^{-1}}{\mZ_t^{-2}} \geq \frac{\inner{\mX^{-1}}{\mZ_t^{-1}}^2}{\tr(\mX^{-1})} 
\geq \lambda \cdot \inner{\mX^{-1}}{\mZ_t^{-1}},
\]
where the first inequality follows from~\eqref{eq:inner_upper} in Lemma~\ref{l:trace}, and the second inequality follows from our assumption.
This implies that 
\[
\E[\Gamma_t \mid S_{t-1}]  
\geq \frac{\lambda - 1}{k} \cdot \inner{\mX^{-1}}{\mZ_t^{-1}} 
= \frac{\lambda - 1}{k} \cdot \tr\bigg( \bigg(\sum_{i \in S_{t-1}} \vu_i \vu_i^\top \bigg)^{-1} \bigg)  
\geq \frac{\lambda - 1}{k} \cdot \tr(\mX^{-1}),
\]
where the equality is from~\eqref{eq:A-obj-transf} and the last inequality is because $\mX$ is an optimal solution. 
The lemmas follows by summing over $\tau$.
\end{proof}

\subsubsection{Martingale Concentration Argument} \label{ss:progress-whp-A}

Here we prove that the total progress is concentrated around the expectation.
The proof uses the minimum eigenvalue assumption and the optimality condition in Lemma~\ref{l:optimality-A} in Section~\ref{ss:optimality-A} to bound the variance of the random process.

\begin{lemma} \label{l:progress-whp-A}
Suppose $\mZ_t \succeq \frac{1}{4} \mI$ and $\inner{\mX^{-1}}{\vv_{i_t}\vv_{i_t}^\top} \leq \frac{\eps}{d} \cdot \tr(\mX^{-1})$ and $\inner{\mX^{-1}}{\vv_{j_t}\vv_{j_t}^\top} \leq \frac{\eps}{d} \cdot \tr(\mX^{-1})$ for all $1 \leq t \leq \tau$. 
Then, for any $\eta > 0$, 
\[
\Pr \left[ \sum_{t=1}^\tau \Gamma_t \leq \sum_{t=1}^\tau \E[\Gamma_t \mid S_{t-1}] - \eta \right] 
\leq \exp\bigg( -\Omega\bigg(\frac{\eta^2 k d}{\eps \tau \sqrt{d} \cdot \tr(\mX^{-1})^2 + \eps \eta k \cdot \tr(\mX^{-1})} \bigg)\bigg).
\]
\end{lemma}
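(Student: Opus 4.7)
The plan is to follow the Freedman martingale concentration argument used in the proof of Lemma~\ref{l:progress-whp-D} for D-design, adapting it to the more intricate A-design objective. Define $X_t := \E[\Gamma_t \mid S_{t-1}] - \Gamma_t$ and $Y_t := \sum_{l=1}^t X_l$, so that $\{Y_t\}$ is a martingale with respect to $\{S_t\}$. To apply Theorem~\ref{t:Freedman}, I need (i) an almost sure upper bound $X_t \leq R$ and (ii) a cumulative conditional variance bound $\sum_{t=1}^\tau \E[X_t^2 \mid S_{t-1}] \leq \sigma^2$. Matching the target exponent in the statement against Freedman's $-\eta^2/(2(\sigma^2 + R\eta/3))$, the goal is to establish $R = O(\eps \tr(\mX^{-1})/d)$ and $\sigma^2 = O(\eps \tau \tr(\mX^{-1})^2/(k\sqrt{d}))$.

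For the pointwise bound on $X_t$, I control $g_t$ and $l_t$ individually. The restriction $i_t \in S'_{t-1}$ ensures $2\alpha \inner{\vv_{i_t}\vv_{i_t}^\top}{\mA_t^{1/2}} \leq \tfrac12$, and combining with Lemma~\ref{l:E-A} (whose hypothesis is $\mZ_t \succeq \tfrac14\mI$) gives $2\inner{\vv_{i_t}\vv_{i_t}^\top}{\mZ_t^{-1}} \leq \tfrac12$, so the denominator of $l_t$ is at least $\tfrac12$. For the numerators of $g_t$ and $l_t$, the crucial estimate is an inequality of the form
\[
    \inner{\mX^{-1}}{\mZ_t^{-1}\vv\vv^\top \mZ_t^{-1}} \leq C \cdot \inner{\mX^{-1}}{\vv\vv^\top},
\]
which combined with the short-vector hypothesis $\inner{\mX^{-1}}{\vv_{i_t}\vv_{i_t}^\top},\inner{\mX^{-1}}{\vv_{j_t}\vv_{j_t}^\top} \leq \tfrac{\eps}{d}\tr(\mX^{-1})$ yields the required $R = O(\eps \tr(\mX^{-1})/d)$. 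To derive this, I write the numerator as $\|\mX^{-1/2}\mZ_t^{-1}\vv\|^2$ and exploit simultaneously the operator inequality $\mZ_t^{-1} \preceq 4\mI$, the rank-one structure of $\vv\vv^\top$, and the short-vector condition on $\vv$.

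For the variance, I use $\E[X_t^2 \mid S_{t-1}] \leq R \cdot \E[|X_t|\mid S_{t-1}] \leq 2R\bigl(\E[g_t\mid S_{t-1}] + \E[l_t\mid S_{t-1}]\bigr)$, as in the D-design proof. The expected loss is bounded by~\eqref{eq:loss_exp-A} via $\E[l_t\mid S_{t-1}] \leq \tfrac{1}{k}\inner{\mX^{-1}}{\mZ_t^{-1}} \leq \tfrac{4}{k}\tr(\mX^{-1})$ using $\mZ_t^{-1} \preceq 4\mI$. For the expected gain, the leading contribution is $\tfrac{1}{k}\sum_j \vx(j)\inner{\mX^{-1}}{\mZ_t^{-1}\vv_j\vv_j^\top \mZ_t^{-1}} = \tfrac{1}{k}\inner{\mX^{-1}}{\mZ_t^{-2}} \leq \tfrac{16}{k}\tr(\mX^{-1})$, which uses $\sum_j \vx(j)\vv_j\vv_j^\top = \mI$ from the preprocessing step. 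The correction coming from the $2\alpha\inner{\vv_j\vv_j^\top}{\mA_t^{1/2}}$ factor in the $j_t$-distribution contributes at most an additional $O(\sqrt{d}\,\tr(\mX^{-1})/k)$ using $\alpha = 8\sqrt{d}$ together with $\tr(\mA_t^{1/2}) \leq \sqrt{d}$ (Lemma~\ref{l:tr}) and $\sum_j \vx(j)\vv_j\vv_j^\top = \mI$. Combining these gives $\E[|X_t|\mid S_{t-1}] = O(\sqrt{d}\,\tr(\mX^{-1})/k)$, and multiplying by $R$ and summing over $\tau$ iterations produces $\sigma^2 = O(\eps \tau \tr(\mX^{-1})^2/(k\sqrt{d}))$. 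Plugging these $R$ and $\sigma^2$ into Theorem~\ref{t:Freedman} with $\delta = \eta$ yields the stated tail bound.

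The main obstacle is the pointwise numerator bound $\inner{\mX^{-1}}{\mZ_t^{-1}\vv\vv^\top \mZ_t^{-1}} \leq C \cdot \inner{\mX^{-1}}{\vv\vv^\top}$. Because $\mX^{-1}$ and $\mZ_t^{-1}$ generally do not commute, the naive operator inequality $\mZ_t^{-1}\mX^{-1}\mZ_t^{-1} \preceq 16\,\mX^{-1}$ fails in general, and the argument must carefully combine the min-eigenvalue inequality $\mZ_t^{-1} \preceq 4\mI$ with the rank-one structure of $\vv\vv^\top$ and the short-vector condition supplied by Lemma~\ref{l:optimality-A}. Once this pointwise estimate is established, the remaining calculations closely mirror the D-design analysis in Lemma~\ref{l:progress-whp-D}.
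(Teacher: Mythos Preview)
Your overall plan is exactly the paper's: apply Freedman to $X_t = \E[\Gamma_t\mid S_{t-1}] - \Gamma_t$, with the targets $R = O(\eps\tr(\mX^{-1})/d)$ and $\sigma^2 = O(\eps\tau\tr(\mX^{-1})^2/(k\sqrt{d}))$. Your use of Lemma~\ref{l:E-A} to control the $l_t$ denominator and of~\eqref{eq:loss_exp-A} for $\E[l_t\mid S_{t-1}]$ also match the paper. There are, however, two places where your outline is not a proof.

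\textbf{The pointwise numerator bound.} You correctly flag that $\mZ_t^{-1}\mX^{-1}\mZ_t^{-1} \preceq 16\mX^{-1}$ fails, but ``carefully combine the min-eigenvalue inequality with rank-one structure'' is not an argument. The paper's missing idea is to \emph{undo the preprocessing}: since $\mZ_t = \mX^{-1/2}\bigl(\sum_{j\in S_{t-1}}\vu_j\vu_j^\top\bigr)\mX^{-1/2}$, one gets the exact identity
\[
\mZ_t^{-1}\mX^{-1}\mZ_t^{-1} \;=\; \mX^{1/2}\Bigl(\sum_{j\in S_{t-1}}\vu_j\vu_j^\top\Bigr)^{-2}\mX^{1/2},
\]
so that $\inner{\mX^{-1}}{\mZ_t^{-1}\vv_{i_t}\vv_{i_t}^\top\mZ_t^{-1}} = \vu_{i_t}^\top\bigl(\sum_j\vu_j\vu_j^\top\bigr)^{-2}\vu_{i_t}$ in the \emph{original} coordinates. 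The paper then rewrites this as $\inner{\mX^{-1/2}\mZ_t^{-2}\mX^{-1/2}}{\vv_{i_t}\vv_{i_t}^\top}$ and applies $\mZ_t^{-2}\preceq 16\mI$ to reach $\leq 16\inner{\mX^{-1}}{\vv_{i_t}\vv_{i_t}^\top}$. This coordinate change is the concrete step you are missing; your outline never mentions passing back to the $\vu$-side.

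\textbf{The bound on $\E[g_t\mid S_{t-1}]$.} Your variance argument splits off the ``correction'' $\tfrac{2\alpha}{k}\sum_j \vx(j)\inner{\vv_j\vv_j^\top}{\mA_t^{1/2}}\cdot(\text{gain}_j)$ and invokes $\sum_j\vx(j)\inner{\vv_j\vv_j^\top}{\mA_t^{1/2}}\leq \tr(\mA_t^{1/2})\leq\sqrt{d}$. But that only helps if you can bound each $\text{gain}_j$ uniformly --- which is precisely the pointwise bound you have not proved, so the argument is circular. The paper avoids this by using \emph{both} inequalities of Lemma~\ref{l:E-A} to bound the ratio
\[
\frac{1+2\alpha\inner{\vv_j\vv_j^\top}{\mA_t^{1/2}}}{1+2\inner{\vv_j\vv_j^\top}{\mZ_t^{-1}}}
\;\leq\; \max_j\frac{\alpha\inner{\vv_j\vv_j^\top}{\mA_t^{1/2}}}{\inner{\vv_j\vv_j^\top}{\mZ_t^{-1}}}
\;\leq\; \alpha\,\lambda_{\min}(\mZ_t)
\]
uniformly in $j$, and then sums $\sum_j\vx(j)\inner{\mX^{-1}}{\mZ_t^{-1}\vv_j\vv_j^\top\mZ_t^{-1}} = \inner{\mX^{-1}}{\mZ_t^{-2}}$ to obtain $\E[g_t\mid S_{t-1}] \leq \tfrac{\alpha\lambda_{\min}(\mZ_t)}{k}\inner{\mX^{-1}}{\mZ_t^{-2}} \leq \tfrac{32\sqrt{d}}{k}\tr(\mX^{-1})$. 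This route makes the variance bound independent of any per-vector estimate.
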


\begin{proof}
We define two sequences of random variables $\{X_t\}_t$ and $\{Y_t\}_t$, where $X_t := \E[\Gamma_t \mid S_{t-1}] - \Gamma_t$ and $Y_t := \sum_{l=1}^t X_l$.
It is easy to check that $\{Y_t\}_t$ is a martingale with respect to $\{S_t\}_t$.
We will use Freedman's inequality to bound $\Pr(Y_\tau \geq \eta)$.

To apply Freedman's inequality, we need to upper bound $X_t$ and $E[X_t^2 \mid S_{t-1}]$.
To upper bound $X_t$, we first prove an upper bound on $g_t$ and $l_t$.
Note that
\begin{align*}
\inner{\mX^{-1}}{\mZ^{-1}_t \vv_{i_t} \vv_{i_t}^\top \mZ_t^{-1}} & = \inner{\mZ_t^{-1} \mX^{-1} \mZ_t^{-1}}{\vv_{i_t} \vv_{i_t}^\top} = \bigg\langle \mX^{\frac12} \bigg(\sum_{j \in S_{t-1}} \vu_j \vu_j^\top\bigg)^{-2} \mX^{\frac12}, \vv_{i_t} \vv_{i_t}^\top \bigg\rangle \\
& = \inner{\mX^{-\frac12} \mZ_t^{-2} \mX^{-\frac12}}{\vv_{i_t} \vv_{i_t}^\top} \leq 16 \inner{\mX^{-1}}{\vv_{i_t} \vv_{i_t}^\top} \leq \frac{16\eps }{d} \cdot \tr(\mX^{-1}),
\end{align*}
where the second equality uses the fact that $\mZ_t = \mX^{-\frac12} \left( \sum_{j \in S_{t-1}} \vu_j \vu_j^\top \right) \mX^{-\frac12}$, the first inequality uses the assumption $\mZ_t \succeq \frac14 \mI$, and the last inequality follows from the assumption that $\inner{\mX^{-1}}{\vv_{i_t}\vv_{i_t}^\top} \leq \frac{\eps}{d} \cdot \tr(\mX^{-1})$. 
This implies that
\[
g_t = \frac{\inner{\mX^{-1}}{\mZ_t^{-1} \vv_{j_t} \vv_{j_t}^\top \mZ_t^{-1}}}{1+2\inner{\vv_{j_t} \vv_{j_t}^\top}{\mZ_t^{-1}}} \leq \frac{16\eps}{d} \cdot \tr(\mX^{-1}) 
\quad \text{and} \quad 
l_t = \frac{\inner{\mX^{-1}}{\mZ_t^{-1} \vv_{i_t} \vv_{i_t}^\top \mZ_t^{-1}}}{1-2\inner{\vv_{i_t} \vv_{i_t}^\top}{\mZ_t^{-1}}} \leq \frac{32 \eps}{d} \cdot \tr(\mX^{-1}),
\]
where the second inequality holds as $2\inner{\vv_{i_t} \vv_{i_t}^\top}{\mZ_t^{-1}} \leq 2\alpha \inner{\vv_{i_t} \vv_{i_t}^\top}{\mA_t^{\frac12}} \leq \frac12$ 
by Lemma~\ref{l:E-A} and the definition that $i_t \in S'_{t-1}$ in the exchange subroutine. 
Therefore,
\[
X_t = \E[\Gamma_t \mid S_{t-1}] - \Gamma_t
\leq \E[g_t \mid S_{t-1}] + l_t \leq \frac{48\eps}{d} \cdot \tr(\mX^{-1}).
\]
Next, we upper bound $\E[X_t^2 \mid S_{t-1}]$ by
\begin{equation*} 
\begin{aligned} 
\E[X^2_t \mid S_{t-1}] 
\leq \frac{48\eps}{d} \cdot \tr(\mX^{-1}) \cdot \E[|X_t| \mid S_{t-1}]
\leq \frac{96\eps}{d} \cdot \tr(\mX^{-1}) \cdot  \Big(\E[g_t \mid S_{t-1}] + \E[l_t \mid S_{t-1}]\Big).
\end{aligned}
\end{equation*}
Using~\eqref{eq:loss_exp-A}, we bound the expected loss term by
\[
\E[l_t \mid S_{t-1}] 
\leq \frac{1}{k} \cdot \inner{\mX^{-1}}{\mZ_t^{-1}} 
\leq \frac{4}{k} \cdot \tr(\mX^{-1}),
\]
where the last inequality follows by the assumption that $\mZ_t \succcurlyeq \frac14 \mI$. 
Then, we bound the expected gain term by
\begin{eqnarray*}
\E[g_t \mid S_{t-1}] 
& = & \sum_{j \in [n] \backslash S_{t-1}} \frac{\vx(j)}{k} 
\cdot \Big( 1+ 2\alpha \inner{\vv_j \vv_j^\top}{\mA_t^{\frac12}} \Big) 
\cdot \frac{\inner{\mX^{-1}}{\mZ_t^{-1} \vv_j \vv_j^\top \mZ_t^{-1}}}{1+2\inner{\vv_j \vv_j^\top}{\mZ_t^{-1}}} 
\\
& \leq & \frac{1}{k} \cdot \max_{j \in [n]} 
\bigg\{ \frac{\alpha \inner{\vv_j \vv_j^\top}{\mA_t^{\frac12}}}{\inner{\vv_j \vv_j^\top}{\mZ_t^{-1}}} \bigg\} 
\cdot \sum_{j =1}^n \vx(j) \cdot \inner{\mX^{-1}}{\mZ_t^{-1} \vv_j \vv_j^\top \mZ_t^{-1}} 
\\
& \leq & \frac{1}{k} \cdot \alpha \lambda_{\min}(\mZ_t) \cdot \inner{\mX^{-1}}{\mZ_t^{-2}} 
\\
& \leq & \frac{32\sqrt{d}}{k} \cdot \tr\big(\mX^{-1}\big),
\end{eqnarray*}
where the first inequality follows from the first inequality in Lemma~\ref{l:E-A},
and the second inequality follows from the second inequality in Lemma~\ref{l:E-A} and $\sum_{j=1}^n \vx(j) \cdot \vv_j \vv_j^\top = \mI$, 
and the last inequality holds as $\alpha = 8\sqrt{d}$, $\mZ_t^{-2} \preccurlyeq \lambda_{\min}(\mZ_t)^{-2} \mI$, 
and $\lambda_{\min}(\mZ_t) \geq \frac14$ by our assumption.
Therefore, 
\[
\E[X^2_t \mid S_{t-1}] \leq O\Big( \frac{\eps}{k\sqrt{d}} \Big) \cdot \tr\big(\mX^{-1}\big)^2
\quad \implies \quad
\sum_{t=1}^\tau \E[X^2_t \mid S_{t-1}] \leq  O\Big( \frac{\eps \tau}{k\sqrt{d}}\Big) \cdot \tr\big(\mX^{-1}\big)^2.
\]
Finally, we can apply Freedman's martingale inequality Theorem~\ref{t:Freedman} with $R=\frac{48\eps}{d} \cdot \tr(\mX^{-1})$ and $\sigma^2 = O\big( \frac{\eps \tau}{k\sqrt{d}}\big) \cdot \tr\big(\mX^{-1}\big)^2$ to conclude that
\[
\Pr(Y_\tau \geq \eta)
\leq \exp\left(-\frac{\eta^2/2}{\sigma^2 + R\eta/3}\right)
= \exp\left( -\Omega\left(\frac{\eta^2 k d}{\eps \tau \sqrt{d} \tr(\mX^{-1})^2 + \eps \eta k \tr(\mX^{-1})} \right)\right).
\]
The lemma follows by noting that $Y_\tau \geq \eta$ is equivalent to 
$\sum_{t=1}^\tau \Gamma_t \leq \sum_{t=1}^\tau \E[\Gamma_t \mid S_{t-1}] - \eta$. 
\end{proof}

\subsubsection{Proof of Theorem~\ref{t:termination} for A-Design} \label{ss:A-terminate}

We are ready to prove Theorem~\ref{t:termination} for A-design.
Let $\tau = \frac{2k}{\eps}$. 
Suppose the second phase of the algorithm has not terminated by time $\tau$. 
Then $\lambda = \min_{1 \leq t \leq \tau+1} \frac{\inner{\mX^{-1}}{\mZ_t^{-1}}}{\tr\big(\mX^{-1} \big)} > (1+\eps)$.
Thus, Lemma~\ref{l:A-exp-progress} implies that 
\begin{align*} 
\sum_{t=1}^{\tau} \E \left[ \Gamma_t \mid S_{t-1}\right] \geq \frac{(\lambda - 1) \tau}{k} \cdot \tr\big(\mX^{-1}\big)
> 2 \tr\left(\mX^{-1}\right).
\end{align*}
On the other hand, the initial solution of the second phase satisfies $\mZ_1 \succcurlyeq \frac34 \mI$, which implies that $\inner{\mX^{-1}}{\mZ_1^{-1}} \leq \frac43 \tr(\mX^{-1})$.
By the minimum eigenvalue assumption, we know from Lemma~\ref{l:E-A} that 
$2\inner{\vv_{i_t} \vv_{i_t}^\top}{\mZ_t^{-1}} \leq 2\alpha \cdot \inner{\vv_{i_t} \vv_{i_t}^\top}{\mA_t^{\frac12}} \leq \frac{1}{2}$,
and so we can apply~\eqref{eq:trace-bound} to deduce that
\[
\tr(\mX^{-1}) \leq \inner{\mX^{-1}}{\mZ_{\tau+1}^{-1}} \leq \inner{\mX^{-1}}{\mZ_1^{-1}} - \sum_{t=1}^\tau \Gamma_t \leq \frac43 \cdot \tr\big(\mX^{-1}\big) - \sum_{t=1}^\tau \Gamma_t \quad \Longrightarrow \quad \sum_{t=1}^\tau \Gamma_t \leq \frac13 \cdot \tr\big(\mX^{-1}\big). 
\]
As the knapsack constraints satisfy $b_j \geq \frac{d \norm{\vc_j}_{\infty}}{\eps}$ for $j \in [m]$, 
the optimality conditions in Lemma~\ref{l:optimality-A} imply that  
$\inner{\mX^{-1}}{\vv_i\vv_i^\top} \leq \frac{\eps}{d} \cdot \tr(\mX^{-1})$
for each $i$ with $0 < \vx(i) < 1$.
Note that, in the randomized exchange algorithm, all $i_t$ and $j_t$ satisfy $0 < \vx(i_t) < 1$ and $0 < \vx(j_t) < 1$ by \Cref{obs:prob}.
Therefore, we can apply Lemma~\ref{l:progress-whp-A} with $\eta=\frac53 \cdot \tr(\mX^{-1})$ and $\tau=\frac{2k}{\eps}$ to conclude that
\begin{equation*}
\begin{aligned}
\Pr\left[\min_{1 \leq t \leq \tau+1} \inner{\mX^{-1}}{\mZ_t^{-1}} > (1+\eps) \tr(\mX^{-1})  \right] & \leq \Pr\left[ \sum_{t=1}^{\tau} \Gamma_t <  \sum_{t=1}^{\tau} \E \left[ \Gamma_t \mid S_{t-1}\right] - \frac{5}{3} \cdot \tr(\mX^{-1}) \right]
\\
& \leq \exp\bigg( -\Omega\bigg(\frac{\tr(\mX^{-1})^2 \cdot k d }{\eps \left(\frac{2k}{\eps}\right) \sqrt{d} \cdot \tr\left(\mX^{-1}\right)^2 + \eps k \cdot \tr\left(\mX^{-1}\right)^2} \bigg)\bigg)
\\
& \leq \exp\left(-\Omega(\sqrt{d})\right).
\end{aligned}
\end{equation*}

\subsubsection{Optimality Condition for the Convex Program of A-Design} \label{ss:optimality-A}

This lemma follows from the optimality condition of the convex programming relaxation and the assumption about the budgets.

\begin{lemma} \label{l:optimality-A}
Let $\vx \in [0,1]^n$ be an optimal fractional solution of the convex programming relaxation~\eqref{eq:convex} for A-design.
Let $\mX = \sum_{i=1}^n \vx(i) \cdot \vu_i \vu_i^\top$,
and $\vv_i = \mX^{-\frac12} \vu_i$ for $1 \leq i \leq n$.
Suppose $b_j \geq \frac{d\norm{c_j}_{\infty}}{\eps}$ for $1 \leq j \leq m$.
Then, for each $1 \leq i \leq n$ with $0 < \vx(i) < 1$,  
\[
\inner{\mX^{-1}}{\vv_i \vv_i^\top} \leq \frac{\eps}{d} \cdot \tr(\mX^{-1}).
\]
\end{lemma}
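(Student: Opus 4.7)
The plan is to mirror the Lagrangian duality argument used to prove Lemma~\ref{l:optimality-D}, adapted to the A-design objective. Rewrite the convex program~\eqref{eq:convex} for A-design by introducing the matrix variable $\mX = \sum_i \vx(i) \vu_i \vu_i^\top$ as an explicit equality constraint, with dual matrix $\mY$; then attach multipliers $\mu_j \geq 0$ to each budget constraint $\inner{\vc_j}{\vx} \leq b_j$, and $\beta_i^- \geq 0$ and $\beta_i^+ \geq 0$ to the box constraints $\vx(i) \geq 0$ and $\vx(i) \leq 1$ respectively. Slater's condition is satisfied by $\vx = \delta \vec{1}$ for sufficiently small $\delta>0$, so strong duality will hold.

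First I would carry out the inner minimization of the Lagrangian. Stationarity in $\mX$ gives $-\mX^{-2} + \mY = 0$, i.e. $\mY = \mX^{-2} \succ 0$, and plugging back yields $\tr(\mX^{-1}) + \inner{\mY}{\mX} = 2\tr(\mX^{-1}) = 2\tr(\mY^{1/2})$. Stationarity in each $\vx(i)$ gives the dual feasibility equality
\[
\inner{\mY}{\vu_i \vu_i^\top} = \sum_{j=1}^m \mu_j \vc_j(i) - \beta_i^- + \beta_i^+ \qquad \text{for } 1 \leq i \leq n.
\]
Thus the dual reduces to maximizing $2\tr(\mY^{1/2}) - \sum_j \mu_j b_j - \sum_i \beta_i^+$ over $\mY \succeq 0$ and nonnegative multipliers subject to the above equalities.

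Next, strong duality combined with $\tr(\mY^{1/2}) = \tr(\mX^{-1})$ collapses to the identity $\sum_{j} \mu_j b_j + \sum_i \beta_i^+ = \tr(\mX^{-1})$; in particular $\sum_j \mu_j b_j \leq \tr(\mX^{-1})$. Invoking the assumption $b_j \geq \frac{d \|\vc_j\|_\infty}{\eps}$ for every $j$ then yields
\[
\sum_{j=1}^m \mu_j \|\vc_j\|_\infty \leq \frac{\eps}{d} \cdot \tr(\mX^{-1}).
\]
Finally, for each $i$ with $0 < \vx(i) < 1$, complementary slackness forces $\beta_i^- = \beta_i^+ = 0$, so the stationarity equation becomes $\inner{\mY}{\vu_i \vu_i^\top} = \sum_j \mu_j \vc_j(i)$. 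Since $\mY = \mX^{-2}$ and $\vv_i = \mX^{-1/2}\vu_i$, we have $\inner{\mY}{\vu_i \vu_i^\top} = \vu_i^\top \mX^{-2} \vu_i = \vv_i^\top \mX^{-1} \vv_i = \inner{\mX^{-1}}{\vv_i \vv_i^\top}$, and bounding $\vc_j(i) \leq \|\vc_j\|_\infty$ gives the claim.

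The main obstacle is the sign/convention bookkeeping in the Lagrangian for a \emph{minimization} problem with a matrix-valued equality and a nonconvex-looking $\tr(\cdot^{-1})$ term; once the stationarity computation $\mY = \mX^{-2}$ is correctly derived, the identity $2\tr(\mY^{1/2}) = 2\tr(\mX^{-1})$ makes the strong duality equality collapse cleanly, and the rest is essentially parallel to the proof of Lemma~\ref{l:optimality-D}.
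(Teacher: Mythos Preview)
Your proposal is correct and follows essentially the same approach as the paper: set up the Lagrangian with dual variables $\mY,\mu,\beta^+,\beta^-$, use stationarity to get $\mY=\mX^{-2}$ and the per-coordinate equality, invoke Slater/strong duality to obtain $\sum_j \mu_j b_j + \sum_i \beta_i^+ = \tr(\mX^{-1})$, apply the budget assumption and complementary slackness, and translate $\inner{\mX^{-2}}{\vu_i\vu_i^\top}$ into $\inner{\mX^{-1}}{\vv_i\vv_i^\top}$. The steps and their order match the paper's proof essentially line for line.
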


\begin{proof}
We rewrite the convex relaxation~\eqref{eq:convex} for A-design as follows.
\begin{equation*}
    \begin{aligned}
        & \underset{\vx \in \R^d, \mX \in \R^{d \times d}}{\rm minimize} & & \tr\big(\mX^{-1}\big) \\
        & \text{\rm subject to} & & \mX = \sum_{i=1}^n \vx(i) \cdot \vu_i \vu_i^\top\\
        & & & \inner{\vc_j}{\vx} \leq b_j, ~~\quad \forall j \in [m], \\
        & & & 0 \leq \vx(i) \leq 1, \quad \forall i \in [n].
    \end{aligned} 
\end{equation*}
We will use a dual characterization to prove the lemma.
We introduce a dual variable $\mY$ for the first equality constraint, 
a dual variable $\mu_j \geq 0$ for each of the budget constraint $\inner{\vc_j}{\vx} - b_j \leq 0$, 
a dual variable $\beta^-_i \geq 0$ for each non-negative constraint $-\vx(i) \leq 0$,
and a dual variable $\beta^+_i \geq 0$ for each capacity constraint $\vx(i) - 1 \leq 0$.

The Lagrange function $L(\vx, \mX, \mY, \mu, \beta^+, \beta^-)$ is defined as
\begin{eqnarray*}
& & \tr\big(\mX^{-1}\big) 
+ \bigg\langle \mY, \mX - \sum_{i=1}^n \vx(i) \cdot \vu_i \vu_i^\top \bigg\rangle 
+ \sum_{j=1}^m \mu_j \Big(\inner{\vc_j}{\vx} - b_j\Big)
 - \sum_{i=1}^n \beta^-_i \vx(i) + \sum_{i=1}^n \beta^+_i (\vx(i)-1) 
\\
& = & \tr\big(\mX^{-1}\big) + \inner{\mY}{\mX} - \sum_{j=1}^m \mu_j b_j - \sum_{i=1}^n \beta^+_i
- \sum_{i=1}^n \vx(i) \cdot \bigg(\inner{\mY}{\vu_i \vu_i^\top} - \sum_{j=1}^m \mu_j \vc_j(i) + \beta^-_i - \beta^+_i \bigg).
\end{eqnarray*}
%HERE
The Lagrangian dual program is $\max_{\mY, \mu \geq 0, \beta^+ \geq 0, \beta^- \geq 0} \min_{\vx, \mX} L(\vx,\mX,\mY,\mu,\beta^+,\beta^-)$.
Note that we can assume that $\mY \succeq 0$, as otherwise the inner minimization problem is unbounded below.
Given $\mY \succcurlyeq 0, \mu, \beta^+, \beta^- \geq 0$, the minimizers $\vx,\mX$ of the Lagrange function satisfy the optimality conditions that
\[
\nabla_{\mX} L = -\mX^{-2} + \mY = 0 \quad \text{and} \quad \nabla_{\vx(i)} L = -\inner{\mY}{\vu_i \vu_i^\top} + \sum_{j=1}^m \mu_j \vc_j(i) - \beta^-_i + \beta^+_i = 0.
\]
Therefore, the Lagrangian dual program can be written as
\begin{equation*}
    \begin{aligned}
        & \underset{\substack{\mY \succcurlyeq 0, \mu \geq 0, \\ \beta^+ \geq 0, \beta^- \geq 0}}{\rm maximize} & & 2\tr(\mY^{\frac12}) - \sum_{j=1}^m \mu_j b_j - \sum_{i=1}^n \beta^+_i\\
        & \text{\rm subject to} & & \inner{\mY}{\vu_i \vu_i^\top} = \sum_{j=1}^m \mu_j \vc_j(i) - \beta^-_i + \beta^+_i, \quad \forall i \in [n].
    \end{aligned} 
\end{equation*}
It is easy to verify that $\vx = \delta \vec{1}$ is a strictly feasible solution of the primal program for a small enough $\delta$. 
So, Slater's condition implies that strong duality holds. 
Let $\vx,\mX$ be an optimal solution for the primal program,
and $\mY,\mu,\beta^+,\beta^-$ be an optimal solution for the dual program.
The Lagrangian optimality condition implies that $\mY = \mX^{-2}$,
and it follows that
\begin{align*}
\tr(\mX^{-1}) = 2\tr(\mX^{-1}) - \sum_{j=1}^m \mu_j b_j - \sum_{i=1}^n \beta^+_i 
\quad & \Longrightarrow \quad 
\sum_{j=1}^m \mu_j b_j  + \sum_{i=1}^n \beta^+_i = \tr(\mX^{-1}) 
\\
& \Longrightarrow \quad 
\sum_{j=1}^m \mu_j \norm{\vc_j}_{\infty} \leq \frac{\eps}{d} \cdot \tr(\mX^{-1}),
\end{align*}
where the last implication follows from $\beta^+_i \geq 0$ for all $i \in [n]$ and the assumption $b_j \geq \frac{d \norm{\vc_j}_{\infty}}{\eps}$ for all $j \in [m]$.

Finally, by the complementary slackness conditions, we have $\beta^-_i \cdot \vx(i) = 0$ and $\beta^+_i \cdot (1-\vx(i)) = 0$ for all $i \in [n]$. 
Therefore, for each $i \in [n]$ with $0 < \vx(i) < 1$, we must have $\beta^+_i = \beta^-_i = 0$, which implies that
\[
\sum_{j=1}^m \mu_j \vc_j(i) = \inner{\mX^{-2}}{\vu_i \vu_i^\top} = \inner{\mX^{-1}}{\vv_i \vv_i^\top} 
\quad \Longrightarrow \quad 
\inner{\mX^{-1}}{\vv_i \vv_i^\top} \leq \sum_{j=1}^m \mu_j \norm{\vc_j}_{\infty} \leq \frac{\eps}{d} \tr(\mX^{-1}). \qedhere
\]
\end{proof}

\section*{Acknowledgment}

We would like to thank Tsz Chiu Kwok, Akshay Ramachandran, and Mohit Singh for many useful discussions and helpful comments on the writing.

\bibliographystyle{siamplain}

\end{document}